\documentclass[10pt]{article}

\usepackage[margin=1in]{geometry}
\usepackage{graphicx}
\usepackage{authblk}
\usepackage{amsmath,amssymb,amsthm,mathtools}
\usepackage{aliascnt}
\usepackage{bm}
\usepackage{booktabs}
\usepackage{array}
\usepackage{enumitem}
\usepackage{makecell}
\IfFileExists{algorithm.sty}{
    \usepackage{algorithm}
    \usepackage[noend]{algpseudocode}
}{
    \usepackage{float}
    \newfloat{algorithm}{tbp}{loa}
    \floatname{algorithm}{Algorithm}
    \newcounter{algline}[algorithm]
    \newcounter{algdepth}
    \newlength{\algindentwidth}
    \setlength{\algindentwidth}{1.45em}
    \newcommand{\AlgIndent}{%
        \hspace*{\dimexpr\value{algdepth}\algindentwidth\relax}%
    }
    \newenvironment{algorithmic}[1][]{
        \setcounter{algline}{0}
        \setcounter{algdepth}{0}
        \begin{list}{}{
            \setlength{\leftmargin}{2.6em}
            \setlength{\labelwidth}{2.0em}
            \setlength{\labelsep}{0.6em}
            \setlength{\itemsep}{0.15em}
            \setlength{\parsep}{0pt}
        }
    }{
        \end{list}
    }
    \newcommand{\State}{%
        \refstepcounter{algline}%
        \item[\thealgline.]%
        \AlgIndent%
    }
    \newcommand{\Statex}{%
        \item[]%
        \AlgIndent%
    }
    \newcommand{\Require}{\State\textbf{Input:}\ }
    \newcommand{\For}[1]{%
        \State\textbf{for }##1\textbf{ do}%
        \stepcounter{algdepth}%
    }
    \newcommand{\EndFor}{%
        \addtocounter{algdepth}{-1}%
    }
    \newcommand{\Return}{\textbf{return}\ }
}
\usepackage{microtype}

\usepackage{xcolor}
\usepackage[hidelinks, colorlinks = true]{hyperref}
\hypersetup{
    pageanchor=false,
    colorlinks=true,
    linkcolor=magenta,
    filecolor=magenta,
    urlcolor=cyan,
    citecolor=cyan
}

\usepackage[nameinlink,capitalise]{cleveref}
\makeatletter
\@ifundefined{ALG@line}{}{%
    \renewcommand{\theALG@line}{\thealgorithm.\arabic{ALG@line}}%
    \providecommand{\theHALG@line}{}%
    \renewcommand{\theHALG@line}{\thealgorithm.\arabic{ALG@line}}%
}
\makeatother
\usepackage[table]{xcolor}
\usepackage{multirow}
\usepackage{hhline}
\usepackage{array}

\newtheorem{theorem}{Theorem}[section]

\newaliascnt{lemma}{theorem}
\newtheorem{lemma}[lemma]{Lemma}
\aliascntresetthe{lemma}

\newaliascnt{corollary}{theorem}
\newtheorem{corollary}[corollary]{Corollary}
\aliascntresetthe{corollary}

\newaliascnt{proposition}{theorem}
\newtheorem{proposition}[proposition]{Proposition}
\aliascntresetthe{proposition}

\newaliascnt{fact}{theorem}
\newtheorem{fact}[fact]{Fact}
\aliascntresetthe{fact}

\theoremstyle{definition}
\newaliascnt{definition}{theorem}
\newtheorem{definition}[definition]{Definition}
\aliascntresetthe{definition}

\newaliascnt{assumption}{theorem}

\aliascntresetthe{assumption}

\theoremstyle{remark}
\newaliascnt{remark}{theorem}

\aliascntresetthe{remark}

\crefname{theorem}{theorem}{theorems}
\Crefname{theorem}{Theorem}{Theorems}
\crefname{lemma}{lemma}{lemmas}
\Crefname{lemma}{Lemma}{Lemmas}
\crefname{corollary}{corollary}{corollaries}
\Crefname{corollary}{Corollary}{Corollaries}
\crefname{proposition}{proposition}{propositions}
\Crefname{proposition}{Proposition}{Propositions}
\crefname{fact}{fact}{facts}
\Crefname{fact}{Fact}{Facts}
\crefname{definition}{definition}{definitions}
\Crefname{definition}{Definition}{Definitions}
\crefname{remark}{remark}{remarks}
\Crefname{remark}{Remark}{Remarks}
\crefname{algorithm}{algorithm}{algorithms}
\Crefname{algorithm}{Algorithm}{Algorithms}

\newcommand{\R}{\mathbb{R}}
\newcommand{\C}{\mathbb{C}}
\newcommand{\E}{\mathbb{E}}
\newcommand{\Prb}{\mathbb{P}}
\newcommand{\Var}{\operatorname{Var}}
\newcommand{\Tr}{\operatorname{Tr}}
\newcommand{\rank}{\operatorname{rank}}
\newcommand{\softmax}{\operatorname{softmax}}
\newcommand{\Gap}{\operatorname{Gap}}

\newcommand{\clip}{\operatorname{clip}}
\newcommand{\supp}{\operatorname{supp}}

\newcommand{\inner}[2]{\left\langle #1,#2\right\rangle}
\newcommand{\norm}[1]{\left\lVert #1\right\rVert}
\newcommand{\abs}[1]{\left\lvert #1\right\rvert}

\usepackage{ifdraft}
\newbool{AuthNotes} 
\booltrue{AuthNotes} 
\ifbool{AuthNotes}{\newcommand{\authnote}[3]{{\color{#3}({\bf  #1:} #2)}}}{\newcommand{\authnote}[3]{}}

\newcommand{\dn}{n}
\newcommand{\mc}{m}
\newcommand{\sr}{s}
\newcommand{\Rx}{R}
\newcommand{\Ry}{r}

\newcommand{\om}{\omega}
\newcommand{\eps}{\varepsilon}
\newcommand{\del}{\delta}
\newcommand{\TT}{T}
\newcommand{\gam}{\gamma}
\newcommand{\Lam}{\Lambda}
\newcommand{\LamL}{\Lambda_{\mathrm L}}

\newcommand{\lrX}{\eta_X}
\newcommand{\lrY}{\eta_y}

\newcommand{\bA}{\mathbf{A}}
\newcommand{\bB}{\mathbf{B}}
\newcommand{\bC}{\mathbf{C}}
\newcommand{\bH}{\mathbf{H}}
\newcommand{\bI}{\mathbf{I}}
\newcommand{\bK}{\mathbf{K}}
\newcommand{\bM}{\mathbf{M}}
\newcommand{\bN}{\mathbf{N}}

\newcommand{\bG}{\mathbf{G}}
\newcommand{\bX}{\mathbf{X}}
\newcommand{\bY}{\mathbf{Y}}
\newcommand{\bBmat}{\mathbf{B}}

\newcommand{\bD}{\mathbf{D}}

\newcommand{\bp}{\mathbf{p}}
\newcommand{\by}{\mathbf{y}}
\newcommand{\bz}{\mathbf{z}}
\newcommand{\bc}{\mathbf{c}}
\newcommand{\bell}{\bm{\ell}}
\newcommand{\be}{\mathbf{e}}
\newcommand{\bu}{\mathbf{u}}
\newcommand{\bv}{\mathbf{v}}
\newcommand{\bw}{\mathbf{w}}
\newcommand{\bx}{\mathbf{x}}
\newcommand{\bzero}{\mathbf{0}}
\newcommand{\bpsi}{\bm{\psi}}
\newcommand{\brho}{\bm{\rho}}
\newcommand{\bmu}{\bm{\mu}}
\newcommand{\bxi}{\bm{\xi}}
\newcommand{\bbias}{\mathbf{b}}

\newcommand{\cA}{\mathcal{A}}
\newcommand{\cE}{\mathcal{E}}
\newcommand{\cF}{\mathcal{F}}
\newcommand{\cP}{\mathcal{P}}
\newcommand{\cG}{\mathcal{G}}
\newcommand{\cS}{\mathcal{S}}
\newcommand{\A}[1]{\bA_{#1}}
\newcommand{\Ht}[1]{\bH_{#1}}
\newcommand{\Gt}[1]{\bG_{#1}}
\newcommand{\Xtilde}[1]{\widetilde{\bX}_{#1}}
\newcommand{\rtilde}[1]{\widetilde{\brho}_{#1}}
\newcommand{\psitilde}[1]{\widetilde{\bpsi}_{#1}}
\newcommand{\phat}[1]{\widehat{\bp}_{#1}}
\newcommand{\zt}[1]{\bz_{#1}}
\newcommand{\ctilde}[1]{\widetilde{\bc}_{#1}}
\newcommand{\chat}[1]{\widehat{\bc}_{#1}}
\newcommand{\lhat}[1]{\widehat{\bell}_{#1}}
\newcommand{\jt}[1]{J_{#1}}
\newcommand{\isamp}[1]{I_{#1}}
\newcommand{\ut}[1]{\bu_{#1}}
\newcommand{\vtilde}[1]{\widetilde{\bv}_{#1}}
\newcommand{\vt}[1]{\bv_{#1}}
\newcommand{\psit}[1]{\bpsi_{#1}}
\newcommand{\rt}[1]{\brho_{#1}}
\newcommand{\Xt}[1]{\bX_{#1}}
\newcommand{\Fil}[1]{\cF_{#1}}

\newcommand{\xbar}[1]{\overline{\bX}_{#1}}

\newcommand{\Gibbs}[1]{\mathbf{\Omega}_{#1}}

\newcommand{\Xout}{\bX^{\mathrm{out}}}
\newcommand{\zout}{\bz^{\mathrm{out}}}

\newcommand{\MatSide}{\mathcal E_X(\TT)}
\newcommand{\VecSide}{\mathcal E_y(\TT)}

\newcommand{\GoodL}{\mathcal G_{\mathrm L}}

\newcommand{\Zest}{Z}
\newcommand{\Bloc}{B}
\newcommand{\LanczosExp}{\operatorname{L\acute{a}nczosExp}}

\newcommand{\Rfy}{\Phi}
\newcommand{\Cfy}{\Psi}
\newcommand{\Osparse}{\mathcal O_{\mathrm{sparse}}}

\usepackage{authblk}

\title{Solving Sparse SDPs in Sublinear Time:\\A Classical Algorithm Inspired by the Quantum OR Lemma}
\author[1,2]{Fernando G.S.L.~Brandão\thanks{\url{fbrandao@amazon.com}}}
\author[1]{Alexander M.~Dalzell\thanks{\url{dalzel@amazon.com}}}
\author[3]{András Gilyén\thanks{\url{gilyen.andras@renyi.hu}}}
\author[1,4]{Francisca Vasconcelos\thanks{\url{francisca@berkeley.edu}}}

\affil[1]{AWS Center for Quantum Computing}
\affil[2]{California Institute of Technology}
\affil[3]{Alfréd Rényi Institute of Mathematics}
\affil[4]{University of California, Berkeley}

\date{}

\begin{document}

\maketitle

\begin{abstract}
We give the first sublinear-time classical solvers for sparse semidefinite programs in the bounded-radius regime, without low-rank assumptions or Frobenius norm dependence on the constraint matrices. For constant precision and bounded primal and dual radii, prior quantum algorithms of Brand\~ao et al. (2019) and van Apeldoorn and Gily\'en (2019) achieved $\widetilde{O}(\sqrt{n}+\sqrt{m})$ dependence on matrix dimension $n$ and constraint number $m$. Compared with the $\widetilde{O}(mn)$ runtime of existing classical methods, this suggests a quartic quantum speedup when $m \approx n$. Beyond a usual Grover speedup, this separation relies on the Quantum OR lemma, whose sample-reuse mechanism decouples the cost of Gibbs-state preparation from constraint search. We show that this reuse mechanism is classically realizable for sparse SDPs.

Our main technical contribution is a classical procedure for simultaneously estimating many expectation values with respect to a sparse Hamiltonian's Gibbs state. This combines randomized L\'anczos filtering with an efficient sampling-based estimator. We also introduce a stochastic online-learning framework for SDP solving, substantially improving accuracy-dependence over standard oracle-based MMWU approaches. Let $s$ denote the the input matrix sparsity and $\gamma:=Rr/\varepsilon$ capture dependence on the primal $(R)$ and dual $(r)$ radii as well as target accuracy $(\varepsilon)$. When $\gamma^2\leq\min\{m,n/s\}$, our solver runs in time $\widetilde{O}\left(ns\gamma^{4.5}+ms\gamma^2\right)$. For $\gamma=O(1)$, this is $\widetilde{O}\left((n+m)s\right)$ and sublinear in the $O(mns)$ input size. Similar to the quantum algorithms, this matches known lower bounds with respect to $m$ and $n$, up to logarithmic factors. This implies that, with respect to dimensions $m$ and $n$, there is no super-quadratic quantum advantage for generic sparse SDP solving.
\end{abstract}

\thispagestyle{empty}
\clearpage
\begingroup
\setcounter{tocdepth}{2}
\tableofcontents
\thispagestyle{empty}
\endgroup
\clearpage
\pagenumbering{arabic}

\section{Introduction}
\label{sec:introduction}

Semidefinite programs (SDPs) constitute a central class of convex
optimization problems.  They underlie approximation algorithms such as the
Goemans--Williamson algorithm for \textsc{Max-Cut}~\cite{GW95},
polynomial optimization through the sum-of-squares hierarchy
\cite{Lasserre2001,Parrilo2000}, and basic optimization tasks in quantum
information, including optimal measurement design~\cite{Eldar2003} and
bounds on nonlocal games~\cite{CHTW2004}.  We consider SDPs with primal
form
\begin{align}
\operatorname{OPT}
&=
\max_{\bX\succeq0}
\left\{
\Tr(\bC\bX):
\Tr(\A{j}\bX)\leq b_j,
~~
\forall~j\in[\mc]
\right\}
\label{eqn:intro-standard-sdp-primal}
\end{align}
and dual form
\begin{align}
\operatorname{OPT}
&=
\min_{\by\in\R_{\geq0}^{\mc}}
\left\{
\mathbf b^\top\by:
\sum_{j=1}^{\mc}y_j\A{j}\succeq\bC
\right\},
\label{eqn:intro-standard-sdp-dual}
\end{align}
where
$\bC,\A{1},\ldots,\A{\mc}\in\C^{\dn\times\dn}$ are
$\sr$-row-sparse Hermitian matrices with operator norm at most one.  We
assume strong duality, with optimal primal and dual solutions satisfying
$\Tr(\bX^\star)\leq\Rx$ and
$\norm{\by^\star}_1\leq\Ry$, respectively.  We take the supplied bounds $\Rx,\Ry\geq1$, enlarging them if necessary. Throughout, by solving the SDP to additive error $\eps$, we mean
returning (an implicit representation of) a matrix $\widehat{\bX}\succeq\bzero$
 such that $\Tr(\widehat{\bX})\leq\Rx$, 
$\Tr(\bC\widehat{\bX})\geq\operatorname{OPT}-\eps$, and
$\Tr(\A{j}\widehat{\bX})\leq b_j+\eps/\Ry$, for every $j\in[\mc]$. For additive error $\eps$, write
$\gam:=\Rx\Ry/\eps$ for the effective inverse-accuracy parameter, which
combines the target precision with the scales of the optimal primal and
dual solutions. For the purpose of this work, we view $\gamma$ as a parameter independent of the dimensions $n$ and $m$ and often quote the dimension-specific scaling obtained by setting $\gamma = O(1)$.

We work in a classical sparse-row oracle model, analogous to the sparse
access used by prior quantum SDP solvers.  Given a matrix identifier, a row
$i$, and a position $k$, the oracle returns the location and value of the
$k$-th nonzero entry in that row, while each threshold $b_j$ is available
by random access.  Hence one row can be read in $O(\sr)$ time without
scanning the full $\Theta(\mc\dn\sr)$-entry sparse representation.  Since
a sublinear algorithm also cannot materialize a generic dense primal
matrix, our solvers return implicit primal certificates and sparse dual
descriptions.  Formal definitions, output conventions, and the reduction
to the normalized game appear in
\Cref{sec:preliminaries}.

For this class of SDPs, quantum algorithms appear to achieve a
\emph{quartic} speedup over state-of-the-art classical solvers for input dimensions $\dn$ and $\mc$.  In particular, the best known generic
classical solver has runtime dimension dependence
$\widetilde O(\dn\mc)$~\cite{CDST19}, whereas the best quantum
solvers achieve complexity
$\widetilde O(\sqrt{\dn}+\sqrt{\mc})$~\cite{BKL19,AG19}.  Thus, in
the balanced regime $m=O(\dn)$, the known bounds scale as
$\widetilde O(\dn^2)$ and $\widetilde O(\sqrt \dn)$, respectively.  Such a
super-quadratic separation is particularly significant because the prevailing conventional wisdom~\cite{Babbush2021} suggests that quadratic
speedups may be insufficient to overcome the overheads of fault-tolerant
quantum computation, whereas quartic speedups are considerably more
promising.  It is therefore important to determine
whether this apparent quartic speed-up offers a genuinely novel quantum advantage or suggests removable bottlenecks in classical SDP
algorithms.

The apparent quartic quantum speedup for SDPs can be decomposed into two conceptually
distinct quadratic improvements. The first is the familiar Grover speedup,
which replaces linear dependence on $\dn$ and $\mc$ by square-root dependence
and is known to be optimal in the black-box model. Indeed, the original
quantum SDP solver of Brandão and Svore~\cite{BrandaoSvore} combined
quantum Gibbs-state preparation with Grover-accelerated constraint search,
yielding $\widetilde O(\sqrt{\dn\mc})$ dependence on the two input
dimensions. The second improvement, which reduces this multiplicative
dependence to $\widetilde O(\sqrt{\dn}+\sqrt{\mc})$, comes from the Fast
Quantum OR lemma of Brandão et al.~\cite{BKL19}. Building on the
Quantum OR bound of Harrow, Lin, and Montanaro~\cite{HLM17} as well as the fast
QMA-amplification techniques of Nagaj, Wocjan, and Zhang~\cite{pavel},
the Fast Quantum OR reuses expensive Gibbs-state samples
across many constraint tests, preserving them through gentle
measurement. This reuse decouples the cost of preparing the Gibbs state
from the cost of searching over constraints, replacing the product of the
$\dn$- and $\mc$-dependent terms by a sum. For general SDPs, it has remained
unclear whether this reuse mechanism provides an additional fundamental
source of quantum advantage beyond Grover search. This motivates our first
question of study:
\begin{quote}
\emph{Does the Quantum OR provide a novel, fundamental source of quadratic
quantum advantage?}
\end{quote}
%
%
We answer this question in the \textit{negative}, at least for the case when the Quantum OR is applied to sparse observables and Gibbs states of sparse Hamiltonians, as in the SDP application.
Concretely, we dequantize the structured Quantum OR
procedure for SDP solving by developing a classical algorithm with additive
runtime in $\dn$ and $\mc$. The key technical contribution is a classical
algorithm for simultaneously estimating many Gibbs expectation values using only a few classically computed Gibbs samples. It combines
randomized Lánczos filtering with an estimator inspired by variational Monte
Carlo~\cite{FoulkesEtAl2001}, reusing each sampled coordinate across all
observables. This yields additive dependence on $\dn$ and $\mc$ and
sample complexity controlled by operator norm, unlike  standard dequantization methods, e.g. \cite{Tang2019}, which are controlled by Frobenius norm. Substituting this procedure into the
quantum-style matrix-multiplicative-weights framework yields a classical
SDP solver with runtime
\begin{align}
\widetilde O\left(
\sr\dn\gam^{6.5}
+
\sr\mc\gam^4
\right).
\end{align}
When compared with the quantum~\cite{BKL19,AG19} running
time
\begin{align}
\widetilde O\left(
\sr\sqrt{\dn}\gam^5
+
\sr\sqrt{\mc}\gam^4
\right),
\end{align}
our result demonstrates that the efficient decoupling in $\dn$ and $\mc$ achieved by the Quantum OR is not inherent to quantum algorithms.  With this dequantization, the only remaining dimensional gap is the standard Grover-type quadratic speedup. Indeed, it was shown \cite[Section IIC]{BrandaoSvore} by reduction from the Grover search problem that solving sparse SDPs with $\gamma = O(1)$ requires $\Omega(n+m)$ classical queries or $\Omega(\sqrt{n}+\sqrt{m})$ quantum queries, meaning our $\widetilde O(n+m)$ complexity matches the optimal classical dimension-dependence up to logarithmic factors. 

However, this direct dequantization of state-of-the-art quantum SDP solvers
does not yet yield our most efficient classical SDP solver.  Its dependence
on the accuracy parameter $\gam$ is suboptimal because it retains the
oracle-based MMWU architecture of the quantum algorithms, in which every
round requires a fresh, high-accuracy Gibbs-expectation computation.  In
contrast, Carmon et al.~\cite{CDST19} obtain improved accuracy dependence
through a \emph{stochastic optimization framework}.  Rather than computing an
accurate matrix response on each round, they sample randomized rank-one
actions and control their error only over the full optimization trajectory.
This suggests moving beyond the traditional oracle-based MMWU framework altogether
and motivates our second main question:
\begin{quote}
\centering
\emph{Can our Quantum OR dequantization be combined with a
\textbf{stochastic} SDP framework to obtain a faster classical solver?}
\end{quote}
We answer this question in the \emph{affirmative}. To do so, we model the SDP optimization problem as a zero-sum game. In this context, we build on the
one-sided stochastic rank-one framework of Carmon et al.~\cite{CDST19}, but modify \emph{both} players to exploit the sampling ideas developed in our Quantum OR dequantization.  A sampled coordinate of the approximate rank-one
matrix response supplies simultaneous stochastic payoff estimates to
the constraint learner, while a sampled constraint gives a sparse
representation of the Hamiltonian update.  These changes eliminate the
need for a separate high-accuracy constraint search in each iteration, but
introduce stochastic feedback, clipping bias, approximation bias, and
coupled fluctuations that are absent from the original analysis of \cite{CDST19}.  We
control these effects through a new high-probability online-learning analysis that
combines second-order vector-regret bounds, same-history coupling for the
rank-one matrix response, and martingale concentration.  The resulting
analysis shows that the errors average out over the trajectory and that
the averaged primal and dual iterates converge to an approximate saddle
point.

Overall, our final algorithm has running time
\begin{align}
\widetilde O\left(
\dn\sr
\min\left\{
\gam^2,
\mc,
\frac{\dn}{\sr}
\right\}
\gam^{2.5}
+ \mc\sr\gam^2
\right).
\end{align}
In the regime
$\gam^2\leq\min\{\mc,\dn/\sr\}$, this simplifies to
\begin{align}
    \widetilde O\left(\dn\sr\gam^{4.5}+\mc\sr\gam^2\right)
\end{align}
The resulting solver therefore retains the additive dependence on $\dn$ and $\mc$
obtained by the direct dequantization algorithm, while substantially improving its
dependence on $\gam$. From the runtime bounds summarized in \Cref{tab:intro-runtime-comparison}, one can identify regimes of $m$, $n$, and $\varepsilon$ in which our algorithm outperforms all prior state-of-the-art classical algorithms and even the best known quantum algorithms. Taken together, our results show that the Quantum OR sample-reuse mechanism admits an efficient classical realization and that coupling this classical reuse with stochastic rank-one updates for both players in the SDP game yields an even faster classical solver.

\subsection{Comparison with Prior Works}
\label{sec:intro-runtime-comparison}

\begin{table*}[ht!]
\centering
\scriptsize
\renewcommand{\arraystretch}{1}
\setlength{\tabcolsep}{3pt}
\newcommand{\headone}[1]{\raisebox{0.55ex}{#1}}
\newcommand{\headtwo}[2]{\shortstack[c]{#1\\[-0.55ex]#2}}
\newcommand{\runtimepad}{\rule[-4.0ex]{0pt}{10.0ex}}
\newcommand{\twoline}[2]{\raisebox{\dimexpr-0.5\height+0.5\ht\strutbox\relax}{\shortstack[c]{#1\\#2}}}
\begin{tabular}{|c|c|c|c|c|c|}
\hline
\rowcolor{gray!10}
\headone{Framework} & \headone{Model} & \headone{Paper} & \headone{Runtime}
& \headtwo{Dimension}{Dependence} & \headtwo{Accuracy}{Dependence} \\
\hline
\hline
\rowcolor{blue!7}
\cellcolor{white} & & Brand\~ao et al.\ (2019)~\cite{BKL19} &
\runtimepad $\displaystyle
\widetilde O\!\left(\sr^2\left(\frac{\sqrt\dn}{\eps^{12}}+
\frac{\sqrt\mc}{\eps^{10}}\right)\right)$ &
$\widetilde O(\sqrt\mc+\sqrt\dn)$ & $\widetilde O(\eps^{-12})$ \\
\hhline{|~|~|----|}
\rowcolor{blue!7}
\cellcolor{white} & \multirow[c]{-2}{*}[4ex]{Quantum} &
\twoline{van Apeldoorn--Gily\'en}{ (2019)~\cite{AG19}} &
\runtimepad $\displaystyle
\widetilde O\!\left(\sr\left(\frac{\sqrt\dn}{\eps^5}+
\frac{\sqrt\mc}{\eps^4}\right)\right)$ &
$\widetilde O(\sqrt\mc+\sqrt\dn)$ & $\widetilde O(\eps^{-5})$ \\
\hhline{|~|-----|}
& & \twoline{Arora--Kale (2007)~\cite{AroraKale2016}}{ see \cite[Sec.~2.4]{AGGW17}}  &
\runtimepad $\displaystyle
\widetilde O\!\left(\frac{\mc\dn\sr}{\eps^4}+
\frac{\dn\sr}{\eps^7}\right)$ &
$\widetilde O(\mc\dn)$ & $\widetilde O(\eps^{-7})$ \\
\hhline{|~|~|----|}
\multirow[c]{-4}{*}[10ex]{MMWU} &
\multirow[c]{-2}{*}[4ex]{Classical} &
This Work (\Cref{thm:intro-direct-solver}) &
\runtimepad $\displaystyle
\widetilde O\!\left(\frac{\dn\sr}{\eps^{6.5}}+
\frac{\mc\sr}{\eps^4}\right)$ &
\cellcolor{green!15}$\widetilde O(\mc+\dn)$ &
\cellcolor{green!15}$\widetilde O(\eps^{-6.5})$ \\
\hline
\hline
& & Garber--Hazan (2011)~\cite{GarberHazan2016} &
\runtimepad $\displaystyle
\widetilde O\!\left(\frac{\mc F}{\eps^2}+
\frac{\dn\sr}{\eps^4}+
\min\left\{\frac{\dn\sr}{\eps^{4.5}},
\frac{\dn^2}{\eps^{2.5}}\right\}\right)$ &
$\widetilde O(\mc \dn)$ & $\widetilde O(\eps^{-4.5})$ \\
\hhline{|~|~|----|}
& & Carmon et al.\ (2019)~\cite{CDST19} &
\runtimepad $\displaystyle
\widetilde O\!\left(\frac{\mc\dn\sr}{\eps^2}+
\frac{1}{\eps^{2.5}}\min\{\mc\dn\sr,\dn^2\}\right)$ &
$\widetilde O(\mc\dn)$ &
\cellcolor{green!15}$\widetilde O(\eps^{-2.5})$ \\
\hhline{|~|~|----|}
\multirow[c]{-3}{*}[6.5ex]{Stochastic} &
\multirow[c]{-3}{*}[6.5ex]{Classical} &
This Work (\Cref{thm:intro-regret-solver}) &
\runtimepad $\displaystyle
\widetilde O\!\left(
\frac{\dn\sr}{\eps^{2.5}}
\min\left\{\frac1{\eps^2},\mc,\frac\dn\sr\right\}+
\frac{\mc\sr}{\eps^2}\right)$ &
\cellcolor{green!15}$\widetilde O(\mc+\dn)$ &
$\widetilde O(\eps^{-4.5})$ \\
\hline
\hline
\rowcolor{orange!6}
\twoline{Low-Rank}{Dequantization} &
Classical &
\twoline{Chia et al.\ (2020)}{\cite[Corollary 6.25]{chia2020}} &
\runtimepad $\displaystyle
O\!\left(\frac{F^{11}\log^{23}\dn}{\eps^{46}}+\mc\frac{F^{7}\log^{13}\dn}{\eps^{28}}\right)\textsuperscript{$\dagger$}$ &
$O(\mc\dn^{7}+\dn^{11})$ &
$O(\eps^{-46})$ \\
\hline
\rowcolor{orange!6}
Interior Point &
Classical &
Huang et al.\ (2022)~\cite{HJSTZ22} &
\runtimepad $\displaystyle
O^*\!\left(
\bigl(\mc^{\tau}+\dn^{2\tau}\bigr)
\log\frac1\eps
\right)$ &
$O^*\!\left(\mc^{\tau}+\dn^{2\tau}\right)$ &
$O^*\!\left(\log\frac1\eps\right)$ \\
\hline
\end{tabular}
\caption{
Comparison of representative SDP runtimes under different input-access and structural assumptions. We assume instances are normalized to $\Rx=\Ry=\om=1$, so that $\gam=1/\eps$. We write $F:=\max_j\norm{\A{j}}_F^2$ and let $\tau$ denote an admissible matrix-multiplication exponent (which currently satisfies $\tau \geq 2.37$). The dimension column fixes accuracy and sparsity, and applies the worst-case substitution $F = \Theta(n)$, while the accuracy column reports the accuracy-dominated regime for bounds containing a minimum.  Blue rows denote quantum algorithms, orange rows denote methods with different structural or input-access assumptions, and green cells highlight the strongest classical dependence within the corresponding sparse-oracle framework. The notation $\widetilde O$ suppresses polylogarithmic factors, while $O^*$ additionally suppresses subpolynomial factors. \textsuperscript{$\dagger$}Assumes sample-and-query access to the input, which in the standard RAM model can be achieved in $O(1)$ time after some preprocessing that takes linear time in the number of non-zero matrix elements~\cite[Remark~3.11]{chia2020}.
}
\label{tab:intro-runtime-comparison}
\end{table*}

\Cref{tab:intro-runtime-comparison} places our results alongside prior
sparse-oracle algorithms, low-rank dequantizations, and full-input
high-accuracy methods. These results operate under different access models
and structural assumptions, so the comparison should not be interpreted as
holding all parameters and input models fixed. For the sparse-oracle
algorithms, we use the common unit-scale normalization
$\Rx=\Ry=\om=1$, where $\om:=\max\{\norm{\bC},\norm{\A{1}},\ldots,\norm{\A{\mc}}\}$, such that $\gam=1/\eps$. 
This is without loss of generality since, given the task of solving the general SDP in \Cref{eqn:intro-standard-sdp-primal,eqn:intro-standard-sdp-dual} to error $\eps$, one may  divide $\bC$ by $\omega r$, divide  $\A{j}$ by $\omega$, and divide $b_j$ by $\omega R$, and then equivalently solve the resulting unit-scale SDP to precision $\varepsilon/(\omega R r )$.
Under the change of variables $\bX'=\bX/\Rx$, solving the normalized
instance to accuracy $\eps/(\om\Rx\Ry)$ gives objective error at most
$\eps$ and constraint violations at most $\eps/\Ry$ in the original
units.
For a given family of instances, $\Rx$, $\Ry$, and $\om$ may or may not themselves
depend on $\dn$ and $\mc$. However, in the dimension column of the table we isolate only the explicit
dependence on $\dn$ and $\mc$, with remaining parameters fixed. The
low-rank dequantization and interior-point entries are included under their
own access and accuracy conventions, which we discuss below.

The main dimensional feature of both of our solvers is that their dependence
on $\dn$ and $\mc$ is additive. For fixed $\gam$ and sparsity $\sr$, the
oracle-based solver (\Cref{thm:intro-direct-solver}) and stochastic solver
(\Cref{thm:intro-regret-solver}) run in
$\widetilde O((\dn+\mc)\sr)$ time, compared with
$\Theta(\mc\dn\sr)$ entries in a saturated sparse-row input. Thus, in the
joint large-$\dn$, large-$\mc$ regime, they can run in time sublinear in
the input size while returning succinct solutions, rather than explicitly materializing a dense
$\dn\times\dn$ primal matrix. The stochastic solver further improves the
accuracy dependence while preserving this additive dimension dependence.

Prior sublinear classical algorithms obtain different tradeoffs by imposing
additional structure or stronger input access. Garber--Hazan
~\cite{GarberHazan2016} use an estimator whose second moment is controlled
by $F = \max_j \norm{\A{j}}_F^2$, so their sublinear guarantee is strongest when the
constraint Frobenius norms are small. For generic full-rank constraints,
however, $\norm{\A{j}}_F^2$ can be $\Theta(\dn)$ even when operator norm
$\norm{\A{j}}\leq1$ and sparsity $s = O(1)$ (consider the simple example that $\A{j} = \bI$), and as a result, their complexity returns to $O(mns)$. Similarly, Chia et al.~\cite{Chia2019,chia2020} obtain Tang-style
dequantizations with only polylogarithmic dependence on $\dn$ for low-rank or approximately low-rank matrices. Concretely, the normalized feasibility algorithm in Corollary 6.25 of Ref.~\cite{chia2020} has runtime
$O(\mc F^{7}\log^{13}\dn/\eps^{28} + F^{11}\log^{23}\dn/\varepsilon^{46})$, where $F\leq n$ is the same as defined above. In addition to requiring small $F$ to be competitive, this runtime requires the stronger sampling access model, which may hide a pre-processing cost linear in the input size. Our
algorithms instead assume only standard sparse-row access and require neither
low-rank constraints nor favorable Frobenius-norm bounds. Any vectors from
which our algorithms sample are explicitly constructed, with this cost included in the runtime.

Interior-point and cutting-plane methods occupy a complementary
high-accuracy regime~\cite{JLSW20,JKLPS20,HJSTZ22}. These methods achieve
polylogarithmic dependence on $1/\eps$, substantially better than the
polynomial accuracy dependence of the sparse-oracle algorithms in the
table. However, this improvement comes with a different computational tradeoff.
They operate in a full-input model and incur larger polynomial dependence
on $\dn$ and $\mc$, rather than seeking sublinear input dependence. For example, interior-point methods explicitly construct and solve $n \times n$ linear systems in each iteration, which can be dense even when the original SDP inputs are sparse, making sublinear dependence fundamentally out of reach.  We
include the interior-point method of Huang et al.~\cite{HJSTZ22} as a
representative result from this line of work. Thus, these methods can be
preferable in the very-high-accuracy regime, while our focus is on reducing
the dependence on the input dimensions when $\gam$ is bounded or
grows moderately.

Among sparse-oracle methods, our improved accuracy dependence builds most
directly on the randomized rank-one framework of Carmon et
al.~\cite{CDST19}. Their approach improves the accuracy scaling of the
matrix updates, but a generic sparse-row implementation still retains
multiplicative dependence on $\dn$ and $\mc$. Our stochastic solver combines
this rank-one approach with constraint sampling and stochastic payoff
estimation, preserving its favorable accuracy behavior while replacing the
multiplicative dimension dependence by an additive one. We develop this
connection in detail below.

Finally, the prior quantum solvers~\cite{BKL19,AG19} retain a square-root
advantage in both $\dn$ and $\mc$. Our algorithms classically realize the
Quantum OR sample-reuse mechanism responsible for separating the two
dimension-dependent costs, but they do not reproduce the Grover-type
square-root speedup in those costs. Consequently, which algorithm is
preferable depends on the dimensions, target accuracy, scale parameters,
and available input access. 

We also note a relevant related work by Franco Garrido et al.~\cite{franco2025socp}, which studies second-order cone programs (SOCPs), a special case of SDPs, and gives both quantum and classical
multiplicative-weights-based algorithms. They introduce a framework of sample reuse that achieves additive error for both their quantum and their classical algorithm.  However, their framework exploits the special low-rank structure of the Gibbs states arising
in SOCPs, and as a result, even in their quantum algorithm, the reused samples are unentangled classical strings. This can be viewed as sidestepping the Quantum OR Lemma in a special case, rather than actually simulating it---after all, the Quantum OR Lemma is nontrivial only in the case the samples are entangled, uncloneable states requiring processing via gentle measurement.   Consequently, the strategy does not clearly generalize to classical algorithms for general, high-rank SDPs, where it had remained open whether the same additive dependence
could be achieved, and the possibility of quartic speedup had persisted.

\subsection{Main Result 1: Dequantizing the Quantum OR Lemma for SDPs}
\label{sec:intro-fast-or}

We first isolate the sample-reuse mechanism underlying the Quantum OR and
show that it admits an efficient classical algorithm, in the structured
SDP setting. Here, the quantum state is not an arbitrary
unknown state. Rather, it is the Gibbs state of a sparse Hamiltonian, and each test is
determined by the expectation of a sparse observable. We reduce the resulting
search problem to the task of estimating many Gibbs expectations
simultaneously and give a classical estimator whose cost depends only additively on
$\dn$ and $\mc$. In
Part~2 (\Cref{sec:intro-regret-result}), we will discuss how this primitive results in efficient end-to-end SDP solvers.

Importantly, we do not attempt to directly classically simulate the gentle-measurement
procedure used by the Fast Quantum OR lemma of \cite{BKL19}. Instead, we
exploit the particular structure of its SDP application. The state is a Gibbs
state $\rho$ specified by a sparse Hamiltonian, while the tests are sparse
observables $\A{1},\ldots,\A{\mc}$, given explicitly as classical matrices. This
structure allows us to replace the quantum search procedure with the direct
classical estimation of the corresponding Gibbs expectations
$\Tr(\rho \A{j})$. Our approach also differs from the sample-and-query dequantization framework
popularized by Tang~\cite{Tang2019}. Rather than assuming specialized sampling access to the input, we work in the standard sparse-access model for SDP solving and use a Lánczos approximation procedure to estimate the relevant Gibbs expectations. In particular, our algorithm explicitly constructs and stores the length-$n$ and length-$m$ vectors it uses, while never materializing dense $n\times n$ matrices. Moreover, our bounds depend on the operator norms
of the constraint matrices rather than their Frobenius norms and, therefore,
apply to generic full-rank constraints. 

The rest of Part~1 develops the ingredients of the algorithm. Namely,
\Cref{sec:intro-fast-or-reduction} explains how the Fast Quantum OR search
problem reduces, in the SDP setting, to simultaneous Gibbs expectation
estimation, while \Cref{sec:intro-gibbs-estimator} develops an efficient
classical procedure for carrying out this estimation.

\subsubsection{The Fast Quantum OR Lemma and Its Reduction to Expectation Estimation}
\label{sec:intro-fast-or-reduction}

Let $\brho$ be a quantum state and let
$\mathbf{\Pi}_1,\ldots,\mathbf{\Pi}_{\mc}$ specify two-outcome measurements.
For each $j$, the measurement has outcomes ``accept'' and ``reject,'' with
corresponding POVM elements $\mathbf{\Pi}_j$ and $\bI-\mathbf{\Pi}_j$, where
$0\preceq\mathbf{\Pi}_j\preceq \bI$.  When this measurement is applied to
$\brho$, its acceptance probability is
$q_j:=\Tr(\mathbf{\Pi}_j\brho)$.  The abstract Fast Quantum OR problem asks to determine
whether there exists a $j$ for which $q_j$ is large, while consuming only a small number of copies of the
potentially expensive state $\brho$. Applying the measurements
sequentially on the same copy is  generally not possible because even a rejecting measurement
can disturb the state.  On the other hand, using independent copies of $\brho$ to probe $q_j$ for different $j$ \textit{multiplies} the high cost of preparing $\brho$ by the factor of $m$ (or $\sqrt{m}$ with Grover) cost of searching over values of $j$.  

The Fast Quantum OR lemma avoids this multiplication under an asymmetric
promise.  If either $\max_j q_j\geq1-\delta_1$ or
$\frac{1}{\mc}\sum_jq_j\leq\delta_2$, with
$\delta_2\leq(1-\delta_1)^2/(12\mc)$, then a single copy of $\brho$ and
$\widetilde O(\sqrt{\mc})$ coherent test applications suffice to
distinguish the two cases with constant bias~\cite{BKL19,AG19}.  The
procedure consumes the input copy and returns only an existence bit.
Gentle Quantum Search, or the equivalent Two-Phase Quantum Search
reduction, can then recover an accepting index using only
polylogarithmically (in $m$) many copies~\cite{BKL19,AG19}.  For SDP solving, the
key observation is that the same small collection of Gibbs states can support
the search over many constraints, rather than preparing a fresh Gibbs
state for each search query.

We aim to achieve an analogous classical sample-reuse mechanism. Rather than simulating the Fast Quantum OR as a black-box procedure for an unknown state and arbitrary measurements, however, we exploit the additional structure of the SDP setting and solve a stronger estimation problem tailored to this application. Specifically, in this setting, the state of interest is specified
implicitly as
$\Gibbs{\bH}:=e^{-\bH}/\Tr(e^{-\bH})$,
for a sparse Hamiltonian $\bH$, while the tests are determined by
$\sr$-row-sparse Hermitian observables
$\A{1},\ldots,\A{\mc}$.  The relevant quantities are collected in
the Gibbs expectation vector
\begin{align}
\bmu(\bH)
&:=
\bigl(
\Tr(\A{j}\Gibbs{\bH})
\bigr)_{j\in[\mc]}
\end{align}
and we wish to identify a $j$ for which $\mu_j(\bH)$ exceeds $b_j$ by at least a certain fixed margin---that is, we are searching for a violated constraint.
When $0\preceq\A{j}\preceq \bI$, we may take $\A{j}$ itself as the
``accept'' POVM element of a two-outcome measurement, whose acceptance
probability on a state $\brho$ is $\Tr(\A{j}\brho)$.  More generally, any
Hermitian observable satisfying $\norm{\A{j}}\leq1$ can be converted into a
valid two-outcome measurement by setting
$\mathbf{\Pi}_j=(\bI+\A{j})/2$.  Indeed,
$0\preceq\mathbf{\Pi}_j\preceq \bI$, and its acceptance probability is
\begin{align}
\Tr(\mathbf{\Pi}_j\brho)
&=
\frac{1+\Tr(\A{j}\brho)}{2}.
\end{align}
Thresholding this quantity against $(1+b_j+2\alpha)/2$, where $\alpha$ is a tolerance margin related to $\varepsilon$, and amplifying across the resulting acceptance-probability gap then produces the promise required by the Fast Quantum OR. This amplification uses additional copies of the Gibbs state, contributing to the accuracy-dependent overhead of the quantum procedure.

Instead of only deciding whether such a violating $j$ exists, as in Fast
Quantum OR, we consider the stronger task of producing an estimate
$\widehat{\bmu}$ of the entire Gibbs expectation vector $\bmu(\bH)$ with
entrywise additive accuracy. This task is not attempted by quantum methods
aiming for $\widetilde O(\sqrt{\mc})$ dependence, since even writing down
$\widehat{\bmu}$ requires $\Omega(\mc)$ time. If
\begin{align}
\max_j
\left|
\widehat\mu_j- \Tr(\A{j}\Gibbs{\bH})
\right|
&\leq
\alpha,
\end{align}
then thresholding the estimates with appropriate slack either identifies a
constraint violated by an additive $\Omega(\alpha)$ margin or certifies that
all constraints are satisfied up to additive $O(\alpha)$ margin. Thus,
simultaneous Gibbs expectation estimation directly implements the decision
and search tasks needed in the SDP application of the Fast Quantum OR, while
allowing the expensive Hamiltonian-dependent computation to be reused across
all observables.

\subsubsection{A Classical Simultaneous Gibbs Expectation Estimator}
\label{sec:intro-gibbs-estimator}

Our first main technical contribution solves this simultaneous Gibbs expectation estimation task using only a classical sparse description of $\bH$ and sparse-row access to the observables $\A{j}$. The algorithm neither prepares nor receives copies of a Gibbs state and it does not access the dense matrix $e^{-\bH}$ entrywise. Instead, it approximates the action of $e^{-\bH/2}$ on a collection of random vectors and reuses each resulting vector to estimate the expectations of all $\mc$ observables.

\begin{theorem}[Informal simultaneous Gibbs expectation estimator]
\label{thm:intro-gibbs-moment}
Let $\bH=\sum_{\ell=1}^{h}y_\ell\bB_\ell$ satisfy
$\norm{\by}_1\leq B$, where every $\bB_\ell$ is $\sr$-row-sparse and has
operator norm at most one.  For $\alpha,\del\in(0,1)$, a randomized
classical algorithm simultaneously estimates
$\Tr(\A{j}\Gibbs{\bH})$ for all $j\in[\mc]$ to additive error $\alpha$
with probability at least $1-\del$ in time
\begin{align}
    \widetilde O\!\left(
        \frac{\sr}{\alpha^2}
        \left[
            \dn h\min\{\dn,\sqrt{1+B}\}+\mc
        \right]
    \right).
    \label{eqn:intro-gibbs-moment-runtime}
\end{align}
\end{theorem}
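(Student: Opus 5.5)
Write $\Gibbs{\bH}=e^{-\bH/2}\,e^{-\bH/2}/Z$ with $Z=\Tr(e^{-\bH})$. Draw $N=\widetilde O(1/\alpha^2)$ independent random vectors $\bz_1,\ldots,\bz_N$ with $\E[\bz\bz^\dagger]=\bI$; Gaussian vectors are convenient because they give good concentration. For each $t$, compute an approximation $\bv_t\approx e^{-\bH/2}\bz_t$ using Lánczos with $k=\widetilde O(\min\{\dn,\sqrt{1+B}\})$ iterations. The degree follows from the standard polynomial-approximation bound for $e^{-x}$ on an interval of width $2B$. After shifting $\bH$ by an estimate of $\lambda_{\min}(\bH)$, which costs $\widetilde O(\dn h\sr\sqrt{1+B})$ by Lánczos, a degree $O(\sqrt{B\log(1/\alpha)})$ polynomial suffices for relative error. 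The bound $k\le\dn$ is trivial because Lánczos is exact after $\dn$ steps. Each matrix–vector product with $\bH$ costs $O(\dn h\sr)$, so this phase costs $\widetilde O(\dn h\sr\min\{\dn,\sqrt{1+B}\}/\alpha^2)$.

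Next comes the sample-reuse estimator. Let $\bp$ be the probability distribution over $[\dn]\times[N]$ with $p_{i,t}=|v_{t,i}|^2/\sum_{i',t'}|v_{t',i'}|^2$, built explicitly in $O(\dn N)$ time. Draw $M=\widetilde O(1/\alpha^2)$ pairs $(i,t)\sim\bp$. For each $j$, estimate the ratio by the variational-Monte-Carlo local estimator
\[
\widehat\mu_j=\frac1M\sum_{(i,t)}\frac{(\A{j}\bv_t)_i}{v_{t,i}},
\]
whose expectation is $\sum_t\bv_t^\dagger\A{j}\bv_t/\sum_t\norm{\bv_t}^2$. Computing $(\A{j}\bv_t)_i$ requires reading a single row of $\A{j}$, costing $O(\sr)$. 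The whole estimator vector therefore costs $O(\mc\sr M)=\widetilde O(\mc\sr/\alpha^2)$. To keep the estimator real, take the real part, which is valid since $\A{j}$ is Hermitian.

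For correctness, split the error into three terms. The first is the Lánczos error: relative error $\alpha$ in each $\bv_t$ changes the ratio by $O(\alpha)$ because $\norm{\A{j}}\le1$. The second is the trace-estimation error. $\sum_t\bz_t^\dagger e^{-\bH/2}\A{j}e^{-\bH/2}\bz_t$ over $\sum_t\bz_t^\dagger e^{-\bH}\bz_t$ concentrates to $\Tr(\A{j}\Gibbs{\bH})$. The numerator matrix is dominated in absolute value by $e^{-\bH}$, so Hanson–Wright gives relative deviation $O(\norm{e^{-\bH}}_F/\Tr e^{-\bH})\cdot\sqrt{\log(\mc/\del)/N}\le O(\sqrt{\log(\mc/\del)/N})$. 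A union bound over $j$ costs only a logarithm. The third is the sampling error, which is the key point. The second moment of one local term is
\[
\sum_{t}\norm{\A{j}\bv_t}^2\Big/\sum_t\norm{\bv_t}^2\le\norm{\A{j}}^2\le1.
\]
This is an operator-norm bound, not a Frobenius-norm bound.

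The main obstacle is that the local estimator is heavy-tailed: terms with tiny $|v_{t,i}|$ can be huge. A bounded second moment gives only Chebyshev, so I would use median-of-means to obtain failure probability $\del/\mc$ per $j$ with an $O(\log(\mc/\del))$ overhead, then union bound. Alternatively, clipping at $O(1/\alpha)$ incurs bias $O(\alpha)$ by the second moment bound, after which Bernstein applies. Summing the two phases gives \eqref{eqn:intro-gibbs-moment-runtime}.
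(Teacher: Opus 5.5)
Your argument is correct in outline, but it follows a different route from the paper. Two things differ.

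\emph{The paper's route.} The paper draws Haar unit vectors and samples exactly one coordinate from each normalized filtered vector. It weights that local energy by $W=\norm{e^{-\bH/2}\bu}_2^2$. Everything is then controlled by second moments alone: $\E[W^2]\le 2\tau^2$ from the Haar quadratic-form identity, the operator-norm bound on the local energy, and a median-of-ratios lemma. That lemma uses Chebyshev within groups and Chernoff across groups, so it handles the fluctuating denominator and the heavy tails in one step.

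\emph{Your route.} You decouple the two sources of randomness. First, a Hutchinson-type ratio over $N$ Gaussian starts is controlled by Hanson--Wright; your bound $\norm{e^{-\bH/2}\A{j}e^{-\bH/2}}_F\le\norm{e^{-\bH}}_F$ is valid via Schatten--H\"older. Second, you take $M$ draws from the pooled distribution $p_{i,t}\propto\abs{v_{t,i}}^2$. Its conditional mean is already the exact ratio $\sum_t\bv_t^\dagger\A{j}\bv_t/\sum_t\norm{\bv_t}_2^2$, so the coordinate step is a plain mean estimate with operator-norm second moment.

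\emph{What each buys.} The paper's version needs only elementary moment identities. It processes one filtered vector at a time and never stores all of them. Yours needs a sub-exponential concentration inequality and $O(\dn N)$ memory. It also needs a prefix-sum structure so that each of the $M$ draws costs $O(\log(\dn N))$ rather than $O(\dn N)$. In exchange it separates $N$ from $M$. The trace-estimation error scales like $\norm{e^{-\bH}}_F/\Tr(e^{-\bH})$, so high-effective-rank Gibbs states need far fewer Lánczos solves, while the $\mc\sr$ readout depends only on $M$. The worst-case bound is the same.

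\textbf{One step needs repair.}

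\emph{The shift does nothing.} In exact arithmetic Lánczos is shift-equivariant: the Krylov spaces are unchanged and $e^{\mathbf T+c\bI}=e^{c}e^{\mathbf T}$. So shifting $\bH$ by an estimate of $\lambda_{\min}(\bH)$ does not help.

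\emph{Uniform accuracy is not relative error.} A polynomial with uniform accuracy $\alpha$ (degree $O(\sqrt{B\log(1/\alpha)})$) gives only absolute error about $\alpha\,e^{-\lambda_{\min}(\bH)/2}\norm{\bz_t}_2\approx\alpha\,e^{-\lambda_{\min}/2}\sqrt{\dn}$. Meanwhile $\norm{e^{-\bH/2}\bz_t}_2$ can be $O(e^{-\lambda_{\min}/2})$. For example, take $\bH=\operatorname{diag}(-B,B,\ldots,B)$ with $e^{2B}\gg\dn$. So relative error $\alpha$ does not follow.

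\emph{The fix.} What makes relative error attainable is the randomness of the start. With probability $1-\eta$, the overlap $\abs{\bw_{\min}^\dagger\bz_t}$ with a bottom eigenvector of $\bH$ is at least polynomial in $\eta$. Also $\norm{\bz_t}_2=O(\sqrt{\dn+\log(1/\eta)})$. Hence you need polynomial accuracy about $\alpha\sqrt{\eta/\dn}$, and degree $O(\sqrt{\max\{B,L\}L})$ with $L=\log(\dn N/(\alpha\del))$, union-bounded over the $N$ starts. This is exactly the paper's Haar random-start Lánczos proposition, which uses $\operatorname{Beta}(1,\dn-1)$ anti-concentration. Alternatively, because your estimator pools all vectors, an aggregate argument with accuracy $\alpha/\sqrt{\dn}$ against $\sum_t\norm{\bv_t}_2^2\gtrsim N e^{-\lambda_{\min}}$ also suffices.

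Either fix adds only polylogarithmic factors, so your claimed runtime stands once it replaces the shifting argument.
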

\noindent \Cref{alg:intro-gibbs-estimator} summarizes the estimator.  We use
$K=\Theta(\alpha^{-2})$ samples in each of
$R=\Theta(\log(\mc/\del))$ independent groups.  Every sample performs one
Lánczos filtering step and one shared-coordinate readout, after which a
median-of-ratios aggregation produces the final estimates. We now
describe the key ingredients underlying these steps.

\begin{algorithm}[t]
\small
\caption{Simultaneous Gibbs Expectation Estimation}
\label{alg:intro-gibbs-estimator}
\begin{algorithmic}[1]
\Require Sparse Hamiltonian $\bH$, observables $(\A{j})_{j=1}^{\mc}$, accuracy $\alpha$, failure probability $\del$
\State \textbf{Parameters:}\quad $K\gets\Theta(\alpha^{-2})$, \quad $R\gets\Theta(\log(\mc/\del))$
\For{$r=1,\ldots,R$}
    \For{$k=1,\ldots,K$}
        \State Draw a Haar-random unit vector $\bu^{(r,k)}$
        \State Compute $\widetilde{\bv}^{(r,k)}\approx e^{-\bH/2}\bu^{(r,k)}$ by Lánczos
        \State $W^{(r,k)}\gets\|\widetilde{\bv}^{(r,k)}\|_2^2$, \quad
$\bpsi^{(r,k)}\gets\widetilde{\bv}^{(r,k)}/\|\widetilde{\bv}^{(r,k)}\|_2$
\State Sample $I^{(r,k)}\sim(|[\bpsi^{(r,k)}]_i|^2)_{i=1}^{\dn}$ in $O(\dn)$ time
        \State For every $j\in[\mc]$, set
        $X_j^{(r,k)}\gets
        W^{(r,k)}\operatorname{Re}\big\{
        (\A{j}\bpsi^{(r,k)})_{I^{(r,k)}}/(\bpsi^{(r,k)})_{I^{(r,k)}}
        \big\}$
    \EndFor
    \State For every $j\in[\mc]$, set
    $\widehat\mu_j^{(r)}\gets
    \big(\sum_{k=1}^{K}X_j^{(r,k)}\big)/
    \big(\sum_{k=1}^{K}W^{(r,k)}\big)$
\EndFor
\State \Return
$\widehat\mu_j\gets\operatorname{median}_{r\in[R]}\widehat\mu_j^{(r)}$
for every $j\in[\mc]$
\end{algorithmic}
\end{algorithm}

The first ingredient is a randomized rank-one representation of the
unnormalized Gibbs operator.  Draw a Haar-random unit vector $\bu$ and define
$\bv^{(\bu)}:=e^{-\bH/2}\bu$, together with
$W^{(\bu)}:=\norm{\bv^{(\bu)}}_2^2$ and
$\bpsi^{(\bu)}:=\bv^{(\bu)}/\|\bv^{(\bu)}\|_2$.  Haar isotropy gives
$\E_{\bu}[\bv^{(\bu)}(\bv^{(\bu)})^\dagger]=e^{-\bH}/\dn$, and hence
\begin{align}
    \Tr(\A{j}\Gibbs{\bH})
    &=
    \frac{
        \E_{\bu}\!\left[
            W^{(\bu)}
            \langle\bpsi^{(\bu)},\A{j}\bpsi^{(\bu)}\rangle
        \right]
    }{
        \E_{\bu}[W^{(\bu)}]
    }.
    \label{eqn:intro-weighted-gibbs-identity}
\end{align}
The weight $W^{(\bu)}$ restores the contribution lost when the filtered
vector is normalized.  In particular, the algorithm samples ordinary Haar starts and
keeps the norm as a statistical weight, rather than using rejection sampling
to draw directly from a Gibbs-dependent ensemble.

The remaining challenge is to estimate all $\mc$ expectations without applying every observable to the full $\dn$-dimensional vector. Once $\bpsi^{(\bu)}$ has been computed and stored, we can sample an index directly from its squared amplitudes in $O(\dn)$ time. Specifically, draw $I^{(\bu)}\sim\bigl(|[\bpsi^{(\bu)}]_i|^2\bigr)_{i\in[\dn]}$
and define the estimator
\begin{align} 
    Z_j^{(\bu)} &:= \operatorname{Re}\!\left( \frac{(\A{j}\bpsi^{(\bu)})_{I^{(\bu)}}} {[\bpsi^{(\bu)}]_{I^{(\bu)}}} \right). 
\end{align} 
This estimator mirrors the ``local energy'' sampling identity used in variational Monte Carlo~\cite[Section~III.C]{FoulkesEtAl2001}. It is unbiased, $\E_{I^{(\bu)}}[Z_j^{(\bu)}\mid\bpsi^{(\bu)}] =\langle\bpsi^{(\bu)},\A{j}\bpsi^{(\bu)}\rangle$, and its second moment is bounded by $\norm{\A{j}}^2$. Crucially, the sampling distribution depends only on $\bpsi^{(\bu)}$, so the same sampled coordinate can be reused for every observable. Evaluating $(\A{j}\bpsi^{(\bu)})_{I^{(\bu)}}$ then requires only row $I^{(\bu)}$ of $\A{j}$ and costs $O(\sr)$ time, allowing one coordinate sample to estimate all $\mc$ constraint energies in $O(\mc\sr)$ time.

This shared-coordinate reuse would not be useful if obtaining reliable
estimates required a dimension-dependent number of samples.  The
operator-norm second-moment bound prevents exactly this.  Although the
local-energy ratio can be large when the sampled amplitude is small, the
same coordinate is sampled with probability proportional to the square of
that amplitude.  These factors cancel in the second moment, giving
\begin{align}
    \E_{I^{(\bu)}}\big[
        (Z_j^{(\bu)})^2
        \mid\bpsi^{(\bu)}
    \big]
    &\leq
    \big\|\A{j}\bpsi^{(\bu)}\big\|_2^2
    \leq
    \norm{\A{j}}^2.
\end{align}
Hence, under $\norm{\A{j}}\leq1$, the sample complexity for fixed additive
accuracy is independent of $\dn$.  This contrasts with
Frobenius-controlled estimators, for which
$\norm{\A{j}}_F^2$ may be $\Theta(\dn)$ for the full-rank constraints
considered here.  Thus, the same mechanism that permits a shared coordinate
across all observables also avoids reintroducing a hidden dimension factor
through statistical averaging.

With this dimension-independent second-moment control in hand, it remains to
aggregate the weighted samples into simultaneous high-probability estimates.
Within each group, we average the numerator and denominator separately and
form their ratio. Taking the coordinatewise median of these independent
group-level ratio estimates then boosts the guarantee simultaneously over all
$\mc$ observables. This is analogous to the usual median-of-means
amplification, but applied to ratio estimates rather than directly to sample
means. Overall, $O(\alpha^{-2}\log(\mc/\del))$ random starts suffice. The
estimates across different observables are correlated by design because they
reuse the same filtered vectors and sampled coordinates, while independence
is required only across random starts.

The remaining issue is computational. The exact filtered vector $e^{-\bH/2}\bu$ is dense and cannot be formed by explicitly exponentiating $\bH$. We instead approximate it using randomized Lánczos with $\widetilde O(\min\{\dn,\sqrt{1+B}\})$ sparse Hamiltonian--vector products. After writing the resulting $\dn$-dimensional vector to memory, the additional $O(\dn)$ cost of normalizing it and constructing its $\ell_2$-sampling data structure is included in the filtering term of \Cref{thm:intro-gibbs-moment}. The random-start analysis of Carmon et al.~\cite{CDST19} provides the required relative approximation guarantee, while a stability argument propagates the resulting error through normalization, coordinate sampling, weighting, and the final ratio estimator. Consequently, each
random start incurs one $\dn$-dependent filtering cost and a separate
$\mc$-dependent readout cost, yielding the additive runtime in
\Cref{thm:intro-gibbs-moment}. The formal result,
\Cref{thm:classical-gibbs-moments}, combines the estimator moments from
\Cref{sec:energy-estimator}, the complex-Haar Lánczos guarantee from
\Cref{sec:lanczos}, and the robust-ratio and perturbation arguments from
\Cref{sec:robust-simultaneous-ratios,sec:gibbs-perturbation}.

\paragraph{Why This is a Nonstandard Dequantization.}
Unlike standard sample-and-query dequantization~\cite{Tang2019}, our algorithm assumes only sparse-row access to the SDP input.  The vectors produced by Lánczos are explicitly computed and stored, and the required samples are drawn directly from these intermediate vectors in $O(\dn)$ time, with this cost included in the runtime.  Our guarantees also differ from earlier sublinear SDP methods such as Garber--Hazan~\cite{GarberHazan2016}, whose estimator variance is controlled by $\norm{\A{j}}_F^2$.  Our estimator instead achieves operator-norm control and therefore remains efficient for generic full-rank constraints even when $\norm{\A{j}}_F^2=\Theta(\dn)$.  The speedup thus comes from exploiting the structured Gibbs states generated by the optimization algorithm and reusing each sample across all observables, rather than from stronger access assumptions or low-rank structure.  Accordingly, we classically realize the Quantum OR sample-reuse mechanism in this structured SDP setting, rather than simulating arbitrary gentle measurements on an unknown quantum state.

\subsection{Main Result 2: Quantum-Inspired Classical Sublinear-Time SDP Solvers}
\label{sec:intro-regret-result}

The simultaneous Gibbs expectation estimator provides a classical realization
of the sample-reuse mechanism underlying Quantum OR. We now use it to develop
two distinct quantum-inspired classical SDP solvers.

Our first solver inserts this estimator into the standard MMWU-based
primal-oracle framework, closely paralleling the architecture of the prior
quantum SDP solvers. On each round, the algorithm constructs a Gibbs-state
iterate, estimates its values against all constraints, and identifies a
constraint that is guaranteed to be violated whenever the current iterate is
not approximately feasible. As in the quantum setting, reusing the expensive
Gibbs computation across all constraints decouples the matrix dimension $\dn$
from the number of constraints $\mc$. For fixed accuracy and sparsity, the
resulting dependence on $\dn$ and $\mc$ is additive rather than
multiplicative, yielding a classical solver with sublinear input dependence.
Its main limitation is the accuracy dependence. Each of the
$\widetilde O(\eps^{-2})$ MMWU rounds requires
$\widetilde O(\eps^{-2})$ random starts to obtain sufficiently accurate
constraint estimates before choosing the next update, contributing to an
overall $\widetilde O(\eps^{-6.5})$ dependence, compared with
$\widetilde O(\eps^{-5})$ for the corresponding quantum solver.

Our second algorithm builds on the framework of Carmon et
al.~\cite{CDST19}, which replaces the full matrix response used on each
round by a randomized rank-one approximation whose error need only average
out over time. We push this stochastic viewpoint further, replacing the
accurate per-round constraint selection and payoff computation of the first
solver with randomized counterparts as well. Consequently, individual
rounds need not produce an accurate primal response or a certified violated
constraint. Correctness instead emerges from the aggregate behavior of the
optimization trajectory. This fully stochastic formulation retains additive
dependence on $\dn$ and $\mc$ while improving the normalized accuracy
dependence to $\widetilde O(\eps^{-4.5})$, better than even the
$\widetilde O(\eps^{-5})$ dependence of the best corresponding quantum
solver.

\subsubsection{SDP Feasibility as a Two-Player Game}
We begin by expressing SDP feasibility as a two-player zero-sum game.
This viewpoint provides a common framework for both of our algorithms and
will later allow us to interpret the second algorithm directly through
online-learning regret.

To begin, fix a target objective value $g\in[-\Rx,\Rx]$ and treat the requirement
$\inner{\bC}{\bX}\geq g$ as an additional feasibility constraint.
After introducing one extra coordinate to account for unused trace, each
requirement can be expressed as the nonnegativity of a linear feasibility
margin. Normalizing these margins gives matrices
$\mathbf D_1,\ldots,\mathbf D_{\mc}$ satisfying
$\norm{\mathbf D_j}\leq1$. The reduction preserves sparse-row access up
to constant factors. For simplicity, we continue to write $\dn$ and
$\mc$ for the lifted dimension and number of feasibility requirements.

The matrix player chooses a density matrix
$\brho\in\mathcal S_{\dn}$ representing a candidate primal solution,
while the constraint player chooses a distribution
$\bp\in\Delta_{\mc}$ over the feasibility requirements. Their payoff is
\begin{align}
\mathcal L(\brho,\bp)
&=
\left\langle
\sum_{j=1}^{\mc}p_j\mathbf D_j,\brho
\right\rangle.
\end{align}
The matrix player seeks to maximize this payoff, while the constraint
player seeks to minimize it. Furthermore, for a fixed matrix action $\brho$, minimizing over $\bp$ places all
weight on a least-satisfied feasibility requirement. The value of the game is, therefore, by standard minimax,
\begin{align}
\mathfrak s
&=
\min_{\bp\in\Delta_{\mc}}
\max_{\brho\in\mathcal S_{\dn}}
\mathcal L(\brho,\bp)=
\max_{\brho\in\mathcal S_{\dn}}
\min_{j\in[\mc]}
\inner{\mathbf D_j}{\brho}.
\label{eqn:intro-regret-game}
\end{align}
The target value $g$ is, thus, attainable exactly when
$\mathfrak s\geq0$.
For a pair of strategies $(\brho,\bp)$, the corresponding saddle-point
gap is
\begin{align}
\Gap_{\mathcal D}(\brho,\bp)
&=
\lambda_{\max}\!\left(
\sum_{j=1}^{\mc}p_j\mathbf D_j
\right)
-
\min_{j\in[\mc]}
\inner{\mathbf D_j}{\brho}.
\label{eqn:intro-regret-gap}
\end{align}
The first term is the best payoff available to the matrix player against
$\bp$, while the second is the best payoff available to the constraint
player against $\brho$. These quantities give upper and lower bounds on
the game value, so a small gap certifies that the two strategies are
close to equilibrium in payoff.

Solving these feasibility games to normalized accuracy
$\Theta(1/\gam)$, computing suitable game-value certificates, and
performing a gapped binary search over $g$ yields an additive-$\eps$
solution to the original SDP, where $\gam:=\Rx\Ry/\eps$ and we take
$\Rx,\Ry\geq1$. The dual-radius bound $\Ry$ determines the accuracy
required of the feasibility games rather than the radius of the
constraint player's simplex. The complete reduction and certificate
construction are given in \Cref{sec:standard-sdp-to-game}.

\paragraph{Oracle-Based versus Two-Sided Stochastic Realizations.}
The primal-feasibility routine of the oracle-based MMWU framework treats
the two sides asymmetrically. The matrix player follows a Gibbs update,
while a separation oracle either certifies approximate feasibility or
returns a violated constraint that determines the next update. Prior
quantum SDP solvers accelerate this search using the Fast Quantum OR and
Grover search. Our first classical solver retains this oracle-based
architecture and replaces the quantum constraint search with simultaneous
classical Gibbs expectation estimation.

Carmon, Duchi, Sidford, and Tian~\cite{CDST19} move toward a stochastic
realization by replacing the matrix player's deterministic response with
a randomized rank-one action. However, they still compute the full payoff
vector and update the matrix learner via a dense mixture of
constraints. Our second algorithm makes the interaction stochastic on
\emph{both} sides. It samples a single constraint for the matrix update and uses
a single sampled coordinate of the rank-one matrix action to obtain
stochastic payoff estimates for all constraints. Neither side, therefore,
requires an accurate response on every round. Instead, convergence is
established by controlling the cumulative regret of the two players.

\subsubsection{Oracle-Based MMWU with Simultaneous Gibbs Estimation}
\label{sec:intro-direct-dequantization}

\begin{algorithm}[t]
\small
\caption{Primal Oracle-Based MMWU with Simultaneous Gibbs Estimation}
\label{alg:intro-direct}
\begin{algorithmic}[1]
\Require Normalized constraints $D(\mathbf D_j)_{j=1}^{\mc}$, effective inverse accuracy $\gam$, failure probability $\del$
\State \textbf{Parameters:} $\TT\gets\widetilde O(\gam^2)$ and
$\alpha,\eta\gets\Theta(1/\gam)$
\State \textbf{Initialize} $\bH_1\gets\bzero$
\For{$t=1,\ldots,\TT$}
\State Let $\brho_t:=e^{-\bH_t}/\Tr(e^{-\bH_t})$ denote the Gibbs state
implicitly represented by $\bH_t$
\State Compute $\widehat e_t$ by \Cref{thm:intro-gibbs-moment} with
\(    \Pr\!\left[
        \max_j
        \left|
            \widehat e_{t,j}
            -
            \Tr(\mathbf D_j\brho_t)
        \right|
        \geq\alpha/2
    \right]
    \leq\del/\TT
   \)
\If{$\widehat e_t$ certifies approximate feasibility}
\State \Return $\bH_t$ as a succinct representation of the primal
solution $\brho_t$
\EndIf
\State Choose a certified violated constraint $j_t$ and set
$\bH_{t+1}\gets\bH_t-\eta\mathbf D_{j_t}$
\EndFor
\State $\overline{\bp}\gets\TT^{-1}\sum_{t=1}^{\TT}\be_{j_t}$
\State \Return $\overline{\bp}$ as a dual certificate
\end{algorithmic}
\end{algorithm}

We begin with our first classical SDP-solving algorithm, with pseudocode given in \Cref{alg:intro-direct}. It follows the MMWU-based SDP primal-oracle framework of Brandão et al.~\cite{BKL19} and van Apeldoorn--Gilyén~\cite{AG19}. At each round, the algorithm forms a Gibbs-state candidate and invokes an oracle to search for a violated constraint. If one is found, it updates the Hamiltonian in the direction of that constraint. The quantum and classical algorithms share these outer dynamics. The quantum algorithms implement the search using Fast Quantum OR and Grover-type procedures, whereas our classical solver uses the simultaneous Gibbs expectation estimator from Part~1. We apply this estimator with the normalized constraint matrices $\mathbf D_1,\ldots,\mathbf D_{\mc}$ as the observables. As illustrated in \Cref{fig:mmwu-game}, the resulting trajectory repeatedly updates the Gibbs candidate in response to a constraint returned by the oracle.

At round $t$, the normalized losses returned so far define a Hamiltonian
$\bH_t$, and the matrix action is
$\Gibbs{\bH_t}=e^{-\bH_t}/\Tr(e^{-\bH_t})$.  The oracle either certifies
approximate feasibility or returns a violated constraint, whose loss is
added to $\bH_t$.  MMWU controls the resulting sequence in aggregate, so
either a Gibbs iterate becomes approximately feasible or the average
returned loss yields a dual certificate.  The key asymmetry is that the
constraint side must still complete a sufficiently accurate search before
every update.

The Fast Quantum OR lemma decouples this search from Gibbs-state preparation by
reusing a shared state across many constraint tests, while Grover search
provides the square-root dependence on $\mc$.  For fixed accuracy and
sparsity, the quantum cost becomes
$\widetilde O(\sqrt{\dn}+\sqrt{\mc})$ rather than
$\widetilde O(\sqrt{\dn\mc})$.  Our simultaneous estimator reproduces the
same decoupling classically by estimating all energies
$(\Tr(\mathbf D_j\Gibbs{\bH_t}))_{j=1}^{\mc}$ from one shared sample collection.
The outer MMWU procedure is otherwise unchanged.

Because this solver retains the oracle-based MMWU architecture, the
simultaneous Gibbs estimates must be sufficiently accurate within every
round.  The MMWU procedure runs for
$\widetilde O(\gam^2)$ rounds with per-round estimation accuracy
$\Theta(1/\gam)$, so each round requires
$\widetilde O(\gam^2)$ random starts.  At round $t$, the Hamiltonian has
the form required by \Cref{thm:intro-gibbs-moment} with
$h=O(t)$ sparse terms and coefficient norm
$B=\widetilde O(t/\gam)$.  The corresponding Lánczos degree is
$\widetilde O(\sqrt\gam)$ before the optional $\dn$-step cap.  This gives
the following end-to-end guarantee.

\begin{theorem}[Informal oracle-based classical SDP solver]
\label{thm:intro-direct-solver}
For a sparse normalized SDP instance and a fixed candidate objective value
$g$, a randomized classical sparse-oracle algorithm determines whether there
exists a feasible solution achieving objective value at least $g$, up to the
prescribed additive accuracy, with probability at least $1-\del$.  It returns
either a succinct Gibbs-form primal certificate or a sparse dual certificate.
A standard logarithmic search over $g$ then approximates the SDP optimum to
additive error $\eps$ in time
$\widetilde O(\dn\sr\gam^{6.5}+\mc\sr\gam^4)$.
\end{theorem}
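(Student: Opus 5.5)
The proof has three layers. First, a standard MMWU regret analysis of \Cref{alg:intro-direct} on the normalized feasibility game. Second, a union bound over rounds for the per-round estimates from \Cref{thm:intro-gibbs-moment}. Third, a cost accounting, plus the binary search over $g$ via the reduction of \Cref{sec:standard-sdp-to-game}.

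\textbf{Correctness for a fixed $g$.} Fix $g$ and the normalized matrices $\mathbf D_1,\ldots,\mathbf D_{\mc}$ with $\norm{\mathbf D_j}\leq1$. Apply \Cref{thm:intro-gibbs-moment} in round $t$ with accuracy $\alpha/2$ and failure probability $\del/\TT$. A union bound shows that with probability at least $1-\del$ every estimate $\widehat e_t$ is entrywise $\alpha/2$-accurate. On this event, the algorithm certifies feasibility only when $\min_j\Tr(\mathbf D_j\brho_t)\geq-\alpha$; that $\brho_t$, given succinctly by $\bH_t$, is an approximately feasible primal certificate. Otherwise each $j_t$ satisfies $\Tr(\mathbf D_{j_t}\brho_t)\leq-\alpha/2$. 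Since the losses have operator norm at most one, the MMWU regret bound with step $\eta$ gives, for every density matrix $\brho$,
\begin{align}
\sum_{t=1}^{\TT}\Tr(\mathbf D_{j_t}\brho)\leq\sum_{t=1}^{\TT}\Tr(\mathbf D_{j_t}\brho_t)+\eta\TT+\frac{\log\dn}{\eta}.
\end{align}
Dividing by $\TT$ and taking $\eta=\Theta(1/\gam)$, $\TT=\Theta(\gam^2\log\dn)$ with suitable constants yields $\lambda_{\max}(\sum_j\overline p_j\mathbf D_j)\leq-\alpha/4<0$. Hence $\overline{\bp}$ certifies $\mathfrak s<0$, and it is a dual certificate with support at most $\TT$. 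Combining this with the game-to-SDP reduction and the gapped logarithmic search over $g$ turns these certificates into an additive-$\eps$ solution. The search costs only an extra logarithmic factor, together with a $\log\log$ adjustment of $\del$.

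\textbf{Runtime.} In round $t$ the Hamiltonian is $\bH_t=-\eta\sum_{\tau<t}\mathbf D_{j_\tau}$. Grouping repeated indices gives $h\leq t\leq\TT$ distinct sparse terms with $B=\eta t\leq\eta\TT=\widetilde O(\gam)$. \Cref{thm:intro-gibbs-moment} with $\alpha=\Theta(1/\gam)$ then costs
\begin{align}
\widetilde O\bigl(\sr\gam^2[\dn\,t\sqrt{\gam}+\mc]\bigr)
\end{align}
per round. Summing naively over $t\leq\TT$ gives $\dn\sr\gam^{2.5}\TT^2=\dn\sr\gam^{6.5}$ and $\mc\sr\gam^{2}\TT=\mc\sr\gam^4$, up to logarithmic factors.

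\textbf{Main obstacle.} The delicate point is that the Hamiltonian--vector product cost grows with $h$, the number of terms in $\bH_t$. The plan is to maintain $\bH_t$ as an explicit sparse matrix updated incrementally; its row sparsity is at most $\sr\cdot h$. Each Lánczos matvec then costs $O(\dn\sr h)$, which is exactly the $\dn h$ factor in \Cref{thm:intro-gibbs-moment} and produces the $\gam^{6.5}$ term. I would check that $\Theta(1/\gam)$ estimation accuracy suffices to keep the certified-violation margin $\alpha/2$ above the MMWU regret $\eta+\log\dn/(\eta\TT)$. I would also check that the estimator's failure probability enters only logarithmically, since it requires $\log(\mc\TT/\del)$ groups.
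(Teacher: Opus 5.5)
Your argument is correct. It is essentially the single-loop scheme of \Cref{alg:intro-direct}, whereas the paper's formal proof (\Cref{thm:direct-sdp}) is organized differently.

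The paper inherits the two-phase solver of van Apeldoorn--Gily\'en and defers convergence to that work. First comes an Arora--Kale dual phase. Its $\theta$-oracle is built from the simultaneous estimates via the geometric construction of \cite[Lemma~16]{AGGW17}, and it either certifies $\operatorname{OPT}>g$ or outputs a dual vector in original SDP units. Only afterwards is a primal phase run. That phase is your loop: violated-constraint search via \Cref{cor:gibbs-or} on the lifted matrices. The paper proves it terminates using the regret contradiction against a feasible $\brho^\star$ only.

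You instead apply the same regret bound against every density matrix. Non-termination of the loop then directly yields the certificate $\lambda_{\max}(\sum_j\overline p_j\mathbf D_j)<0$. This removes the separate dual phase and the geometric oracle. You can also represent the primal as $\Rx$ times the leading block of the lifted Gibbs state, which avoids estimating the scalar $q=\Tr(\mathbf P\brho)$.

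The paper's route buys certificates in exactly the format of \Cref{thm:direct-sdp}: a dual $\by$ with $\sum_jy_j\A{j}-\bC\succeq-\eps\bI$ and $\mathbf b^\top\by\leq g+\eps$, and a primal equal to $z$ times an $\dn$-dimensional Gibbs state. It also reuses an established convergence analysis as a black box. Yours is more self-contained and gives a game-level dual certificate, which can be converted to an SDP dual through the slack coordinate if needed. The cost accounting is identical: $\widetilde O(\dn\sr\gam^{5/2}t+\mc\sr\gam^2)$ per round, summed over $\widetilde O(\gam^2)$ rounds.

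One cosmetic slip: with $\alpha/2$-accurate estimates, no single threshold gives both ``certify $\Rightarrow\min_j\Tr(\mathbf D_j\brho_t)\geq-\alpha$'' and ``select $\Rightarrow\Tr(\mathbf D_{j_t}\brho_t)\leq-\alpha/2$''. Those two conclusions must be separated by twice the estimation error. Estimating to accuracy $\alpha/4$ and thresholding at $-3\alpha/4$ fixes this without affecting the asymptotics.
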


The formal algorithm and runtime analysis appear in
\Cref{sec:direct-gibbs-mmwu}.  The resulting solver has a particularly
simple dimensional interpretation.  The succinct primal representation
avoids the $\Omega(\dn^2)$ cost of materializing a dense matrix.  For fixed
accuracy and sparsity, the quantum solver scales as
$\widetilde O(\sqrt{\dn}+\sqrt{\mc})$, while our classical solver scales as
$\widetilde O(\dn+\mc)$.  It therefore recovers classically the additive
dimension dependence arising from the Quantum OR, without reproducing the
square-root search speedup.  In the bounded-radius, constant-accuracy
regime, this gives sublinear runtime relative to the
$\Theta(\mc\dn\sr)$-size sparse input representation.  In particular, the
algorithm can approximate the optimum without reading the full problem
specification. Its main limitation is the accuracy dependence.  The nested
$\widetilde O(\gam^2)$ MMWU rounds and
$\widetilde O(\gam^2)$ random starts per round contribute to the
$\widetilde O(\eps^{-6.5})$ dependence, compared with
$\widetilde O(\eps^{-5})$ for the corresponding quantum solver.  This
motivates removing the requirement for an accurate separation oracle on
every round.

\begin{figure}[t]
    \centering
    \includegraphics[width=0.5\linewidth]{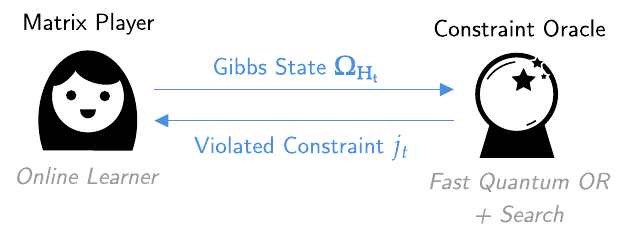}
    \caption{Oracle-based realization of the SDP game.  The matrix player
    evolves through an online Gibbs update, while a separation oracle
    returns a sufficiently violated constraint in each round.  Prior
    quantum SDP solvers implement this oracle using Fast Quantum OR and
    Grover search.}
    \label{fig:mmwu-game}
\end{figure}

\subsubsection{From One-Sided to Two-Sided Stochastic SDP Solving}
\label{sec:intro-two-sided-stochastic}

An online learner need not choose a best response on every round.  Instead,
its performance is measured over the entire sequence by its
\emph{regret}, which compares the cumulative payoff of the actions it
played with that of the best fixed action chosen in hindsight.

In the normalized SDP game derived above, the payoff on round $t$ is
$\mathcal L(\brho_t,\bp_t)
=\left\langle\sum_{j=1}^{\mc}p_{t,j}\mathbf D_j,\brho_t\right\rangle$.
The matrix player chooses $\brho_t\in\mathcal S_{\dn}$ and seeks to
maximize this payoff.  The constraint player chooses
$\bp_t\in\Delta_{\mc}$ and seeks to minimize it by placing weight on
constraints that are poorly satisfied by the matrix actions.  Their
cumulative regrets are
\begin{align}
    \operatorname{Reg}_{\rho}(\TT)
    &=
    \max_{\brho\in\mathcal S_{\dn}}
    \sum_{t=1}^{\TT}\mathcal L(\brho,\bp_t)
    -
    \sum_{t=1}^{\TT}\mathcal L(\brho_t,\bp_t),
    \notag\\
    \operatorname{Reg}_{p}(\TT)
    &=
    \sum_{t=1}^{\TT}\mathcal L(\brho_t,\bp_t)
    -
    \min_{\bp\in\Delta_{\mc}}
    \sum_{t=1}^{\TT}\mathcal L(\brho_t,\bp).
    \label{eqn:intro-player-regrets}
\end{align}
The first quantity measures how much more the matrix player could have
obtained by using the best fixed matrix against the sequence
$\bp_1,\ldots,\bp_{\TT}$.  The second measures how much the constraint
player could have reduced the cumulative payoff by using the best fixed
distribution against $\brho_1,\ldots,\brho_{\TT}$.  Sublinear regret means
that these differences are $o(\TT)$, so the average error per round
vanishes.

This cumulative guarantee translates directly into an approximate solution
of the SDP game.  Define the average actions
$\bar{\brho}=\frac{1}{\TT}\sum_{t=1}^{\TT}\brho_t$ and
$\bar{\bp}=\frac{1}{\TT}\sum_{t=1}^{\TT}\bp_t$.  When the two regrets are added,
the realized interaction payoffs
$\sum_{t=1}^T\mathcal L(\brho_t,\bp_t)$ cancel.  Bilinearity then gives
\begin{align}
    \Gap_{\mathcal C}
    \bigl(\bar{\brho},\bar{\bp}\bigr)
    &=
    \frac{1}{\TT}
    \left[
        \max_{\brho\in\mathcal S_{\dn}}
        \sum_{t=1}^{\TT}\mathcal L(\brho,\bp_t)
        -
        \min_{\bp\in\Delta_{\mc}}
        \sum_{t=1}^{\TT}\mathcal L(\brho_t,\bp)
    \right]=
    \frac{
        \operatorname{Reg}_{\rho}(\TT)
        +
        \operatorname{Reg}_{p}(\TT)
    }{\TT}.
    \label{eqn:intro-regret-to-gap-ideal}
\end{align}
Thus neither player needs to respond accurately on an individual round.
It is enough to control the cumulative regret of each sequence, and the
algorithm returns the averages of the actions played over the full
trajectory.  In particular, regret
$\widetilde O(\sqrt{\TT})$ for each player yields saddle-point gap
$\widetilde O(\TT^{-1/2})$, so $\widetilde O(\gam^2)$ rounds suffice to
reach normalized accuracy $O(1/\gam)$.  The approaches below differ in how
they obtain these cumulative guarantees and in how much exact computation
they require on each round.

\paragraph{One-Sided Stochasticity and \cite{CDST19}.}
Carmon et al.~\cite{CDST19} treat both players in the SDP game as online learners, but introduce stochasticity only on the matrix side.  Given cumulative gain $\bH$, the matrix
player uses the rank-one response
\begin{align}
    \mathcal P_{\bu}(\bH)
    &:={}
    \frac{e^{\bH/2}\bu\bu^\dagger e^{\bH/2}}
         {\langle\bu,e^{\bH}\bu\rangle}.
    \label{eqn:intro-rank-one-response}
\end{align}
A single response can be far from the Gibbs matrix and, in fact, is not even an unbiased
Gibbs estimator.  Instead, their average-projection theorem shows that the mean response $\overline{\mathcal P}(\bH):=\E_{\bu}\!\left[\mathcal P_{\bu}(\bH)\right]$
has small cumulative regret.  On round $t$, after conditioning on the
current Hamiltonian $\bH_t$, the fresh rank-one response
$\mathcal P_{\bu_t}(\bH_t)$ has expectation
$\overline{\mathcal P}(\bH_t)$.  Hence, for the current gain matrix
$\bG_t$,
\begin{align}
    \E\!\left[
        \left\langle
            \bG_t,\,
            \mathcal P_{\bu_t}(\bH_t)
            -
            \overline{\mathcal P}(\bH_t)
        \right\rangle
        \,\middle|\, \bH_t
    \right]
    =0.
\end{align}
The resulting payoff errors are therefore conditionally mean-zero and
concentrate over the full trajectory.  Thus convergence depends on the
cumulative behavior of the sampled responses, rather than on any
individual rank-one response approximating the Gibbs state.

This yields the improved accuracy dependence of Carmon et al., but their
sparse-row implementation still incurs two dense per-round operations.
First, the constraint learner evaluates the full payoff vector
$\bigl(\inner{\mathbf D_j}{\brho_t}\bigr)_{j=1}^{\mc}$ exactly, at cost
$O(\mc\dn\sr)$.  Second, the matrix learner updates its Hamiltonian using the dense mixture
$\sum_j p_{t,j}\mathbf D_j$.  Approximating
$e^{\bH_t/2}\bu$ by Lánczos requires repeated Hamiltonian--vector products,
each costing $O(\min\{\mc\dn\sr,\dn^2\})$ with this dense mixture.  Thus, although the one-sided stochastic
analysis improves the accuracy dependence, the generic sparse-row runtime
still scales multiplicatively in $\mc$ and $\dn$ and is therefore not
sublinear in the natural sparse input representation.

\paragraph{Two-Sided Stochasticity in Our Solver.}
We remove the two dense operations in the Carmon et al. framework through
two separate sampling steps.  First, the dual player maintains a
distribution $\bp_t$ and samples $j_t\sim\bp_t$, so the primal player
updates with the sparse pure action $\mathbf D_{j_t}$ rather than the dense mixture
$\sum_j p_{t,j}\mathbf D_j$.  Second, to update the dual player, we reuse the
shared-coordinate sampling idea underlying our simultaneous Gibbs expectation
estimator from Part~1.  A single coordinate sampled from the primal
player's rank-one action provides simultaneous payoff estimates for all
constraints.  Unlike in the oracle-based solver, these estimates are not
averaged to high accuracy within each round, but are fed directly into the
dual update.  Before clipping, they are conditionally unbiased with
operator-norm-controlled second moment, and the same coordinate can be
reused across all constraints in $O(\mc\sr)$ time.  This two-sided
stochastic interaction is illustrated in \Cref{fig:intro-final-algo}.

\begin{figure*}[t]
    \centering
    \includegraphics[width=0.80\textwidth]{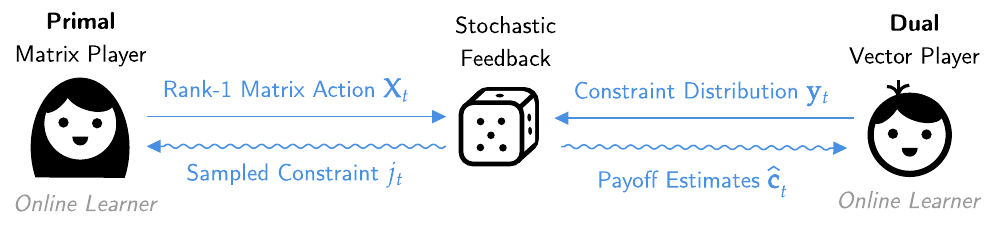}
    \caption{Two-sided stochastic realization of the SDP game.  The primal
    player sends a rank-one matrix action and the dual player sends a
    distribution over constraints to the stochastic feedback mechanism.
    The mechanism samples a constraint for the primal update and uses one
    sampled coordinate of the matrix action to form payoff estimates for
    the dual update.  Both players update online and the returned solution
    averages their actions over the full trajectory.}
    \label{fig:intro-final-algo}
\end{figure*}

Both players, therefore, update directly from sampled information. No dense
constraint mixture enters the Hamiltonian, and no full payoff vector is
computed exactly. Individual rounds may be poor approximations to best
responses, while the averages of the rank-one matrix actions and sampled
constraints converge over the trajectory. On the primal side, this is
precisely the perspective introduced by Carmon et al., who control the
fluctuations of randomized rank-one responses only in aggregate. Our dual analysis follows the same cumulative-regret framework, but stochastic payoff feedback introduces an additional challenge. Although unbiased with bounded second moment, it can still take large values on individual rounds.
Controlling these occasional large values is therefore the main challenge in the dual regret analysis.

We address this by clipping the stochastic payoff feedback before it is passed to
the Hedge update.  For
$\Lam=\Theta(\log(\dn\mc/\del))$, we set
$B=\Theta(\sqrt{\TT/\Lam})$ and define $\operatorname{clip}_{B}(x):=\max\{-B,\min\{x,B\}\}$.
The clipped feedback is uniformly bounded by $B$, while the
operator-norm second-moment bound implies that clipping introduces bias at
most $1/B$.  This choice of $B$ balances the cumulative clipping bias
against the range-dependent concentration term, preserving the
operator-norm advantage without introducing any dimension-dependent range
bound.

The pseudocode is summarized in \Cref{alg:intro-two-sided}, with the full algorithm developed
in \Cref{sec:algorithm}. The convergence proof appears in
\Cref{sec:analysis} and the runtime is analyzed in
\Cref{subsec:main-theorem-proof}.  Hermitian inputs are reduced to the
real-symmetric setting in \Cref{sec:realification}.

\begin{theorem}[Informal two-sided stochastic SDP solver]
\label{thm:intro-regret-solver}
Suppose $\max_j\norm{\mathbf D_j-c\mathbf I}\leq\om$ and set
$\gam=\Rx\Ry\om/\eps\geq1$.  With probability at least $1-\del$, the
algorithm returns normalized primal and dual actions which, after rescaling
by $\Rx$ and $\Ry$, have original-scale saddle-point gap at most $\eps$
after $\widetilde O(\gam^2)$ rounds.  Its sparse-oracle
running time is
\begin{align}
    \widetilde O\!\left(
        \dn\sr\min\!\left\{\gam^2,\mc,\frac{\dn}{\sr}\right\}\gam^{2.5}
        +\mc\sr\gam^2
    \right).
    \label{eqn:intro-regret-runtime}
\end{align}
When $\gam^2\leq\min\{\mc,\dn/\sr\}$, this becomes
$\widetilde O(\dn\sr\gam^{4.5}+\mc\sr\gam^2)$.
The primal output is represented by the average of the sampled rank-one
states and the dual output by the empirical distribution of sampled
constraints.
\end{theorem}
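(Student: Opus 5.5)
The plan is to run the two-sided stochastic game dynamics for $\TT=\widetilde O(\gam^2)$ rounds and bound the saddle-point gap of the averaged actions through the regret identity \eqref{eqn:intro-regret-to-gap-ideal}. That identity must be corrected by martingale, clipping and Lánczos approximation terms. Each correction will be shown to be $\widetilde O(\sqrt{\TT})$, so that dividing by $\TT$ gives normalized gap $O(1/\gam)$. Rescaling by $\Rx,\Ry$ then gives original-scale gap $\eps$, after the shift $\mathbf D_j\mapsto(\mathbf D_j-c\bI)/\om$, which changes no gap since $\Tr\brho=1$.

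\textbf{Matrix side.} The matrix player runs the Carmon et al.\ rank-one scheme with cumulative gain $\bH_t=\eta\sum_{s<t}\mathbf D_{j_s}$. I will invoke their average-projection regret bound for $\overline{\mathcal P}$ against the realized gains $\mathbf D_{j_t}$, with step size $\eta=\Theta(\sqrt{\log \dn/\TT})$. Three corrections convert this into regret against the mixtures $\sum_jp_{t,j}\mathbf D_j$. First, the difference $\langle\mathbf D_{j_t}-\sum_jp_{t,j}\mathbf D_j,\cdot\rangle$ is a bounded martingale difference, handled by Azuma/Freedman. Second, the difference between $\mathcal P_{\bu_t}(\bH_t)$ and $\overline{\mathcal P}(\bH_t)$ is also a martingale difference. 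The comparator is a fixed maximizer of a random sum, so I would bound $\lambda_{\max}$ of a matrix martingale via matrix Freedman, paying $\sqrt{\TT\log \dn}$. Third, the Lánczos error is made $\le 1/\TT$ per round, using relative accuracy from the random-start analysis with degree $\widetilde O(\sqrt{\norm{\bH_t}})=\widetilde O(\sqrt{\eta t})=\widetilde O(\sqrt\gam)$.

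\textbf{Constraint side.} The constraint player runs Hedge on the clipped local-energy estimates $\clip_B(\widehat\ell_{t,j})$ with $B=\Theta(\sqrt{\TT/\Lam})$. The second-order Hedge bound gives regret $\log \mc/\eta_p+\eta_p\sum_t\sum_jp_{t,j}\widehat\ell_{t,j}^2$, whose expected second moment per round is at most $1$ by the operator-norm bound. Clipping bias per round is at most $\E[Z^2]/B\le 1/B$, totaling $\TT/B=\widetilde O(\sqrt{\TT})$. The deviation between the clipped estimates and their conditional means must hold uniformly over the $\mc$ comparators $\be_j$. For this I will use Freedman with range $B$ and variance $\TT$, which gives $\sqrt{\TT\Lam}+B\Lam=\widetilde O(\sqrt{\TT})$, followed by a union bound over $j$.

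\textbf{Main obstacle.} The hard step is the same-history coupling. The dual feedback is computed from the same sample $\bu_t$ (and coordinate $I_t$) that defines the matrix action. The conditional mean of the unclipped estimate is therefore $\langle\mathbf D_j,\mathcal P_{\bu_t}(\bH_t)\rangle$ only after also conditioning on $\bu_t$. I would define filtrations in two stages, $\Fil{t}\subset\Fil{t}'=\sigma(\Fil{t},\bu_t)$. The dual errors are martingale differences with respect to $\Fil{t}'$, and the matrix errors with respect to $\Fil{t}$. The payoff terms $\sum_t\mathcal L(\mathcal P_{\bu_t},\bp_t)$ then cancel exactly between the two regrets, which is the step that avoids any bias.

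\textbf{Runtime.} Each round does three things. It samples $j_t$ in $\widetilde O(1)$ after maintaining the Hedge weights in $O(\mc)$. It performs Lánczos with $\widetilde O(\sqrt\gam)$ matrix–vector products with $\bH_t$, each costing $\dn\sr\min\{t,\mc,\dn/\sr\}$, since $\bH_t$ has at most $\min\{t,\mc\}$ distinct terms or is at worst dense. It then reads out all $\mc$ energies in $O(\mc\sr)$. Summing over $\TT=\widetilde O(\gam^2)$ rounds gives \eqref{eqn:intro-regret-runtime}.
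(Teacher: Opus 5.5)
Your argument is sound in substance, but it is organized differently from the paper's, and the difference matters. The paper never passes through the mixture identity \eqref{eqn:intro-regret-to-gap-ideal}. It takes the dual output to be the empirical distribution $\zout=\TT^{-1}\sum_t\Ry\be_{\jt{t}}$. That distribution's best-response value, $\Rx\lambda_{\max}(\sum_t\Gt{t})$, is exactly the comparator for the gains $\Gt{t}=\Ry\A{\jt{t}}$ actually fed to the matrix learner. The decomposition $\TT\Gap(\Xout,\zout)=\MatSide+\VecSide$ is therefore exact on the realized trajectory, with cancelling term $\inner{\Gt{t}}{\Xtilde{t}}=\Ry[\ctilde{t}]_{\jt{t}}$. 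The Carmon et al.\ bound then applies deterministically, and every remaining correction is a \emph{scalar} martingale:
\begin{itemize}
\item rank-one sampling, $\sum_t\inner{\Gt{t}}{\xbar{t}-\Xt{t}}$, where the gain is fixed after conditioning on $\jt{t}$;
\item pure-constraint sampling, $\Ry\sum_t([\ctilde{t}]_{\jt{t}}-\inner{\ctilde{t}}{\phat{t}})$, where the payoff vector is fixed after conditioning on $\ut{t}$;
\item Hedge feedback noise given $\Fil{t-1/2}$, which is essentially your two-stage filtration.
\end{itemize}
You instead convert the matrix regret from the realized gains to the mixtures $\sum_jp_{t,j}\mathbf D_j$. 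The played-action part of that conversion is just the paper's pure-constraint sampling term. The comparator part, $\lambda_{\max}(\sum_t\mathbf D_{j_t})$ versus $\lambda_{\max}(\sum_t\sum_jp_{t,j}\mathbf D_j)$, is a genuine matrix martingale and needs matrix Azuma/Freedman. That is where your matrix tool belongs. The $\mathcal P_{\bu_t}$-versus-$\overline{\mathcal P}$ correction, by contrast, involves only played actions and is scalar. With matrix concentration your route goes through at an extra $\widetilde O(\sqrt{\TT\log(\dn/\del)})$. The detour costs one more concentration inequality and buys nothing here.

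More importantly, your route certifies the gap of $\bar{\bp}=\TT^{-1}\sum_t\bp_t$, not of the empirical distribution of sampled constraints that the statement names as the dual output. To match the statement, either apply the matrix bound once more in the reverse direction or, more simply, skip the conversion as the paper does. This is not cosmetic. The end-to-end reduction runs Lánczos on the returned mixture to compute the upper certificate of \Cref{lem:approximate-upper-certificate}. The empirical distribution has support at most $\min\{\TT,\mc\}$. A dense $\bar{\bp}$ would instead require products with a mixture of all $\mc$ constraints, reintroducing the multiplicative $\mc\dn\sr$ cost the algorithm is designed to avoid.

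Two smaller points. First, the second-order Hedge term must be controlled for the realized sum $\sum_t\sum_jp_{t,j}\widehat\ell_{t,j}^2$, not just its conditional mean; the paper uses Freedman with range $B^2$. The second-order bound itself also needs $\eta_pB\le1$. Second, write the cancellation with the computed Lánczos action rather than $\mathcal P_{\bu_t}$, since the feedback's conditional mean is the payoff of $\psitilde{t}$. That way Lánczos error is charged only on the matrix side.
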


\begin{algorithm}[t]
\small
\caption{Two-Sided Stochastic SDP Solver}
\label{alg:intro-two-sided}
\begin{algorithmic}[1]
\Require Normalized constraints $(\mathbf C_j)_{j=1}^{\mc}$, effective inverse accuracy $\gam$, failure probability $\del$
\State \textbf{Parameters:} $\TT\gets\widetilde O(\gam^2)$, $\Lam\gets\Theta(\log(\dn\mc/\del))$, $\lrX\gets\Theta(\sqrt{\log(\dn)/\TT})$, $B\gets\Theta(\sqrt{\TT/\Lam})$, and $\lrY\gets\min\{1/B,\Theta(\sqrt{\log(\mc)/\TT})\}$
\State \textbf{Initialize} $\Ht{1}\gets\bzero$ and $\widehat{\bell}_1\gets\bzero$
\For{$t=1,\ldots,\TT$}
    \State $\phat{t}\gets\softmax(-\lrY\widehat{\bell}_t)$ and draw $\jt{t}\sim\phat{t}$
    \State Draw a Haar-random $\ut{t}$ and compute
    $\psitilde{t}\approx e^{\lrX\Ht{t}/2}\ut{t}/\norm{e^{\lrX\Ht{t}/2}\ut{t}}_2$ by Lánczos
    \State $\rtilde{t}\gets\psitilde{t}\psitilde{t}^{\dagger}$ and draw $\isamp{t}\sim\abs{\psitilde{t}}^2$
    \State For every $j\in[\mc]$, set
    $[\chat{t}]_j\gets\operatorname{clip}_{B}\!\left(\operatorname{Re} \left((\mathbf C_j\psitilde{t})_{\isamp{t}}/(\psitilde{t})_{\isamp{t}}\right)\right)$
    \State $\Ht{t+1}\gets\Ht{t}+\mathbf C_{\jt{t}}$ and $\widehat{\bell}_{t+1}\gets\widehat{\bell}_t+\chat{t}$
\EndFor
\State \Return $\brho^{\mathrm{out}}\gets\TT^{-1}\sum_{t=1}^{\TT}\rtilde{t}$ and $\bp^{\mathrm{out}}\gets\TT^{-1}\sum_{t=1}^{\TT}\be_{\jt{t}}$
\end{algorithmic}
\end{algorithm}

The analysis follows the adaptive trajectory generated by the algorithm
itself.  On the matrix side, the average-projection theorem controls the
mean rank-one response, while martingale concentration shows that sampling
one response per round introduces only a small cumulative error.  Sampling
a single constraint instead of using the full mixture is handled similarly.
On the constraint side, a variance-sensitive Hedge analysis uses the
operator-norm second-moment bound of the stochastic payoff estimator, with clipping
controlling the occasional large feedback values at the cost of a small
bias.  Finally, the error from finite-step Lánczos is bounded by comparing
each approximate response with the corresponding exact rank-one response
for the same realized Hamiltonian and random start.  Together, these bounds
give $\Gap_{\mathcal C}(\bar{\brho},\bar{\bp})=\widetilde O(1/\sqrt{\TT})$, so $\TT=\widetilde O(\gam^2)$ rounds suffice.

The same constraint sampling that enables the two-sided stochastic analysis
also keeps the Hamiltonian sparse enough for efficient Lánczos updates.
After $t$ rounds, the Hamiltonian contains at most
$\min\{t,\mc\}$ distinct constraint matrices, so applying it to a vector
costs $O(\dn\sr\min\{t,\mc,\dn/\sr\})$
where the final term reflects the $O(\dn^2)$ cost of dense
matrix--vector multiplication.  Since the total Krylov degree across all
rounds is $\widetilde O(\TT^{5/4})$, these Hamiltonian--vector products
determine the leading computational cost.  Each round additionally requires
$O(\dn)$ time to form the rank-one response and $O(\mc\sr)$ time to read
one sparse row from every constraint.  Substituting
$\TT=\widetilde O(\gam^2)$ yields
\Cref{eqn:intro-regret-runtime}.  For fixed $\gam$ and $\sr$, the resulting
runtime is $\widetilde O(\dn+\mc)$ even for full-rank constraints with
$\norm{\mathbf C_j}_F^2=\Theta(\dn)$.
\subsection{Discussion}
\label{sec:intro-discussion}

Our results isolate the component of the quantum SDP speedup that can be reproduced classically. Namely, we show that the sample-reuse mechanism underlying the Quantum OR lemma admits an efficient classical realization for the structured Gibbs states and sparse observables arising in SDP solving. Importantly, this does not constitute a general dequantization of the Quantum OR lemma for arbitrary states and measurements. Our algorithms instead rely on problem-specific classical access, using sparse Hamiltonian representations for randomized Lánczos and sparse-row queries to the constraint matrices for the energy estimator, which do not apply to an unknown quantum state and black-box measurements.

The remaining dimensional separation is the usual Grover advantage.
Our algorithms explicitly construct an $\dn$-dimensional vector and query
one sparse row of each constraint, leading to linear rather than
square-root dependence on $\dn$ and $\mc$.  This gap is fundamental in
the standard sparse-oracle model, where known lower bounds require
$\Omega(\dn+\mc)$ classical queries in the worst case, compared with
$\Omega(\sqrt{\dn}+\sqrt{\mc})$ quantum queries~\cite{BKL19}.  Thus, for
fixed accuracy and sparsity, our $\widetilde O(\dn+\mc)$ dependence is
essentially optimal classically.  After dequantizing the Quantum OR,
the dimensional advantage retained by the quantum algorithms is therefore
precisely the standard quadratic Grover speedup which manifests across many other applications. Given the widespread belief that superquadratic speedups are necessary to realize practical advantage for the foreseeable future \cite{Babbush2021}, our findings cast doubt on the hope that SDP solving will be a meaningful practical application of quantum computers.

It may seem surprising that SDPs can be solved without reading the entire input. After all, the optimum can in principle depend on every constraint and every matrix entry. The key is that an optimization algorithm need not determine the contribution of every piece of the input individually. Instead, iterative methods can use a small amount of adaptively chosen or randomly sampled information in each round to determine how to update the current solution. In the multiplicative-weights approach, this takes the particularly intuitive form of searching for a violated constraint and using it to guide the next update. Our regret-based approach instead relies on stochastic feedback, but similarly avoids ever constructing or inspecting the input in its entirety. In both cases, the algorithm touches only a small fraction of the input while still accumulating enough information over the course of the optimization to approximate the optimum.

The sublinear runtimes should also be interpreted together with our
implicit-output convention.  The oracle-based solver returns a succinct
Gibbs-form primal certificate, while the two-sided stochastic solver
returns an average of at most $\widetilde O(\gam^2)$ rank-one factors and
an empirical distribution over the sampled constraints.  Explicitly
materializing a generic dense primal matrix would already require
$\Omega(\dn^2)$ time.  The significance of the result is instead that the
SDP value and a succinct approximate solution can be obtained without
reading the full $\Theta(\mc\dn\sr)$-size sparse input.

Our results also suggest revisiting the optimization architecture of quantum SDP
solvers.  Existing quantum algorithms largely accelerate the oracle-based
Arora--Kale MMWU framework, retaining the requirement that every round
produce a sufficiently accurate constraint response.  By contrast, our
two-sided classical solver obtains its improved accuracy dependence by
allowing both players to operate on stochastic feedback and controlling
the resulting errors only over the full optimization trajectory. A closely related quantum precedent is provided by Bouland, Getachew, Jin,
Sidford, and Tian~\cite{bouland2023quantum} in the setting of (zero-sum game formulations of) linear programs (LPs).  Their approach combines stochastic optimization
with dynamic quantum Gibbs sampling for slowly changing distributions.
It is therefore natural to ask whether an analogous architecture can be
developed for SDPs.  In particular, can a quantum two-player SDP solver
retain the $\widetilde O(\sqrt{\dn}+\sqrt{\mc})$ dimension dependence of
the existing quantum algorithms while improving their dependence on
$\gam$ through a trajectory-level stochastic analysis?  Such an algorithm
would require matrix-valued analogues of the dynamic sampling and
cumulative-feedback ideas used for LPs, together with a way to control
matrix-response and constraint-feedback errors over the entire trajectory.
Our classical analysis suggests that online-learning and martingale
techniques may provide a useful starting point for this direction.

\paragraph{Remaining Paper Organization.}
\Cref{sec:problem} introduces the sparse-access model, the zero-sum game
framework, and the reduction from SDP optimization to approximate
saddle-point computation.  \Cref{sec:energy-estimator} develops the
coordinate-sampling estimator for sparse observables and
\Cref{sec:lanczos} establishes the randomized Lánczos guarantees used to
approximate matrix-exponential--vector products.  Building on these
ingredients, \Cref{sec:gibbs-or} develops the simultaneous Gibbs
expectation estimator (dequantization of the Fast Quantum OR).  \Cref{sec:direct-solver} uses this estimator to
dequantize Gibbs-state constraint search and obtain the direct
oracle-based Gibbs--MMWU solver.  \Cref{sec:algorithm} then develops our
two-sided stochastic SDP solver and gives its high-probability convergence
and sparse-oracle runtime analysis.  Finally, \Cref{sec:realification}
extends the real-symmetric analysis of the stochastic solver to Hermitian
SDPs.

\section*{Acknowledgments and AI Disclosure}
\label{sec:acknowledgments}

We thank Oskar Painter, James Hamilton, Nafea Bshara, Peter DeSantis, and Andy Jassy for their involvement and support of the research activities at the AWS Center for Quantum Computing. A.G. acknowledges funding from the AWS Center for Quantum Computing. F.V. conducted this work as a summer intern at the AWS Center for Quantum Computing.

The authors disclose the use of AI tools in the development and preparation
of this manuscript. The observation of an apparent quartic quantum speed-up
for SDP solving, together with the main ideas underlying the dequantization
of the Quantum OR lemma in the SDP setting, originated with the authors
in 2023 and 2024, without the use of AI tools. The authors formally wrote out
these ideas further in the summer of 2026, with assistance (but no major contributions) from GPT-5.6 Sol.
The subsequent stochastic-solver approach emerged through interactions
between the authors and GPT-5.6 Sol. It was then developed, formalized, and
verified by the authors with further assistance from GPT-5.6 Sol, largely in working out and verifying the concentration guarantees. GPT-5.6 Sol and GPT-6 Astra were also used as writing and editing tools, with all
AI-assisted text subject to substantial revision and modification by the authors.

\section{Preliminaries}
\label{sec:preliminaries}
\label{sec:problem}

This section fixes the matrix and game notation used throughout the paper,
formalizes the sparse-row oracle model, and recalls the standard reduction
from SDP optimization to a bilinear zero-sum game over a spectrahedron
and a simplex.  

\subsection{Matrix Settings and Notation}

All vectors and matrices are denoted in boldface, and vectors are column
vectors.  The symbol $^\dagger$ denotes conjugate transpose, while
$^\top$ denotes transpose for real-valued objects.  The Hilbert--Schmidt
inner product is
\begin{align}
\inner{\bA}{\bB}:=\Tr(\bA^\dagger\bB).
\end{align}
We write $\norm{\bA}$, $\norm{\bA}_F$, and
$\norm{\bA}_1:=\Tr\sqrt{\bA^\dagger\bA}$ for the operator, Frobenius,
and Schatten trace norms, respectively.  For a Hermitian matrix $\bA$,
we denote its largest and smallest eigenvalues by
$\lambda_{\max}(\bA)$ and $\lambda_{\min}(\bA)$, respectively.  The
normalized spectrahedron is denoted
\begin{align}
\cS_{\dn}
&:=
\{\bX\succeq0:\Tr(\bX)=1\},
\end{align}
and the probability simplex is denoted
\begin{align}
\Delta_{\mc}
&:=
\{\bp\in\R_{\geq0}^{\mc}:\norm{\bp}_1=1\}.
\end{align}
Finally, $\widetilde O(f)$ denotes $O(f)$ with polylogarithmic factors
suppressed.

The simultaneous Gibbs energy estimator is most naturally stated for Hermitian
matrices and complex Haar-random vectors.  The two-sided stochastic solver
is analyzed for real symmetric matrices and real spherical directions,
matching the average-projection theorem of Carmon, Duchi, Sidford, and
Tian~\cite{CDST19}.  The realification argument in
\Cref{sec:realification} transfers the latter result to Hermitian inputs
with only constant-factor overhead while preserving the game value and
saddle-point gap.

\subsection{Sparse-row oracle and arithmetic model}

The input matrices are accessed through the following classical analogue
of the sparse-matrix oracle used by quantum SDP solvers.  The model charges
only for entries and rows actually queried, rather than
an up-front cost for scanning the full sparse representation.

\begin{definition}[Sparse Matrix Oracle Access]
\label{def:sparse-access}
Let $\bM\in\C^{\dn\times\dn}$ be an $\sr$-row-sparse matrix. We take $1\leq\sr\leq\dn$, and use the same convention for all row-sparsity bounds. Sparse
oracle access to $\bM$ consists of the following two oracles:
\begin{enumerate}[leftmargin=2.4em,label=\textup{(\roman*)}]
    \item $\Osparse(\bM,i,k)$ returns the column index of the $k$-th
    nonzero entry in row $i$ of $\bM$, or a distinguished null symbol if
    row $i$ contains fewer than $k$ nonzero entries.
    \item $\mathcal O_{\bM}(i,\ell)$ returns the entry $[\bM]_{i,\ell}$.
\end{enumerate}
\end{definition}

A saturated instance contains $\Theta(\mc\dn\sr)$ scalar nonzeros, but an
oracle algorithm need not read all of them. We also assume $O(1)$-time random access to explicitly stored scalar input parameters. Throughout the paper, we work in an idealized real-arithmetic model and do
not consider finite-precision or bit-complexity issues.  

\subsection{Squared-Magnitude Sampling from Explicit Vectors}
\label{sec:squared-magnitude-sampling}

Throughout this work, we will occasionally need to sample a coordinate
of an explicitly constructed vector according to its squared magnitude.
For a nonzero vector $\bx\in\C^d$, define for each $i\in[d]$,
\begin{align}
p_{\bx}(i)
&:=
\frac{\abs{[\bx]_i}^2}{\norm{\bx}_2^2}.
\label{eqn:squared-magnitude-distribution}
\end{align}
Such a sample can be generated directly in $O(d)$ time.  Let
$W:=\norm{\bx}_2^2$, draw $r$ uniformly from $[0,W)$, and scan the
coordinates of $\bx$ while maintaining the cumulative weight
\begin{align}
C_t:=\sum_{i=1}^t \abs{[\bx]_i}^2.
\end{align}
We output the first index $t$ for which $C_t>r$.  Then
\begin{align}
\Prb[t=i]
&=
\frac{C_i-C_{i-1}}{W}
=
\frac{\abs{[\bx]_i}^2}{\norm{\bx}_2^2}
=
p_{\bx}(i),
\end{align}
so this procedure samples exactly from the desired distribution.

All vectors to which we apply this procedure are constructed and stored
explicitly by our algorithms.  Thus their coordinates can be queried in
$O(1)$ time, while a squared-magnitude sample costs $O(d)$ time and
requires only $O(1)$ additional space.  In particular, we do not assume
sample-and-query access to any input vector or constraint matrix.
\subsection{Properties of Haar-Random Vectors}
\label{sec:haar-random-prelims}

We will use random unit vectors drawn uniformly from the complex unit
sphere.  Formally, we write
$\bu\sim\operatorname{Haar}(\C^{\dn})$ for a Haar-random unit vector in
$\C^{\dn}$.  The Haar distribution is invariant under unitary
transformations, so no direction is preferred.  The following two moment
identities capture the symmetry properties that we will need.

\begin{fact}[Haar first moment]
\label{fact:haar-first-moment}
If $\bu\sim\operatorname{Haar}(\C^{\dn})$, then $\E_{\bu}[\bu\bu^\dagger]=\bI/\dn$.
\end{fact}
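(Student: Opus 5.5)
The identity $\E_{\bu}[\bu\bu^\dagger]=\bI/\dn$ follows from unitary invariance of the Haar measure combined with a trace computation. Set $\bM:=\E_{\bu}[\bu\bu^\dagger]$, which is well defined because every entry of $\bu\bu^\dagger$ is bounded by one in modulus.

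The first step is to show that $\bM$ commutes with every unitary. For any unitary $\mathbf U\in\C^{\dn\times\dn}$, the vector $\mathbf U\bu$ has the same Haar distribution as $\bu$. Hence
\begin{align}
\mathbf U\bM\mathbf U^\dagger
=\E_{\bu}\bigl[(\mathbf U\bu)(\mathbf U\bu)^\dagger\bigr]
=\E_{\bu}[\bu\bu^\dagger]
=\bM .
\end{align}

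The second step is to conclude that $\bM$ is a scalar multiple of the identity. I would give a direct argument rather than invoke Schur's lemma.
\begin{itemize}
\item Take $\mathbf U$ to be the diagonal unitary with a phase $-1$ in position $i$ and $+1$ elsewhere. Then $\mathbf U\bM\mathbf U^\dagger$ flips the sign of every off-diagonal entry in row $i$ and column $i$. Invariance therefore forces $[\bM]_{ik}=0$ for all $k\neq i$.
\item Take $\mathbf U$ to be the permutation matrix swapping coordinates $i$ and $k$. Then $\mathbf U\bM\mathbf U^\dagger$ swaps the diagonal entries $[\bM]_{ii}$ and $[\bM]_{kk}$, so invariance forces them to be equal.
\end{itemize}
Together these give $\bM=c\,\bI$ for some scalar $c$.

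The third step fixes the constant. By linearity of trace and expectation, and because $\norm{\bu}_2=1$ almost surely,
\begin{align}
\Tr(\bM)=\E_{\bu}[\Tr(\bu\bu^\dagger)]=\E_{\bu}[\norm{\bu}_2^2]=1 .
\end{align}
Since $\Tr(c\,\bI)=c\dn$, this gives $c=1/\dn$.

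None of these steps presents a real obstacle. The only point needing care is that the diagonal sign-flip and permutation matrices are unitary, which is immediate, so the invariance from the first step applies to them.
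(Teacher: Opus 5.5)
Your proof is correct and complete. The paper gives no argument of its own here; it only points to Mele's tutorial. There, $\E_{\bu}[\bu\bu^\dagger]=\bI/\dn$ is the $k=1$ instance of the general Haar moment identity $\E_{\bu}\bigl[(\bu\bu^\dagger)^{\otimes k}\bigr]=\Pi_{\mathrm{sym}}^{(k)}/\binom{\dn+k-1}{k}$, which is derived from full unitary invariance via the symmetric-subspace (Schur--Weyl) machinery.

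Your route uses the same invariance principle, but only for the finite group of signed permutation matrices. That makes it elementary and self-contained. It also shows the identity holds for any unit-norm distribution invariant under signed permutations, for instance the real directions $\bu\sim\operatorname{Unif}(\mathbb S^{\dn-1})$ used by the stochastic solver.

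In exchange, the cited general identity delivers \Cref{fact:haar-second-moment} at the same time (take $k=2$), and your finite-symmetry argument cannot. A uniformly random signed basis vector $\pm\be_i$ has the same symmetries and the same first moment $\bI/\dn$. Yet it has $\E|u_1|^4=1/\dn$ rather than the Haar value $2/(\dn(\dn+1))$. The second-moment fact therefore genuinely requires the continuous part of the unitary invariance, or equivalently the explicit Dirichlet law of $(|u_i|^2)_{i\in[\dn]}$.
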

\noindent Thus, on average, a Haar-random rank-one projector is proportional to the
identity.  See, for example, \cite[Section~3]{Mele2024}.

\begin{fact}[Haar quadratic-form second moment]
\label{fact:haar-second-moment}
Let $\bu\sim\operatorname{Haar}(\C^{\dn})$.  For fixed matrices
$\bM,\bN\in\C^{\dn\times\dn}$,
\begin{align}
    \E_{\bu}\!\left[
        (\bu^\dagger\bM\bu)
        (\bu^\dagger\bN\bu)
    \right]
    &=
    \frac{
        \Tr(\bM)\Tr(\bN)+\Tr(\bM\bN)
    }{
        \dn(\dn+1)
    }.
    \label{eqn:haar-quadratic-form-second-moment}
\end{align}
\end{fact}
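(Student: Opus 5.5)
The plan is to reduce the identity to Weingarten-type calculus for the second moment of $\bu\bu^\dagger$. Writing $\bu^\dagger\bM\bu=\Tr(\bM\bu\bu^\dagger)$, the product becomes
\begin{align}
(\bu^\dagger\bM\bu)(\bu^\dagger\bN\bu)=\Tr\bigl((\bM\otimes\bN)(\bu\bu^\dagger)^{\otimes 2}\bigr).
\end{align}
So it suffices to compute $\E_{\bu}[(\bu\bu^\dagger)^{\otimes 2}]$. By unitary invariance of the Haar measure, this operator commutes with every $\mathbf U\otimes\mathbf U$. It is also supported on the symmetric subspace $\mathrm{Sym}^2(\C^{\dn})$, since $(\bu\bu^\dagger)^{\otimes2}$ is the projector onto $\bu\otimes\bu$, which is invariant under the swap.

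The first step is to note that $\mathrm{Sym}^2(\C^{\dn})$ is an irreducible representation of the unitary group. By Schur's lemma, the expectation is therefore a scalar multiple of the symmetric projector $\mathbf P_{\mathrm{sym}}=(\bI+\mathbf F)/2$, where $\mathbf F$ is the swap operator. Taking traces fixes the scalar: the left side has trace $1$, and $\Tr\mathbf P_{\mathrm{sym}}=\dn(\dn+1)/2$. This gives
\begin{align}
\E_{\bu}[(\bu\bu^\dagger)^{\otimes 2}]=\frac{\bI+\mathbf F}{\dn(\dn+1)}.
\end{align}
The second step substitutes this and uses $\Tr(\bM\otimes\bN)=\Tr\bM\,\Tr\bN$ together with the swap trick $\Tr((\bM\otimes\bN)\mathbf F)=\Tr(\bM\bN)$, which yields \eqref{eqn:haar-quadratic-form-second-moment}.

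The only step that needs care is the irreducibility and Schur's-lemma argument. If I want to avoid representation theory, I would instead use a direct Gaussian computation. Write $\bu=\mathbf g/\norm{\mathbf g}_2$ with $\mathbf g$ a standard complex Gaussian vector. Because $\norm{\mathbf g}_2$ is independent of the direction $\bu$, we have
\begin{align}
\E[g_ag_b\bar g_c\bar g_d]=\E\norm{\mathbf g}_2^4\,\E[u_au_b\bar u_c\bar u_d].
\end{align}
Wick's theorem gives $\E[g_ag_b\bar g_c\bar g_d]=\delta_{ac}\delta_{bd}+\delta_{ad}\delta_{bc}$. Since $\norm{\mathbf g}_2^2$ is a sum of $\dn$ independent unit-mean exponentials, $\E\norm{\mathbf g}_2^4=\dn(\dn+1)$. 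Contracting the resulting fourth moment with $\bM$ and $\bN$ gives the same formula. I regard this Gaussian route as the most self-contained, and it also reproves \Cref{fact:haar-first-moment} as a special case.
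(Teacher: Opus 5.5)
Your proposal is correct: the paper gives no proof of its own and simply cites \cite[Section~3]{Mele2024}, where the result follows from the same symmetric-subspace identity $\E_{\bu}[(\bu\bu^\dagger)^{\otimes 2}]=(\bI+\mathbf F)/(\dn(\dn+1))$ that you derive via Schur's lemma, followed by the same swap-trick contraction. Your Gaussian/Wick computation (using $\E\norm{\mathbf g}_2^4=\dn(\dn+1)$) is also correct and gives a self-contained check that avoids representation theory.
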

\noindent This identity controls the second moments of quadratic forms evaluated on
a Haar-random direction and will be used to bound the variance of our
random estimators.  See, for example, \cite[Section~3]{Mele2024}.

\subsection{The Zero-Sum Game Framework}

Our algorithms operate on a bilinear zero-sum game between a matrix
player and a constraint player.  The matrix player chooses
$\bX$ such that $\bX/\Rx \in\cS_{\dn}$ and seeks to maximize the payoff, while the
constraint player chooses $\by$ such that $\by/\Ry\in\Delta_{\mc}$ and seeks to
minimize it.  The radii $\Rx$ and $\Ry$ are kept explicit because they
determine the payoff scale, learning rates, and effective inverse
accuracy.

Let $\A{1},\ldots,\A{\mc}\in\R^{\dn\times \dn}$ be real symmetric
matrices.  We assume that
every $\A{j}$ is $\sr$-row-sparse, accessed according to
\Cref{def:sparse-access}, and that a common operator-norm bound
$\om$ is supplied such that
\begin{align}
    \max_{j\in[\mc]}
    \norm{\A{j}}
    &\leq
    \om.
    \label{eqn:omega}
\end{align}
Consequently, $\A{j}$ has spectral width
$\lambda_{\max}(\A{j})-\lambda_{\min}(\A{j})\leq 2\om$. For radii $\Rx,\Ry>0$, we will define the radius-$\Rx$ spectrahedron as
\begin{align}
    \Rx\cS_{\dn}
    &:={}
    \left\{\bX\succeq0:\Tr(\bX)=\Rx\right\}
    \label{eqn:primal-domain}
\end{align}
and the radius-$\Ry$ scaled probability simplex as
\begin{align}
    \Ry\Delta_{\mc}
    &:={}
    \left\{\by\in\R_{\geq0}^{\mc}:\norm{\by}_1=\Ry\right\}.
    \label{eqn:dual-domain}
\end{align}
The extreme points of $\Rx\cS_{\dn}$ are the rank-one matrices
$\Rx\bpsi\bpsi^\dagger$, where $\bpsi\in\C^{\dn}$ is a unit vector,
while the extreme points of $\Ry\Delta_{\mc}$ are the pure constraint
actions $\Ry\be_j$, where $\be_j\in\R^{\mc}$ denotes the $j$th standard
basis vector.  We refer to these extreme points as the \emph{pure
strategies} of the two players.  More generally, every
$\bX\in\Rx\cS_{\dn}$ can be written as a convex combination of rank-one
pure strategies via its spectral decomposition, and every
$\by\in\Ry\Delta_{\mc}$ can be written as a convex combination of the
pure constraint actions $\Ry\be_j$.  We therefore view arbitrary
$\bX$ and $\by$ as \emph{mixed strategies}.  Our ultimate stochastic solver will operate by sampling pure strategies whose averages realize these mixed actions.

Having specified the players' strategy spaces, we next define how their
actions interact.  A mixed strategy $\by$ of the constraint player induces
the corresponding weighted combination of the constraint matrices through the
adjoint map
\begin{align}
    \cA^\ast\by
    &:={}
    \sum_{j=1}^{\mc}y_j\A{j}.
    \label{eqn:adjoint-map}
\end{align}
Against a matrix strategy $\bX$, the resulting payoff is
\begin{align}
    \mathcal{L}(\bX,\by)
    &:={}
    \inner{\cA^\ast\by}{\bX} = \sum_{j=1}^{\mc}y_j\inner{\A{j}}{\bX},
    \label{eqn:zero-sum-payoff}
\end{align}
i.e. a weighted aggregate of the
constraint values $\inner{\A{j}}{\bX}$ selected by $\by$.
The matrix player seeks to maximize $\mathcal{L}$, while the constraint
player seeks to minimize it.  The value of the resulting zero-sum game is
\begin{align}
    \mathfrak{s}
    &:={}
    \min_{\by\in \Ry\Delta_{\mc}}
    \max_{\bX\in \Rx\cS_{\dn}}
    \mathcal{L}(\bX,\by)
    =
    \max_{\bX\in \Rx\cS_{\dn}}
    \min_{\by\in \Ry\Delta_{\mc}}
    \mathcal{L}(\bX,\by),
    \label{eqn:saddle-point}
\end{align}
where the equality follows from the finite-dimensional bilinear case of
Sion's minimax theorem~\cite[Theorem~3.4]{Sion1958}.

To measure how far a candidate pair $(\bX,\by)$ is from equilibrium, we
consider the improvement available to either player through a unilateral
best response.  This defines the game's so-called saddle-point gap
\begin{align}
    \Gap(\bX,\by)
    &:={}
    \max_{\bX'\in\Rx\cS_{\dn}}
    \mathcal{L}(\bX',\by)
    -
    \min_{\by'\in\Ry\Delta_{\mc}}
    \mathcal{L}(\bX,\by')
    \label{eqn:gap}
\end{align}
The matrix player's best response to $\by$ concentrates the trace-$\Rx$
matrix on a top eigenvector of $\cA^\ast\by$, while the constraint
player's best response to $\bX$ places all mass on a minimum payoff
constraint.  By the Rayleigh--Ritz variational
principle~\cite{Bhatia1997}, the gap can more explicitly be expressed as
\begin{align}
    \Gap(\bX,\by)
    &=
    \Rx\lambda_{\max}(\cA^\ast\by)
    -
    \Ry\min_{j\in[\mc]}\inner{\A{j}}{\bX}.
\end{align}
In particular, the gap is nonnegative, and the two
best-response values bound the possible game value, as
\begin{align}
    \min_{\by'\in\Ry\Delta_{\mc}}
    \mathcal{L}(\bX,\by')
    &\leq
    \mathfrak{s}
    \leq
    \max_{\bX'\in\Rx\cS_{\dn}}
    \mathcal{L}(\bX',\by).
    \label{eqn:gap-brackets-value}
\end{align}
Hence $\Gap(\bX,\by)\leq\eps$ certifies that $(\bX,\by)$ is an
additive-$\eps$ approximate saddle point.

Every payoff in this game has magnitude at most $\Rx\Ry\om$, since
\begin{align}
    \abs{
        \inner{\cA^\ast\by}{\bX}
    }
    &\leq
    \Rx\norm{\cA^\ast\by}
    \leq
    \Rx\Ry\om.
    \label{eqn:payoff-scale}
\end{align}
For target accuracy $\eps$ and failure probability $\del$, it is convenient
to define the effective inverse-accuracy parameter
\begin{align}
    \gam
    &:=
    \frac{\Rx\Ry\om}{\eps}
    \label{eqn:complexity-parameters}
\end{align}
which measures the desired additive error $\eps$ relative to the natural payoff scale
$\Rx\Ry\om$ of the game.   We work in the nontrivial accuracy regime where $0<\eps\leq \Rx\Ry\om$, such that $\gam\geq1$.

\subsection{Mapping SDPs to Zero-Sum Games}
\label{sec:standard-sdp-to-game}

We now connect the zero-sum game formulation to the standard primal--dual
SDP in \Cref{eqn:intro-standard-sdp-primal,eqn:intro-standard-sdp-dual}.  The reduction is standard in
matrix-multiplicative-weights approaches to SDP solving
\cite{AroraKale2016,AG19}, but we make the dependence on the primal and
dual radii explicit.  The basic idea is to fix a candidate objective value
$g$, encode the objective requirement together with the SDP constraints as
a feasibility game, and then recover an approximately optimal primal
solution by a gapped binary search over $g$.

A point that will be important below is that the original SDP dual variable
is not embedded directly into this game.  The constraint player chooses a
probability distribution and therefore has radius one.  The original
dual-radius bound $\Ry$ instead determines how accurately the game must be
solved in order to distinguish attainable objective values from values
above $\operatorname{OPT}$.

\subsubsection{Embedding the Trace-Bounded Primal Domain}
\label{subsec:trace-bounded-embedding}

Assume that there exists an optimal primal solution $\bX^\star$ with
$\Tr(\bX^\star)\leq\Rx$ and an optimal dual solution $\by^\star$ with
$\norm{\by^\star}_1\leq\Ry$.  We may assume $\Rx,\Ry\geq1$, enlarging the
supplied upper bounds if necessary.  Under the normalization of
\Cref{eqn:intro-standard-sdp-primal,eqn:intro-standard-sdp-dual}, we have $\norm{\bC}\leq1$ and
$\norm{\A{j}}\leq1$ for every $j\in[\mc]$.  Hence every
$\bX\succeq\bzero$ with $\Tr(\bX)\leq\Rx$ satisfies
\begin{align}
    \abs{\inner{\A{j}}{\bX}}
    &\leq
    \Rx.
\end{align}
A threshold $b_j<-\Rx$ is incompatible with the promised
trace-bounded feasible domain.  For $b_j>\Rx$, replace $b_j$ by $\Rx$
without deleting the constraint.  This leaves the trace-bounded primal
feasible set unchanged.  Moreover, complementary slackness for the
promised optimal primal--dual pair gives $y_j^\star=0$ whenever the
original $b_j>\Rx$, so the same dual optimum and its radius bound are
preserved.  This cap can be applied on each threshold query.  We may
consequently assume that, for every $j\in[\mc]$,
\begin{align}
    \abs{b_j}
    &\leq
    \Rx
    \label{eqn:threshold-range}
\end{align}

Our game solver uses a fixed-trace spectrahedron, whereas the SDP permits
$\Tr(\bX)<\Rx$.  We account for the unused trace by adding one coordinate.
For every $\bX\succeq\bzero$ with $\Tr(\bX)\leq\Rx$, define
\begin{align}
    \bX^\uparrow
    &:=
    \begin{pmatrix}
        \bX & \bzero \\
        \bzero^\dagger & \Rx-\Tr(\bX)
    \end{pmatrix}
    \in
    \Rx\cS_{\dn+1}.
    \label{eqn:standard-sdp-primal-lift}
\end{align}
Conversely, if $\bY\in\Rx\cS_{\dn+1}$ and $\bX$ is its leading
$\dn\times\dn$ principal block, then
$\bX\succeq\bzero$ and $\Tr(\bX)\leq\Rx$.  Thus the additional coordinate
simply records unused trace.

The two-sided stochastic solver is stated for real symmetric matrices.
For Hermitian SDP inputs, we first apply the gap-preserving realification
from \Cref{sec:realification}, which changes the dimension and row sparsity
by at most constant factors while preserving the relevant trace radii,
operator norms, and saddle-point guarantees.  To avoid additional
notation, we continue to write $\dn$ for the dimension after this
preprocessing.

\subsubsection{The Target-Value Feasibility Game}
\label{subsec:target-value-feasibility-game}

Fix a candidate objective value $g\in[-\Rx,\Rx]$.  We wish to determine
whether there exists $\bX\succeq\bzero$ with $\Tr(\bX)\leq\Rx$ satisfying
\begin{align}
    \inner{\bC}{\bX}
    &\geq
    g \qquad \text{and}, \qquad \forall~j \in [m], \quad
    \inner{\A{j}}{\bX}
    \leq
    b_j
    \label{eqn:target-feasibility-requirements}
\end{align}
We encode the objective requirement and the $\mc$ SDP constraints by the
payoff matrices
\begin{align}
    \bD_0(g)
    &:=
    \begin{pmatrix}
        \bC & \bzero \\
        \bzero & 0
    \end{pmatrix}
    -
    \frac{g}{\Rx}\bI_{\dn+1},
    \notag\\
    \bD_j(g)
    &:=
    \frac{b_j}{\Rx}\bI_{\dn+1}
    -
    \begin{pmatrix}
        \A{j} & \bzero \\
        \bzero & 0
    \end{pmatrix},
    \qquad
    \forall~j\in[\mc].
    \label{eqn:target-feasibility-matrices}
\end{align}
If $\bY\in\Rx\cS_{\dn+1}$ has leading principal block $\bX$, then
\begin{align}
    \inner{\bD_0(g)}{\bY}
    &=
    \inner{\bC}{\bX}-g\\
    \inner{\bD_j(g)}{\bY}
    &=
    b_j-\inner{\A{j}}{\bX}.
    \label{eqn:standard-sdp-feasibility-margins}
\end{align}
Thus the payoffs are precisely the feasibility margins of the objective
and constraint requirements.

Let the matrix player choose
$\bY\in\Rx\cS_{\dn+1}$ and the constraint player choose
$\bp\in\Delta_{\mc+1}$.  The resulting game value is
\begin{align}
    \mathfrak{s}_g
    &:=
    \min_{\bp\in\Delta_{\mc+1}}
    \max_{\bY\in\Rx\cS_{\dn+1}}
    \inner{
        \sum_{j=0}^{\mc}p_j\bD_j(g)
    }{
        \bY
    }=
    \max_{\bY\in\Rx\cS_{\dn+1}}
    \min_{0\leq j\leq\mc}
    \inner{\bD_j(g)}{\bY},
    \label{eqn:standard-sdp-target-game}
\end{align}
where the equality follows from the finite-dimensional bilinear minimax
theorem~\cite{Sion1958}.  Consequently,
\begin{align}
    \mathfrak{s}_g\geq0
    \quad\Longleftrightarrow\quad
    g
    \text{ is attainable}.
    \label{eqn:target-game-sign}
\end{align}

The reduction preserves sparse access up to constant factors.  By
\Cref{eqn:threshold-range}, every $\bD_j(g)$ is
$(\sr+1)$-row-sparse and
\begin{align}
    \norm{\bD_0(g)}
    &\leq
    \norm{\bC}+\frac{\abs g}{\Rx}
    \leq
    2\\
    \norm{\bD_j(g)}
    &\leq
    \norm{\A{j}}+\frac{\abs{b_j}}{\Rx}
    \leq
    2, \qquad \forall~j\in [\mc].
    \label{eqn:target-game-width}
\end{align}
A row of a payoff matrix is generated in $O(\sr)$ time by
reading the corresponding original row and merging the diagonal term.
All accesses to these derived matrices below are complete row scans or
matrix--vector products, so no constant-time oracle for the $k$-th
nonzero after cancellation is needed.
Thus every target-value game has a uniform operator-norm scale, independent
of the particular threshold $g$.  We use the supplied game-width upper bound
\begin{align}
    \om_D
    &:=
    2\geq\max_{0\leq j\leq\mc}\norm{\bD_j(g)}.
    \label{eqn:target-game-width-parameter}
\end{align}
This is the parameter that plays the role of $\om$ when we apply the
generic game solver and the Lánczos upper-certificate routine to the
target-value game.

\subsubsection{From Approximate Saddle Points to SDP Solutions}
\label{subsec:game-to-sdp-solution}

We now show how an approximate saddle point of the target-value game
yields an approximate solution of the original SDP.  The main point is
that the two returned strategies provide lower and upper bounds on the
game value.  These bounds let us distinguish attainable objective values
from values sufficiently far above $\operatorname{OPT}$, which is enough
to perform a gapped binary search over the objective value.

Suppose that the game solver applied to
\Cref{eqn:standard-sdp-target-game} returns
$(\bY^{\mathrm{out}},\bp^{\mathrm{out}})$.  Define
\begin{align}
    \ell_g
    &:=
    \min_{0\leq j\leq\mc}
    \inner{\bD_j(g)}{\bY^{\mathrm{out}}},
    \label{eqn:target-game-lower-certificate}
    \\
    u_g
    &:=
    \Rx
    \lambda_{\max}\left(
        \sum_{j=0}^{\mc}
        p_j^{\mathrm{out}}\bD_j(g)
    \right).
    \label{eqn:target-game-upper-certificate}
\end{align}
The quantity $\ell_g$ is the smallest feasibility margin achieved by the
returned matrix strategy.  The quantity $u_g$ is the largest payoff that
any matrix strategy could obtain against the returned constraint
distribution.  Thus $\ell_g$ and $u_g$ provide, respectively, lower and
upper certificates for the value of the target-value game.

\begin{proposition}[Optimization-to-Feasibility Reduction]
\label{prop:target-value-reduction}
Suppose that
$(\bY^{\mathrm{out}},\bp^{\mathrm{out}})
\in\Rx\cS_{\dn+1}\times\Delta_{\mc+1}$
has saddle-point gap at most $\xi$.  Then
\begin{align}
    \ell_g
    &\leq
    \mathfrak{s}_g
    \leq
    u_g \qquad \text{and} \qquad
    u_g-\ell_g
    \leq
    \xi.
    \label{eqn:target-game-value-bracket}
\end{align}
Moreover:
\begin{enumerate}
    \item \textbf{Attainable thresholds.}
    If $g$ is attainable, then
    $\mathfrak{s}_g\geq0$ and $u_g\geq0$.

    \item \textbf{Recovering a primal solution.}
    If $u_g\geq-\zeta$ for some $\zeta\geq0$, then the leading
    $\dn\times\dn$ principal block $\bX^{\mathrm{out}}$ of
    $\bY^{\mathrm{out}}$ satisfies
    \begin{align}
        \bX^{\mathrm{out}}
        &\succeq
        \bzero,
        \\
        \Tr(\bX^{\mathrm{out}})
        &\leq
        \Rx\\
        \inner{\bC}{\bX^{\mathrm{out}}}
        &\geq
        g-\xi-\zeta \\
        \inner{\A{j}}{\bX^{\mathrm{out}}}
        &\leq
        b_j+\xi+\zeta,
        \qquad
        \forall~j\in[\mc].
        \label{eqn:target-game-approximate-primal}
    \end{align}

    \item \textbf{Thresholds above the optimum.}
    For every $\tau\geq0$, if $g \geq \operatorname{OPT}+(1+\Ry)\tau$, then
    $\mathfrak{s}_g\leq-\tau$.
\end{enumerate}
\end{proposition}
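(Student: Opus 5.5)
The plan is to derive everything from the game bracket \Cref{eqn:gap-brackets-value}, specialized to the target-value game \Cref{eqn:standard-sdp-target-game}, together with the margin identities \Cref{eqn:standard-sdp-feasibility-margins}.

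\textbf{Bracket.} Here the matrix player lives on $\Rx\cS_{\dn+1}$ and the constraint player on $\Delta_{\mc+1}$. By Rayleigh--Ritz, $\max_{\bY}\inner{\sum_j p^{\mathrm{out}}_j\bD_j(g)}{\bY}=u_g$, and since the constraint player's best response is a vertex, $\min_{\bp}\inner{\sum_jp_j\bD_j(g)}{\bY^{\mathrm{out}}}=\ell_g$. The saddle-point gap is therefore exactly $u_g-\ell_g\leq\xi$. The inequalities $\ell_g\leq\mathfrak s_g\leq u_g$ then follow from weak duality applied to the two expressions in \Cref{eqn:standard-sdp-target-game}.

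\textbf{Item 1.} If $g$ is attainable, take a feasible $\bX$ with $\inner{\bC}{\bX}\geq g$ and lift it to $\bX^\uparrow\in\Rx\cS_{\dn+1}$ via \Cref{eqn:standard-sdp-primal-lift}. By \Cref{eqn:standard-sdp-feasibility-margins} every margin $\inner{\bD_j(g)}{\bX^\uparrow}$ is nonnegative, so $\mathfrak s_g\geq0$, and the bracket gives $u_g\geq0$.

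\textbf{Item 2.} If $u_g\geq-\zeta$, then $\ell_g\geq u_g-\xi\geq-\xi-\zeta$. The leading block of $\bY^{\mathrm{out}}$ is PSD with trace at most $\Rx$. The margin identities turn $\ell_g\geq-\xi-\zeta$ into the objective bound (from $j=0$) and the constraint bounds (from $j\geq1$).

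\textbf{Item 3.} This is the only real step. Take the dual optimum $\by^\star$ with $\norm{\by^\star}_1\leq\Ry$, $\sum_j y^\star_j\A{j}\succeq\bC$ and $\mathbf b^\top\by^\star=\operatorname{OPT}$. Strong duality is assumed; capping $b_j$ preserves $\by^\star$, as noted in the embedding section. Set $S:=1+\norm{\by^\star}_1$ and $\bp:=(1,\by^\star)/S\in\Delta_{\mc+1}$. Then
\[
S\sum_j p_j\bD_j(g)=\begin{pmatrix}\bC-\sum_j y^\star_j\A{j}&\bzero\\ \bzero&0\end{pmatrix}+\frac{\mathbf b^\top\by^\star-g}{\Rx}\bI_{\dn+1}.
\]
The block matrix is $\preceq\bzero$, so its largest eigenvalue is at most $0$. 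Multiplying by $\Rx$, the top eigenvalue of the scaled operator is at most $\operatorname{OPT}-g\leq-(1+\Ry)\tau$. Hence $\max_{\bY}\inner{\sum_jp_j\bD_j(g)}{\bY}\leq-(1+\Ry)\tau/S\leq-\tau$, using $S\leq1+\Ry$. Since $\mathfrak s_g$ is a minimum over $\bp$, this gives $\mathfrak s_g\leq-\tau$.

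The main care point is this last normalization: the dual vector $(1,\by^\star)$ must be rescaled to the simplex, and the factor $(1+\Ry)$ in the hypothesis on $g$ is exactly what absorbs that rescaling.
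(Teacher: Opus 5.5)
Your proposal is correct and follows the paper's proof essentially step for step: the bracket comes from the two best-response values, item~1 from the lifted feasible point, item~2 from the margin identities, and item~3 from the normalized dual distribution $(1,\by^\star)/(1+\norm{\by^\star}_1)$. The one cosmetic difference is in item~3, where you bound $\lambda_{\max}$ of the mixture by a Loewner-order argument and use the min--max form of $\mathfrak{s}_g$, while the paper bounds the mixed payoff against every $\bY$ directly and uses the max--min form; these amount to the same computation.
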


\begin{proof}
By the minimax characterization of
\Cref{eqn:standard-sdp-target-game},
\begin{align}
    \ell_g
    &\leq
    \mathfrak{s}_g
    \leq
    u_g,
\end{align}
where the upper bound follows from Rayleigh--Ritz.  Moreover,
$u_g-\ell_g$ is exactly the saddle-point gap of the returned pair, so
$u_g-\ell_g\leq\xi$. If $g$ is attainable, some lifted primal strategy has every feasibility
margin nonnegative, and hence $\mathfrak{s}_g\geq0$.  The bound
$\mathfrak{s}_g\leq u_g$ then implies $u_g\geq0$. 

Next, suppose $u_g\geq-\zeta$.  Then,
\begin{align}
    \ell_g
    &\geq
    u_g-\xi
    \geq
    -(\xi+\zeta).
\end{align}
Thus every feasibility margin of $\bY^{\mathrm{out}}$ is at least
$-(\xi+\zeta)$.  Applying
\Cref{eqn:standard-sdp-feasibility-margins} to its leading principal
block gives \Cref{eqn:target-game-approximate-primal}.

Finally, let $\by^\star$ be an optimal dual solution and consider the
constraint-player distribution
\begin{align}
    p_0^\star
    &:=
    \frac{1}{1+\norm{\by^\star}_1}\\
    p_j^\star
    &:=
    \frac{y_j^\star}{1+\norm{\by^\star}_1},
    \qquad
    \forall~j\in[\mc].
    \label{eqn:target-game-dual-distribution}
\end{align}
For any matrix strategy $\bY$, with leading principal block $\bX$,
strong duality and dual feasibility give
\begin{align}
    \sum_{j=0}^{\mc}
    p_j^\star
    \inner{\bD_j(g)}{\bY}
    &=
    \frac{
        \mathbf b^\top\by^\star-g
        +
        \inner{
            \bC-\sum_{j=1}^{\mc}y_j^\star\A{j}
        }{
            \bX
        }
    }{
        1+\norm{\by^\star}_1
    }
    \leq
    \frac{
        \operatorname{OPT}-g
    }{
        1+\norm{\by^\star}_1
    }.
    \label{eqn:target-game-dual-payoff}
\end{align}
The minimum payoff is no larger than this convex combination, so
\begin{align}
    \mathfrak{s}_g
    &\leq
    \frac{
        \operatorname{OPT}-g
    }{
        1+\norm{\by^\star}_1
    }.
    \label{eqn:target-game-value-upper-bound}
\end{align}
If
$g\geq\operatorname{OPT}+(1+\Ry)\tau$, then
$\norm{\by^\star}_1\leq\Ry$ implies
$\mathfrak{s}_g\leq-\tau$.
\end{proof}

The upper certificate $u_g$ is defined through an exact best response of
the matrix player, but the target-value search only needs to determine its
sign up to additive tolerance.  It therefore suffices to compute a
one-sided approximation $\widetilde u_g$ satisfying
$u_g-\zeta\leq \widetilde u_g\leq u_g$.  Such an estimate is obtained by
a standard randomized Lánczos approximation of the largest eigenvalue of
the returned matrix mixture~\cite{KuczynskiWozniakowski1992}.

\begin{lemma}[Computable Upper Certificate]
\label{lem:approximate-upper-certificate}
Let
\begin{align}
    \bM_g
    &:=
    \sum_{j=0}^{\mc}
    p_j^{\mathrm{out}}\bD_j(g),
\end{align}
so that $u_g=\Rx\lambda_{\max}(\bM_g)$ and
$\norm{\bM_g}\leq\om_D$.  For every $\zeta>0$ and
$\delta_{\mathrm{eig}}\in(0,1)$, randomized Lánczos returns a value
$\widetilde u_g$ such that, with probability at least
$1-\delta_{\mathrm{eig}}$,
\begin{align}
    u_g-\zeta
    &\leq
    \widetilde u_g
    \leq
    u_g.
    \label{eqn:approximate-upper-certificate}
\end{align}
The computation uses
\begin{align}
    \widetilde O\left(\min\left\{
            \dn+1,
            \sqrt{\Rx\om_D/\zeta}
        \right\}
    \right)
    \label{eqn:upper-certificate-lanczos-degree}
\end{align}
matrix--vector products with $\bM_g$.
\end{lemma}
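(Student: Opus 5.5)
We will prove \Cref{lem:approximate-upper-certificate} by reducing it to the classical Kuczyński--Woźniakowski guarantee for randomized Lánczos~\cite{KuczynskiWozniakowski1992}. The plan is to run Lánczos on a shifted positive semidefinite matrix and then rescale the resulting Ritz value. The estimate is a Ritz value, so it is automatically a lower bound on $\lambda_{\max}$, and the only real work is the upper side of the sandwich.

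\textbf{Step 1: shift to a PSD matrix.} Since $\norm{\bM_g}\leq\om_D$, the matrix $\bN:=\bM_g+\om_D\bI_{\dn+1}$ satisfies $\bzero\preceq\bN$ and $\lambda_{\max}(\bN)\leq 2\om_D$. Each matrix--vector product with $\bN$ costs one product with $\bM_g$ plus $O(\dn)$ additional work.

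\textbf{Step 2: run Lánczos and use the one-sided property.} Draw a uniformly random unit start vector $\bu$ (Haar, or real spherical in the realified setting). Run $k$ Lánczos steps on $\bN$ and let $\theta$ be the largest Ritz value, i.e.\ the maximum of the Rayleigh quotient of $\bN$ over the Krylov space $\mathcal K_k(\bN,\bu)$. Because $\theta$ is a Rayleigh quotient, $\theta\leq\lambda_{\max}(\bN)$ holds deterministically. Set
\[
\widetilde u_g:=\Rx(\theta-\om_D).
\]
Then $\widetilde u_g\leq\Rx\lambda_{\max}(\bM_g)=u_g$ with certainty, which is the upper half of \Cref{eqn:approximate-upper-certificate}.

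\textbf{Step 3: the lower side.} Kuczyński--Woźniakowski show that for PSD $\bN$ and $k=O\bigl(\kappa^{-1/2}\log(\dn/\delta_{\mathrm{eig}}^2)\bigr)$ steps,
\[
\Pr\bigl[\theta<(1-\kappa)\lambda_{\max}(\bN)\bigr]\leq\delta_{\mathrm{eig}}.
\]
We need the absolute error $\lambda_{\max}(\bN)-\theta$ to be at most $\zeta/\Rx$. A relative error $\kappa$ gives absolute error $\kappa\lambda_{\max}(\bN)\leq 2\om_D\kappa$, so it suffices to choose $\kappa=\zeta/(2\Rx\om_D)$. This gives $k=\widetilde O\bigl(\sqrt{\Rx\om_D/\zeta}\bigr)$. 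Multiplying by $\Rx$ yields $\widetilde u_g\geq u_g-\zeta$ with probability at least $1-\delta_{\mathrm{eig}}$.

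\textbf{Step 4: the cap at $\dn+1$.} Separately, after $k=\dn+1$ steps the Krylov space is the full invariant subspace generated by $\bu$. Almost surely $\bu$ has a nonzero component along the top eigenspace, so the top Ritz value equals $\lambda_{\max}(\bN)$ exactly. Hence we run $\min\{\dn+1,k\}$ steps, which gives \Cref{eqn:upper-certificate-lanczos-degree}.

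\textbf{Main obstacle.} The delicate point is quoting the Kuczyński--Woźniakowski bound in the correct form. Their bound holds for real or complex uniform starts, carries a $\sqrt{\dn}$-type prefactor inside the logarithm, and is stated as a relative guarantee for PSD matrices. The shift in Step 1 is what makes it applicable, and the logarithmic factors are absorbed into $\widetilde O$.
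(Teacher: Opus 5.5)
Your proposal is correct and follows essentially the same route as the paper: shift $\bM_g$ by $\om_D\bI$ to obtain a PSD matrix with spectrum in $[0,2\om_D]$, use the Rayleigh--Ritz property of the Ritz value for the deterministic upper bound, invoke the random-start Lánczos (Kuczyński--Woźniakowski) guarantee for the lower bound, and rescale by $\Rx$. Where you go beyond the paper is in detail rather than strategy: you spell out how the relative error $\kappa=\zeta/(2\Rx\om_D)$ becomes the additive error $\zeta/\Rx$, and why the $\dn+1$ cap recovers the top eigenvalue exactly almost surely; the paper leaves both points implicit.
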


\begin{proof}
Shift the matrix by defining
$\bA_g:=\bM_g+\om_D\bI$.  Then
$0\preceq\bA_g\preceq2\om_D\bI$ and
$\lambda_{\max}(\bA_g)=\lambda_{\max}(\bM_g)+\om_D$.
The standard random-start Lánczos guarantee gives, using the number of matrix--vector products in
\Cref{eqn:upper-certificate-lanczos-degree}, a Ritz value
$\widehat\lambda_g$ satisfying
\begin{align}
    \lambda_{\max}(\bA_g)-\frac{\zeta}{\Rx}
    &\leq
    \widehat\lambda_g
    \leq
    \lambda_{\max}(\bA_g)
\end{align}
with probability at
least $1-\delta_{\mathrm{eig}}$.  The upper bound also follows directly from Rayleigh--Ritz, since the Ritz value maximizes the Rayleigh quotient over the Krylov subspace and therefore cannot exceed the global maximum $\lambda_{\max}(\bA_g)$.
Defining
\begin{align}
    \widetilde u_g
    &:=
    \Rx\bigl(\widehat\lambda_g-\om_D\bigr)
\end{align}
and using
$u_g=\Rx\bigl(\lambda_{\max}(\bA_g)-\om_D\bigr)$,
we obtain
\begin{align}
    u_g-\zeta
    \leq
    \widetilde u_g
    \leq
    u_g,
\end{align}
which proves \Cref{eqn:approximate-upper-certificate}.
\end{proof}

We therefore use the computable acceptance test
\begin{align}
    \widetilde u_g+\zeta
    &\geq
    0.
    \label{eqn:computable-target-test}
\end{align}
The one-sided guarantee in
\Cref{eqn:approximate-upper-certificate} is exactly what the target-value
search requires.  Indeed, if $g$ is attainable, then $u_g\geq0$, and hence
$\widetilde u_g+\zeta\geq u_g\geq0$, so the test never rejects an
attainable threshold.  Conversely, whenever the test retains $g$, we have
$\widetilde u_g\geq-\zeta$ and therefore $u_g\geq-\zeta$.  By
\Cref{prop:target-value-reduction}, the returned matrix strategy then
yields a primal solution whose objective and constraint errors increase
by at most an additional $\zeta$. We can therefore use the computable test
$\widetilde u_g+\zeta\geq0$ in a gapped binary search over the objective
value, yielding the following reduction.

\begin{corollary}[Recovering an Approximate SDP Solution]
\label{cor:target-value-binary-search}
Fix $\eta,\xi,\zeta>0$.  Suppose that every queried target-value game is
solved to saddle-point gap at most $\xi$, and that every computed upper
certificate satisfies
\Cref{eqn:approximate-upper-certificate}. Before beginning
the binary search, solve the target-value game once at $g=-\Rx$ and retain
this threshold together with its returned matrix.  Then perform binary
search on $[-\Rx,\Rx]$, moving the lower endpoint to the midpoint whenever
$\widetilde u_g+\zeta\geq0$, and moving the upper endpoint to the midpoint
otherwise. The procedure uses
$O(\log(2+\Rx/\eta))$ game solves in total and returns a
retained threshold $\widehat g$, together with its
associated matrix
$\widehat{\bX}\succeq\bzero$ with
$\Tr(\widehat{\bX})\leq\Rx$, such that
\begin{align}
    \operatorname{OPT}-\eta
    &\leq
    \widehat g
    \leq
    \operatorname{OPT}+(1+\Ry)(\xi+\zeta),
    \label{eqn:target-game-binary-search-bracket}
\end{align}
and
\begin{align}
    \inner{\bC}{\widehat{\bX}}
    &\geq
    \operatorname{OPT}-\eta-\xi-\zeta,
    \notag\\
    \inner{\A{j}}{\widehat{\bX}}
    &\leq
    b_j+\xi+\zeta,
    \qquad
    \forall~j\in[\mc].
    \label{eqn:target-game-final-primal-guarantee}
\end{align}
\end{corollary}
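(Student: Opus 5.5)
The argument is a binary-search invariant built on \Cref{prop:target-value-reduction} and the one-sided test \Cref{eqn:computable-target-test}. Throughout the search I maintain a lower endpoint $a$ and an upper endpoint $c$, and keep two invariants. \emph{Invariant (L):} $a$ is a retained threshold whose stored matrix satisfies \Cref{eqn:target-game-approximate-primal} with objective bound $\inner{\bC}{\bX}\geq a-\xi-\zeta$. \emph{Invariant (U):} either $c=\Rx$, or $c$ was rejected by the test.

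Initialization is handled as follows. Set $a=-\Rx$, using the preliminary solve at $g=-\Rx$. The threshold $-\Rx$ is attainable provided the SDP is feasible. Indeed, by the assumed optimal $\bX^\star$ with $\Tr(\bX^\star)\leq\Rx$ and $\norm{\bC}\leq1$, we have $\inner{\bC}{\bX^\star}\geq-\Rx$. Hence item~1 gives $u_{-\Rx}\geq0$. By \Cref{eqn:approximate-upper-certificate}, $\widetilde u_{-\Rx}+\zeta\geq u_{-\Rx}\geq0$, so the test retains $-\Rx$. Item~2 with this $\zeta$ then yields (L).

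For each midpoint $g$, one game solve plus one certificate computation decides the move.
\begin{itemize}
\item If the test accepts, then $u_g\geq\widetilde u_g\geq-\zeta$. Item~2 gives a matrix with the required guarantees, so setting $a\gets g$ preserves (L).
\item If the test rejects, then $u_g<-\zeta+(\widetilde u_g-u_g)+\zeta\leq0$ in the sense that $\widetilde u_g+\zeta<0$ together with $\widetilde u_g\geq u_g-\zeta$ forces $u_g<0$. Item~1 (contrapositive) then shows $g$ is not attainable, i.e.\ $g>\operatorname{OPT}$.
\end{itemize}
Thus every rejected upper endpoint exceeds $\operatorname{OPT}$. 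The case $c=\Rx$ also satisfies $\operatorname{OPT}\leq\Rx=c$, since $|\inner{\bC}{\bX}|\leq\Rx$. In both cases $\operatorname{OPT}\leq c$. The search stops once $c-a\leq\eta$, which takes $O(\log(2\Rx/\eta))$ halvings, plus one initial solve, giving $O(\log(2+\Rx/\eta))$ solves in total. Output $\widehat g:=a$ with its stored matrix.

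For the bracket \Cref{eqn:target-game-binary-search-bracket}, the lower bound is $\widehat g=a\geq c-\eta\geq\operatorname{OPT}-\eta$. The upper bound comes from item~3 applied to any retained threshold. Suppose $\widehat g>\operatorname{OPT}+(1+\Ry)(\xi+\zeta)$, and choose $\tau$ slightly larger than $\xi+\zeta$ with $\widehat g\geq\operatorname{OPT}+(1+\Ry)\tau$. Item~3 gives $\mathfrak s_{\widehat g}\leq-\tau$. By \Cref{eqn:target-game-value-bracket}, $u_{\widehat g}\leq\ell_{\widehat g}+\xi\leq\mathfrak s_{\widehat g}+\xi\leq-\tau+\xi<-\zeta$. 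This contradicts acceptance, which requires $u_{\widehat g}\geq-\zeta$.

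The threshold $-\Rx$ needs separate care, because it was retained unconditionally. However, $-\Rx\leq\operatorname{OPT}$ already satisfies the upper bound trivially, so no contradiction argument is needed for it.

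Finally, \Cref{eqn:target-game-final-primal-guarantee} follows by combining (L), which gives $\inner{\bC}{\widehat\bX}\geq\widehat g-\xi-\zeta$, with $\widehat g\geq\operatorname{OPT}-\eta$.

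The main subtlety is the rejection step: showing that a rejection certifies $g>\operatorname{OPT}$ using only the one-sided certificate. A second, milder point is confirming that $-\Rx$ is attainable so that (L) holds at initialization. Everything else is direct bookkeeping with \Cref{prop:target-value-reduction}.
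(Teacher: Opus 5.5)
Your proof is correct and follows essentially the same route as the paper: the invariant that every rejected threshold lies above $\operatorname{OPT}$ (your contrapositive of item~1 is the paper's observation that every $g\leq\operatorname{OPT}$ is retained), the lower bound from the stopping rule, and the upper bound from item~3 combined with $u_g\leq\ell_g+\xi\leq\mathfrak s_g+\xi$. The one thing to fix is the garbled chain $u_g<-\zeta+(\widetilde u_g-u_g)+\zeta\leq0$ in the rejection step: as written it does not follow, so replace it with the correct one-line argument you give right after it, namely $u_g-\zeta\leq\widetilde u_g<-\zeta$, which gives $u_g<0$.
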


\begin{proof}
We have $\operatorname{OPT}\in[-\Rx,\Rx]$, since
$\norm{\bC}\leq1$ and a promised optimal primal matrix has trace at most
$\Rx$. Every $g\leq\operatorname{OPT}$ is attainable, so
$u_g\geq0$ by \Cref{prop:target-value-reduction}.  Hence
$\widetilde u_g+\zeta\geq0$, and such a threshold is always retained. On the other hand, suppose
\begin{align}
    g
    &>
    \operatorname{OPT}
    +(1+\Ry)(\xi+\zeta).
\end{align}
Then, setting
$\tau=(g-\operatorname{OPT})/(1+\Ry)$ gives
$\tau>\xi+\zeta$.  By
\Cref{prop:target-value-reduction},
$\mathfrak{s}_g\leq-\tau$, and therefore
\begin{align}
    u_g
    &\leq
    \ell_g+\xi
    \leq
    \mathfrak{s}_g+\xi
    <
    -\zeta.
\end{align}
Since $\widetilde u_g\leq u_g$, the test
\Cref{eqn:computable-target-test} rejects such a threshold. Thus, every threshold at or below $\operatorname{OPT}$ is retained, while
every threshold above
$\operatorname{OPT}+(1+\Ry)(\xi+\zeta)$ is rejected.
First solve the game at $g=-\Rx$, compute its upper certificate,
and save this accepted threshold together with its matrix.  Set the
search endpoints to $L=-\Rx$ and $U=\Rx$.  On acceptance of a midpoint,
replace $L$ and the saved pair by that midpoint and its returned matrix.
On rejection, replace $U$.  Since a rejected threshold is strictly above
$\operatorname{OPT}$, the invariant $U\geq\operatorname{OPT}$ is
preserved.  Stop when $U-L\leq\eta$ and return the saved pair, with
$\widehat g=L$.  Then, $L\geq U-\eta\geq\operatorname{OPT}-\eta$,
and every saved threshold obeys the upper bound just proved.  Including
initialization, the number of game solves is at most
$1+\max\{0,\lceil\log_2(2\Rx/\eta)\rceil\}$, which is
$O(\log(2+\Rx/\eta))$.  This proves
\Cref{eqn:target-game-binary-search-bracket}.
 Finally, retention implies
$\widetilde u_{\widehat g}\geq-\zeta$, and therefore
$u_{\widehat g}\geq-\zeta$.  Applying
\Cref{prop:target-value-reduction} gives
\begin{align}
    \inner{\bC}{\widehat{\bX}}
    &\geq
    \widehat g-\xi-\zeta
    \geq
    \operatorname{OPT}-\eta-\xi-\zeta
\end{align}
and
$\inner{\A{j}}{\widehat{\bX}}\leq b_j+\xi+\zeta$ for every
$j\in[\mc]$.
\end{proof}

Taking $\eta=\eps/2$ and
$\xi=\zeta=\eps/(8(1+\Ry))$ gives
$\eta+\xi+\zeta\leq\eps$ and
$\xi+\zeta\leq\eps/\Ry$.
Consequently, the recovered primal matrix has objective error at most
$\eps$ and constraint violations at most $\eps/\Ry$.
 The reduction uses
$O(\log(2+\Rx/\eps))$ target-value game solves and upper-certificate
computations.  Allocating the failure probability across these calls and
applying a union bound preserves the desired overall success probability.

Thus, the original SDP reduces to logarithmically many
spectrahedron--simplex games with matrix radius $\Rx$, simplex radius one,
and payoff matrices of operator norm at most two.  The original dual
radius $\Ry$ enters through the accuracy
$\Theta(\eps/(1+\Ry))$ required of the game solver and target-value
certificate.  The sparse implementation cost of the latter is included in
the runtime analysis of \Cref{subsec:main-theorem-proof}.

\section{A Coordinate-Sampling Estimator for Sparse Observables}
\label{sec:energy-estimator}

Both of our proposed algorithms require estimating expectation values of the
form $\bpsi^\dagger\A{j}\bpsi$, for $j\in[\mc]$, given a unit vector
$\bpsi\in\C^{\dn}$ and $\sr$-row-sparse Hermitian constraint matrices
$\A{1},\ldots,\A{\mc}$.  Exactly computing all of these quantities can
require accessing every row of every constraint matrix and therefore costs
$O(\dn\mc\sr)$ time, in the worst case.  We instead estimate all $\mc$
expectations from a single sampled coordinate of $\bpsi$, reducing the
matrix-access cost of one simultaneous estimate to $O(\mc\sr)$.  Moreover,
the estimator's second moment is controlled by the operator norms of the
constraint matrices rather than their Frobenius norms, so the resulting
guarantee remains effective even for generic full-rank observables.

To sample the coordinate, we use the squared-magnitude sampling procedure
from \Cref{sec:squared-magnitude-sampling}.  Since $\bpsi$ is stored
explicitly, we can sample a random coordinate $I\in[\dn]$ in
$O(\dn)$ time according to
$\Prb[I=i\mid\bpsi]=\abs{[\bpsi]_i}^2$.  For a Hermitian matrix
$\bA\in\C^{\dn\times\dn}$ and a fixed coordinate $i\in[\dn]$ in the
support of $\bpsi$, define
\begin{align}
\Zest_{\bA}(\bpsi,i)
&:=
\operatorname{Re}\left\{
\frac{[\bA\bpsi]_i}{[\bpsi]_i}
\right\}.
\label{eqn:energy-estimator}
\end{align}
Since $I$ lies in the support of $\bpsi$ almost surely,
$\Zest_{\bA}(\bpsi,I)$ is well-defined almost surely.  The ratio in
\Cref{eqn:energy-estimator} is the finite-dimensional analogue of the
\emph{local energy} estimator used in variational Monte
Carlo~\cite[Section~III.C]{FoulkesEtAl2001}. Crucially, the same realization of $I$ can be reused across all constraint
matrices.  If $\A{j}$ is $\sr$-row-sparse, then computing
$[\A{j}\bpsi]_I$ requires accessing only the $I^{th}$ row of $\A{j}$ and
therefore costs $O(\sr)$ time.  Consequently, all $\mc$ estimates
$\bigl(\Zest_{\A{j}}(\bpsi,I)\bigr)_{j\in[\mc]}$ can be generated in
$O(\dn+\mc\sr)$ time, including the cost of sampling $I$.

The following lemma establishes the two properties of the estimator that will be
used throughout the paper: 1) unbiasedness and 2) an operator-norm bounded second-moment.

\begin{lemma}[Moments of the Coordinate-Sampling Estimator]
\label{lem:energy-estimator-moments}
Let $\bA\in\C^{\dn\times\dn}$ be Hermitian with
$\norm{\bA}\leq\om$, let $\bpsi\in\C^{\dn}$ be a unit vector, and sample
$I\in[\dn]$ according to
$\Prb[I=i\mid\bpsi]=\abs{[\bpsi]_i}^2$.  Then the estimator
$\Zest_{\bA}(\bpsi,I)$ satisfies the following properties:
\begin{enumerate}[leftmargin=2.4em,label=\textup{(\roman*)}]
    \item \textbf{Unbiasedness:} $\E_I[\Zest_{\bA}(\bpsi,I)]=\bpsi^\dagger\bA\bpsi.$
    \item \textbf{Operator-norm bounded second moment:} $\E_I[\Zest_{\bA}(\bpsi,I)^2]\leq\norm{\bA\bpsi}_2^2 \leq
        \om^2$
\end{enumerate}
\end{lemma}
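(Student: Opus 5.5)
The plan is a direct computation of the two moments by summing over the support of $\bpsi$. Write $S:=\{i:[\bpsi]_i\neq0\}$. Since $I\in S$ almost surely, every expectation is a sum over $i\in S$ weighted by $\abs{[\bpsi]_i}^2$. The central observation is that this weight cancels the denominator of the ratio in \Cref{eqn:energy-estimator}, exactly or up to the factor needed.

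\textbf{Unbiasedness.} For $i\in S$ we have $\abs{[\bpsi]_i}^2/[\bpsi]_i=\overline{[\bpsi]_i}$. Since taking the real part commutes with the finite sum,
\begin{align}
\E_I[\Zest_{\bA}(\bpsi,I)]
=\operatorname{Re}\Bigl\{\sum_{i\in S}\overline{[\bpsi]_i}\,[\bA\bpsi]_i\Bigr\}
=\operatorname{Re}\{\bpsi^\dagger\bA\bpsi\},
\end{align}
where coordinates outside $S$ contribute nothing because there $\overline{[\bpsi]_i}=0$. Since $\bA$ is Hermitian, $\bpsi^\dagger\bA\bpsi$ is real, so the real part may be dropped. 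This gives (i).

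\textbf{Second moment.} First use $(\operatorname{Re} z)^2\leq\abs z^2$. Then
\begin{align}
\E_I[\Zest_{\bA}(\bpsi,I)^2]
\leq\sum_{i\in S}\abs{[\bpsi]_i}^2\frac{\abs{[\bA\bpsi]_i}^2}{\abs{[\bpsi]_i}^2}
=\sum_{i\in S}\abs{[\bA\bpsi]_i}^2
\leq\norm{\bA\bpsi}_2^2 .
\end{align}
Finally, $\norm{\bA\bpsi}_2\leq\norm{\bA}\,\norm{\bpsi}_2\leq\om$ because $\bpsi$ is a unit vector. This gives (ii).

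No step is a real obstacle. The only points needing care are the well-definedness of the ratio, which holds because $I\in S$ almost surely so the sums run only over $S$, and noting that dropping the coordinates outside $S$ can only decrease $\norm{\bA\bpsi}_2^2$, which is the direction the inequality needs.
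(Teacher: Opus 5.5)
Your proposal is correct and follows essentially the same argument as the paper. Both proofs restrict the sums to the support of $\bpsi$, use $\abs{[\bpsi]_i}^2/[\bpsi]_i=\overline{[\bpsi]_i}$ and Hermiticity for unbiasedness, and combine $(\operatorname{Re} z)^2\leq\abs{z}^2$ with cancellation of the sampling weight to get the bound $\norm{\bA\bpsi}_2^2\leq\om^2$.
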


\begin{proof}
Let $S:=\{i\in[\dn]:[\bpsi]_i\neq 0\}$ denote the support of $\bpsi$. 

First, since $I\in S$ with probability one and
$\abs{[\bpsi]_i}^2/[\bpsi]_i=[\bpsi]_i^*$ for every $i\in S$,
we have that
\begin{align}
\E_I[\Zest_{\bA}(\bpsi,I)]
&=
\sum_{i\in S}
\abs{[\bpsi]_i}^2
\operatorname{Re}\left\{
\frac{[\bA\bpsi]_i}{[\bpsi]_i}
\right\}=
\operatorname{Re}\left\{
\sum_{i\in S}
[\bpsi]_i^*[\bA\bpsi]_i
\right\}=
\operatorname{Re}\left\{
\bpsi^\dagger\bA\bpsi
\right\}
=
\bpsi^\dagger\bA\bpsi,
\end{align}
where the final equality follows because $\bA$ is Hermitian.  This, thus, proves
the unbiasedness claim.

Second, using the fact that
$\operatorname{Re}(z)^2\leq\abs{z}^2$ for every $z\in\C$, we obtain
\begin{align}
\E_I[\Zest_{\bA}(\bpsi,I)^2]\leq
\sum_{i\in S}
\abs{[\bpsi]_i}^2
\left|
\frac{[\bA\bpsi]_i}{[\bpsi]_i}
\right|^2=
\sum_{i\in S}
\abs{[\bA\bpsi]_i}^2\leq
\norm{\bA\bpsi}_2^2
\leq
\norm{\bA}^2\norm{\bpsi}_2^2
\leq
\om^2.
\end{align}
Thus, although a coordinate $i$ with small $\abs{[\bpsi]_i}$ can produce
a large estimator value through the factor $1/[\bpsi]_i$, this is offset by
the sampling probability $\abs{[\bpsi]_i}^2$.  In the second moment, the
factor $\abs{[\bpsi]_i}^2$ therefore exactly cancels the squared denominator,
yielding the operator-norm bound above.
\end{proof}

The operator-norm second-moment bound controls the estimator on average, but
individual realizations remain unbounded.  In particular,
$\Zest_{\bA}(\bpsi,I)$ can be arbitrarily large when
$\abs{[\bpsi]_I}$ is small, even though such outcomes occur with
relatively small probability.  This suffices for applications that can control the unbounded estimator through robust averaging.  In our stochastic solver, however, the online learner
requires uniformly bounded feedback, so we additionally clip the estimator.

For a clipping threshold $B>0$, define
$\clip_B(z):=\max\{-B,\min\{z,B\}\}$ for $z\in\R$.  We then define the
clipped estimator
\begin{align}
    \widehat Z_{\bA,B}
    &:=
    \clip_B\!\left(
        \Rx\Zest_{\bA}(\bpsi,I)
    \right).
    \label{eqn:clipped-energy-estimator}
\end{align}
Clipping enforces the desired deterministic range and can only decrease
the second moment, at the cost of introducing bias.  The following lemma establishes the properties of the clipped estimator needed in our later analysis.

\begin{lemma}[Properties of the Clipped Estimator]
\label{lem:clipping-properties}
Under the assumptions of \Cref{lem:energy-estimator-moments}, let
$\Rx>0$, $B>0$, and $\bX:=\Rx\bpsi\bpsi^\dagger$.  Then the clipped
estimator $\widehat Z_{\bA,B}$ satisfies the following properties:
\begin{enumerate}
    \item \textbf{Uniform boundedness:}
    $|\widehat Z_{\bA,B}|\leq B$, almost surely.

    \item \textbf{Second-moment bound:}
    $\E_I[\widehat Z_{\bA,B}^2]\leq \Rx^2\om^2$.

    \item \textbf{Clipping bias:} $|\E_I[\widehat Z_{\bA,B}]-\inner{\bA}{\bX}|\leq \Rx^2\om^2/B.$
\end{enumerate}
\end{lemma}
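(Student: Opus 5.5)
Write $Y:=\Rx\Zest_{\bA}(\bpsi,I)$, so that $\widehat Z_{\bA,B}=\clip_B(Y)$. The plan is to read off the three properties from \Cref{lem:energy-estimator-moments} applied to $Y$, using only elementary properties of the clipping map. Scaling that lemma by $\Rx$ gives
\begin{align}
\E_I[Y]=\Rx\,\bpsi^\dagger\bA\bpsi=\inner{\bA}{\bX},
\qquad
\E_I[Y^2]\leq\Rx^2\om^2 .
\end{align}
The first identity uses that $\bA$ is Hermitian, so $\inner{\bA}{\bX}=\Rx\Tr(\bA\bpsi\bpsi^\dagger)=\Rx\,\bpsi^\dagger\bA\bpsi$.

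\textbf{Properties 1 and 2.} Uniform boundedness holds by construction, since $\clip_B$ maps $\R$ into $[-B,B]$. For the second moment, note that $\abs{\clip_B(y)}\leq\abs{y}$ pointwise. Hence $\widehat Z_{\bA,B}^2\leq Y^2$, and taking expectations gives $\E_I[\widehat Z_{\bA,B}^2]\leq\Rx^2\om^2$.

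\textbf{Property 3 (the main step).} Write
\begin{align}
\E_I[\widehat Z_{\bA,B}]-\inner{\bA}{\bX}=\E_I[\clip_B(Y)-Y].
\end{align}
The difference $\clip_B(y)-y$ vanishes when $\abs{y}\leq B$. When $\abs{y}>B$, its magnitude is $\abs{y}-B\leq\abs{y}$. Since $\abs{y}>B$ implies $\abs{y}/B>1$, we get the pointwise bound
\begin{align}
\abs{\clip_B(y)-y}\leq\abs{y}\,\mathbf 1\{\abs{y}>B\}\leq \frac{y^2}{B}.
\end{align}
Taking expectations and applying Jensen,
\begin{align}
\abs{\E_I[\clip_B(Y)-Y]}\leq\E_I\abs{\clip_B(Y)-Y}\leq\frac{\E_I[Y^2]}{B}\leq\frac{\Rx^2\om^2}{B}.
\end{align}

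The only mildly delicate point is this tail bound $\abs{y}\mathbf 1\{\abs y>B\}\leq y^2/B$, which converts the second-moment control into a bias bound without any range assumption on $Y$. Everything else is immediate from \Cref{lem:energy-estimator-moments}.
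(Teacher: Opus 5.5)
Your proposal is correct and follows essentially the same route as the paper. Boundedness and the second-moment bound come directly from the definition of $\clip_B$ and from $\abs{\clip_B(y)}\leq\abs{y}$. The bias bound comes from the pointwise inequality $\abs{y-\clip_B(y)}=(\abs{y}-B)_+\leq y^2/B$, combined with the unbiasedness and second-moment bound of \Cref{lem:energy-estimator-moments} scaled by $\Rx$.
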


\begin{proof}
For the first claim, the bound
$|\widehat Z_{\bA,B}|\leq B$ follows immediately from the definition
of $\clip_B$.

For the second claim, clipping can only decrease absolute value, so
$\widehat Z_{\bA,B}^2
\leq \Rx^2\Zest_{\bA}(\bpsi,I)^2$.  Taking expectations and applying
\Cref{lem:energy-estimator-moments} gives
$\E_I[\widehat Z_{\bA,B}^2]\leq \Rx^2\om^2$.

For the third claim, observe that for every $z\in\R$,
\begin{align}
    \abs{z-\clip_B(z)}
    &=
    \bigl(\abs{z}-B\bigr)_+
    \leq
    \frac{z^2}{B}.
    \label{eqn:scalar-clipping-bias}
\end{align}
Indeed, the left-hand side vanishes when $\abs{z}\leq B$, while for
$\abs{z}>B$ we have
$\abs{z}-B\leq\abs{z}\leq z^2/B$.  Moreover, by
\Cref{lem:energy-estimator-moments},
$\E_I[\Rx\Zest_{\bA}(\bpsi,I)]
=\Rx\bpsi^\dagger\bA\bpsi
=\inner{\bA}{\bX}$.  Therefore,
\begin{align}
    \abs{
        \E_I[\widehat Z_{\bA,B}]
        -
        \inner{\bA}{\bX}
    }\leq
    \E_I\left[
        \abs{
            \widehat Z_{\bA,B}
            -
            \Rx\Zest_{\bA}(\bpsi,I)
        }
    \right]\leq
    \frac{
        \Rx^2\E_I[\Zest_{\bA}(\bpsi,I)^2]
    }{B}
    \leq
    \frac{\Rx^2\om^2}{B},
\end{align}
where the final inequality again follows from
\Cref{lem:energy-estimator-moments}.
\end{proof}

The two lemmas isolate the guarantees needed by our later applications.
The raw estimator in \Cref{lem:energy-estimator-moments} is unbiased and
has an operator-norm second-moment bound, preventing the sample complexity
from acquiring a hidden dependence on $\norm{\A{j}}_F^2$ or $\dn$.  In
\Cref{sec:gibbs-or}, its unbounded realizations are controlled through
robust averaging across independent random starts.  By contrast, the
stochastic solver of \Cref{sec:algorithm} uses only one coordinate sample
per round and therefore clips the resulting feedback before passing it to
the online learner.  The clipping threshold is chosen there to balance the
bias from \Cref{lem:clipping-properties} against the concentration error
arising from the bounded range.

\section{Lánczos Approximation of Matrix-Exponential--Vector Products}
\label{sec:lanczos}

A common computational primitive required for both of our algorithms is the
ability to compute a vector of the form $e^{\bK}\bu$, given a Hamiltonian
$\bK$ and a random vector $\bu$.  In particular, the simultaneous Gibbs
expectation estimator computes $e^{-\bH/2}\bu$ for a Haar-random vector
$\bu$, while the stochastic matrix learner computes
$e^{\lrX\Ht{t}/2}\ut{t}$ for a real random vector $\ut{t}$.

Although the Hamiltonians $\bH$ and $\Ht{t}$ themselves are sparse, or at
least admit efficient matrix--vector multiplication, their exponentials
are generally dense.  Explicitly forming the exponential would therefore
destroy the structure that makes the algorithms efficient.  To avoid this,
we use Lánczos approximation, which computes $e^{\bK}\bu$ by projecting the
problem onto a low-dimensional Krylov space.  Starting from $\bu$, the
Lánczos procedure constructs an orthonormal basis for
$\operatorname{span}\{\bu,\bK\bu,\ldots,\bK^{k-1}\bu\}$, compresses
$\bK$ to a small tridiagonal matrix, evaluates the exponential of this
compressed matrix, and lifts the resulting vector back to the original
space.  Consequently, the high-dimensional part of the computation
requires only repeated $\bK$ multiplication.

Although the Lánczos recurrence is deterministic once the starting vector
is fixed, both of our algorithms choose that vector randomly.  This
randomness is useful because the deterministic Lánczos analysis naturally
gives an absolute approximation guarantee, whereas our algorithms
subsequently normalize the resulting vector and therefore require relative
error.  A random starting direction has, with high probability,
sufficiently large overlap with a top $\bK$ eigendirection to convert
the absolute approximation guarantee into a relative one.

We first record the standard deterministic Krylov approximation guarantee
and then derive the random-start guarantees used by our two algorithms.

\subsection{Deterministic Lánczos Approximation}

We first make precise the exact-arithmetic Hermitian Lánczos procedure used
throughout the paper.  For a fixed starting vector, the procedure is
deterministic. Randomness enters only through the choice of the starting
vector in our two applications.  For $j\geq1$, define the Krylov space
\begin{align}
\mathcal K_j(\bK,\bu)
&:=
\operatorname{span}
\{\bu,\bK\bu,\ldots,\bK^{j-1}\bu\}.
\end{align}
Lánczos constructs an orthonormal basis for this space one vector at a
time, using one multiplication by $\bK$ per iteration.

The procedure begins with the normalized starting vector
$\mathbf q_1=\bu/\norm{\bu}_2$.  At iteration $j$, it multiplies the
current basis vector $\mathbf q_j$ by $\bK$ and removes its components
along $\mathbf q_{j-1}$ and $\mathbf q_j$.  The corresponding coefficients
are denoted by $\beta_{j-1}$ and $\alpha_j$, respectively.  Since $\bK$ is
Hermitian, in exact arithmetic the resulting residual is automatically
orthogonal to all previously generated basis vectors.  If the residual is
nonzero, its norm $\beta_j$ is used to normalize it and produce the next
basis vector $\mathbf q_{j+1}$.  If the residual is zero, applying $\bK$ to
the current Krylov space produces no new direction, and the procedure
terminates.

\begin{algorithm}[t!]
\small
\caption{Lánczos Approximation of
$e^{\bK}\bu$}
\label{alg:lanczos-exp}
\begin{algorithmic}[1]
\Require Hermitian $\bK\in\C^{\dn\times\dn}$, nonzero
$\bu\in\C^{\dn}$, and iteration budget $1\leq k\leq\dn$
\State $\mathbf q_0\gets\bzero$, $\beta_0\gets0$, and
$\mathbf q_1\gets\bu/\norm{\bu}_2$
\State $\ell\gets k$
\For{$j=1,\ldots,k$}
    \State $\mathbf w\gets
    \bK\mathbf q_j-\beta_{j-1}\mathbf q_{j-1}$
    \State $\alpha_j\gets\mathbf q_j^\dagger\mathbf w$
    \State $\mathbf w\gets\mathbf w-\alpha_j\mathbf q_j$
    \State $\beta_j\gets\norm{\mathbf w}_2$
    \If{$\beta_j=0$}
        \State $\ell\gets j$
        \State \textbf{break}
    \EndIf
    \If{$j<k$}
        \State $\mathbf q_{j+1}\gets\mathbf w/\beta_j$
    \EndIf
\EndFor
\State $\mathbf Q_\ell\gets
[\mathbf q_1\ \cdots\ \mathbf q_\ell]$
\State Form $\mathbf T_\ell\in\C^{\ell\times\ell}$ with
\Statex \hspace{\algorithmicindent}
$(\mathbf T_\ell)_{j,j}=\alpha_j$ for $j=1,\ldots,\ell$
\Statex \hspace{\algorithmicindent}
and
$(\mathbf T_\ell)_{j,j+1}
=(\mathbf T_\ell)_{j+1,j}=\beta_j$
for $j=1,\ldots,\ell-1$
\State \Return
$\norm{\bu}_2\mathbf Q_\ell e^{\mathbf T_\ell}\mathbf e_1$
\end{algorithmic}
\end{algorithm}

Here $\ell\leq k$ is the number of basis vectors generated by the
procedure.  The columns of
$\mathbf Q_\ell=[\mathbf q_1\ \cdots\ \mathbf q_\ell]$ form an
orthonormal basis for $\mathcal K_\ell(\bK,\bu)$, while the coefficients
$\alpha_j$ and $\beta_j$ form the Hermitian tridiagonal compression
\begin{align}
\mathbf T_\ell
&:=
\mathbf Q_\ell^\dagger\bK\mathbf Q_\ell.
\end{align}
Thus, $\mathbf T_\ell$ represents the action of the large matrix $\bK$
after projection onto the low-dimensional Krylov space. Since
$\mathbf Q_\ell^\dagger\bu=\norm{\bu}_2\mathbf e_1$, where
$\mathbf e_1$ is the first standard basis vector in $\C^\ell$, the final
step first computes the exponential on the compressed problem and then
lifts the resulting vector back to the original space.  We denote the
returned vector by
\begin{align}
\LanczosExp(\bK,\bu,k)
&:=
\norm{\bu}_2\mathbf Q_\ell
e^{\mathbf T_\ell}\mathbf e_1
=
\mathbf Q_\ell e^{\mathbf T_\ell}
\mathbf Q_\ell^\dagger\bu.
\label{eqn:lanczos-exp-definition}
\end{align}
The high-dimensional part of the computation therefore consists only of
the $\ell$ matrix--vector products $\bK\mathbf q_j$ and the matrix
exponential is evaluated only for the $\ell\times\ell$ tridiagonal matrix
$\mathbf T_\ell$. If the loop encounters $\beta_j=0$, then no additional Krylov direction
can be generated.  In this case $\ell=j$, the space
$\mathcal K_\ell(\bK,\bu)$ is invariant under $\bK$, and the returned
vector equals $e^{\bK}\bu$ exactly.  Otherwise, the procedure exhausts its
iteration budget and $\ell=k$.

Although \Cref{alg:lanczos-exp} is stated for the exponential, the Krylov
basis construction does not depend on the choice of matrix function.
Replacing $e^{\mathbf T_\ell}$ by $f(\mathbf T_\ell)$ gives the standard
Lánczos approximation $\mathbf Q_\ell f(\mathbf T_\ell)
\mathbf Q_\ell^\dagger\bu$ to $f(\bK)\bu$.  The standard exact-arithmetic bound of
Saad~\cite{Saad1992} and Orecchia, Sachdeva, and
Vishnoi~\cite[Theorem~6.7]{OSV12} reduces the error of this approximation
to the error of uniformly approximating the scalar function $f$ on the
spectral interval of $\bK$.  We include the short proof to make explicit
that polynomial approximation is the only numerical ingredient needed in
our exact-arithmetic analysis.

\begin{lemma}[Exact-Arithmetic Lánczos Approximation]
\label{lem:exact-lanczos}
Let $\bK\in\C^{\dn\times\dn}$ be Hermitian, let
$\bu\in\C^{\dn}$ be nonzero, and let $1\leq k\leq\dn$.  Let
$\ell$, $\mathbf Q_\ell$, and $\mathbf T_\ell$ be the quantities generated
by \Cref{alg:lanczos-exp} with iteration budget $k$.  Then, for every
continuous real-valued function $f$ on the spectral interval of $\bK$,
\begin{align}
\norm{
f(\bK)\bu
-
\mathbf Q_\ell f(\mathbf T_\ell)
\mathbf Q_\ell^\dagger\bu
}_2
&\leq
2\norm{\bu}_2
\min_{\deg(p)<k}
\max_{x\in[\lambda_{\min}(\bK),\lambda_{\max}(\bK)]}
\abs{f(x)-p(x)}.
\label{eqn:exact-lanczos-polynomial-error}
\end{align}
\end{lemma}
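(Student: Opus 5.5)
The plan is the standard polynomial-exactness argument for Krylov projections. Write $\mathbf P_\ell:=\mathbf Q_\ell\mathbf Q_\ell^\dagger$ for the orthogonal projector onto $\mathcal K_\ell(\bK,\bu)$. The first step is to show that Lánczos reproduces low-degree polynomials exactly:
\[
p(\bK)\bu=\mathbf Q_\ell p(\mathbf T_\ell)\mathbf Q_\ell^\dagger\bu
\]
for every polynomial $p$ with $\deg(p)<k$. There are two cases.

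If the loop broke early ($\beta_\ell=0$), then $\mathcal K_\ell$ is $\bK$-invariant. Hence $\bK\mathbf Q_\ell=\mathbf Q_\ell\mathbf T_\ell$, which gives $p(\bK)\mathbf Q_\ell=\mathbf Q_\ell p(\mathbf T_\ell)$ for all polynomials $p$. Applying this to $\bu=\mathbf Q_\ell\mathbf Q_\ell^\dagger\bu$ yields the identity for every $p$.

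If instead $\ell=k$, I would show by induction on $j\le k-1$ that
\[
\bK^j\bu=\mathbf Q_\ell\mathbf T_\ell^j\mathbf Q_\ell^\dagger\bu.
\]
The case $j=0$ holds because $\bu\in\mathcal K_\ell$. For the inductive step, note that $\bK^{j}\bu\in\mathcal K_{j+1}\subseteq\mathcal K_\ell$ when $j\le k-1$. Therefore
\[
\bK^{j}\bu=\mathbf P_\ell\bK\,\bK^{j-1}\bu=\mathbf Q_\ell\mathbf Q_\ell^\dagger\bK\mathbf Q_\ell\mathbf T_\ell^{j-1}\mathbf Q_\ell^\dagger\bu=\mathbf Q_\ell\mathbf T_\ell^{j}\mathbf Q_\ell^\dagger\bu,
\]
using $\mathbf T_\ell=\mathbf Q_\ell^\dagger\bK\mathbf Q_\ell$. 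Linearity then gives the identity for all $\deg(p)<k$. This case distinction is the only mildly delicate step. It requires that the recurrence genuinely produces an orthonormal basis with $\mathbf T_\ell=\mathbf Q_\ell^\dagger\bK\mathbf Q_\ell$, which I take from the exact-arithmetic description preceding the lemma.

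Next, fix any $p$ with $\deg(p)<k$ and set $g:=f-p$. By the identity above,
\[
f(\bK)\bu-\mathbf Q_\ell f(\mathbf T_\ell)\mathbf Q_\ell^\dagger\bu=g(\bK)\bu-\mathbf Q_\ell g(\mathbf T_\ell)\mathbf Q_\ell^\dagger\bu.
\]
The triangle inequality bounds its norm by
\[
\norm{g(\bK)}\norm{\bu}_2+\norm{g(\mathbf T_\ell)}\norm{\mathbf Q_\ell^\dagger\bu}_2.
\]
We have $\norm{g(\bK)}\le\max_{x\in[\lambda_{\min}(\bK),\lambda_{\max}(\bK)]}|g(x)|$ because $\bK$ is Hermitian and $g$ is real. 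For the second term, Cauchy interlacing (or simply the Rayleigh quotient bounds for $\mathbf Q_\ell^\dagger\bK\mathbf Q_\ell$) shows that the eigenvalues of the Hermitian matrix $\mathbf T_\ell$ lie in $[\lambda_{\min}(\bK),\lambda_{\max}(\bK)]$. Hence $\norm{g(\mathbf T_\ell)}$ obeys the same bound, and we also have $\norm{\mathbf Q_\ell^\dagger\bu}_2\le\norm{\bu}_2$. Adding the two terms gives $2\norm{\bu}_2\max|f-p|$, and minimizing over $p$ yields \eqref{eqn:exact-lanczos-polynomial-error}.
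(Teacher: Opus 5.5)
Your proof is correct and takes the paper's route. You split on early breakdown versus $\ell=k$, use that Lánczos reproduces polynomials of degree less than $k$, and bound the two error terms via the spectrum of the compression $\mathbf T_\ell$ lying in $[\lambda_{\min}(\bK),\lambda_{\max}(\bK)]$. The differences are minor: you prove the polynomial-exactness identity by induction where the paper only asserts it, and in the breakdown case you reuse the polynomial bound instead of noting that $f(\bK)\bu$ is reproduced exactly.
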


\begin{proof}
Suppose first that the algorithm encounters $\beta_j=0$, so that
$\ell=j$.  The vanishing residual means that applying $\bK$ to the final
Lánczos basis produces no component outside its span.  Hence
$\mathcal K_\ell(\bK,\bu)$ is invariant under $\bK$, and the restriction of
$\bK$ to this subspace is represented in the basis $\mathbf Q_\ell$ by
$\mathbf T_\ell$.  Since $\bu$ belongs to this invariant subspace, $f(\bK)\bu=\mathbf Q_\ell f(\mathbf T_\ell) \mathbf Q_\ell^\dagger\bu$.
Thus, \Cref{eqn:exact-lanczos-polynomial-error} follows trivially.

Suppose now that the algorithm does not encounter a zero residual.  It
then performs all $k$ iterations, so $\ell=k$.  Lánczos reproduces every
polynomial of degree less than $\ell$ when applied to the starting vector. That is,
for every polynomial $p$ with $\deg(p)<\ell=k$, $p(\bK)\bu=\mathbf Q_\ell p(\mathbf T_\ell)\mathbf Q_\ell^\dagger\bu$.
Consequently,
\begin{align}
f(\bK)\bu
-
\mathbf Q_\ell f(\mathbf T_\ell)
\mathbf Q_\ell^\dagger\bu
&=
\bigl(f(\bK)-p(\bK)\bigr)\bu+
\mathbf Q_\ell
\bigl(p(\mathbf T_\ell)-f(\mathbf T_\ell)\bigr)
\mathbf Q_\ell^\dagger\bu.
\label{eqn:lanczos-polynomial-comparison}
\end{align}
Because $\mathbf T_\ell$ is a Hermitian compression of $\bK$, its spectrum
is contained in
$[\lambda_{\min}(\bK),\lambda_{\max}(\bK)]$.  The spectral theorem
therefore gives
\begin{align}
\norm{\bigl(f(\bK)-p(\bK)\bigr)\bu}_2
&\leq
\norm{\bu}_2
\max_{x\in[\lambda_{\min}(\bK),\lambda_{\max}(\bK)]}
\abs{f(x)-p(x)}
\end{align}
and
\begin{align}
\norm{
\mathbf Q_\ell
\bigl(p(\mathbf T_\ell)-f(\mathbf T_\ell)\bigr)
\mathbf Q_\ell^\dagger\bu
}_2
&\leq
\norm{\bu}_2
\max_{x\in[\lambda_{\min}(\bK),\lambda_{\max}(\bK)]}
\abs{f(x)-p(x)}.
\end{align}
Combining these two bounds and minimizing over all polynomials of degree
less than $k$ proves
\Cref{eqn:exact-lanczos-polynomial-error}.
\end{proof}

\subsection{Random-Start Lánczos Approximation}
\label{subsec:random-start-lanczos}

The deterministic guarantee in \Cref{lem:exact-lanczos} controls the
absolute error in the vector returned by \Cref{alg:lanczos-exp}.  Our
algorithms, however, subsequently normalize this vector, so we require an
error bound relative to the norm of the exact matrix-exponential--vector
product.  Such a guarantee need not hold for an arbitrary fixed starting
vector $\bu$. Namely, the deterministic approximation error naturally scales with
$e^{\lambda_{\max}(\bK)}$, while $\norm{e^{\bK}\bu}_2$ can be much smaller
if $\bu$ has little overlap with the top eigenspaces of $\bK$.  Choosing
the starting vector at random rules out this unfavorable alignment with
high probability.

Our analysis of the stochastic matrix learner is carried out for real symmetric Hamiltonians and real spherical starting vectors in order to invoke the average-projection regret theorem of Carmon et al. In this setting, the required guarantee is already
provided by Carmon et al.~\cite[Corollary~20 and
Proposition~23]{CDST19}.  We record the result in the notation of
\Cref{alg:lanczos-exp} for later reference.  Since we work in exact
arithmetic, we omit the finite-precision guarantees from their statement.

\begin{fact}[Uniform Random Real Lánczos]
\label{fact:real-random-start-lanczos}
Let $\bK\in\R^{\dn\times\dn}$ be symmetric, let
$\bu\sim\operatorname{Unif}(\mathbb S^{\dn-1})$, and fix
$\zeta,\eta\in(0,1/2)$.  For
$q=\widetilde O(\sqrt{\max\{1,\norm{\bK}\}})$,
capped at $\dn$, the vector
$\widetilde\bv^{(\bu)}:=\LanczosExp(\bK,\bu,q)$ satisfies
\begin{align}
    \Prb_{\bu}\left[
        \norm{
            \widetilde\bv^{(\bu)}
            -
            e^{\bK}\bu
        }_2
        \leq
        \zeta\norm{e^{\bK}\bu}_2
    \right]
    &\geq
    1-\eta.
    \label{eqn:real-lanczos-relative-guarantee}
\end{align}
If one multiplication by $\bK$ costs $T_{\bK}$, the arithmetic cost is $O(q(T_{\bK}+\dn)+q^2)$.
\end{fact}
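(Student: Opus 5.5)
The statement is recorded from Carmon et al., so the plan is to rederive it from \Cref{lem:exact-lanczos} plus an anti-concentration bound for a uniform random start.

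\textbf{Step 1: reduce to the shifted matrix.} Write $\lambda_1:=\lambda_{\max}(\bK)$ and $\kappa:=\lambda_{\max}(\bK)-\lambda_{\min}(\bK)\leq 2\norm{\bK}$. Set $f(x)=e^{x-\lambda_1}$. Both $e^{\bK}\bu$ and the Lánczos output scale by $e^{\lambda_1}$, because $\LanczosExp$ is linear in the function applied to $\mathbf T_\ell$. Relative error is therefore unchanged, and we may approximate $f$, which takes values in $(0,1]$, on the interval $[-\kappa,0]$.

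\textbf{Step 2: absolute error.} Use the standard Chebyshev or Sachdeva--Vishnoi bound: $e^{x}$ on $[-\kappa,0]$ is uniformly approximable to error $\epsilon_0$ by a polynomial of degree $O(\sqrt{\max\{\kappa,\log(1/\epsilon_0)\}\log(1/\epsilon_0)})$. By \Cref{lem:exact-lanczos} with $\norm{\bu}_2=1$, the absolute error is then at most $2\epsilon_0 e^{\lambda_1}$. If this degree exceeds $\dn$, cap it at $\dn$; Lánczos with $\dn$ steps is exact, since the Krylov space becomes invariant.

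\textbf{Step 3: lower bound the target norm.} Let $\bw_1$ be a unit top eigenvector of $\bK$. Then
\[
\norm{e^{\bK}\bu}_2\geq e^{\lambda_1}\abs{\langle \bw_1,\bu\rangle}.
\]
For $\bu$ uniform on $\mathbb S^{\dn-1}$, $\langle\bw_1,\bu\rangle^2$ is $\mathrm{Beta}(1/2,(\dn-1)/2)$-distributed, so
\[
\Prb\left[\abs{\langle \bw_1,\bu\rangle}\leq \eta/\sqrt{\dn}\right]\leq O(\eta).
\]
After adjusting constants, with probability at least $1-\eta$ we have $\norm{e^{\bK}\bu}_2\geq e^{\lambda_1}\eta/(C\sqrt{\dn})$.

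\textbf{Step 4: combine.} Choose $\epsilon_0=\zeta\eta/(2C\sqrt{\dn})$. On the event of Step 3, the relative error is then at most $\zeta$. The required degree is
\[
O\left(\sqrt{\max\{\kappa,\log(\dn/(\zeta\eta))\}\log(\dn/(\zeta\eta))}\right)=\widetilde O\left(\sqrt{\max\{1,\norm{\bK}\}}\right),
\]
capped at $\dn$.

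\textbf{Step 5: cost.} There are $q$ multiplications by $\bK$ and $O(q)$ vector operations of length $\dn$ per iteration. The exponential of the tridiagonal $\mathbf T_q$ applied to $\mathbf e_1$ costs $O(q^2)$ via tridiagonal eigendecomposition in exact arithmetic, or alternatively via a further polynomial evaluation. Lifting back by $\mathbf Q_q$ costs $O(q\dn)$. The total is $O(q(T_{\bK}+\dn)+q^2)$.

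The main obstacle is the polynomial approximation degree in Step 2. The naive Taylor or Chebyshev degree scales as $\kappa$ rather than $\sqrt\kappa$, so the square-root rate from the Sachdeva--Vishnoi theorem must be invoked. Everything else, including the $\sqrt{\dn}$ loss from the random start, only costs logarithmic factors.
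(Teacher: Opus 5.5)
Your argument is correct and follows the same route the paper uses for the complex analogue, \Cref{prop:gibbs-random-start-lanczos}; the real statement itself is only cited from Carmon et al. The common ingredients are a polynomial approximation of the exponential with degree scaling as the square root of the spectral width, \Cref{lem:exact-lanczos}, exactness once the cap $\dn$ is reached, and anti-concentration of the overlap with a top eigenvector. Your one adaptation is the right one for the real case: you use $\langle\bw_1,\bu\rangle^2\sim\mathrm{Beta}(1/2,(\dn-1)/2)$ with threshold $\eta/\sqrt{\dn}$ instead of the complex $\sqrt{\eta/\dn}$, which the real case needs, and it costs only logarithmic factors.
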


We will use \Cref{fact:real-random-start-lanczos} directly for the stochastic matrix learner by conditioning on the history
before round $t$ and taking $\bK=\lrX\Ht{t}/2$.  The simultaneous Gibbs expectation estimator instead uses a complex Hermitian
matrix and a Haar-random vector in $\C^{\dn}$, so the real-valued result
does not apply directly.  We therefore prove the following complex
analogue.

\begin{proposition}[Haar Random Lánczos]
\label{prop:gibbs-random-start-lanczos}
Let $\bK\in\C^{\dn\times\dn}$ be Hermitian, let
$\bu\sim\operatorname{Haar}(\C^{\dn})$, and fix
$\zeta,\eta\in(0,1/2)$.  For
$q=\widetilde O(\sqrt{\max\{1,\norm{\bK}\}})$,
capped at $\dn$, the vector
$\widetilde\bv^{(\bu)}:=\LanczosExp(\bK,\bu,q)$ satisfies
\begin{align}
    \Prb_{\bu}\left[
        \norm{
            \widetilde\bv^{(\bu)}
            -
            e^{\bK}\bu
        }_2
        \leq
        \zeta\norm{e^{\bK}\bu}_2
    \right]
    &\geq
    1-\eta.
    \label{eqn:gibbs-lanczos-relative-guarantee}
\end{align}
If one multiplication by $\bK$ costs $T_{\bK}$, the arithmetic cost is
$O(q(T_{\bK}+\dn)+q^2)$.
\end{proposition}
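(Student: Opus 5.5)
Combine the deterministic bound of \Cref{lem:exact-lanczos} with a lower bound on $\norm{e^{\bK}\bu}_2$ that holds with high probability over the Haar start. First reduce to a shifted matrix. Let $\lambda_1:=\lambda_{\max}(\bK)$. Lánczos is shift-equivariant: the Krylov spaces of $\bK$ and $\bK-\lambda_1\bI$ coincide, and $\mathbf T_\ell$ shifts by $\lambda_1\bI$. Both $\widetilde\bv^{(\bu)}$ and $e^{\bK}\bu$ therefore scale by $e^{\lambda_1}$, and the relative guarantee is unchanged. We may thus assume the spectrum lies in $[-2\norm{\bK},0]$ with $\lambda_{\max}=0$. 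If the cap $q=\dn$ is active, the Krylov space is invariant after at most $\dn$ steps and the output is exact, so only the case $q<\dn$ needs an argument.

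\emph{Step 1 (absolute error).} Apply \Cref{lem:exact-lanczos} with $f(x)=e^{x}$ on $[-2\norm{\bK},0]$. We use the standard polynomial approximation result of Sachdeva--Vishnoi/OSV: for $\kappa:=\max\{1,2\norm{\bK}\}$ and $\theta\in(0,1)$, some polynomial of degree $O(\sqrt{\kappa\log(1/\theta)}+\log(1/\theta))$ approximates $e^{x}$ uniformly to within $\theta$ on $[-\kappa,0]$. This gives $\norm{\widetilde\bv^{(\bu)}-e^{\bK}\bu}_2\leq 2\theta$, since $\norm{\bu}_2=1$.

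\emph{Step 2 (anti-concentration).} Let $\bw$ be a unit top eigenvector of $\bK$, so $\bK\bw=0$ after the shift. Since $e^{\bK}\succeq\bzero$ is diagonal in the eigenbasis,
\[
\norm{e^{\bK}\bu}_2\geq\abs{\bw^\dagger\bu}.
\]
For $\bu\sim\operatorname{Haar}(\C^{\dn})$, the quantity $\abs{\bw^\dagger\bu}^2$ has the Beta$(1,\dn-1)$ law, so $\Prb[\abs{\bw^\dagger\bu}^2\leq t]=1-(1-t)^{\dn-1}\leq(\dn-1)t$. Choosing $t=\eta/\dn$ gives $\norm{e^{\bK}\bu}_2\geq\sqrt{\eta/\dn}$ with probability at least $1-\eta$. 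This is the step where the complex case departs from \Cref{fact:real-random-start-lanczos}, but the complex Beta tail is actually cleaner than the real one.

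\emph{Step 3 (combine).} Take $\theta=\tfrac{\zeta}{2}\sqrt{\eta/\dn}$. Then, on the event of Step 2, the absolute error $2\theta$ is at most $\zeta\norm{e^{\bK}\bu}_2$. The required degree is $O\bigl(\sqrt{\kappa\log(\dn/(\zeta\eta))}+\log(\dn/(\zeta\eta))\bigr)=\widetilde O(\sqrt{\max\{1,\norm{\bK}\}})$, with the logarithmic factors in $\dn,1/\zeta,1/\eta$ absorbed into $\widetilde O$.

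\emph{Cost.} Each Lánczos iteration uses one multiplication by $\bK$ and $O(\dn)$ vector work, and lifting $\mathbf Q_\ell e^{\mathbf T_\ell}\mathbf e_1$ costs $O(q\dn)$. Computing $e^{\mathbf T_\ell}\mathbf e_1$ for a tridiagonal $\mathbf T_\ell$ via its eigendecomposition costs $O(q^2)$ in the idealized arithmetic model. The total is $O(q(T_{\bK}+\dn)+q^2)$.

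The main obstacle is Step 1: citing the right polynomial-approximation degree bound for $e^{x}$ on a length-$\kappa$ interval, with the $\sqrt{\kappa}$ dependence and only a logarithmic dependence on $1/\theta$. I will take this from OSV12 or Sachdeva--Vishnoi rather than reprove it.
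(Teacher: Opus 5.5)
Your proposal is correct and matches the paper's proof essentially step for step. Both arguments combine \Cref{lem:exact-lanczos} with a uniform polynomial approximation of $e^x$ of error $\theta e^{\lambda_{\max}(\bK)}$, the $\operatorname{Beta}(1,\dn-1)$ anti-concentration of the top-eigenvector overlap at level $\eta/\dn$, and the same choice $\theta=(\zeta/2)\sqrt{\eta/\dn}$. The only cosmetic difference is how the $e^{\lambda_{\max}}$-relative approximation is obtained: you derive it from an explicit shift-equivariance argument plus the Sachdeva--Vishnoi bound on $[-\kappa,0]$, whereas the paper cites \cite[Corollary~20]{CDST19}, which states it directly on $[\lambda_{\min}(\bK),\lambda_{\max}(\bK)]$.
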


\begin{proof}
If the cap $q=\dn$ is active, then \Cref{alg:lanczos-exp} returns
$e^{\bK}\bu$ exactly by or before the $\dn$-th iteration.  We therefore
assume that $q<\dn$.

By \cite[Corollary~20]{CDST19}, for any interval $[a,b]$ and
$\delta\in(0,1]$, there exists a polynomial $p$ of degree
$O(\sqrt{\max\{b-a,\log(1/\delta)\}\log(1/\delta)})$
such that
\begin{align}
\max_{x\in[a,b]}\abs{e^x-p(x)}
&\leq
\delta e^b.
\end{align}
Apply this with
$a=\lambda_{\min}(\bK)$,
$b=\lambda_{\max}(\bK)$, and
$\delta=(\zeta/2)\sqrt{\eta/\dn}$.  Since
$\lambda_{\max}(\bK)-\lambda_{\min}(\bK)\leq2\norm{\bK}$,
the required polynomial has degree
$\widetilde O(\sqrt{\max\{1,\norm{\bK}\}})$.  Thus, for $q$ as in the
proposition, we may choose $p$ with $\deg(p)<q$ and
\begin{align}
\max_{
x\in[
\lambda_{\min}(\bK),
\lambda_{\max}(\bK)
]
}
\abs{e^x-p(x)}
&\leq
\frac{\zeta}{2}
e^{\lambda_{\max}(\bK)}
\sqrt{\frac{\eta}{\dn}}.
\label{eqn:complex-exponential-polynomial-error}
\end{align}
Since $\norm{\bu}_2=1$, \Cref{lem:exact-lanczos} then gives
\begin{align}
\norm{
\widetilde\bv^{(\bu)}
-
e^{\bK}\bu
}_2
&\leq
\zeta
e^{\lambda_{\max}(\bK)}
\sqrt{\frac{\eta}{\dn}}.
\label{eqn:complex-lanczos-absolute-error}
\end{align}

It remains to lower bound $\norm{e^{\bK}\bu}_2$.  Let $\bw_{\max}$ be a
unit eigenvector of $\bK$ corresponding to $\lambda_{\max}(\bK)$.  Since
\begin{align}
\norm{e^{\bK}\bu}_2
&\geq
e^{\lambda_{\max}(\bK)}
\abs{\bw_{\max}^\dagger\bu},
\end{align}
it suffices to lower bound the random overlap
$\abs{\bw_{\max}^\dagger\bu}$.  For $\dn\geq2$, the squared overlap
$\abs{\bw_{\max}^\dagger\bu}^2$ has the
$\operatorname{Beta}(1,\dn-1)$ distribution.  Hence,
\begin{align}
\Prb_{\bu}\left[
\abs{\bw_{\max}^\dagger\bu}
<
\sqrt{\frac{\eta}{\dn}}
\right]
&=
1-
\left(
1-\frac{\eta}{\dn}
\right)^{\dn-1}\leq
\frac{(\dn-1)\eta}{\dn}
\leq
\eta.
\label{eqn:complex-haar-anticoncentration}
\end{align}
Thus, with probability at least $1-\eta$,
\begin{align}
\norm{e^{\bK}\bu}_2
&\geq
e^{\lambda_{\max}(\bK)}
\sqrt{\frac{\eta}{\dn}}.
\end{align}
Combining this bound with
\Cref{eqn:complex-lanczos-absolute-error} gives
\begin{align}
\norm{
\widetilde\bv^{(\bu)}
-
e^{\bK}\bu
}_2
&\leq
\zeta\norm{e^{\bK}\bu}_2,
\end{align}
which proves \Cref{eqn:gibbs-lanczos-relative-guarantee}.

Finally, \Cref{alg:lanczos-exp} performs at most $q$ matrix--vector
multiplications by $\bK$ and $O(q\dn)$ additional vector arithmetic.
Computing the exponential of the tridiagonal compression and lifting the
result through the Krylov basis costs $O(q^2+q\dn)$.  Thus the total
arithmetic cost is $O(q(T_{\bK}+\dn)+q^2)$.
\end{proof}

\section{Simultaneous Gibbs Expectation Estimation}
\label{sec:gibbs-or}

This section develops the main estimation primitive underlying our
dequantization.  Given a Hamiltonian $\bH$, with corresponding Gibbs state
$\Gibbs{\bH}:=e^{-\bH}/\Tr(e^{-\bH})$, our goal is to estimate
simultaneously the Gibbs expectation of many sparse observables. The key idea is to separate the expensive Hamiltonian-dependent step from
the inexpensive observable-dependent step. In particular, for each random start $\bu$,
we compute $e^{-\bH/2}\bu$ once and then reuse the resulting filtered
vector to estimate every observable.

Throughout this section, unless stated otherwise, the random starting
vectors are Haar-random, with
$\bu\sim\operatorname{Haar}(\C^{\dn})$.  We use parenthesized
superscripts to indicate dependence on a particular random start.  Thus
$\bv^{(\bu)}$, $W^{(\bu)}$, $\bpsi^{(\bu)}$, and $I^{(\bu)}$ are all
constructed from the same draw $\bu$. When we use several independent random starts, we write
$\bu^{(1)},\ldots,\bu^{(R)}$ for the draws and use the same superscript to
index all quantities derived from them.  For example,
$\bv^{(r)}$, $W^{(r)}$, $\bpsi^{(r)}$, and $I^{(r)}$ are all constructed
from the random start $\bu^{(r)}$.  Subscripts are reserved for coordinates
and observables.

We now formalize this simultaneous estimation primitive.  The following theorem shows that a single collection of Hamiltonian-filtered random
vectors suffices to estimate all $\mc$ Gibbs expectations, with the
Hamiltonian-dependent cost incurred only once and an additional sparse
observable-dependent cost that scales linearly with $\mc$.

\begin{theorem}[Efficient Simultaneous Gibbs Expectation Estimation]
\label{thm:classical-gibbs-moments}
Let
$\bH=\sum_{\ell=1}^{h} y_\ell \bB_\ell$
with $\by\in\R^h$ and $\norm{\by}_1\leq B_{\bH}$, where each
$\bB_\ell\in\C^{\dn\times\dn}$ is Hermitian,
$s_{\bB}$-row-sparse, and satisfies $\norm{\bB_\ell}\leq1$.
Let
$\A{1},\ldots,\A{\mc}\in\C^{\dn\times\dn}$ be Hermitian,
$s_{\bA}$-row-sparse, and satisfy $\norm{\A{j}}\leq\om$ for every
$j\in[\mc]$. For any accuracy $\alpha\in(0,\om]$ and failure probability
$\del\in(0,1)$, there is a randomized exact-arithmetic algorithm that
outputs estimates
$\widehat\mu_1,\ldots,\widehat\mu_{\mc}$ such that
\begin{align}
    \Prb\!\left[
        \max_{j\in[\mc]}
        \left|
            \widehat\mu_j
            -
            \Tr\!\left(\A{j}\Gibbs{\bH}\right)
        \right|
        \leq
        \alpha
    \right]
    &\geq
    1-\del.
    \label{eqn:simultaneous-gibbs-moment-guarantee}
\end{align}
The sparse-oracle running time is
\begin{align}
    \widetilde O\!\left(
        \frac{\om^2}{\alpha^2}
        \left[
            \dn(hs_{\bB}+1)
            \min\!\left\{
                \dn,
                \sqrt{1+B_{\bH}}
            \right\}
            +
            \mc s_{\bA}
        \right]
    \right).
    \label{eqn:gibbs-moment-runtime}
\end{align}
\end{theorem}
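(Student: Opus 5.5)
We analyze \Cref{alg:intro-gibbs-estimator} with $K=\Theta(\om^2/\alpha^2)$ samples per group and $R=\Theta(\log(\mc/\del))$ groups. The plan has three parts: an exact-filtering analysis, a Lánczos perturbation step, and a runtime count.

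\textbf{Exact filtering.} Write $\bv^{(\bu)}=e^{-\bH/2}\bu$, $W^{(\bu)}=\norm{\bv^{(\bu)}}_2^2$, and $X_j^{(\bu)}=W^{(\bu)}\Zest_{\A{j}}(\bpsi^{(\bu)},I^{(\bu)})$.

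First, \Cref{fact:haar-first-moment} gives the two means
\begin{align}
\E[W]=\Tr(e^{-\bH})/\dn=:\bar W, \qquad \E[X_j]=\Tr(\A{j}e^{-\bH})/\dn=\bar W\mu_j .
\end{align}
The second uses unbiasedness from \Cref{lem:energy-estimator-moments}.

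Second, we bound second moments. \Cref{lem:energy-estimator-moments} gives $\E[X_j^2\mid\bu]\leq W^2\om^2$. \Cref{fact:haar-second-moment} with $\bM=\bN=e^{-\bH}$ gives
\begin{align}
\E[W^2]=\frac{\Tr(e^{-\bH})^2+\Tr(e^{-2\bH})}{\dn(\dn+1)}\leq 2\bar W^2 .
\end{align}
Hence $\Var(X_j)\leq 2\om^2\bar W^2$ and $\Var(W)\leq\bar W^2$.

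Third, we control one group by Chebyshev. The group means $\overline X_j$ and $\overline W$ then satisfy, with probability at least $3/4$, both $|\overline W-\bar W|\leq c\bar W\alpha/\om$ and $|\overline X_j-\bar W\mu_j|\leq c\bar W\alpha$.

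Fourth, we pass to the ratio. Since $|\mu_j|\leq\om$, the ratio bound
\begin{align}
\left|\frac{\overline X_j}{\overline W}-\mu_j\right|\leq\frac{|\overline X_j-\bar W\mu_j|+|\mu_j||\overline W-\bar W|}{\overline W}
\end{align}
yields error at most $\alpha/2$ for a suitable constant $c$.

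Fifth, the median over $R$ groups succeeds for each fixed $j$ with probability at least $1-\del/(2\mc)$ by a Chernoff bound. A union bound over $j$ then finishes. Only independence across starts is needed; correlation across $j$ is harmless.

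\textbf{Lánczos perturbation (main obstacle).} Apply \Cref{prop:gibbs-random-start-lanczos} with $\bK=-\bH/2$. Since $\norm{\bK}\leq B_{\bH}/2$, the degree is $q=\widetilde O(\min\{\dn,\sqrt{1+B_{\bH}}\})$. Take relative accuracy $\zeta=\Theta(\alpha/\om)$ and per-start failure $\eta$. Boosting to $\eta\leq\del/(4RK)$ costs only logarithms, and a union bound then makes every start good.

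The delicate part is transferring relative vector error to the estimator, because the perturbation changes the sampling distribution. I would not compare coordinate-by-coordinate. Instead I would compare conditional expectations given $\bu$:
\begin{align}
\E_I[\widetilde X_j\mid\bu]=\widetilde\bv^\dagger\A{j}\widetilde\bv .
\end{align}
Using $\norm{\widetilde\bv-\bv}\leq\zeta\norm{\bv}$, this differs from $\bv^\dagger\A{j}\bv$ by at most $3\zeta\om W$. Likewise $\widetilde W\in[(1-\zeta)^2W,(1+\zeta)^2W]$. The second moments are $\E_I[\widetilde X_j^2\mid\bu]\leq\om^2\widetilde W^2\leq 4\om^2W^2$.

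The analysis should be run on the conditional law given that all starts are good. Conditioning on this event of probability at least $1-\del/4$ distorts the means by only a $1+O(\eta)$ factor in $L^2$ via Cauchy--Schwarz, since $\E[W^2]\leq2\bar W^2$. The Chebyshev argument then goes through with mean shifts $O(\zeta\om\bar W)$, which are absorbed into $\alpha/2$.

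\textbf{Runtime.} One product with $\bH$ costs $O(\dn h s_{\bB})$. By \Cref{prop:gibbs-random-start-lanczos}, each start therefore costs
\begin{align}
O\bigl(q\,\dn(hs_{\bB}+1)+q^2\bigr)
\end{align}
for filtering. Normalization and coordinate sampling (\Cref{sec:squared-magnitude-sampling}) add $O(\dn)$. Readout over all observables adds $O(\mc s_{\bA})$. Multiplying by $RK=\widetilde O(\om^2/\alpha^2)$ starts, and using $q^2\leq q\dn$, gives \Cref{eqn:gibbs-moment-runtime}.
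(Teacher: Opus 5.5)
Your proposal is correct. The exact-filtering analysis and the runtime count follow the paper's route essentially step for step: Haar moments, $\E[W^2]\le 2\bar W^2$, per-group Chebyshev, the ratio inequality via $|\mu_j|\le\om$, and median plus a union bound over $j$.

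The genuine difference is how you handle L\'anczos failures. The paper explicitly declines to condition on the all-success event, because that alters the Haar law. Instead it builds a proof-only surrogate $\widetilde{\bv}_*^{(r)}$, equal to the L\'anczos output on success and to $e^{-\bH/2}\bu^{(r)}$ on failure.

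\begin{itemize}
\item The surrogate satisfies the relative-error hypothesis for every $\bu$, so its samples stay i.i.d.\ under the original law.
\item A deterministic stability lemma then gives second moment at most $5\widetilde\tau_*^2$ and population-ratio bias at most $2\beta\om/(1-\beta)\le\alpha/2$.
\item The median-of-ratios lemma applies verbatim with $\kappa=5$.
\item Coupling the coordinate draws shows the algorithm's output equals the surrogate's on the success event, which has probability at least $1-\del/2$.
\end{itemize}

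You condition anyway and pay for the distortion. This is valid, but two points should be made explicit.

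First, the all-good event is an intersection of events, each depending only on its own start $\bu^{(r)}$. Under the conditional law the joint samples are therefore still i.i.d., each with the single-start law conditioned on an event of probability at least $1-\eta$. This is what licenses running Chebyshev and the Chernoff median step conditionally. It is also why the distortion is governed by $\eta$ rather than by $\del/4$.

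Second, Cauchy--Schwarz gives $\E[W\mathbf{1}_{\text{fail}}]\le\sqrt{2\eta}\,\bar W$. The conditional means therefore shift by a relative $O(\sqrt\eta)$, not $O(\eta)$ as you wrote. This is harmless, because your choice $\eta\le\del/(4RK)$ with $K=\Theta(\om^2/\alpha^2)$ already makes $\sqrt\eta$ a small constant multiple of $\alpha/\om$.

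In short, your route analyzes the law of the algorithm's actual outputs and needs no coupling step. In exchange, it must control how the failure event interacts with $W$ and needs $\sqrt\eta\lesssim\alpha/\om$. The surrogate route needs only $N\eta\le\del/2$ and keeps the whole statistical analysis on an unconditioned i.i.d.\ sample.
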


\noindent The proof of \Cref{thm:classical-gibbs-moments} proceeds in four
steps:
\begin{enumerate}[leftmargin=2.3em,label=\textup{\arabic*.}]
    \item We first show how Gibbs expectations can be expressed as ratios
    of expectations over random vectors filtered by $e^{-\bH/2}$
    (\Cref{sec:two-exponential-averages}).

    \item We derive the Haar moment identities for these filtered random
    vectors and combine them with the shared-coordinate estimator from
    \Cref{sec:energy-estimator} to obtain unbiased samples, with controlled
    second moments
    (\Cref{sec:unnormalized-gibbs-moments}).

    \item We convert these second-moment bounds into simultaneous
    high-probability estimates for all observables using a
    median-of-means construction
    (\Cref{sec:robust-simultaneous-ratios}).

    \item Finally, we show that the estimator remains accurate when the exact
    filtered vector $e^{-\bH/2}\bu$ is replaced by its finite-Krylov Lánczos
    approximation.  We then combine the preceding ingredients to prove
    \Cref{thm:classical-gibbs-moments} and derive its sparse-oracle runtime
    (\Cref{sec:gibbs-moment-proof}).
\end{enumerate}

The estimator itself is simple to describe at a high level.  For each
Haar-random direction $\bu$, we form the filtered vector
$\bv^{(\bu)}=e^{-\bH/2}\bu$.  We then normalize this vector, sample a
single coordinate, and reuse that coordinate to estimate all observables
simultaneously.  The weight
$W^{(\bu)}=\norm{\bv^{(\bu)}}_2^2$ compensates for the normalization of
the filtered vector, so that ratios of the resulting empirical averages
recover the desired Gibbs expectations.  The following subsections make
this construction precise and establish its accuracy and runtime.

\subsection{Recovering Gibbs Expectations from Filtered Random Vectors}
\label{sec:two-exponential-averages}

We begin by showing how Gibbs expectations can be recovered from random
vectors filtered by the matrix exponential.  Let
$\bu\sim\operatorname{Haar}(\C^{\dn})$.  We define the filtered random
vector
\begin{align}
    \bv^{(\bu)}
    &:=
    e^{-\bH/2}\bu,
    \label{eqn:gibbs-filtered-vector}
\end{align}
and its squared norm
\begin{align}
    W^{(\bu)}
    &:=
    \|\bv^{(\bu)}\|_2^2.
    \label{eqn:gibbs-filtered-weight}
\end{align}

Our first observation is that averaging the outer products of the filtered
vectors recovers the unnormalized Gibbs operator.  By the Haar first-moment
identity of \Cref{fact:haar-first-moment},
\begin{align}
    \E_{\bu}\!\left[
        \bv^{(\bu)}(\bv^{(\bu)})^\dagger
    \right]
    &=
    e^{-\bH/2}
    \E_{\bu}\!\left[\bu\bu^\dagger\right]
    e^{-\bH/2}
    =
    \frac{e^{-\bH}}{\dn}.
    \label{eqn:filtered-vector-matrix-expectation}
\end{align}
Thus, up to the factor $1/\dn$, the average filtered outer product is
exactly the unnormalized Gibbs operator $e^{-\bH}$. The corresponding normalization factor is obtained by averaging the
squared norm of the filtered vector.  Taking the trace of
\Cref{eqn:filtered-vector-matrix-expectation} gives
\begin{align}
    \E_{\bu}\!\left[
        W^{(\bu)}
    \right]
    &=
    \frac{\Tr(e^{-\bH})}{\dn}.
    \label{eqn:filtered-vector-weight-expectation}
\end{align}
The factors of $1/\dn$ cancel between these two expectations, so their
ratio is exactly the Gibbs state:
\begin{align}
    \Gibbs{\bH}
    :=
    \frac{e^{-\bH}}{\Tr(e^{-\bH})}
    &=
    \frac{
        \E_{\bu}\!\left[
            \bv^{(\bu)}(\bv^{(\bu)})^\dagger
        \right]
    }{
        \E_{\bu}\!\left[
            W^{(\bu)}
        \right]
    }.
    \label{eqn:ratio-expectations-is-gibbs}
\end{align}

This representation immediately yields the form needed for estimating
Gibbs expectations.  For any Hermitian observable $\bA$, taking the
inner product with $\bA$ gives
\begin{align}
    \Tr(\bA\Gibbs{\bH})
    &=
    \frac{
        \E_{\bu}\!\left[
            (\bv^{(\bu)})^\dagger
            \bA
            \bv^{(\bu)}
        \right]
    }{
        \E_{\bu}\!\left[
            W^{(\bu)}
        \right]
    }.
    \label{eqn:gibbs-expectation-ratio}
\end{align}
Thus, to estimate $\Tr(\bA\Gibbs{\bH})$, it suffices to
estimate the observable-dependent quantity
$\E_{\bu}[(\bv^{(\bu)})^\dagger\bA\bv^{(\bu)}]$ together with the common
normalization factor $\E_{\bu}[W^{(\bu)}]$.  The following subsections
develop an efficient simultaneous estimator for these quantities.

\subsection{Unnormalized Gibbs Moments}
\label{sec:unnormalized-gibbs-moments}

By \Cref{eqn:gibbs-expectation-ratio}, each Gibbs expectation is determined
by the expectations over the Haar-random vector $\bu$ of
$(\bv^{(\bu)})^\dagger\bA\bv^{(\bu)}$ and $W^{(\bu)}$.  We now bound the
first and second moments of these random quantities in order to control
their empirical averages and obtain a finite-sample estimator.  Throughout
this subsection, we work with the exact filtered vector
$\bv^{(\bu)}=e^{-\bH/2}\bu$. The error introduced by its finite-Krylov
Lánczos approximation is analyzed separately in
\Cref{sec:gibbs-perturbation}.

Recall from \Cref{eqn:gibbs-filtered-weight} that
$W^{(\bu)}=\|\bv^{(\bu)}\|_2^2$.  Since its expectation appears
repeatedly throughout the analysis, we will denote it as
\begin{align} \label{eqn:def_tau}
    \tau
    &:=
    \E_{\bu}[W^{(\bu)}]
    =
    \frac{\Tr(e^{-\bH})}{\dn}.
\end{align}
We begin by bounding the second moment of $W^{(\bu)}$, which controls the
fluctuations of the common denominator and will also enter the
second-moment bound for the observable-dependent numerator.

\begin{lemma}[Second moment of the Gibbs weight]
\label{lem:gibbs-haar-moments}
Let $\bH\in\C^{\dn\times\dn}$ be Hermitian and
$\bu\sim\operatorname{Haar}(\C^{\dn})$.  Then,
\begin{align}
    \E_{\bu}\left[(W^{(\bu)})^2\right]
    &\leq
    2\tau^2.
    \label{eqn:gibbs-weight-moments}
\end{align}
\end{lemma}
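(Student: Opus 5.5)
Since $W^{(\bu)}=\norm{e^{-\bH/2}\bu}_2^2=\bu^\dagger e^{-\bH}\bu$ is a Haar quadratic form, the plan is to compute its second moment exactly with \Cref{fact:haar-second-moment} and then compare the result to $\tau^2$ by elementary trace inequalities.

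First I apply \Cref{fact:haar-second-moment} with $\bM=\bN=e^{-\bH}$, which gives
\begin{align}
    \E_{\bu}\!\left[(W^{(\bu)})^2\right]
    =
    \frac{\Tr(e^{-\bH})^2+\Tr(e^{-2\bH})}{\dn(\dn+1)}.
\end{align}
Next I bound the second trace term. Since $e^{-\bH}\succeq\bzero$ has nonnegative eigenvalues $\lambda_i$, we have $\Tr(e^{-2\bH})=\sum_i\lambda_i^2\leq(\sum_i\lambda_i)^2=\Tr(e^{-\bH})^2$. The second moment is therefore at most $2\Tr(e^{-\bH})^2/(\dn(\dn+1))\leq 2\Tr(e^{-\bH})^2/\dn^2=2\tau^2$, where the last equality uses the definition of $\tau$ in \Cref{eqn:def_tau}.

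I do not expect a real obstacle here. The only point to check is that the Haar second-moment identity is applied with the complex normalization $\dn(\dn+1)$, which is exactly how \Cref{fact:haar-second-moment} is stated. In fact the argument yields the slightly sharper bound $2\tau^2\dn/(\dn+1)$.
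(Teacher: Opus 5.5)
Your proposal is correct and follows the paper's proof essentially step for step: apply the Haar quadratic-form second-moment identity with $M=N=e^{-H}$, bound $\mathrm{Tr}(e^{-2H})\le \mathrm{Tr}(e^{-H})^2$ using positive semidefiniteness, and then use $n(n+1)\ge n^2$. Your side remark that this actually gives the slightly sharper bound $2\tau^2 n/(n+1)$ is also right.
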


\begin{proof}
Since
$W^{(\bu)}=\bu^\dagger e^{-\bH}\bu$, the Haar quadratic-form
second-moment identity from \Cref{fact:haar-second-moment}, applied with
$\bM=\bN=e^{-\bH}$, gives
\begin{align}
    \E_{\bu}\!\left[(W^{(\bu)})^2\right]
    &=
    \frac{
        \Tr(e^{-\bH})^2+\Tr(e^{-2\bH})
    }{
        \dn(\dn+1)
    }.
\end{align}
Because $e^{-\bH}\succeq\bzero$, we have
$\Tr(e^{-2\bH})\leq\Tr(e^{-\bH})^2$.  Hence, by the definition of $\tau$ in \Cref{eqn:def_tau},
\begin{align}
    \E_{\bu}\!\left[(W^{(\bu)})^2\right]
    &\leq
    \frac{2\Tr(e^{-\bH})^2}{\dn(\dn+1)}
    \leq
    \frac{2\Tr(e^{-\bH})^2}{\dn^2}
    =
    2\tau^2.
\end{align}
\end{proof}
We now combine the exponential-filtered vectors with the shared-coordinate estimator
from \Cref{sec:energy-estimator}.  For each realization of $\bu$, define
the normalized filtered vector
\begin{align}
    \bpsi^{(\bu)}
    &:=
    \frac{\bv^{(\bu)}}{\|\bv^{(\bu)}\|_2}.
    \label{eqn:gibbs-normalized-filtered-vector}
\end{align}
We then sample a coordinate $I^{(\bu)}\in[\dn]$ according to
\begin{align}
    \Prb\!\left[
        I^{(\bu)}=i
        \,\middle|\,
        \bv^{(\bu)}
    \right]
    &:=
    \abs{[\bpsi^{(\bu)}]_i}^2.
    \label{eqn:gibbs-coordinate-sampling}
\end{align}
For a Hermitian observable $\bA$, we weight the corresponding
shared-coordinate estimate by $W^{(\bu)}$ and define
\begin{align}
    X_{\bA}^{(\bu)}
    &:=
    W^{(\bu)}
    \Zest_{\bA}\!\left(
        \bpsi^{(\bu)},
        I^{(\bu)}
    \right).
    \label{eqn:weighted-energy-sample}
\end{align}
Crucially, the same sampled coordinate $I^{(\bu)}$ can be reused for all
observables.

The next lemma shows that these weighted samples have mean
$\tau\mu_j$, matching the numerator in
\Cref{eqn:gibbs-expectation-ratio}, while their second moments remain
controlled by the operator norm of the observable.

\begin{lemma}[Weighted shared-coordinate Gibbs sample]
\label{lem:weighted-gibbs-sample}
Let $\A{1},\ldots,\A{\mc}$ be Hermitian, with
$\norm{\A{j}}\leq\om$, and define
$X_j^{(\bu)}:=X_{\A{j}}^{(\bu)}$ as in
\Cref{eqn:weighted-energy-sample}.  The random variables
$X_1^{(\bu)},\ldots,X_{\mc}^{(\bu)}$ may be correlated, since they are
constructed from the same random start and sampled coordinate, i.e. no
independence across observables is required. Denote the Gibbs expectation of $\A{j}$ by $\mu_j:=\Tr(\A{j}\Gibbs{\bH})$.
Then, for every $j\in[\mc]$,
\begin{align}
    \E_{\bu}\!\left[
        \E_{I^{(\bu)}\mid\bu}\!\left[
            X_j^{(\bu)}
        \right]
    \right]
    &=
    \tau\mu_j,
    \label{eqn:weighted-sample-mean}
\end{align}
and
\begin{align}
    \E_{\bu}\!\left[
        \E_{I^{(\bu)}\mid\bu}\!\left[
            (X_j^{(\bu)})^2
        \right]
    \right]
    &\leq
    2\om^2\tau^2.
    \label{eqn:weighted-sample-second-moment}
\end{align}
Moreover, $|\mu_j|\leq\om$.
\end{lemma}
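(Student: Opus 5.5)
The plan is to condition on the random start $\bu$ and apply \Cref{lem:energy-estimator-moments} to the unit vector $\bpsi^{(\bu)}$. After that, the weight $W^{(\bu)}$ is a deterministic factor, and we take the outer Haar expectation using \Cref{eqn:filtered-vector-matrix-expectation} and \Cref{lem:gibbs-haar-moments}. The event $\bv^{(\bu)}=\bzero$ never occurs, since $e^{-\bH/2}$ is invertible and $\norm{\bu}_2=1$. Hence $W^{(\bu)}>0$ and $\bpsi^{(\bu)}$ is always a well-defined unit vector, and \Cref{lem:energy-estimator-moments} applies for every realization of $\bu$.

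For the mean, unbiasedness in \Cref{lem:energy-estimator-moments} gives
\begin{align}
\E_{I^{(\bu)}\mid\bu}\bigl[X_j^{(\bu)}\bigr]
=W^{(\bu)}(\bpsi^{(\bu)})^\dagger\A{j}\bpsi^{(\bu)}
=(\bv^{(\bu)})^\dagger\A{j}\bv^{(\bu)}
=\Tr\bigl(\A{j}\bv^{(\bu)}(\bv^{(\bu)})^\dagger\bigr).
\end{align}
Taking $\E_{\bu}$ and using \Cref{eqn:filtered-vector-matrix-expectation} yields $\Tr(\A{j}e^{-\bH})/\dn$. This equals $\tau\mu_j$ by the definition of $\tau$ in \Cref{eqn:def_tau} and of $\Gibbs{\bH}$.

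For the second moment, the operator-norm bound of \Cref{lem:energy-estimator-moments} gives, conditionally on $\bu$,
\begin{align}
\E_{I^{(\bu)}\mid\bu}\bigl[(X_j^{(\bu)})^2\bigr]\leq (W^{(\bu)})^2\om^2.
\end{align}
Averaging over $\bu$ and invoking \Cref{lem:gibbs-haar-moments} then gives the bound $2\om^2\tau^2$.

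For the final claim, $\Gibbs{\bH}$ is a density matrix, so $|\Tr(\A{j}\Gibbs{\bH})|\leq\norm{\A{j}}\,\norm{\Gibbs{\bH}}_1\leq\om$ by Hölder's inequality.

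No step is a real obstacle. The only point requiring care is the measurability and conditioning bookkeeping: the $I$-moments are taken pointwise in $\bu$, and the tower property is applied to the nonnegative or integrable quantities, which are bounded because $W^{(\bu)}\leq\norm{e^{-\bH}}$. The correlation across $j$ is irrelevant, since every statement is made for a single fixed $j$.
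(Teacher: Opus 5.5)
Your proposal is correct and follows essentially the same route as the paper. Both condition on $\bu$ and apply \Cref{lem:energy-estimator-moments} to get $W^{(\bu)}(\bpsi^{(\bu)})^\dagger\A{j}\bpsi^{(\bu)}=(\bv^{(\bu)})^\dagger\A{j}\bv^{(\bu)}$ and the conditional bound $\om^2(W^{(\bu)})^2$. The outer Haar average is then handled by the filtered-vector identity for the mean and by \Cref{lem:gibbs-haar-moments} for the second moment. Your added remarks that $W^{(\bu)}>0$ always and $W^{(\bu)}\leq\norm{e^{-\bH}}$ are valid but go beyond what the paper spells out.
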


\begin{proof}
Fix $j\in[\mc]$.  We first prove the claimed expression for the mean of
$X_j^{(\bu)}$.  Conditional on $\bu$, the only remaining randomness is the
sampled coordinate $I^{(\bu)}$.  By
\Cref{lem:energy-estimator-moments}, averaging the sampled-coordinate
estimator over this coordinate sampling recovers the expectation of
$\A{j}$ with respect to the normalized filtered vector
$\bpsi^{(\bu)}$.  Hence,
\begin{align}
    \E_{I^{(\bu)}\mid\bu}\!\left[
        X_j^{(\bu)}
    \right]=
    W^{(\bu)}
    (\bpsi^{(\bu)})^\dagger
    \A{j}
    \bpsi^{(\bu)}=
    W^{(\bu)}
    \left(
        \frac{\bv^{(\bu)}}{\sqrt{W^{(\bu)}}}
    \right)^\dagger
    \A{j}
    \left(
        \frac{\bv^{(\bu)}}{\sqrt{W^{(\bu)}}}
    \right)
    =
    (\bv^{(\bu)})^\dagger
    \A{j}
    \bv^{(\bu)}.
\end{align}
Taking expectation over the Haar-random vector $\bu$ and using the law of
total expectation gives
\begin{align}
    \E_{\bu}\!\left[
        \E_{I^{(\bu)}\mid\bu}\!\left[
            X_j^{(\bu)}
        \right]
    \right]
    &=
    \E_{\bu}\!\left[
        (\bv^{(\bu)})^\dagger
        \A{j}
        \bv^{(\bu)}
    \right].
\end{align}
By \Cref{eqn:gibbs-expectation-ratio} as well as the definitions of $\tau$ and $\mu_j$,
\begin{align}
    \E_{\bu}\!\left[
        (\bv^{(\bu)})^\dagger
        \A{j}
        \bv^{(\bu)}
    \right]
    &=
    \E_{\bu}[W^{(\bu)}]\,
    \Tr(\A{j}\Gibbs{\bH})=
    \tau\mu_j.
\end{align}

We next prove the claimed second-moment bound.  Conditional on $\bu$, we
have
\begin{align}
    \E_{I^{(\bu)}\mid\bu}\!\left[
        (X_j^{(\bu)})^2
    \right]
    &=
    (W^{(\bu)})^2
    \E_{I^{(\bu)}\mid\bu}\!\left[
        \Zest_{\A{j}}\!\left(
            \bpsi^{(\bu)},
            I^{(\bu)}
        \right)^2
    \right].
\end{align}
By \Cref{lem:energy-estimator-moments} and
$\norm{\A{j}}\leq\om$,
\begin{align}
    \E_{I^{(\bu)}\mid\bu}\!\left[
        \Zest_{\A{j}}\!\left(
            \bpsi^{(\bu)},
            I^{(\bu)}
        \right)^2
    \right]
    &\leq
    \om^2.
\end{align}
Hence,
\begin{align}
    \E_{I^{(\bu)}\mid\bu}\!\left[
        (X_j^{(\bu)})^2
    \right]
    &\leq
    \om^2(W^{(\bu)})^2.
\end{align}
Taking an expectation over the Haar-random vector $\bu$ gives
\begin{align}
    \E_{\bu}\!\left[
        \E_{I^{(\bu)}\mid\bu}\!\left[
            (X_j^{(\bu)})^2
        \right]
    \right]
    &\leq
    \om^2
    \E_{\bu}\!\left[
        (W^{(\bu)})^2
    \right].
\end{align}
Applying \Cref{lem:gibbs-haar-moments} yields
\begin{align}
    \E_{\bu}\!\left[
        \E_{I^{(\bu)}\mid\bu}\!\left[
            (X_j^{(\bu)})^2
        \right]
    \right]
    &\leq
    2\om^2\tau^2.
\end{align}
Finally, since $\Gibbs{\bH}$ is a density matrix,
$|\mu_j|=|\Tr(\A{j}\Gibbs{\bH})|\leq\norm{\A{j}}\leq\om$.
\end{proof}

\subsection{Simultaneous Ratio Estimation via Median-of-Means}
\label{sec:robust-simultaneous-ratios}

The previous subsection showed that the random numerator
$X_j^{(\bu)}$ and common denominator $W^{(\bu)}$ satisfy the moment bounds
needed to control their empirical averages.  We now isolate the resulting
ratio-estimation argument from the Gibbs setting and prove it in a general
form, assuming only corresponding first- and second-moment bounds. Because the underlying random variables need not be bounded, we use a
standard median-of-means argument~\cite{NemirovskiYudin1983}.  We divide
independent samples into groups, form the numerator and denominator means
within each group, take their ratio, and then return the median of these
groupwise ratios.  The same denominator estimate is shared across all
$\mc$ observables, allowing all ratios to be estimated simultaneously.

\begin{lemma}[Simultaneous Ratio Estimation]
\label{lem:median-of-ratios}
Let $(W,X_1,\ldots,X_{\mc})$ be a jointly distributed real-valued random
vector, with $W>0$ almost surely.  We allow arbitrary dependence between
$W,X_1,\ldots,X_{\mc}$ within a single draw.  Suppose that for some
$\tau>0$, $\kappa\geq1$, $\om>0$, and
$\mu_j\in[-\om,\om]$, we have
$\E[W]=\tau$, $\E[W^2]\leq\kappa\tau^2$, and, for every $j\in[\mc]$,
$\E[X_j]=\tau\mu_j$ and
$\E[X_j^2]\leq\kappa\om^2\tau^2$.

Fix an accuracy parameter $\alpha\in(0,\om]$ and failure probability
$\del\in(0,1)$.  Draw $gb$ independent copies
$(W^{(r)},X_1^{(r)},\ldots,X_{\mc}^{(r)})$ for $r\in[gb]$ of the entire
random vector, where $g=\Theta(\log(2\mc/\del))$ and
$b=\Theta(\kappa\om^2/\alpha^2)$.  Partition the $gb$ draws into $g$ disjoint groups
$\cG_1,\ldots,\cG_g$, each of size $b$.  For every group $a\in[g]$,
define the denominator mean
\begin{align}
    \overline W_a
    &:=
    \frac{1}{b}
    \sum_{r\in\cG_a} W^{(r)},
\end{align}
and, for each $j\in[\mc]$, the corresponding numerator mean
\begin{align}
    \overline X_{a,j}
    &:=
    \frac{1}{b}
    \sum_{r\in\cG_a} X_j^{(r)}.
\end{align}
Using the same denominator mean $\overline W_a$ for all $j$, define
\begin{align}
    \widehat\mu_j
    &:=
    \operatorname{median}_{a\in[g]}
    \frac{\overline X_{a,j}}{\overline W_a}.
\end{align}
Then, with probability at least $1-\del$ over the $gb$ independent joint
draws,
\begin{align}
    \max_{j\in[\mc]}
    \abs{\widehat\mu_j-\mu_j}
    &\leq
    \alpha.
    \label{eqn:median-ratio-guarantee}
\end{align}
\end{lemma}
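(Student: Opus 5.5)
The plan is a per-group Chebyshev argument followed by a median/Chernoff boost and a union bound over $j$. The one subtlety is that the ratio is nonlinear; I handle it by reducing to a single linear quantity. Fix a group $a$ and an index $j$. By independence within the group, $\overline W_a$ has mean $\tau$ and variance at most $\E[W^2]/b\leq\kappa\tau^2/b$. Similarly, $\overline X_{a,j}$ has mean $\tau\mu_j$ and variance at most $\kappa\om^2\tau^2/b$. The key identity is
\begin{align}
    \frac{\overline X_{a,j}}{\overline W_a}-\mu_j
    &=
    \frac{\overline X_{a,j}-\mu_j\overline W_a}{\overline W_a},
\end{align}
so it suffices to control the linear combination $Y_{a,j}:=\overline X_{a,j}-\mu_j\overline W_a$ and to lower bound the shared denominator.

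Step 1 handles the numerator. $Y_{a,j}$ has mean zero. Using $(x-y)^2\leq 2x^2+2y^2$ and $\abs{\mu_j}\leq\om$, its variance is at most
\[
\frac{2}{b}\bigl(\E[X_j^2]+\mu_j^2\E[W^2]\bigr)\leq \frac{4\kappa\om^2\tau^2}{b}.
\]
Chebyshev then gives $\Prb[\abs{Y_{a,j}}>\alpha\tau/2]\leq 16\kappa\om^2/(\alpha^2 b)$.

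Step 2 handles the denominator. Chebyshev gives $\Prb[\overline W_a<\tau/2]\leq 4\kappa/b$. Since $\alpha\leq\om$, this is at most $4\kappa\om^2/(\alpha^2 b)$. On the complement of both bad events, $\abs{\overline X_{a,j}/\overline W_a-\mu_j}\leq(\alpha\tau/2)/(\tau/2)=\alpha$. Choosing the constant in $b=\Theta(\kappa\om^2/\alpha^2)$ large enough makes the total per-group failure probability at most $1/4$, uniformly in $j$. The denominator event is shared across all $j$, but this does not matter because we only need a per-$j$ bound here.

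Step 3 boosts via the median. For fixed $j$, the group indicators $\mathbf 1\{\text{group } a \text{ fails for } j\}$ are independent across $a$, because the groups use disjoint independent draws. Each indicator has mean at most $1/4$. The median deviates from $\mu_j$ by more than $\alpha$ only if at least half the groups fail. By Hoeffding, that probability is at most $e^{-g/8}$. Taking $g=\Theta(\log(2\mc/\del))$ makes this at most $\del/\mc$, and a union bound over $j\in[\mc]$ yields the simultaneous guarantee \eqref{eqn:median-ratio-guarantee}. The union bound needs no independence across observables, which matches the allowed arbitrary within-draw dependence.

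The only real obstacle is the nonlinearity of the ratio, and the identity above removes it. No delta-method expansion is needed, only the event $\overline W_a\geq\tau/2$. The remaining work is constant bookkeeping, including checking that the hypothesis $\alpha\leq\om$ lets the denominator failure probability be absorbed into the same $b$.
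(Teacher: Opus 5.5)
Your proof is correct and follows the same route as the paper: Chebyshev within each group, a Chernoff/Hoeffding bound on the number of bad groups to boost through the median, and a union bound over $j$. The only difference is cosmetic. You apply Chebyshev to the single centered average $\overline X_{a,j}-\mu_j\overline W_a$ and need only the coarse event $\overline W_a\geq\tau/2$, whereas the paper concentrates $\overline X_{a,j}$ and $\overline W_a$ separately (the latter at scale $\alpha\tau/(4\om)$) and then applies the triangle inequality to the same ratio identity.
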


\begin{proof}
Fix $j\in[\mc]$.  For each group $a\in[g]$, define its ratio estimate
\begin{align}
    R_{a,j}
    &:=
    \frac{\overline X_{a,j}}{\overline W_a}.
\end{align}
By definition, the final estimator is the median of these $g$ independent
group estimates:
\begin{align}
    \widehat\mu_j
    &:=
    \operatorname{median}_{a\in[g]} R_{a,j}.
\end{align}
We first show that a single group estimate $R_{a,j}$ is
$\alpha$-accurate with constant probability, and then use the median to
amplify this constant success probability.

Fix a group $a\in[g]$.  Since the $b$ joint draws within the group are
independent,
\begin{align}
    \Var(\overline W_a)
    &=
    \frac{\Var(W)}{b}
    \leq
    \frac{\kappa\tau^2}{b},
\end{align}
where we used
$\Var(W)\leq\E[W^2]\leq\kappa\tau^2$.  Similarly,
\begin{align}
    \Var(\overline X_{a,j})
    &=
    \frac{\Var(X_j)}{b}
    \leq
    \frac{\kappa\om^2\tau^2}{b}.
\end{align}

Define the event
\begin{align}
    \cE_{a,j}
    &:=
    \left\{
        \abs{\overline W_a-\tau}
        \leq
        \frac{\alpha\tau}{4\om}
    \right\}
    \cap
    \left\{
        \abs{\overline X_{a,j}-\tau\mu_j}
        \leq
        \frac{\alpha\tau}{4}
    \right\}.
\end{align}
By Chebyshev's inequality,
\begin{align}
    \Prb\!\left[
        \abs{\overline W_a-\tau}
        >
        \frac{\alpha\tau}{4\om}
    \right]
    &\leq
    \frac{16\kappa\om^2}{b\alpha^2},
\end{align}
and
\begin{align}
    \Prb\!\left[
        \abs{\overline X_{a,j}-\tau\mu_j}
        >
        \frac{\alpha\tau}{4}
    \right]
    &\leq
    \frac{16\kappa\om^2}{b\alpha^2}.
\end{align}
Choosing the constant implicit in
$b=\Theta(\kappa\om^2/\alpha^2)$ sufficiently large makes each failure
probability at most $1/16$.  Hence, by a union bound,
\begin{align}
    \Prb[\cE_{a,j}]
    &\geq
    \frac{7}{8}.
    \label{eqn:good-group-probability}
\end{align}

We next show that $\cE_{a,j}$ guarantees that the ratio $R_{a,j}$ is
$\alpha$-accurate.  On $\cE_{a,j}$, since $\alpha\leq\om$,
\begin{align}
    \overline W_a
    &\geq
    \tau-\frac{\alpha\tau}{4\om}
    \geq
    \frac{3\tau}{4}.
\end{align}
Using also $|\mu_j|\leq\om$, we obtain
\begin{align}
    \abs{R_{a,j}-\mu_j}
    &=
    \frac{
        \abs{
            \overline X_{a,j}
            -
            \mu_j\overline W_a
        }
    }{
        \overline W_a
    }
    \\
    &=
    \frac{
        \abs{
            \overline X_{a,j}
            -
            \tau\mu_j
            +
            \mu_j(\tau-\overline W_a)
        }
    }{
        \overline W_a
    }
    \\
    &\leq
    \frac{
        \abs{\overline X_{a,j}-\tau\mu_j}
        +
        |\mu_j|\abs{\tau-\overline W_a}
    }{
        \overline W_a
    }
    \\
    &\leq
    \frac{
        \alpha\tau/4
        +
        \om\bigl(\alpha\tau/(4\om)\bigr)
    }{
        3\tau/4
    }
    =
    \frac{2\alpha}{3}
    \leq
    \alpha.
\end{align}
We have therefore shown that whenever $\cE_{a,j}$ occurs, the ratio estimate
$R_{a,j}$ is $\alpha$-accurate.  Equivalently,
\begin{align}
    \cE_{a,j}
    &\subseteq
    \left\{
        \abs{R_{a,j}-\mu_j}
        \leq
        \alpha
    \right\}.
\end{align}
By \Cref{eqn:good-group-probability}, this implies that
\begin{align}
    \Prb\!\left[
        \abs{R_{a,j}-\mu_j}
        \leq
        \alpha
    \right]
    &\geq
    \Prb[\cE_{a,j}]
    \geq
    \frac{7}{8}.
    \label{eqn:single-group-ratio-success}
\end{align}

We now repeat this construction across independent groups and take the
median of the resulting ratio estimates to obtain a high-probability
guarantee. For
$a\in[g]$, define the success indicator as
\begin{align}
    G_{a,j}
    &:=
    \mathbf{1}\!\left\{
        \abs{R_{a,j}-\mu_j}
        \leq
        \alpha
    \right\},
\end{align}
and the number of $\alpha$-accurate group estimates as 
\begin{align}
    S_j
    &:=
    \sum_{a=1}^{g}G_{a,j}.
\end{align}
By
\Cref{eqn:single-group-ratio-success},
$\Prb[G_{a,j}=1]\geq7/8$.  Moreover, for fixed $j$, the indicators
$G_{1,j},\ldots,G_{g,j}$ are independent because the groups are formed
from disjoint sets of independent joint draws.  Hence
\begin{align}
    \E[S_j]
    &=
    \sum_{a=1}^{g}\Prb[G_{a,j}=1]
    \geq
    \frac{7g}{8}.
\end{align}
If $S_j>g/2$, then a strict majority of the values
$R_{1,j},\ldots,R_{g,j}$ lie in
$[\mu_j-\alpha,\mu_j+\alpha]$.  Since $\widehat\mu_j$ is their median,
this immediately implies
$\abs{\widehat\mu_j-\mu_j}\leq\alpha$.

It therefore remains only to bound the probability that
$S_j\leq g/2$.  Since
\begin{align}
    \frac{g}{2}
    &=
    \left(1-\frac{3}{7}\right)\frac{7g}{8}
    \leq
    \left(1-\frac{3}{7}\right)\E[S_j],
\end{align}
a multiplicative Chernoff bound gives
\begin{align}
    \Prb\!\left[
        S_j\leq\frac{g}{2}
    \right]
    &\leq
    \exp\!\left(
        -\frac{9g}{112}
    \right).
    \label{eqn:median-ratio-chernoff}
\end{align}
Consequently,
\begin{align}
    \Prb\!\left[
        \abs{\widehat\mu_j-\mu_j}
        >
        \alpha
    \right]
    &\leq
    \exp\!\left(
        -\frac{9g}{112}
    \right).
\end{align}
Choosing the constant implicit in
$g=\Theta(\log(2\mc/\del))$ sufficiently large makes the right-hand side
at most $\del/\mc$.  A union bound over $j\in[\mc]$ therefore yields
\begin{align}
    \Prb\!\left[
        \max_{j\in[\mc]}
        \abs{\widehat\mu_j-\mu_j}
        >
        \alpha
    \right]
    &\leq
    \sum_{j=1}^{\mc}
    \Prb\!\left[
        \abs{\widehat\mu_j-\mu_j}
        >
        \alpha
    \right]
    \leq
    \del,
\end{align}
which proves \Cref{eqn:median-ratio-guarantee}.
\end{proof}

\subsection{Stability Under Lánczos Approximation}
\label{sec:gibbs-perturbation}

The preceding analysis assumed access to the exact filtered vector
$\bv^{(\bu)}=e^{-\bH/2}\bu$.  In the algorithm, however, this vector is
computed only approximately using the Lánczos method from
\Cref{sec:lanczos}.  We therefore need to show that the weighted ratio
estimator remains accurate even when each filtered vector is replaced by a
relative-error approximation. The following deterministic lemma quantifies this stability, allowing us to transfer the guarantees proved
for the exact filtered vectors to their Lánczos approximations.

\begin{lemma}[Stability under Relative Vector Error]
\label{lem:gibbs-relative-perturbation}
Let $\bu\sim\operatorname{Haar}(\C^{\dn})$ and
$\zeta\in(0,1/10]$.  Suppose that, for every $\bu$,
$\bv^{(\bu)}\neq\bzero$ and its approximation
$\widetilde{\bv}^{(\bu)}$ satisfies
\begin{align}
    \|
        \widetilde{\bv}^{(\bu)}-\bv^{(\bu)}
    \|_2
    &\leq
    \zeta\|\bv^{(\bu)}\|_2.
    \label{eqn:gibbs-relative-vector-error}
\end{align}
Define
$W^{(\bu)}:=\|\bv^{(\bu)}\|_2^2$,
$\widetilde W^{(\bu)}:=\|\widetilde{\bv}^{(\bu)}\|_2^2$,
$\tau:=\E_{\bu}[W^{(\bu)}]$, and
$\widetilde\tau:=\E_{\bu}[\widetilde W^{(\bu)}]$. Assume $\tau<\infty$, so $0<\tau<\infty$ and $0<\widetilde\tau<\infty$.
Finally, let $\beta:=2\zeta+\zeta^2$.
Then,
\begin{enumerate}
    \item \textbf{Pointwise stability:}
    For every $\bu$,
    \begin{align}
        \abs{
            \widetilde W^{(\bu)}-W^{(\bu)}
        }
        &\leq
        \beta W^{(\bu)}.
        \label{eqn:gibbs-weight-pointwise-perturbation}
    \end{align}
    Moreover, for every Hermitian $\bA$ with $\norm{\bA}\leq\om$,
    \begin{align}
        \abs{
            (\widetilde{\bv}^{(\bu)})^\dagger
            \bA
            \widetilde{\bv}^{(\bu)}
            -
            (\bv^{(\bu)})^\dagger
            \bA
            \bv^{(\bu)}
        }
        &\leq
        \beta\om W^{(\bu)}.
        \label{eqn:gibbs-observable-pointwise-perturbation}
    \end{align}

    \item \textbf{Second-moment stability:}
    If $\E_{\bu}[(W^{(\bu)})^2]\leq2\tau^2$, then
    \begin{align}
        \E_{\bu}\!\left[
            (\widetilde W^{(\bu)})^2
        \right]
        &\leq
        5\widetilde\tau^2.
        \label{eqn:gibbs-perturbed-weight-second-moment}
    \end{align}

    \item \textbf{Ratio stability:}
    For every Hermitian $\bA$ with $\norm{\bA}\leq\om$,
    \begin{align}
        \abs{
            \frac{
                \E_{\bu}\!\left[
                    (\widetilde{\bv}^{(\bu)})^\dagger
                    \bA
                    \widetilde{\bv}^{(\bu)}
                \right]
            }{
                \E_{\bu}[\widetilde W^{(\bu)}]
            }
            -
            \frac{
                \E_{\bu}\!\left[
                    (\bv^{(\bu)})^\dagger
                    \bA
                    \bv^{(\bu)}
                \right]
            }{
                \E_{\bu}[W^{(\bu)}]
            }
        }
        &\leq
        \frac{2\beta\om}{1-\beta}.
        \label{eqn:gibbs-population-ratio-bias}
    \end{align}
\end{enumerate}
\end{lemma}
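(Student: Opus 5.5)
I would prove the three parts in order, since each later part builds on the earlier ones.

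\textbf{Pointwise stability.} Write $\widetilde{\bv}^{(\bu)}=\bv^{(\bu)}+\mathbf e$ with $\norm{\mathbf e}_2\leq\zeta\norm{\bv^{(\bu)}}_2$.

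For the weight, the triangle inequality gives $(1-\zeta)\norm{\bv}_2\leq\norm{\widetilde\bv}_2\leq(1+\zeta)\norm{\bv}_2$. Squaring yields $\widetilde W\in[(1-\zeta)^2W,(1+\zeta)^2W]$. Since $(1+\zeta)^2-1=\beta$ and $1-(1-\zeta)^2=2\zeta-\zeta^2\leq\beta$, this gives \Cref{eqn:gibbs-weight-pointwise-perturbation}.

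For the observable, expand
\begin{align}
\widetilde\bv^\dagger\bA\widetilde\bv-\bv^\dagger\bA\bv
=2\operatorname{Re}\{\mathbf e^\dagger\bA\bv\}+\mathbf e^\dagger\bA\mathbf e .
\end{align}
Bounding each term by the operator norm gives at most $\om(2\zeta+\zeta^2)W=\beta\om W$, which is \Cref{eqn:gibbs-observable-pointwise-perturbation}.

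\textbf{Second-moment stability.} From pointwise stability, $\widetilde W\leq(1+\beta)W$, so
\begin{align}
\E[\widetilde W^2]\leq(1+\beta)^2\E[W^2]\leq2(1+\beta)^2\tau^2 .
\end{align}
Taking expectations of the lower pointwise bound gives $\widetilde\tau\geq(1-\beta)\tau$. Combining,
\begin{align}
\E[\widetilde W^2]\leq2\left(\frac{1+\beta}{1-\beta}\right)^2\widetilde\tau^2 .
\end{align}
For $\zeta\leq1/10$ we have $\beta\leq0.21$, so $((1+\beta)/(1-\beta))^2\leq(1.21/0.79)^2\approx2.35$. Hence the constant is at most about $4.7\leq5$. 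I will check this numeric step explicitly, since the slack is small.

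\textbf{Ratio stability.} Let $N=\E[\bv^\dagger\bA\bv]$ and $\widetilde N=\E[\widetilde\bv^\dagger\bA\widetilde\bv]$, and recall $\abs{N}\leq\om\tau$, as in \Cref{lem:weighted-gibbs-sample}. Write
\begin{align}
\frac{\widetilde N}{\widetilde\tau}-\frac{N}{\tau}
=\frac{\widetilde N-N}{\widetilde\tau}+N\,\frac{\tau-\widetilde\tau}{\tau\widetilde\tau}.
\end{align}
Taking expectations of the pointwise bounds gives $\abs{\widetilde N-N}\leq\beta\om\tau$ and $\abs{\widetilde\tau-\tau}\leq\beta\tau$, together with $\widetilde\tau\geq(1-\beta)\tau$. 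The first term is therefore at most $\beta\om/(1-\beta)$. The second is at most $\om\tau\cdot\beta\tau/(\tau\cdot(1-\beta)\tau)=\beta\om/(1-\beta)$. Summing gives \Cref{eqn:gibbs-population-ratio-bias}.

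Integrability is not an issue: all quantities are dominated by multiples of $W$, and $\E[W]=\tau<\infty$. The only delicate point in the whole argument is the constant $5$ in part two; if the bound via $\widetilde\tau\geq(1-\beta)\tau$ were too loose, I would fall back on the sharper lower bound $(1-\zeta)^2\tau$.
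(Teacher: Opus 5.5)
Your proposal is correct and follows essentially the same route as the paper. The paper also obtains the pointwise bounds by expanding around $\widetilde{\bv}=\bv+\mathbf e$ (for the weight it expands the square directly rather than using the triangle inequality, which is immaterial), bounds the second moment via $\widetilde W\leq(1+\zeta)^2W$ together with a lower bound on $\widetilde\tau$, and uses the identical two-term decomposition for the ratio. The only difference is that you use the weaker lower bound $\widetilde\tau\geq(1-\beta)\tau$ where the paper uses $\widetilde\tau\geq(1-\zeta)^2\tau$. This gives the constant $2(1.21/0.79)^2\approx4.69$ instead of $2(1.1/0.9)^4\approx4.46$, and both are below $5$, so your fallback is not needed.
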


\begin{proof}
Let
\begin{align}
    \be^{(\bu)}
    &:=
    \widetilde{\bv}^{(\bu)}-\bv^{(\bu)}.
\end{align}
By assumption,
$\norm{\be^{(\bu)}}_2\leq
\zeta\norm{\bv^{(\bu)}}_2$ for every $\bu$.

We first prove the pointwise stability bounds.  Let
$\be^{(\bu)}:=\widetilde{\bv}^{(\bu)}-\bv^{(\bu)}$, so that
$\widetilde{\bv}^{(\bu)}=\bv^{(\bu)}+\be^{(\bu)}$.  Expanding the squared
norm gives
\begin{align}
    \widetilde W^{(\bu)}-W^{(\bu)}
    &=
    \norm{\bv^{(\bu)}+\be^{(\bu)}}_2^2
    -
    \norm{\bv^{(\bu)}}_2^2 \\
    &=
    \left(
        \bv^{(\bu)}+\be^{(\bu)}
    \right)^\dagger
    \left(
        \bv^{(\bu)}+\be^{(\bu)}
    \right)
    -
    (\bv^{(\bu)})^\dagger\bv^{(\bu)}
    \\
    &=
    (\bv^{(\bu)})^\dagger\be^{(\bu)}
    +
    (\be^{(\bu)})^\dagger\bv^{(\bu)}
    +
    \norm{\be^{(\bu)}}_2^2\\
    &=
    2\Re\!\left[
        (\bv^{(\bu)})^\dagger\be^{(\bu)}
    \right]
    +
    \norm{\be^{(\bu)}}_2^2.
\end{align}
Taking absolute values and applying Cauchy--Schwarz,
\begin{align}
    \abs{
        \widetilde W^{(\bu)}-W^{(\bu)}
    }
    &\leq
    2
    \abs{
        (\bv^{(\bu)})^\dagger\be^{(\bu)}
    }
    +
    \norm{\be^{(\bu)}}_2^2\leq
    2\norm{\bv^{(\bu)}}_2
    \norm{\be^{(\bu)}}_2
    +
    \norm{\be^{(\bu)}}_2^2.
\end{align}
Finally, using the relative-error assumption
$\norm{\be^{(\bu)}}_2\leq
\zeta\norm{\bv^{(\bu)}}_2$, we obtain
\begin{align}
    \abs{
        \widetilde W^{(\bu)}-W^{(\bu)}
    }
    &\leq
    2\zeta\norm{\bv^{(\bu)}}_2^2
    +
    \zeta^2\norm{\bv^{(\bu)}}_2^2=
    (2\zeta+\zeta^2)W^{(\bu)}
    =
    \beta W^{(\bu)},
\end{align}
which proves
\Cref{eqn:gibbs-weight-pointwise-perturbation}.
For the observable term, expanding around
$\widetilde{\bv}^{(\bu)}
=
\bv^{(\bu)}+\be^{(\bu)}$ gives
\begin{align}
    &
    (\widetilde{\bv}^{(\bu)})^\dagger
    \bA
    \widetilde{\bv}^{(\bu)}
    -
    (\bv^{(\bu)})^\dagger
    \bA
    \bv^{(\bu)}=
    (\bv^{(\bu)})^\dagger
    \bA
    \be^{(\bu)}
    +
    (\be^{(\bu)})^\dagger
    \bA
    \bv^{(\bu)}
    +
    (\be^{(\bu)})^\dagger
    \bA
    \be^{(\bu)}.
\end{align}
Using $\norm{\bA}\leq\om$ and the relative-error bound,
\begin{align}
    &
    \abs{
        (\widetilde{\bv}^{(\bu)})^\dagger
        \bA
        \widetilde{\bv}^{(\bu)}
        -
        (\bv^{(\bu)})^\dagger
        \bA
        \bv^{(\bu)}
    }\leq
    2\om
    \norm{\bv^{(\bu)}}_2
    \norm{\be^{(\bu)}}_2
    +
    \om\norm{\be^{(\bu)}}_2^2\leq
    (2\zeta+\zeta^2)\om
    \norm{\bv^{(\bu)}}_2^2
    =
    \beta\om W^{(\bu)}.
\end{align}
This proves
\Cref{eqn:gibbs-observable-pointwise-perturbation}.

We next prove stability of the second-moment bound.  By the triangle and
reverse triangle inequalities,
\begin{align}
    (1-\zeta)\norm{\bv^{(\bu)}}_2
    &\leq
    \norm{\widetilde{\bv}^{(\bu)}}_2
    \leq
    (1+\zeta)\norm{\bv^{(\bu)}}_2.
\end{align}
Squaring gives
\begin{align}
    (1-\zeta)^2W^{(\bu)}
    &\leq
    \widetilde W^{(\bu)}
    \leq
    (1+\zeta)^2W^{(\bu)}.
    \label{eqn:gibbs-weight-relative-bounds}
\end{align}
Taking expectations in the lower bound yields
\begin{align}
    \widetilde\tau
    &=
    \E_{\bu}[\widetilde W^{(\bu)}]
    \geq
    (1-\zeta)^2\tau.
    \label{eqn:gibbs-perturbed-tau-lower-bound}
\end{align}
Meanwhile, leveraging the assumptions that $\zeta\leq1/10$ and $\E_{\bu}[(W^{(\bu)})^2]\leq2\tau^2$,
\begin{align}
    \E_{\bu}\!\left[
        (\widetilde W^{(\bu)})^2
    \right]
    &\leq
    (1+\zeta)^4
    \E_{\bu}\!\left[
        (W^{(\bu)})^2
    \right]
    \leq
    2(1+\zeta)^4\tau^2
    \leq
    2
    \left(
        \frac{1+\zeta}{1-\zeta}
    \right)^4
    \widetilde\tau^2
    \leq
    5\widetilde\tau^2.
\end{align}
This proves
\Cref{eqn:gibbs-perturbed-weight-second-moment}.

Finally, we prove stability of the ratio of expectations.  Let
\begin{align}
    a
    &:=
    \E_{\bu}\!\left[
        (\bv^{(\bu)})^\dagger
        \bA
        \bv^{(\bu)}
    \right]
\end{align}
denote the exact numerator, and let
\begin{align}
    \widetilde a
    &:=
    \E_{\bu}\!\left[
        (\widetilde{\bv}^{(\bu)})^\dagger
        \bA
        \widetilde{\bv}^{(\bu)}
    \right]
\end{align}
denote its approximate counterpart.  Their corresponding denominators are
$\tau=\E_{\bu}[W^{(\bu)}]$ and
$\widetilde\tau=\E_{\bu}[\widetilde W^{(\bu)}]$, respectively.  We first
bound the perturbations of these numerator and denominator terms.  Taking
expectation over $\bu$ in
\Cref{eqn:gibbs-weight-pointwise-perturbation} gives
\begin{align}
    \abs{\widetilde\tau-\tau}
    &=
    \abs{
        \E_{\bu}\!\left[
            \widetilde W^{(\bu)}-W^{(\bu)}
        \right]
    }
    \leq
    \E_{\bu}\!\left[
        \abs{\widetilde W^{(\bu)}-W^{(\bu)}}
    \right]
    \leq
    \beta\E_{\bu}[W^{(\bu)}]
    =
    \beta\tau.
    \label{eqn:gibbs-tau-perturbation}
\end{align}
In particular, $\widetilde\tau\geq(1-\beta)\tau$.  Similarly, taking
expectation in \Cref{eqn:gibbs-observable-pointwise-perturbation} yields
\begin{align}
    \abs{\widetilde a-a}
    &\leq
    \beta\om\E_{\bu}[W^{(\bu)}]
    =
    \beta\om\tau.
    \label{eqn:gibbs-numerator-perturbation}
\end{align}
We also have
$\abs{(\bv^{(\bu)})^\dagger\bA\bv^{(\bu)}}
\leq\norm{\bA}\norm{\bv^{(\bu)}}_2^2
\leq\om W^{(\bu)}$, and hence $|a|\leq\om\tau$ after taking expectation
over $\bu$.  Combining these bounds, we can compare the exact and
approximate ratios directly:
\begin{align}
    \abs{
        \frac{\widetilde a}{\widetilde\tau}
        -
        \frac{a}{\tau}
    }
    &=
    \abs{
        \frac{\widetilde a-a}{\widetilde\tau}
        +
        a
        \left(
            \frac{1}{\widetilde\tau}
            -
            \frac{1}{\tau}
        \right)
    }
    \leq
    \frac{\abs{\widetilde a-a}}{\widetilde\tau}
    +
    \frac{
        |a|\abs{\widetilde\tau-\tau}
    }{
        \tau\widetilde\tau
    }
    \leq
    \frac{\beta\om\tau}{(1-\beta)\tau}
    +
    \frac{
        \om\tau\cdot\beta\tau
    }{
        \tau(1-\beta)\tau
    }
    =
    \frac{2\beta\om}{1-\beta}.
\end{align}
This proves \Cref{eqn:gibbs-population-ratio-bias}.
\end{proof}

\subsection{Proof of the Simultaneous Gibbs Estimator}
\label{sec:gibbs-moment-proof}

We now assemble the preceding ingredients into the simultaneous estimator
of \Cref{alg:simultaneous-gibbs-estimator}, which gives the detailed
version of the procedure summarized in
\Cref{alg:intro-gibbs-estimator}.  Each Haar-random start is filtered
approximately using Lánczos, after which one sampled coordinate is reused
by the sampled-coordinate estimator for all observables.  The resulting
numerator and denominator samples are then combined using the
median-of-ratios construction from
\Cref{lem:median-of-ratios}.  The moment bounds from
\Cref{sec:unnormalized-gibbs-moments} and the Lánczos stability guarantee
from \Cref{sec:gibbs-perturbation} will allow us to prove the claimed
accuracy and runtime.
{\color{red}
\begin{algorithm}[t]
\caption{Simultaneous Gibbs Expectation Estimator}
\label{alg:simultaneous-gibbs-estimator}
\begin{algorithmic}[1]
\Require Hamiltonian
$\bH=\sum_{\ell=1}^{h}y_\ell\bB_\ell$,
observables $(\A{j})_{j=1}^{\mc}$,
accuracy $\alpha$, and failure probability $\del$

\State Set
$g\gets\Theta(\log(2\mc/\del))$,
$b\gets\Theta(\om^2/\alpha^2)$, and
$N\gets gb$
\State Set the Lánczos relative accuracy
$\zeta\gets c_0\alpha/\om$
and per-run failure probability
$\eta\gets\del/(2N)$,
where $c_0>0$ is a sufficiently small universal constant
\State Set
$L_{\bH}\gets
\log\!\bigl((2/\zeta)\sqrt{\dn/\eta}\bigr)$
and
\[
    q_{\bH}
    \gets
    \min\!\left\{
        \dn,\,
        \left\lceil
            C_{\mathrm L}
            \sqrt{
                \max\{1,B_{\bH},L_{\bH}\}
                L_{\bH}
            }
        \right\rceil
    \right\},
\]
where $C_{\mathrm L}$ is the constant from
\Cref{prop:gibbs-random-start-lanczos}

\For{$a=1,\ldots,g$}
    \For{$k=1,\ldots,b$}
        \State Draw
        $\bu^{(a,k)}\sim\operatorname{Haar}(\C^{\dn})$

        \State Using \Cref{prop:gibbs-random-start-lanczos}, compute
        $\widetilde{\bv}^{(a,k)}
        \approx e^{-\bH/2}\bu^{(a,k)}$
        with relative accuracy $\zeta$, failure probability $\eta$,
        and degree $q_{\bH}$

        \State Set
        $\widetilde W^{(a,k)}
        \gets\norm{\widetilde{\bv}^{(a,k)}}_2^2$
        and
        $\widetilde{\bpsi}^{(a,k)}
        \gets
        \widetilde{\bv}^{(a,k)}/
        \norm{\widetilde{\bv}^{(a,k)}}_2$

        \State Sample
        $I^{(a,k)}\in[\dn]$ according to
        $\Prb[I^{(a,k)}=i\mid\widetilde{\bpsi}^{(a,k)}]
        =\abs{[\widetilde{\bpsi}^{(a,k)}]_i}^2$

        \For{$j=1,\ldots,\mc$}
            \State Set
            $\widetilde X_j^{(a,k)}
            \gets
            \widetilde W^{(a,k)}
            \Zest_{\A{j}}(
                \widetilde{\bpsi}^{(a,k)},
                I^{(a,k)}
            )$
        \EndFor
    \EndFor

    \For{$j=1,\ldots,\mc$}
        \State Form the group ratio
        \[
            \widehat\mu_j^{(a)}
            \gets
            \frac{
                \sum_{k=1}^{b}\widetilde X_j^{(a,k)}
            }{
                \sum_{k=1}^{b}\widetilde W^{(a,k)}
            }
        \]
    \EndFor
\EndFor

\For{$j=1,\ldots,\mc$}
    \State Set
    $\widehat\mu_j
    \gets
    \operatorname{median}_{a\in[g]}
    \widehat\mu_j^{(a)}$
\EndFor

\State \Return
$\widehat\mu_1,\ldots,\widehat\mu_{\mc}$
\end{algorithmic}
\end{algorithm}
}

\begin{proof}[Proof of \Cref{thm:classical-gibbs-moments}]
We claim that \Cref{alg:simultaneous-gibbs-estimator} outputs estimates
$\widehat\mu_1,\ldots,\widehat\mu_{\mc}$ satisfying
\Cref{eqn:simultaneous-gibbs-moment-guarantee} within the running time
claimed in \Cref{eqn:gibbs-moment-runtime}.  We briefly recall the
procedure, then prove its correctness, and finally bound its running time.

\paragraph{Procedure.}
The estimator in \Cref{alg:simultaneous-gibbs-estimator} directly combines
the ingredients developed in the preceding subsections.  Let $g$ denote the number of
independent groups.  To obtain the desired simultaneous failure
probability over all $\mc$ observables, we choose
\begin{align}
    g
    &:=
    \Theta\!\left(
        \log\frac{2\mc}{\del}
    \right).
\end{align}
Let $b$ denote the number of independent samples within each group.  To
achieve statistical accuracy $\alpha/2$, we take
\begin{align}
    b
    &:=
    \Theta\!\left(
        \frac{\om^2}{\alpha^2}
    \right).
\end{align}
Thus the total number of Haar-random starts is $N:=gb$.
For each random start, Lánczos approximates the filtered vector
$e^{-\bH/2}\bu$ to relative accuracy
\begin{align}
    \zeta:=\frac{c_0\alpha}{\om},
\end{align}
with failure probability
\begin{align}
    \eta
    &:=
    \frac{\del}{2N}.
    \label{eqn:gibbs-lanczos-failure-probability}
\end{align}
The resulting approximate filtered vector is then used to form its weight
and the sampled coordinate estimates for all $\mc$ observables.  Finally,
the $N$ joint samples are combined using the median-of-ratios estimator of
\Cref{lem:median-of-ratios}.  The constant $c_0>0$ will be chosen
sufficiently small below.

\paragraph{Correctness Guarantee.}
For the analysis, index the $N$ random starts by $r\in[N]$, suppressing
their grouping into the $g$ median-of-ratios blocks.  Let
\begin{align}
    \bv^{(r)}
    &:=
    e^{-\bH/2}\bu^{(r)}
\end{align}
denote the exact filtered vector corresponding to the $r$-th random
start, and let $\widetilde{\bv}^{(r)}$ denote the Lánczos approximation
computed by the algorithm.  Define the corresponding Lánczos-success
event by
\begin{align}
    \cE_r
    &:=
    \left\{
        \norm{
            \widetilde{\bv}^{(r)}
            -
            \bv^{(r)}
        }_2
        \leq
        \zeta\norm{\bv^{(r)}}_2
    \right\}.
    \label{eqn:gibbs-lanczos-success-event}
\end{align}
By \Cref{prop:gibbs-random-start-lanczos},
$\Prb[\cE_r^c]\leq\eta$.  Let
\begin{align}
    \cE_N
    &:=
    \bigcap_{r=1}^{N}\cE_r
\end{align}
denote the event that all $N$ Lánczos computations succeed.  By a union
bound and the choice of $\eta$ in
\Cref{eqn:gibbs-lanczos-failure-probability},
\begin{align}
    \Prb[\cE_N^c]
    &=
    \Prb\!\left[
        \bigcup_{r=1}^{N}\cE_r^c
    \right]
    \leq
    \sum_{r=1}^{N}\Prb[\cE_r^c]
    \leq
    N\eta
    =
    \frac{\del}{2}.
    \label{eqn:all-gibbs-lanczos-success}
\end{align}

We cannot analyze the estimator simply by conditioning on $\cE_N$, since
doing so would change the distribution of the Haar-random starting
vectors.  Instead, following the pointwise stability analysis of
\Cref{sec:gibbs-perturbation}, we introduce a proof-only approximation
that satisfies the required relative-error condition for every
realization.  For each $r\in[N]$, define
\begin{align}
    \widetilde{\bv}_*^{(r)}
    &:=
    \begin{cases}
        \widetilde{\bv}^{(r)},
        & \text{on }\cE_r,\\
        \bv^{(r)},
        & \text{on }\cE_r^c.
    \end{cases}
    \label{eqn:pointwise-valid-lanczos-vector}
\end{align}
By construction,
\begin{align}
    \norm{
        \widetilde{\bv}_*^{(r)}
        -
        \bv^{(r)}
    }_2
    &\leq
    \zeta\norm{\bv^{(r)}}_2
    \label{eqn:pointwise-valid-relative-error}
\end{align}
with probability one.  These vectors therefore allow us to apply
\Cref{lem:gibbs-relative-perturbation} without conditioning on a
Lánczos-success event.

For the exact filtered vectors, recall that their squared norms provide the
common denominator samples in the Gibbs ratio estimator.  We therefore
define
\begin{align}
    W^{(r)}
    &:=
    \norm{\bv^{(r)}}_2^2.
\end{align}
For the pointwise-valid approximate vectors
$\widetilde{\bv}_*^{(r)}$, define the analogous weights
\begin{align}
    \widetilde W_*^{(r)}
    &:=
    \norm{\widetilde{\bv}_*^{(r)}}_2^2.
\end{align}
Their expectations give the corresponding normalization factors for the
exact and approximate ratio estimators.  We denote these by
\begin{align}
    \tau
    &:=
    \E[W^{(r)}],
\end{align}
and
\begin{align}
    \widetilde\tau_*
    &:=
    \E[\widetilde W_*^{(r)}].
\end{align}
By \Cref{lem:gibbs-haar-moments}, the exact weights satisfy
\begin{align}
    \E\!\left[
        (W^{(r)})^2
    \right]
    &\leq
    2\tau^2.
\end{align}
Since \Cref{eqn:pointwise-valid-relative-error} holds for every
realization, the second-moment stability guarantee of
\Cref{lem:gibbs-relative-perturbation} then gives
\begin{align}
    \E\!\left[
        (\widetilde W_*^{(r)})^2
    \right]
    &\leq
    5\widetilde\tau_*^2.
    \label{eqn:pointwise-valid-weight-moment}
\end{align}

We next identify the ratio estimated by these pointwise-valid filtered
vectors.  For each $j\in[\mc]$, let
\begin{align}
    \mu_j
    &:=
    \Tr(\A{j}\Gibbs{\bH})
\end{align}
denote the desired Gibbs expectation.  By
\Cref{eqn:gibbs-expectation-ratio},
\begin{align}
    \mu_j
    &=
    \frac{
        \E\!\left[
            (\bv^{(r)})^\dagger
            \A{j}
            \bv^{(r)}
        \right]
    }{
        \tau
    }.
\end{align}
The corresponding ratio for the pointwise-valid approximate vectors is
\begin{align}
    \widetilde\mu_j^*
    &:=
    \frac{
        \E\!\left[
            (\widetilde{\bv}_*^{(r)})^\dagger
            \A{j}
            \widetilde{\bv}_*^{(r)}
        \right]
    }{
        \widetilde\tau_*
    }.
    \label{eqn:pointwise-valid-population-ratio}
\end{align}
Applying the ratio-stability guarantee of
\Cref{lem:gibbs-relative-perturbation}, with
$\beta:=2\zeta+\zeta^2$, gives
\begin{align}
    \abs{
        \widetilde\mu_j^*-\mu_j
    }
    &\leq
    \frac{2\beta\om}{1-\beta}.
\end{align}
Since $\zeta=c_0\alpha/\om$ and $\alpha\leq\om$, choosing the universal
constant $c_0$ sufficiently small ensures that, for every $j\in[\mc]$,
\begin{align}
    \abs{
        \widetilde\mu_j^*-\mu_j
    }
    &\leq
    \frac{\alpha}{2}.
    \label{eqn:pointwise-valid-population-bias}
\end{align}

It remains to control the statistical estimation of
$\widetilde\mu_j^*$.  Normalize each pointwise-valid filtered vector as
\begin{align}
    \widetilde{\bpsi}_*^{(r)}
    &:=
    \frac{
        \widetilde{\bv}_*^{(r)}
    }{
        \norm{\widetilde{\bv}_*^{(r)}}_2
    },
\end{align}
sample a coordinate $I_*^{(r)}$ according to its squared amplitudes, and
define
\begin{align}
    \widetilde X_{*,j}^{(r)}
    &:=
    \widetilde W_*^{(r)}
    \Zest_{\A{j}}\!\left(
        \widetilde{\bpsi}_*^{(r)},
        I_*^{(r)}
    \right).
\end{align}
Conditional on $\widetilde{\bv}_*^{(r)}$, the moment identities of the
sampled coordinate estimator (\Cref{lem:energy-estimator-moments}) give
\begin{align}
    \E_{I_*^{(r)}\mid\widetilde{\bv}_*^{(r)}}\!\left[
        \widetilde X_{*,j}^{(r)}
    \right]
    &=
    (\widetilde{\bv}_*^{(r)})^\dagger
    \A{j}
    \widetilde{\bv}_*^{(r)}.
\end{align}
Consequently,
\begin{align}
    \E\!\left[
        \widetilde X_{*,j}^{(r)}
    \right]
    &=
    \widetilde\tau_*
    \widetilde\mu_j^*.
\end{align}
Similarly, using
\Cref{eqn:pointwise-valid-weight-moment},
\begin{align}
    \E\!\left[
        (\widetilde X_{*,j}^{(r)})^2
    \right]
    &\leq
    \om^2
    \E\!\left[
        (\widetilde W_*^{(r)})^2
    \right]
    \leq
    5\om^2\widetilde\tau_*^2.
\end{align}
Moreover,
$|\widetilde\mu_j^*|\leq\om$.

Thus the joint samples
$(\widetilde W_*^{(r)},
\widetilde X_{*,1}^{(r)},\ldots,
\widetilde X_{*,\mc}^{(r)})$
satisfy the hypotheses of
\Cref{lem:median-of-ratios} with $\kappa=5$.  Applying that lemma with
statistical accuracy $\alpha/2$ and failure probability $\del/2$, the
choices of $g$ and $b$ in
\Cref{alg:simultaneous-gibbs-estimator} imply that the corresponding
median-of-ratios estimates $\widehat\mu_j^*$ satisfy
\begin{align}
    \Prb\!\left[
        \max_{j\in[\mc]}
        \abs{
            \widehat\mu_j^*
            -
            \widetilde\mu_j^*
        }
        \leq
        \frac{\alpha}{2}
    \right]
    &\geq
    1-\frac{\del}{2}.
    \label{eqn:pointwise-valid-statistical-success}
\end{align}

Finally, couple $I_*^{(r)}$ with the coordinate $I^{(r)}$ used by
\Cref{alg:simultaneous-gibbs-estimator} by using the same independent
uniform random seed to sample from their respective squared-amplitude
distributions.  On the event $\cE_N$, we have
$\widetilde{\bv}_*^{(r)}=\widetilde{\bv}^{(r)}$ for every $r\in[N]$.
Hence their weights, sampled coordinates, and sampled coordinate estimates
coincide such that, for every $j\in[\mc]$, $\widehat\mu_j=\widehat\mu_j^*$.
On the intersection of $\cE_N$ and the statistical-success event in
\Cref{eqn:pointwise-valid-statistical-success}, we therefore have,
simultaneously for all $j\in[\mc]$,
\begin{align}
    \abs{
        \widehat\mu_j-\mu_j
    }
    &=
    \abs{
        \widehat\mu_j^*-\mu_j
    }\leq
    \abs{
        \widehat\mu_j^*
        -
        \widetilde\mu_j^*
    }
    +
    \abs{
        \widetilde\mu_j^*
        -
        \mu_j
    }\leq
    \frac{\alpha}{2}
    +
    \frac{\alpha}{2}
    =
    \alpha.
\end{align}
Combining
\Cref{eqn:all-gibbs-lanczos-success,eqn:pointwise-valid-statistical-success}
with a union bound gives
\begin{align}
    \Prb\!\left[
        \max_{j\in[\mc]}
        \abs{
            \widehat\mu_j-\mu_j
        }
        \leq
        \alpha
    \right]
    &\geq
    1-\del,
\end{align}
which proves
\Cref{eqn:simultaneous-gibbs-moment-guarantee}.

\paragraph{Runtime.}
The algorithm uses
\begin{align}
    N
    &=
    gb
    =
    O\!\left(
        \frac{\om^2}{\alpha^2}
        \log\frac{2\mc}{\del}
    \right)
\end{align}
independent random starts.  Sampling and explicitly storing one
Haar-random vector costs $O(\dn)$.  Moreover, since
$\bH=\sum_{\ell=1}^{h}y_\ell\bB_\ell$ and each $\bB_\ell$ is
$s_{\bB}$-row-sparse, multiplying a vector by $\bH$ costs
$O(\dn h s_{\bB})$.

By \Cref{prop:gibbs-random-start-lanczos}, using relative accuracy
$\zeta=c_0\alpha/\om$ and failure probability
$\eta=\del/(2N)$ requires Lánczos degree
\begin{align}
    q_{\bH}
    &=
    \widetilde O\!\left(
        \min\!\left\{
            \dn,
            \sqrt{1+B_{\bH}}
        \right\}
    \right).
\end{align}
Thus, computing one approximate filtered vector has runtime
\begin{align}
    O\!\left(
        q_{\bH}\dn(hs_{\bB}+1)
        +
        q_{\bH}^2
    \right)
    &=
    O\!\left(
        q_{\bH}\dn(hs_{\bB}+1)
    \right),
\end{align}
where we used $q_{\bH}\leq\dn$.

For each filtered vector, constructing its squared-amplitude sampling
distribution costs $O(\dn)$, and evaluating the sampled-coordinate
estimator for all $\mc$ observables at the single sampled coordinate costs
$O(\mc s_{\bA})$.  Hence the cost per random start is
\begin{align}
    \widetilde O\!\left(
        \dn(hs_{\bB}+1)
        \min\!\left\{
            \dn,
            \sqrt{1+B_{\bH}}
        \right\}
        +
        \mc s_{\bA}
    \right).
\end{align}
Multiplying by the $N$ random starts gives the total running time
\begin{align}
    \widetilde O\!\left(
        \frac{\om^2}{\alpha^2}
        \left[
            \dn(hs_{\bB}+1)
            \min\!\left\{
                \dn,
                \sqrt{1+B_{\bH}}
            \right\}
            +
            \mc s_{\bA}
        \right]
    \right),
\end{align}
which proves \Cref{eqn:gibbs-moment-runtime}.
\end{proof}

\section{Oracle-Based SDP Solving via Simultaneous Gibbs Estimation}
\label{sec:direct-solver}

The previous section showed how to estimate the Gibbs expectations of many
sparse observables simultaneously from a shared collection of filtered
random vectors.  We now use this primitive to classically implement the
Gibbs-state oracle interfaces appearing in the sparse-access SDP solver of
van Apeldoorn and Gily\'en~\cite{AG19}.  This gives our first end-to-end
classical sparse-oracle SDP solver with sublinear dependence on the product
of the matrix dimension $n$ and number of constraints $m$.

The outer convergence argument is unchanged from the matrix
multiplicative-weights framework of van Apeldoorn and
Gily\'en~\cite{AG19}.  Our task is therefore to verify that simultaneous
Gibbs expectation estimation supplies the oracle outputs required by that
framework and then account for the cost of implementing these oracles
classically.

\subsection{Classical Gibbs-State Constraint Search}
\label{sec:classical-gibbs-or}

We first reduce violated-constraint search directly to the simultaneous
Gibbs expectation estimator of \Cref{thm:classical-gibbs-moments}.  Given
constraint thresholds $b_1,\ldots,b_{\mc}$, it suffices to estimate each
quantity $\Tr(\A{j}\Gibbs{\bH})$ to sufficiently small additive error and
then compare the resulting estimates with the corresponding thresholds.
Thus, once the simultaneous estimates are available, the remaining search
is purely deterministic.

This is the classical analogue of the role played by the Quantum OR in
the quantum SDP solver.  There, the Quantum OR is used to determine
whether one of many tests accepts with high probability while reusing the
same input state across all tests~\cite{HLM17,BKL19}.  In the formulation
of Brand\~ao et al.~\cite[Lemma~2]{BKL19}, given a state $\brho$ and
projectors $\mathbf{\Pi}_1,\ldots,\mathbf{\Pi}_{\mc}$, one distinguishes
between the case in which some test has large acceptance probability,
\begin{align}
    \exists j\in[\mc]
    \qquad \text{s.t.} \qquad
    \Tr(\brho\mathbf{\Pi}_j)
    &\geq
    1-\eta,
\end{align}
and the case in which the average acceptance probability is small,
\begin{align}
    \frac{1}{\mc}
    \sum_{j=1}^{\mc}
    \Tr(\brho\mathbf{\Pi}_j)
    &\leq
    \nu.
\end{align}
Here a sufficient constant-gap promise is
$0<\eta\leq1/2$ and $0\leq\nu\leq(1-\eta)^2/(24\mc)$,
with a sufficiently small constant implementation error in the
the Quantum OR procedure~\cite[Lemma~5]{AG19}.
Our classical routine does not reproduce this general black-box
primitive.  Rather, in the Gibbs-state setting relevant to SDP solving,
\Cref{thm:classical-gibbs-moments} gives additive estimates of all
constraint expectations simultaneously, from which the required
constraint search follows immediately.

\begin{corollary}[Classical Gibbs-State Constraint Search]
\label{cor:gibbs-or}
Under the assumptions of \Cref{thm:classical-gibbs-moments}, let
$b_1,\ldots,b_{\mc}\in\R$, let $\alpha\in(0,\om]$, and let
$\del\in(0,1)$. There is a classical algorithm that, with probability at least
$1-\del$, either returns an index $j\in[\mc]$ satisfying
\begin{align}
    \Tr\!\left(
        \A{j}\Gibbs{\bH}
    \right)
    &>
    b_j+\alpha,
\end{align}
or certifies that, for every $j\in[\mc]$,
\begin{align}
    \Tr\!\left(
        \A{j}\Gibbs{\bH}
    \right)
    &\leq
    b_j+2\alpha.
\end{align}
Its sparse-oracle running time is the same, up to constant factors, as
that of \Cref{thm:classical-gibbs-moments}.
\end{corollary}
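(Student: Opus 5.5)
The plan is to run the simultaneous estimator of \Cref{thm:classical-gibbs-moments} once, with accuracy $\alpha/2$ and failure probability $\del$, and then threshold deterministically. Let $\mu_j:=\Tr(\A{j}\Gibbs{\bH})$ and let $\widehat\mu_1,\ldots,\widehat\mu_{\mc}$ be the returned estimates. Condition on the good event
\begin{align}
    \cE
    &:=
    \left\{
        \max_{j\in[\mc]}\abs{\widehat\mu_j-\mu_j}\leq\frac{\alpha}{2}
    \right\},
\end{align}
which by the theorem has probability at least $1-\del$. Since $\alpha/2\in(0,\om]$, the theorem applies directly.

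The decision rule scans $j=1,\ldots,\mc$. If some index satisfies $\widehat\mu_j>b_j+3\alpha/2$, it returns the first such $j$. Otherwise it declares that all constraints are approximately satisfied.

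On $\cE$, a returned index satisfies
\begin{align}
    \mu_j
    &\geq
    \widehat\mu_j-\frac{\alpha}{2}
    >
    b_j+\alpha,
\end{align}
which is the first alternative of the statement. If no index is returned, then $\widehat\mu_j\leq b_j+3\alpha/2$ for every $j$, so
\begin{align}
    \mu_j
    &\leq
    \widehat\mu_j+\frac{\alpha}{2}
    \leq
    b_j+2\alpha
\end{align}
for every $j$, which is the second alternative. Hence both guarantees hold simultaneously on $\cE$, with no further union bound. The thresholds $b_j$ are read by random access in $O(1)$ time each.

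For the runtime, halving the accuracy multiplies the $\om^2/\alpha^2$ factor in \Cref{eqn:gibbs-moment-runtime} by $4$. The Lánczos degree $q_{\bH}$ changes only inside logarithmic factors, through $L_{\bH}$. The final scan costs $O(\mc)$, which is dominated by the $\mc s_{\bA}\om^2/\alpha^2$ term. The total is therefore the running time of \Cref{thm:classical-gibbs-moments} up to constant factors.

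There is no real obstacle here. The only care needed is to choose the threshold strictly between $b_j+\alpha$ and $b_j+2\alpha$, leaving an $\alpha/2$ margin on each side, so that the estimation error fits within that gap.
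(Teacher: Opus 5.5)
Your proposal is correct and matches the paper's proof: one call to \Cref{thm:classical-gibbs-moments} at accuracy $\alpha/2$, the threshold rule $\widehat\mu_j>b_j+3\alpha/2$, and the same two one-line inequalities on the simultaneous-accuracy event. Your runtime remarks (constant-factor change in $\om^2/\alpha^2$, only logarithmic change in the Lánczos degree, and the $O(\mc)$ scan) are slightly more explicit than the paper's, but the argument is the same.
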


\begin{proof}
Apply \Cref{thm:classical-gibbs-moments} with accuracy $\alpha/2$ to
obtain estimates $\widehat\mu_1,\ldots,\widehat\mu_{\mc}$.  With
probability at least $1-\del$, these estimates satisfy, simultaneously for every $j\in[\mc]$,
\begin{align}
    \abs{
        \widehat\mu_j
        -
        \Tr\!\left(
            \A{j}\Gibbs{\bH}
        \right)
    }
    &\leq
    \frac{\alpha}{2}.
    \label{eqn:gibbs-or-simultaneous-accuracy}
\end{align}
Given these estimates, return any $j\in[\mc]$ satisfying
\begin{align}
    \widehat\mu_j
    &>
    b_j+\frac{3\alpha}{2}.
    \label{eqn:gibbs-or-threshold-rule}
\end{align}
If such an index is returned, then on the event in
\Cref{eqn:gibbs-or-simultaneous-accuracy},
\begin{align}
    \Tr\!\left(
        \A{j}\Gibbs{\bH}
    \right)
    &\geq
    \widehat\mu_j-\frac{\alpha}{2}
    >
    b_j+\alpha,
\end{align}
so the returned constraint is genuinely violated by the claimed margin. If no index is returned, then
$\widehat\mu_j\leq b_j+3\alpha/2$ for every $j\in[\mc]$.  Again using
\Cref{eqn:gibbs-or-simultaneous-accuracy},
\begin{align}
    \Tr\!\left(
        \A{j}\Gibbs{\bH}
    \right)
    &\leq
    \widehat\mu_j+\frac{\alpha}{2}
    \leq
    b_j+2\alpha
\end{align}
for every $j\in[\mc]$.  Thus the approximate-feasibility certificate is
valid.  Both conclusions hold with probability at least $1-\del$.
\end{proof}

The reduction is immediate once the simultaneous Gibbs
expectations have been estimated. The same collection of filtered random
vectors supplies all $\mc$ expectation estimates, after which
constraint search requires only a threshold scan.  This preserves the
sample-reuse mechanism underlying the quantum Fast-OR application, since
the expensive Hamiltonian-dependent filtering is performed once per
random start and reused across all constraints. The classical procedure does not, however, reproduce the quantum
square-root search over the constraints.  Each filtered random vector
still requires explicitly reading one sparse row of every observable,
giving an $O(\mc s_{\bA})$ readout cost.  Thus, what is dequantized here is
the reuse of the Gibbs-dependent computation across all constraints,
rather than the Grover-type search advantage itself.

\subsection{Direct Gibbs--MMWU Solver}
\label{sec:direct-gibbs-mmwu}

We now use the simultaneous Gibbs estimator to obtain a complete
classical implementation of the sparse-access SDP solver of
van Apeldoorn and Gilyén~\cite{AG19}.  The outer MMWU updates and
convergence analysis are unchanged.  The only modification is how the
algorithm extracts the expectation values required from the Gibbs state
associated with the current Hamiltonian.

As in \cite{AG19}, fix a candidate objective value $g$ and encode the
objective requirement as the additional constraint
$\A{0}:=-\bC$ with threshold $b_0:=-g$.  The MMWU procedure uses the
accuracy parameter
\begin{align}
\theta
&:=
\frac{\eps}{6\Rx\Ry}.
\end{align}
For a density matrix $\brho$, define
\begin{align}
P_\theta(\brho)
&:=
\left\{
\widetilde\by\in\R_{\geq0}^{\mc+1}:
\mathbf b^\top\widetilde\by\leq0,\;
\sum_{j=0}^{\mc}
\widetilde y_j
\Tr(\A{j}\brho)
\geq-\theta,\;
\widetilde y_0=\frac{1}{2\Ry},\;
\norm{\widetilde\by}_1\leq1
\right\}.
\label{eqn:direct-ak-polytope}
\end{align}
We write $P_0(\brho)$ for the corresponding exact set, obtained by
replacing the lower bound $-\theta$ by $0$.

On each round, the Arora--Kale update requires a subroutine that either
returns a constant-sparse vector
$\widetilde\by\in P_\theta(\brho)$ or certifies that
$P_0(\brho)$ is empty.  Following \cite{AG19}, this subroutine is called
the $\theta$-oracle.  The geometric construction of
van Apeldoorn et al.~\cite[Lemma~16]{AGGW17} shows that it can be
implemented from sufficiently accurate additive estimates of the
expectations $\Tr(\A{j}\brho)$.

Our simultaneous Gibbs estimator supplies these estimates classically.
On round $t$, the current MMWU coefficients determine the Hamiltonian
\begin{align}
\bH_t
&=
\sum_{j=0}^{\mc}
y_j^{(t)}\A{j}.
\end{align}
Rather than explicitly constructing the Gibbs state
$\Gibbs{\bH_t}=e^{-\bH_t}/\Tr(e^{-\bH_t})$, we apply
\Cref{alg:simultaneous-gibbs-estimator} directly to the sparse
representation of $\bH_t$ to estimate all quantities
$\Tr(\A{j}\Gibbs{\bH_t})$.  The geometric procedure above then converts
these estimates into the constant-sparse vector required by the
$\theta$-oracle and hence by the next MMWU update.  The resulting
classical implementation of the first MMWU phase is summarized in
\Cref{alg:direct-gibbs-mmwu}.

\begin{algorithm}[t]
\small
\caption{Direct classical Gibbs--MMWU solver}
\label{alg:direct-gibbs-mmwu}
\begin{algorithmic}[1]
\Require Normalized sparse SDP instance, target value $g$, accuracy
$\eps$, and failure probability $\del$
\Ensure Either the conclusion $\operatorname{OPT}>g$, or an
$\eps$-feasible dual certificate $\by_{\mathrm{out}}$ with
$\mathbf b^\top\by_{\mathrm{out}}\leq g+\eps$

\State $\A{0}\gets-\bC$, $b_0\gets-g$
\State $\theta\gets\eps/(6\Rx\Ry)$,
$\TT\gets\lceil\log(2\dn)/\theta^2\rceil$, and
$\by^{(1)}\gets\bzero$

\For{$t=1,\ldots,\TT$}
\State
$\bH_t\gets\sum_{j=0}^{\mc}y_j^{(t)}\A{j}$
\State By \Cref{alg:simultaneous-gibbs-estimator}, for the constant $c>0$ required by
\cite[Lemma~16]{AGGW17}, compute
$(\widehat\mu_{t,j})_{j=0}^{\mc}$ such that
\begin{align*}
\Prb\left[
    \max_{0\leq j\leq\mc}
    \abs{
        \widehat\mu_{t,j}
        -
        \Tr(\A{j}\Gibbs{\bH_t})
    }
    \leq
    c\theta
    \,\middle|\,
    \bH_t
\right]
&\geq
1-\frac{\del}{2\TT}.
\end{align*}
\State Using $(\widehat\mu_{t,j})_{j=0}^{\mc}$ and
\cite[Lemma~16]{AGGW17}, compute either $\begin{cases}
    \widetilde\by^{(t)}\in
P_\theta(\Gibbs{\bH_t}), \text{ s.t. } \|\widetilde\by^{(t)}\|_0\leq3 \\
\text{a certificate that } P_0(\Gibbs{\bH_t})=\varnothing
\end{cases}$
\If{$P_0(\Gibbs{\bH_t})=\varnothing$}
    \State \Return $\operatorname{OPT}>g$
\EndIf

\State
$\by^{(t+1)}
\gets
\by^{(t)}
+
\theta\widetilde\by^{(t)}$
\EndFor

\State
$\displaystyle
\by_{\mathrm{out}}
\gets
\frac{2\Ry}{\TT\theta}
\by^{(\TT+1)}
+
\frac{\eps}{\Rx}\be_1
-\be_0$

\State Discard coordinate $0$ of $\by_{\mathrm{out}}$
\State \Return $\by_{\mathrm{out}}$
\end{algorithmic}
\end{algorithm}

The final normalization is chosen using the actual integer number of rounds $\TT$. Since
$\by^{(\TT+1)}=\theta\sum_{t=1}^{\TT}\widetilde\by^{(t)}$, we have
$(2\Ry/(\TT\theta))\by^{(\TT+1)}
=(2\Ry/\TT)\sum_{t=1}^{\TT}\widetilde\by^{(t)}$. Moreover, every oracle output satisfies
$\widetilde y_0^{(t)}=1/(2\Ry)$, so coordinate $0$ of this normalized average is exactly one, and subtracting $\be_0$ therefore removes the artificial objective coordinate exactly. To verify dual feasibility, define
$\bM_t:=\sum_{j=0}^{\mc}\widetilde y_j^{(t)}\A{j}$. The standard MMWU guarantee, together with the oracle condition
$\Tr(\bM_t\Gibbs{\bH_t})\geq-\theta$, gives
\begin{align}
\lambda_{\min}\!\left(\frac1\TT\sum_{t=1}^{\TT}\bM_t\right)
&\geq-2\theta-\frac{\log\dn}{\TT\theta}
\geq-3\theta.
\end{align}
After multiplying by $2\Ry$, the resulting matrix therefore has minimum eigenvalue at least $-6\Ry\theta$. The additional term $(\eps/\Rx)\be_1$ contributes $(\eps/\Rx)\bI$, since $\A{1}=\bI$, and hence, using $\theta=\eps/(6\Rx\Ry)$, the final dual slack is at least
$(-6\Ry\theta+\eps/\Rx)\bI=\bzero$. Finally, every oracle output satisfies
$\mathbf b^\top\widetilde\by^{(t)}\leq0$, so the normalized average has nonpositive $\mathbf b$-cost before the final correction. Since $b_0=-g$, subtracting $\be_0$ contributes $g$, while $(\eps/\Rx)\be_1$ contributes $\eps$ because $b_1=\Rx$. Therefore
$\mathbf b^\top\by_{\mathrm{out}}\leq g+\eps$.

The complete SDP solver of \cite{AG19} contains a second MMWU phase for
constructing a primal certificate when the target value $g$ is
sufficiently small.  In this phase, the required information from the
current Gibbs state is whether any SDP constraint is violated by more
than the prescribed tolerance.  We implement this search using
\Cref{cor:gibbs-or}.  
In this phase we use the lifted matrices
$\widehat{\bA}_j:=\operatorname{diag}(\A{j},0)$ and thresholds $b_j/\Rx$
for $0\leq j\leq\mc$, including $\A{0}=-\bC$ and $b_0=-g$.
Set $\alpha_{\mathrm P}:=\min\{1/4,\eps/(4\Rx\Ry)\}$ and
$\eta_{\mathrm P}:=\alpha_{\mathrm P}/2$.
Apply \Cref{cor:gibbs-or} with accuracy $\alpha_{\mathrm P}$, so a
returned index has violation strictly greater than $\alpha_{\mathrm P}$,
while no returned index certifies violation at most $2\alpha_{\mathrm P}$.
Use at most
$T_{\mathrm P}:=\lceil4\log(2(\dn+1))/\alpha_{\mathrm P}^2\rceil$
rounds, starting from zero coefficients.
 If a violated constraint $j_t$ is found, its
coefficient is updated according to
\begin{align}
\by^{(t+1)}
&=
\by^{(t)}
+
\eta_{\mathrm P}\be_{j_t}.
\end{align}
If no sufficiently violated constraint exists, the current Gibbs state
satisfies all constraints to the required tolerance and determines the
primal certificate.  The scalar normalization needed to recover the
original primal matrix is obtained by including the projector onto the
original primal block among the simultaneously estimated observables.
This adds only one norm-one, one-row-sparse observable and therefore
does not affect the asymptotic running time.
More precisely, let $q:=\Tr(\mathbf P\brho)$ for
$\mathbf P=\operatorname{diag}(\bI_{\dn},0)$ and let $\boldsymbol{\sigma}$ be the
normalized leading block of the terminal Gibbs state $\brho$.
The exact recovered matrix is $\bX_0=\Rx q\boldsymbol{\sigma}$.
Obtain $\widehat q$ with
$|\widehat q-q|\leq\beta:=\min\{1/2,\eps/(2\Rx\Ry)\}$ and return
$z:=\Rx\max\{0,\min\{1,\widehat q\}\}$ and $\bX=z\boldsymbol{\sigma}$.
Then $\|\bX-\bX_0\|_1\leq\Rx\beta\leq\eps/(2\Ry)$.
Since the feasibility phase incurs error at most
$2\Rx\alpha_{\mathrm P}\leq\eps/(2\Ry)$, the returned matrix satisfies
$\inner{\A{j}}{\bX}\leq b_j+\eps/\Ry$
for every $j\in[\mc]$ and
$\inner{\bC}{\bX}\geq g-\eps/\Ry\geq g-\eps$.

The classical and quantum implementations differ only in how the
necessary constraint information is extracted from each Gibbs state.
The quantum solver obtains a square-root dependence on the number of
constraints by using quantum minimum finding in the Arora--Kale phase and 
Fast Quantum OR in the primal phase. Our
classical implementation instead estimates all constraint expectations
from the same collection of filtered random vectors and then scans the
resulting values explicitly.  Thus, the Hamiltonian-dependent filtering
is still reused across all constraints, while the Grover-type search
advantage itself is not reproduced.

The sparsity of the updates is important for the runtime.  In the first
phase, every $\theta$-oracle output has constant support, while in the
second phase only the coefficient of the single returned constraint is
updated.  Hence each round adds only constantly many input matrices to
the accumulated Hamiltonian.  After $t$ rounds, it can therefore be
written as
\begin{align}
\bH_t
&=
\sum_{\ell=1}^{h_t}
y_{t,\ell}\bB_{t,\ell},
\qquad \text{where} \quad 
h_t=
O(t),
\label{eqn:direct-round-hamiltonian}
\end{align}
with total coefficient mass
$\norm{\by_t}_1=\widetilde O(\gam)$.  Every Gibbs-estimation call
therefore falls within the structured Hamiltonian regime covered by
\Cref{thm:classical-gibbs-moments}.

\begin{theorem}[Direct Classical Gibbs--MMWU Solver]
\label{thm:direct-sdp}
Consider an SDP of the form \Cref{eqn:intro-standard-sdp-primal,eqn:intro-standard-sdp-dual} under the
sparse-access model of \Cref{def:sparse-access}.  Assume strong duality,
supplied primal and dual radii ${\Rx,\Ry\geq1}$, and the normalization
$\A{1}=\bI$ and $b_1=\Rx$.  Suppose further that the input matrices have
operator norm at most one and row sparsity $\sr$.  Let
$\gam:=\Rx\Ry/\eps\geq1$. For every target $g\in[-\Rx,\Rx]$ and failure probability
$\del\in(0,1)$, there is an exact-arithmetic sparse-oracle algorithm with
the following guarantees.

\begin{enumerate}
\item \textbf{Primal certificate.}
If $\operatorname{OPT}\geq g+\eps$, the algorithm returns coefficients
$\by'=(y'_0,\ldots,y'_{\mc})\in\R_{\geq0}^{\mc+1}$ and
$z\in[0,\Rx]$ defining the implicit primal matrix
\begin{align}
\bX
&:=
z\cdot 
\frac{
\exp\left(
y'_0\bC
-
\sum_{j=1}^{\mc}y'_j\A{j}
\right)
}{
\Tr\left(
\exp\left(
y'_0\bC
-
\sum_{j=1}^{\mc}y'_j\A{j}
\right)
\right)
},
\label{eqn:direct-implicit-primal}
\end{align}
which satisfies $\inner{\bC}{\bX}\geq g-\eps$ and $\inner{\A{j}}{\bX} \leq b_j+\eps/\Ry$ for every $j\in[\mc]$.
\item \textbf{Dual certificate.}
If $\operatorname{OPT}\leq g-\eps$, the algorithm returns a dual
vector $\by\in\R_{\geq0}^{\mc}$ satisfying
\begin{align}
\sum_{j=1}^{\mc}
y_j\A{j}
-
\bC
&\succeq
-\eps\bI,
 \quad \text{and} \quad
\mathbf b^\top\by
\leq
g+\eps.
\label{eqn:direct-dual-guarantee}
\end{align}
\end{enumerate}
Either output is permitted when
$\operatorname{OPT}\in[g-\eps,g+\eps]$.  With probability at least
$1-\del$, the returned certificate is valid.  The coefficient
representation has support $\widetilde O(\gam^2)$, and the
running time is $\widetilde O\left(
\dn\sr\gam^{13/2}
+
\mc\sr\gam^4
\right)$.
\end{theorem}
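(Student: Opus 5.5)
The proof will be modular. The correctness of both phases is inherited from the Arora--Kale/van Apeldoorn--Gilyén MMWU analysis, as already sketched in the discussion preceding the theorem. The new content is that each round's oracle is implemented by \Cref{thm:classical-gibbs-moments} or \Cref{cor:gibbs-or}, together with a careful runtime accounting. I will proceed in three steps: failure-probability bookkeeping, correctness of each phase, and runtime.

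\textbf{Correctness.} In Phase 1 (\Cref{alg:direct-gibbs-mmwu}), I allocate failure probability $\del/(2\TT)$ to each of the $\TT=O(\log\dn/\theta^2)=\widetilde O(\gam^2)$ calls and take a union bound. On the good event, every round's estimates are $c\theta$-accurate. \cite[Lemma~16]{AGGW17} then turns them into a valid $\theta$-oracle output: either an element of $P_\theta$ with support at most $3$, or a certificate that $P_0=\varnothing$. An empty $P_0$ implies $\operatorname{OPT}>g$, following the standard argument of \cite{AG19}. Otherwise, the MMWU regret bound plus the normalization computation already carried out in the text give the dual guarantee \Cref{eqn:direct-dual-guarantee}. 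Phase 2 uses the lifted observables together with the projector $\mathbf P$, over at most $T_{\mathrm P}=\widetilde O(\gam^2)$ rounds of \Cref{cor:gibbs-or} with accuracy $\alpha_{\mathrm P}$ and failure probability $\del/(2T_{\mathrm P})$. The standard MMWU argument then shows the following: if $\operatorname{OPT}\geq g+\eps$, some Gibbs iterate must pass the test within $T_{\mathrm P}$ rounds. If instead every round returns a violated constraint, the averaged violated constraints give an infeasibility certificate, which contradicts $\operatorname{OPT}\geq g+\eps$. The trace-norm estimate for $\bX-\bX_0$ already in the text then yields the primal guarantee, in the Gibbs form \Cref{eqn:direct-implicit-primal}, after identifying the coefficients $y'_j$ with $\eta_{\mathrm P}$ times the counts of returned constraints. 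The output support is $O(\TT+T_{\mathrm P})=\widetilde O(\gam^2)$, since each round adds $O(1)$ terms.

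\textbf{Runtime.} At round $t$, I apply \Cref{thm:classical-gibbs-moments} with $h=O(t)$, $s_{\bB}=s_{\bA}=\sr$, $\om=O(1)$, and $\alpha=\Theta(\theta)=\Theta(1/\gam)$. The coefficient mass is $B_{\bH}\le\theta\cdot t\le \theta\TT=\widetilde O(\gam)$ in Phase 1, and $\eta_{\mathrm P}T_{\mathrm P}=\widetilde O(\gam)$ in Phase 2. Hence the Lánczos degree is $\widetilde O(\sqrt\gam)$. The cost of round $t$ is therefore $\widetilde O(\gam^2[\dn\sr t\sqrt\gam+\mc\sr])$. Summing over $t\le\widetilde O(\gam^2)$ gives $\widetilde O(\dn\sr\gam^{2+1/2}\cdot\gam^4+\mc\sr\gam^4)=\widetilde O(\dn\sr\gam^{13/2}+\mc\sr\gam^4)$, as claimed.

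The main obstacle is the bound $\norm{\by_t}_1=\widetilde O(\gam)$. In Phase 1 the oracle outputs satisfy $\norm{\widetilde\by}_1\le1$, so the bound is immediate. In Phase 2 it requires checking that the Hamiltonian norm, not just the coefficient sum, stays $\widetilde O(\gam)$ after lifting and rescaling by $\Rx$. I would first try to absorb the factor $\Rx$ into $\alpha_{\mathrm P}=\Theta(1/\gam)$ and check that $\eta_{\mathrm P}T_{\mathrm P}=O(\log\dn/\alpha_{\mathrm P})$. A secondary subtlety is that the per-round accuracy must hold conditionally on the adaptively generated history $\bH_t$. This is handled by invoking \Cref{thm:classical-gibbs-moments} with fresh randomness at each round, conditional on $\bH_t$, and then applying the union bound.
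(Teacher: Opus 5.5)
Your outline follows the paper's proof closely. Correctness is deferred to the van Apeldoorn--Gily\'en MMWU analysis, the $\theta$-oracle and the primal-phase search are realized by the simultaneous estimator, a union bound covers the adaptively issued calls, and the per-round cost $\widetilde O(\gamma^2[nst\sqrt{\gamma}+ms])$ is summed over $\widetilde O(\gamma^2)$ rounds. The runtime, support, and failure-probability bookkeeping are all fine. What is missing is the step that ties the two phases to the two cases of the theorem. You show that Phase~1 either certifies $P_0=\varnothing$ (hence $\operatorname{OPT}>g$) or produces a dual, and that Phase~2 succeeds whenever $g$ is attainable. But you never say which output the algorithm returns, or why it returns a primal certificate whenever $\operatorname{OPT}\geq g+\varepsilon$. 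Nothing in your argument prevents Phase~1 from running all $T$ rounds and exiting with a dual in that case.

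The paper closes this with an explicit selection rule: run Phase~1 at internal accuracy $\varepsilon/2$; if it completes, return its dual; if it certifies $P_0=\varnothing$, run Phase~2 and return its primal.
\begin{itemize}
\item If $\operatorname{OPT}\le g-\varepsilon$, then on the good event Phase~1 cannot certify $\operatorname{OPT}>g$, so a dual is returned.
\item If $\operatorname{OPT}\geq g+\varepsilon$, any dual returned by Phase~1 would be exactly feasible, because the $(\varepsilon/R)\mathbf e_1$ correction together with $\mathbf A_1=\mathbf I$ makes the slack $\succeq 0$. It would also satisfy $\mathbf b^\top\mathbf y\le g+\varepsilon/2<\operatorname{OPT}$, contradicting weak duality. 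So control reaches Phase~2, which then succeeds because $g$ is attainable.
\end{itemize}
The halving matters: at internal accuracy $\varepsilon$, weak duality only gives $\operatorname{OPT}\le g+\varepsilon$, which leaves the case $\operatorname{OPT}=g+\varepsilon$ uncovered.

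Separately, the ``main obstacle'' you flag is not one. The lifted matrices $\operatorname{diag}(\mathbf A_j,0)$ have norm at most one, and the thresholds $b_j/R$ do not enter the Hamiltonian. Hence $B_{\mathbf H}\le\eta_{\mathrm P}T_{\mathrm P}=O(\log(n)/\alpha_{\mathrm P})=\widetilde O(\gamma)$ is immediate, with $R$ already inside $\alpha_{\mathrm P}=\min\{1/4,\varepsilon/(4Rr)\}$.
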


\begin{proof}
The convergence and certificate guarantees are those of the MMWU solver
of van Apeldoorn and Gilyén
\cite[Sections~2.1--2.2 and Theorem~8]{AG19}.  Thus it suffices to show
that our simultaneous Gibbs estimator implements the two Gibbs-state
subroutines required by that solver, and then to bound the cost of doing
so classically.

For the certificate-selection rule, run
\Cref{alg:direct-gibbs-mmwu} at target $g$ with internal accuracy $\eps/2$.
If it returns a dual certificate, return that certificate.  Otherwise it
certifies $\operatorname{OPT}>g$, and we run the primal phase just
specified.  The completed first phase gives a feasible dual of value at
most $g+\eps/2$, so it cannot return a dual when
$\operatorname{OPT}\geq g+\eps$.  Conversely, when
$\operatorname{OPT}\leq g-\eps$, it cannot correctly report
$\operatorname{OPT}>g$.

For the Arora--Kale phase, the required $\theta$-oracle must either
return a constant-sparse vector in
$P_\theta(\Gibbs{\bH_t})$ or certify that
$P_0(\Gibbs{\bH_t})$ is empty.  On round $t$, apply
\Cref{thm:classical-gibbs-moments} with additive accuracy
$c\theta$, where $c>0$ is the sufficiently small universal constant
required by \cite[Lemma~16]{AGGW17}.  This produces simultaneous
approximations to all expectations
$\Tr(\A{j}\Gibbs{\bH_t})$.  The geometric construction of
\cite[Lemma~16]{AGGW17} then converts these estimates into the required
constant-sparse oracle output, or certifies that the exact set is empty.
Hence the classical implementation satisfies the oracle assumptions used
in the Arora--Kale analysis.
To make the estimation tolerance explicit, take $c\leq1/2$ and
let $\widehat P$ be the set defining $P_\theta$ with its payoff condition
replaced by
$\sum_{j=0}^{\mc}\widetilde y_j\widehat\mu_{t,j}\geq-\theta/2$.
On the simultaneous-estimation event,
$P_0(\Gibbs{\bH_t})\subseteq\widehat P
\subseteq P_\theta(\Gibbs{\bH_t})$, because
$\|\widetilde\by\|_1\leq1$.
After fixing $\widetilde y_0=1/(2\Ry)$, the geometric routine applies
to the two remaining payoff inequalities and mass bound
$\sum_{j=1}^{\mc}\widetilde y_j\leq1-1/(2\Ry)$.
It uses $O(\mc)$ classical time and returns at most two nonzero
coordinates among $1,\ldots,\mc$, or certifies $\widehat P$ empty.
Thus its output has support at most three, and an empty-set certificate
implies $P_0(\Gibbs{\bH_t})=\varnothing$.

In the primal phase, the required subroutine instead asks whether the
current Gibbs state significantly violates any SDP constraint.
\Cref{cor:gibbs-or} provides exactly this guarantee. Specifically, it either returns a
genuinely violated constraint or certifies that every constraint is
satisfied to the prescribed additive tolerance.  Including the projector
onto the original primal block among the simultaneously estimated
observables also provides the scalar normalization needed to recover the
implicit primal matrix.  Thus the oracle assumptions of the primal phase
are satisfied as well.
Indeed, if the target is feasible, let $\brho^\star$ be a feasible
lifted density matrix.  If all $T_{\mathrm P}$ rounds returned indices,
then every selected constraint would give
$\langle\widehat{\bA}_{j_t},\brho_t-\brho^\star\rangle
>\alpha_{\mathrm P}$.  In contrast, the standard MMWU regret bound gives
\begin{align}
\sum_{t=1}^{T_{\mathrm P}}
\langle\widehat{\bA}_{j_t},\brho_t-\brho^\star\rangle
&\leq\frac{\log(\dn+1)}{\eta_{\mathrm P}}
+\eta_{\mathrm P}T_{\mathrm P}
<\alpha_{\mathrm P}T_{\mathrm P},
\end{align}
a contradiction.  Therefore the primal phase terminates with the
certificate whose scalar-approximation error was bounded above.

The two phases together make $\widetilde O(\gam^2)$ adaptive
Gibbs-estimation calls.  Assign failure probabilities to the individual
calls whose sum is at most $\del$.  Conditional on the history preceding
any call, \Cref{thm:classical-gibbs-moments} fails with probability at
most the assigned budget.  A union bound therefore implies that all
oracle calls are simultaneously correct with probability at least
$1-\del$.  On this event, the convergence analysis of
\cite{AG19} applies unchanged and yields the claimed primal and dual guarantees.  Since each
MMWU update changes only constantly many coefficients and there are
$\widetilde O(\gam^2)$ rounds, the final coefficient representation has
support $\widetilde O(\gam^2)$.

It remains to bound the running time.  Both MMWU phases use
$\widetilde O(\gam^2)$ rounds.  The stronger feasibility tolerance does not change the runtime bound,
since $\alpha_{\mathrm P},\beta=\Theta(1/\gam)$ and the primal phase
still uses $\widetilde O(\gam^2)$ rounds. On round $t$, the accumulated
Hamiltonian contains $O(t)$ sparse input matrices and has total
coefficient mass $\widetilde O(\gam)$.  It suffices to use additive expectation accuracy
$c_1\gam^{-1}$ in every oracle call, for a sufficiently small universal
constant $c_1>0$. This is at least as accurate as required by either phase. Substituting these
parameters into \Cref{thm:classical-gibbs-moments}, with
$s_{\bA}=s_{\bB}=\sr$, gives a round-$t$ running time of
\begin{align}
\widetilde O\left(
\dn\sr\gam^{5/2}t
+
\mc\sr\gam^2
\right).
\label{eqn:direct-cost-one-round}
\end{align}
The first term is the cost of applying the Hamiltonian-dependent
filtering for an $O(t)$-term Hamiltonian, while the second is the cost of
extracting all $\mc$ constraint expectations from the resulting shared
samples. Summing over
$\TT=\widetilde O(\gam^2)$ rounds achieves the overall claimed runtime:
\begin{align}
\sum_{t=1}^{\TT}
\widetilde O\left(
\dn\sr\gam^{5/2}t
+
\mc\sr\gam^2
\right)
&=
\widetilde O\left(
\dn\sr\gam^{5/2}\TT^2
+
\mc\sr\gam^2\TT
\right)=
\widetilde O\left(
\dn\sr\gam^{13/2}
+
\mc\sr\gam^4
\right).
\end{align}
\end{proof}

For constant $\gam$ and $\sr$, the runtime is
$\widetilde O(\dn+\mc)$.  The simultaneous Gibbs estimator is what makes
the dependence on the matrix dimension and the number of constraints
additive. Each Hamiltonian-dependent filtered vector is generated once
and then reused to estimate all $\mc$ constraint expectations. The main limitation is the nested accuracy requirement.  The outer MMWU
procedure already performs $\widetilde O(\gam^2)$ rounds, and on every
round the Gibbs expectations must be estimated to additive accuracy
$\Theta(\gam^{-1})$, requiring
$\widetilde O(\gam^2)$ filtered random vectors.  The next section avoids
this repeated high-accuracy estimation by replacing per-round
constraint search with a single stochastic feedback sample.

\section{The Two-Sided Stochastic SDP solver}
\label{sec:algorithm}

We now turn to our second main algorithm: a fully stochastic primal--dual
solver for sparse SDPs.  As described in
\Cref{sec:preliminaries,sec:standard-sdp-to-game}, SDP solving reduces to
finding an approximate saddle point of a zero-sum game between a matrix
player and a constraint player.  Our algorithm solves this game by allowing
both players to update from a single noisy sample, rather than accurately
computing the quantities required by a deterministic update.

On each round, the matrix player updates against a single sampled
constraint, while the constraint player updates from a single sampled
coordinate of the matrix player's rank-one response.  Thus neither player
needs to construct the dense object arising in the corresponding
deterministic implementation: the matrix player avoids the dense mixture
$\mathcal A^\ast\mathbf z$, while the constraint player avoids evaluating
the full payoff vector
$\bigl(\inner{\A{j}}{\Xt{t}}\bigr)_{j\in[\mc]}$.

Our analysis will prove that both stochastic updates can be controlled
simultaneously. The matrix and constraint players each incur
$O(\sqrt{\TT})$ regret, and the additional sampling and Lánczos errors
remain of the same or lower order. Averaging the $\TT$ played actions
therefore yields an $O(1/\sqrt{\TT})$ saddle-point gap while each round
accesses only sparse matrix data. This gives the following end-to-end
guarantee.

\begin{theorem}[Main theorem for sampled sparse-oracle SDP solving]
\label{thm:main}
Assume that each $\A{j}\in\R^{\dn\times\dn}$ is real symmetric,
$\sr$-row-sparse, and satisfies $\norm{\A{j}}\leq\om$.  Define
$\gam:=\Rx\Ry\om/\eps$ and
$\Lam:=\log(16\mc\dn/\del)$.  Run
\Cref{alg:sampled-rank-one} for $\TT$ rounds with the parameters specified
there.  With probability at least $1-\del$,
\begin{align}
\Gap(\Xout,\zout)
&\leq
O\left(\Rx\Ry\om\sqrt{\frac{\Lam}{\TT}}\right).
\label{eqn:main-gap}
\end{align}
Thus,
$\TT=O(\gam^2\Lam)$ rounds suffice for an $\eps$-accurate saddle point, with running time
\begin{align}
\widetilde O\left(
\dn\sr\min\left\{\gam^2,\mc,\frac{\dn}{\sr}\right\}\gam^{5/2}
+(\dn+\mc\sr)\gam^2
\right).
\label{eqn:main-theorem-runtime}
\end{align}
\end{theorem}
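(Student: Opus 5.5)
The proof decomposes the saddle-point gap of the averaged outputs into the two players' regrets plus error terms. Following \Cref{eqn:intro-regret-to-gap-ideal}, write $\Xout=\TT^{-1}\sum_t\Rx\rtilde{t}$ and $\zout=\TT^{-1}\sum_t\Ry\be_{\jt{t}}$. Then $\TT\cdot\Gap(\Xout,\zout)$ equals $\max_{\bX}\sum_t\inner{\A{\jt{t}}}{\bX}-\min_{\by}\sum_t\mathcal L(\Rx\rtilde{t},\by)$. We split this into four pieces. The first is the matrix player's regret against the realized sparse gains $\Ry\A{\jt{t}}$. The second is the constraint player's regret against the clipped estimates $\chat{t}$. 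The third collects martingale differences from sampling $\jt{t}\sim\phat{t}$ and from sampling the coordinate $\isamp{t}$. The fourth collects the clipping bias, via \Cref{lem:clipping-properties}, together with the Lánczos error. Each piece will be bounded by $O(\Rx\Ry\om\sqrt{\Lam\TT})$ with probability $1-\del/4$, and a union bound finishes the argument.

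\textbf{Matrix side.} Define the exact random-start response $\rt{t}=\mathcal P_{\ut{t}}(\lrX\Ht{t})$ for the same $\ut{t}$. By the average-projection theorem of \cite{CDST19}, the mean responses $\overline{\mathcal P}(\lrX\Ht{t})$ have regret $O(\Rx\Ry\om(\log\dn/\lrX+\lrX\TT))=O(\Rx\Ry\om\sqrt{\TT\log\dn})$ against any gain sequence bounded by $\om$ in operator norm. The gap between $\rt{t}$ and its conditional mean (conditioned on the history and $\jt{t}$) gives a bounded martingale difference. Azuma then controls it by $O(\Rx\Ry\om\sqrt{\TT\Lam})$. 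For the Lánczos step, \Cref{fact:real-random-start-lanczos} with relative accuracy $\zeta=\Theta(1/\sqrt{\TT})$ and failure probability $\del/(4\TT)$ gives $\norm{\rtilde{t}-\rt{t}}_1=O(\zeta)$ on a good event. Note the per-round degree $q_t=\widetilde O(\sqrt{\lrX\norm{\Ht{t}}})=\widetilde O(\TT^{1/4})$, since $\norm{\Ht{t}}\leq\om t$. The error pieces must be handled by the same-history coupling used in \Cref{sec:gibbs-moment-proof}, namely by replacing failed runs with exact vectors, so that no conditioning distorts the distribution of $\ut{t}$.

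\textbf{Constraint side.} This is the main obstacle. For Hedge on losses $\chat{t}$ with $\lrY\abs{[\chat{t}]_j}\leq\lrY B\leq1$, I would use the second-order bound
\[
\sum_t\inner{\phat{t}}{\chat{t}}-\min_j\sum_t[\chat{t}]_j\leq\frac{\log\mc}{\lrY}+\lrY\sum_t\inner{\phat{t}}{\chat{t}^2}.
\]
By \Cref{lem:clipping-properties}, $\E[\inner{\phat{t}}{\chat{t}^2}\mid\cF_{t}]\leq\Rx^2\om^2$. So the variance term has conditional mean $O(\lrY\TT\Rx^2\om^2)$, and Freedman's inequality (range $B^2$) controls its deviation. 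The estimate-to-truth comparison $\sum_t([\chat{t}]_j-\Rx\inner{\A{j}}{\rtilde{t}})$ is treated for each fixed $j$ by Freedman: the conditional variance is at most $\Rx^2\om^2\TT$ and the range is $B$, giving $O(\Rx\om\sqrt{\TT\Lam}+B\Lam)$. The bias contributes $\TT\Rx^2\om^2/B$. A union bound over the $\mc$ values of $j$ is affordable since $\Lam$ contains $\log\mc$. With $B=\Theta(\Rx\om\sqrt{\TT/\Lam})$, the two terms $B\Lam$ and $\TT\Rx^2\om^2/B$ both equal $\Theta(\Rx\om\sqrt{\TT\Lam})$, and the whole piece scales by $\Ry$.

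\textbf{Runtime.} The Hamiltonian $\Ht{t}$ has at most $\min\{t,\mc\}$ distinct sparse terms. A multiplication by it therefore costs $O(\dn\sr\min\{t,\mc,\dn/\sr\})$, where the last term accounts for a one-time densification. Round $t$ uses degree $\widetilde O(\TT^{1/4})$, plus $O(\dn)$ for normalization and sampling, plus $O(\mc\sr)$ for the readout, and Hedge sampling costs $O(\mc)$. Summing over rounds gives $\widetilde O(\dn\sr\min\{\TT,\mc,\dn/\sr\}\TT^{5/4}+(\dn+\mc\sr)\TT)$. Substituting $\TT=O(\gam^2\Lam)$ yields \Cref{eqn:main-theorem-runtime}.
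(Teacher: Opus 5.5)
Your proposal is correct and follows essentially the same route as the paper. The shared steps are the realized-gap decomposition into the two players' regrets; on the matrix side, the Carmon et al.\ average-projection bound, Azuma for the rank-one sampling, and a per-round trace-distance Lánczos bound; on the constraint side, Azuma for sampling $J_t$, second-order Hedge with Freedman for the variance term and the per-coordinate noise, and clipping bias balanced by $B=\Theta(\Rx\om\sqrt{\TT/\Lam})$; and the same $\widetilde O(\TT^{5/4})$ total-Krylov-degree runtime count.

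Two minor remarks. First, the feedback moments should be conditioned on the half-step history $\Fil{t-1/2}$, not $\Fil{t}$, which already contains $I_t$. Second, the replace-failed-runs coupling is unnecessary here: the Lánczos error enters only through a pointwise-bounded payoff term, so the paper simply intersects the martingale events with $\GoodL$.
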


Our algorithm draws inspiration from two complementary online-learning
approaches, corresponding to the two players in the SDP game.  On the
constraint-player side, Garber and Hazan~\cite{GarberHazan2016} show how
to use sampled constraints together with clipped stochastic feedback, but
their complexity depends on Frobenius-norm and total-nonzero parameters.
On the matrix-player side, Carmon, Duchi, Sidford, and
Tian~\cite[Theorem~9 and Appendix~E.1]{CDST19} provide the rank-one matrix
learner and its average-projection regret bound, but their primal--dual
implementation maintains a dense dual mixture and evaluates the full
payoff vector $[\mathbf c_t]_j:=\inner{\A{j}}{\Xt{t}}$ for every
$j\in[\mc]$.  We adapt these ideas by sampling the dual action, so that
the matrix learner evolves under a sparse accumulated Hamiltonian, and by
replacing the exact payoff vector with the operator-norm-controlled
shared-coordinate estimator developed above.  This yields the runtime in
\Cref{eqn:main-theorem-runtime}. In particular, for fixed $\gam$ and
$\sr$, it is $\widetilde O(\dn+\mc)$ without any rank or Frobenius-norm
assumption on the constraint matrices.

We organize the analysis as follows.  We first state the two-sided
stochastic algorithm (\Cref{subsec:practical-two-sided-algorithm}) and
decompose the saddle-point gap of its averaged output into separate
matrix- and constraint-player regret terms (\Cref{sec:analysis}).  We
then formalize the stochastic history and concentration tools used to
control the adaptive sampling errors
(\Cref{subsec:stochastic-history}).  For the matrix player, we couple the
computed finite-Lánczos actions to exact rank-one reference responses
along the same realized trajectory
(\Cref{subsec:reference-actions}) and use this coupling to prove the
matrix-side guarantee
(\Cref{subsec:matrix-player-guarantee}).  For the constraint player, we
first establish a stochastic exponential-weights guarantee for clipped,
noisy feedback (\Cref{subsec:stochastic-hedge}) and then combine it with
the pure-constraint sampling bound to obtain the constraint-side
guarantee (\Cref{subsec:constraint-player-guarantee}).  Finally, we
combine the two regret bounds to prove convergence, choose the required
number of rounds, and account directly for the sparse-oracle and Lánczos
implementation costs in the proof of \Cref{thm:main}
(\Cref{subsec:main-theorem-proof}).

\subsection{The Stochastic Algorithm}
\label{subsec:practical-two-sided-algorithm}

We now describe the two-sided stochastic solver underlying
\Cref{thm:main}.  Each round couples two stochastic online-learning
updates.  The constraint player maintains a probability distribution over
constraints, placing greater weight on constraints with smaller cumulative
estimated payoff, and samples a single constraint from this distribution.
The matrix player therefore updates against only one sparse constraint
matrix per round.  In the opposite direction, the matrix player produces a
rank-one response, and a single sampled coordinate of this response is
reused to estimate the payoffs of all constraints through the
shared-coordinate estimator of \Cref{sec:energy-estimator}.  Thus neither
player needs to construct the dense object appearing in the corresponding
deterministic update.

The matrix update is based on the randomized rank-one learner of Carmon
et al.  For a real symmetric score matrix $\bH$ and a real unit vector
$\bu$, define
\begin{align}
    \cP_{\bu}(\bH)
    &:={}
    \frac{e^{\bH/2}\bu\bu^\top e^{\bH/2}}
         {\bu^\top e^{\bH}\bu},
    &
    \overline{\cP}(\bH)
    &:={}
    \E_{\bu\sim\operatorname{Unif}(\mathbb S^{\dn-1})}
    [\cP_{\bu}(\bH)].
    \label{eqn:rank-one-projection}
\end{align}
Here $\cP_{\bu}(\bH)$ is the rank-one projector obtained by normalizing
$e^{\bH/2}\bu$, while $\overline{\cP}(\bH)$ is its expectation over the
random direction.  In general, this expectation does not coincide with
the Gibbs state used by standard MMWU, i.e.,
$\overline{\cP}(\bH)\neq e^{\bH}/\Tr(e^{\bH})$.  Carmon et al.\
nevertheless prove a regret guarantee directly for
$\overline{\cP}(\bH)$ and show that replacing this mean action by one
fresh rank-one sample contributes only a zero-mean sampling fluctuation.
We retain this matrix learner, but modify its primal--dual implementation
so that it neither forms the dense mixture
$\Ry\sum_j p_{t,j}\A{j}$ nor evaluates every payoff
$\inner{\A{j}}{\bX_t}$ exactly.

The resulting proposed algorithm is summarized in
\Cref{alg:sampled-rank-one}.  On round $t$, the constraint player first
forms the exponential-weights distribution
$\phat{t}=\softmax(-\lrY\lhat{t})$ from its cumulative stochastic feedback
$\lhat{t}=\sum_{\tau<t}\chat{\tau}$ and samples a constraint
$\jt{t}\sim\phat{t}$ (Line~\ref{line:two-constraint-dist}).  It then plays
the corresponding pure action $\zt{t}=\Ry\be_{\jt{t}}$, so the matrix
player receives the gain
$\Gt{t}=\cA^\ast\zt{t}=\Ry\A{\jt{t}}$
(Line~\ref{line:two-constraint-gain}).  Because this gain consists of a
single sampled constraint matrix, it remains $\sr$-row-sparse and satisfies
$\norm{\Gt{t}}\leq\Ry\om$.

The matrix player independently draws a fresh random unit vector
$\ut{t}$ and computes a rank-one response to the accumulated Hamiltonian
$\Ht{t}=\sum_{\tau<t}\Gt{\tau}$
(Lines~\ref{line:two-direction}--\ref{line:two-primal-action}).  Notice
that $\Ht{t}$ contains only gains from previous rounds. In particular, the newly sampled
gain $\Gt{t}$ is added after the current matrix action is computed.  This
one-round lag ensures that, once the previous rounds are fixed, the
matrix response is independent of the constraint sampled on the current
round.  Rather than forming the matrix exponential explicitly, we
approximate its action on $\ut{t}$ using Lánczos and normalize the
resulting vector to obtain the rank-one action
$\Xtilde{t}=\Rx\psitilde{t}\psitilde{t}^{\top}$.

After the matrix action is fixed, the algorithm draws a single coordinate
$\isamp{t}$ according to
$\Prb[\isamp{t}=i]=|[\psitilde{t}]_i|^2$
(Line~\ref{line:two-coordinate}).  This same coordinate is reused in the
shared-coordinate estimator of \Cref{sec:energy-estimator} to produce a
stochastic payoff estimate for every constraint.  The estimates are
clipped and accumulated to form the feedback used by the constraint
player on the next round
(Lines~\ref{line:two-feedback}--\ref{line:two-update}).  Consequently, a
round never forms the dense mixture $\sum_j[\phat{t}]_j\A{j}$ or computes
the full collection of quadratic forms
$\{\langle\A{j},\Xtilde{t}\rangle\}_{j=1}^{\mc}$ exactly.

\begin{algorithm}[t]
\small
\caption{Two-Sided Stochastic SDP Solver}
\label{alg:sampled-rank-one}
\begin{algorithmic}[1]
\Require $\A{1},\ldots,\A{\mc}$, $\om$, $\Rx$, $\Ry$, $\TT$, $\del$

\State $\Lam\gets{\log(16\mc\dn/\del)}$ and
$\del_{\mathrm L}\gets\del/3$
\label{line:two-params-first}
\State $\LamL\gets{\log\left(\frac{4\TT^2\sqrt{\dn}}{\del_{\mathrm L}}\right)}$
\State $\lrX\gets{\frac1{\Ry\om}\sqrt{\frac{2\log(4\dn)}{3\TT}}}$
\State $\Bloc\gets
\Theta\!\left(
\Rx\om
\sqrt{
\frac{\TT}{\Lam}
}
\right)$
\State $\lrY\gets
\Theta\!\left(
\min\left\{
\frac{1}{\Bloc},
\sqrt{
\frac{{\log(2\mc)}}
{\Rx^2\om^2\TT}
}
\right\}
\right)$
\label{line:two-params-last}
\Statex Choose the constant in $\lrY$ so that $\lrY\Bloc\leq1$.

\State $\Ht{1}\gets\bzero$ and
$\lhat{1}\gets\bzero\in\R^{\mc}$

\For{$t=1,\ldots,\TT$}
    \State $\phat{t}\gets\softmax(-\lrY\lhat{t})$ and draw
    $\jt{t}\sim\phat{t}$
    \label{line:two-constraint-dist}

    \State $\zt{t}\gets\Ry\be_{\jt{t}}$ and
    $\Gt{t}\gets\Ry\A{\jt{t}}$
    \label{line:two-constraint-gain}

    \State Independently draw
    $\ut{t}\sim\operatorname{Unif}(\mathbb S^{\dn-1})$
    \label{line:two-direction}

    \State $k_t\gets
    {\min\left\{
        \dn,
        \left\lceil C_{\mathrm L}
        \sqrt{\LamL\max\{\lrX\Ry\om\,t,\LamL\}}
        \right\rceil
    \right\}}$, for sufficiently large universal constant $C_{\mathrm L}$.
    \label{line:two-lanczos-degree}

    \State $\vtilde{t}\gets
    \LanczosExp(\lrX\Ht{t}/2,\ut{t},k_t)$

    \State $\psitilde{t}\gets
    \vtilde{t}/\norm{\vtilde{t}}_2$ and
    $\Xtilde{t}\gets
    \Rx\psitilde{t}\psitilde{t}^\top$
    \label{line:two-primal-action}

    \State Draw $\isamp{t}$ with
    $\Prb[\isamp{t}=i\mid\psitilde{t}]
    =\abs{[\psitilde{t}]_i}^2$
    \label{line:two-coordinate}

    \For{$j=1,\ldots,\mc$}
        \State $Z_{t,j}\gets
        \Rx\Zest_{\A{j}}(\psitilde{t},\isamp{t})$ and
        $[\chat{t}]_j\gets\clip_{\Bloc}(Z_{t,j})$
        \label{line:two-feedback}
    \EndFor

    \State $\Ht{t+1}\gets\Ht{t}+\Gt{t}$ and
    $\lhat{t+1}\gets\lhat{t}+\chat{t}$
    \label{line:two-update}
\EndFor

\State $\Xout\gets
\TT^{-1}\sum_{t=1}^{\TT}\Xtilde{t}$ and
$\zout\gets
\TT^{-1}\sum_{t=1}^{\TT}\zt{t}$
\label{line:two-output}

\State \Return $\Xout,\zout$
\end{algorithmic}
\end{algorithm}

The parameter scales in
Lines~\ref{line:two-params-first}--\ref{line:two-params-last} balance the
matrix regret, stochastic-feedback variance, and clipping bias.  The
Lánczos degree is chosen to approximate each exponential action to the
precision required by the regret analysis while exploiting the gradually
growing norm of the accumulated Hamiltonian.  Consequently, each matrix response is computed by applying Lánczos to the
Hamiltonian formed from the sampled constraints, while each constraint
update uses a single sampled coordinate of that response to estimate all
payoffs, avoiding both dense constraint mixtures and exact quadratic-form
evaluation.

Finally, the algorithm returns the empirical averages of the primal and
dual actions played across the $\TT$ rounds.  On the primal matrix side, this
gives
\begin{align}
\Xout
&:={}
\frac1\TT\sum_{t=1}^{\TT}\Xtilde{t}.
\label{eqn:primal-output}
\end{align}
Since each $\Xtilde{t}$ is a rank-one element of $\Rx\cS_{\dn}$, their
average remains in $\Rx\cS_{\dn}$ and satisfies
$\rank(\Xout)\leq\min\{\TT,\dn\}$.  In particular, $\Xout$ need not be
materialized as a dense matrix and can instead be represented by its
$\TT$ rank-one factors. On the dual constraint side,
\begin{align}
\zout
&:={}
\frac1\TT\sum_{t=1}^{\TT}\zt{t}
=
\frac{\Ry}{\TT}\sum_{t=1}^{\TT}\be_{\jt{t}}.
\label{eqn:dual-output}
\end{align}
Thus $\zout\in\Ry\Delta_{\mc}$ is simply the empirical distribution of
the sampled constraints, scaled by $\Ry$.  It can therefore be represented
by the constraint-sampling counts and has
$\abs{\supp(\zout)}\leq\min\{\TT,\mc\}$.

\subsection{Regret Decomposition}
\label{sec:analysis}

We begin the analysis by reducing the saddle-point gap of the averaged
output to the realized regrets of the two players.  In particular, we specialize the standard regret-to-equilibrium relation from
\Cref{eqn:intro-regret-to-gap-ideal} to the sampled actions produced by
\Cref{alg:sampled-rank-one}.  The stochastic analysis enters only
afterwards, when we show that the resulting regret terms are small with
high probability.

To express the saddle-point gap in terms of the realized trajectory, let
$\ctilde{t}$ denote the exact constraint-payoff vector of the computed
matrix action $\Xtilde{t}$, i.e.
\begin{align}
    [\ctilde{t}]_j
    &:=
    \inner{\A{j}}{\Xtilde{t}}
    =
    \Rx\psitilde{t}^{\top}\A{j}\psitilde{t}.
    \label{eqn:exact-practical-payoffs}
\end{align}
Thus $\norm{\ctilde{t}}_\infty\leq\Rx\om$.  Although the algorithm does
not evaluate this vector exactly, it records the true payoffs of the
matrix action actually played.
Averaging the played actions gives, by linearity,
\begin{align}
    \cA^\ast\zout
    &=
    \frac{1}{\TT}\sum_{t=1}^{\TT}\Gt{t},
    \label{eqn:dual-output-linearity}
\end{align}
while, for every $j\in[\mc]$,
\begin{align}
    \inner{\A{j}}{\Xout}
    &=
    \frac{1}{\TT}\sum_{t=1}^{\TT}[\ctilde{t}]_j.
    \label{eqn:primal-output-linearity}
\end{align}
Substituting these expressions into the gap formula from \Cref{eqn:gap}
and multiplying by $\TT$ yields
\begin{align}
    \TT\Gap(\Xout,\zout)
    &=
    \Rx\lambda_{\max}\!\left(
        \sum_{t=1}^{\TT}\Gt{t}
    \right)
    -
    \Ry\min_{j\in[\mc]}
    \sum_{t=1}^{\TT}[\ctilde{t}]_j.
    \label{eqn:gap-before-decomposition}
\end{align}
These are exactly the cumulative payoffs of the best fixed matrix action
and the best fixed constraint action, respectively, chosen in hindsight.
To recover the two players' regrets, it remains to compare these
best-in-hindsight payoffs with the cumulative payoff of the actions
actually played.

On each round, that realized payoff has the same value from either
player's perspective.  Indeed, using
$\Gt{t}=\Ry\A{\jt{t}}$ and $\zt{t}=\Ry\be_{\jt{t}}$,
\begin{align}
    \inner{\Gt{t}}{\Xtilde{t}}
    &=
    \Ry\inner{\A{\jt{t}}}{\Xtilde{t}}
    =
    \Ry[\ctilde{t}]_{\jt{t}}
    =
    \inner{\ctilde{t}}{\zt{t}}.
    \label{eqn:realized-payoff-identity}
\end{align}
We can therefore insert the cumulative realized payoff between the two
best-in-hindsight terms in \Cref{eqn:gap-before-decomposition}.  Adding
and subtracting
$\sum_{t=1}^{\TT}\inner{\Gt{t}}{\Xtilde{t}}$ and applying
\Cref{eqn:realized-payoff-identity} gives
\begin{align}
    \TT\Gap(\Xout,\zout)
    &=
    \underbrace{
        \Rx\lambda_{\max}\!\left(
            \sum_{t=1}^{\TT}\Gt{t}
        \right)
        -
        \sum_{t=1}^{\TT}
        \inner{\Gt{t}}{\Xtilde{t}}
    }_{\MatSide}+
    \underbrace{
        \sum_{t=1}^{\TT}
        \inner{\ctilde{t}}{\zt{t}}
        -
        \Ry\min_{j\in[\mc]}
        \sum_{t=1}^{\TT}[\ctilde{t}]_j
    }_{\VecSide}.
    \label{eqn:realized-gap-decomposition}
\end{align}
The two terms are precisely the realized regrets of the two players.  By
Rayleigh--Ritz,
\begin{align}
    \max_{\bX\in\Rx\cS_{\dn}}
    \sum_{t=1}^{\TT}\inner{\Gt{t}}{\bX}
    &=
    \Rx\lambda_{\max}\!\left(
        \sum_{t=1}^{\TT}\Gt{t}
    \right),
\end{align}
so $\MatSide$ compares the matrix player's cumulative realized payoff with
that of the best fixed matrix action in hindsight.  Likewise,
\begin{align}
    \min_{\bz\in\Ry\Delta_{\mc}}
    \sum_{t=1}^{\TT}\inner{\ctilde{t}}{\bz}
    &=
    \Ry\min_{j\in[\mc]}
    \sum_{t=1}^{\TT}[\ctilde{t}]_j,
\end{align}
since a linear objective over $\Ry\Delta_{\mc}$ is minimized by placing
all mass on a minimum-payoff coordinate.  Hence $\VecSide$ is the
constraint player's cumulative realized regret.

Importantly, \Cref{eqn:realized-gap-decomposition} is expressed in terms of
the true payoff vectors $\ctilde{t}$ of the computed matrix actions, not
the stochastic feedback $\chat{t}$ used by the constraint learner.  The
decomposition is therefore exact for the trajectory produced by the
algorithm.  It remains to show that the two realized regrets are small. Doing so requires controlling several sampling errors that accumulate
adaptively across rounds.  We handle these fluctuations using martingale
concentration.  Before bounding either player's regret, we therefore
formalize the random history of the algorithm and identify the conditional
structure that will be used throughout the subsequent analysis.

\subsection{Stochastic History and Concentration Tools}
\label{subsec:stochastic-history}

In the regret analysis to follow, we repeatedly compare sampled quantities with their corresponding conditional means.  Conditional on all randomness revealed before the current sample is drawn, these sampling errors have mean zero, so standard martingale concentration bounds control how much they can accumulate over the course of the algorithm.  We first describe precisely which random choices have been revealed at each stage of a round, and then state the concentration inequalities used throughout the analysis.

Recall that $\jt{t}$ is the constraint sampled on round $t$, $\ut{t}$ is
the fresh random vector used to construct the matrix response, and
$\isamp{t}$ is the coordinate sampled from that response to generate the
stochastic constraint feedback.  Let
$\Fil{0}:=\{\varnothing,\Omega\}$ and, for $t\geq2$, define
\begin{align}
\Fil{t-1}
&:=
\sigma\!\left(
\jt{\tau},\ut{\tau},\isamp{\tau}
\;\middle|\;
\tau<t
\right).
\label{eqn:pre-round-filtration}
\end{align}
Thus $\Fil{t-1}$ records all randomness revealed before round $t$. In
particular, the accumulated Hamiltonian $\Ht{t}$, cumulative constraint
feedback, and mixed constraint action $\phat{t}$ are fixed conditional on
this history.  The remaining randomness on round $t$ is then revealed in
two stages.  First, the constraint $\jt{t}\sim\phat{t}$ and the random
vector $\ut{t}$ are drawn independently conditional on $\Fil{t-1}$. We express the  history after these two samples are revealed as
\begin{align}
\Fil{t-1/2}
&:=
\sigma(\Fil{t-1},\jt{t},\ut{t}).
\label{eqn:half-step-filtration}
\end{align}
Conditional on
$\Fil{t-1/2}$, the computed matrix action $\Xtilde{t}$ and its payoff
vector $\ctilde{t}$ are fixed.  The only remaining randomness is the
coordinate sample $\isamp{t}$ used to construct the stochastic feedback
$\chat{t}$.  After this final sample, the full history through round $t$
is
\begin{align}
\Fil{t}
&:=
\sigma(\Fil{t-1/2},\isamp{t}).
\label{eqn:round-filtration}
\end{align}

This ordering of the randomness will be used in different ways by the two
players.  On the matrix side, after conditioning on $\jt{t}$ the gain
$\Gt{t}=\Ry\A{\jt{t}}$ is fixed while $\ut{t}$ remains a fresh random
direction.  On the constraint side, after conditioning on $\ut{t}$ the
matrix action and its payoff vector are fixed while $\jt{t}$ remains a
fresh sample from $\phat{t}$.  The independence of these two samples
conditional on $\Fil{t-1}$ therefore makes both sampling errors
conditionally unbiased.  The half-step history $\Fil{t-1/2}$ is needed
later for the stochastic payoff estimator, whose only remaining
randomness is the coordinate sample $\isamp{t}$.

The filtration above identifies exactly what is fixed before each fresh random sample is drawn.  This lets us treat the corresponding sampling errors as conditionally mean-zero fluctuations and control their cumulative contribution using standard martingale concentration bounds.  We will use two such inequalities.  The first is Azuma--Hoeffding, which applies to martingale-difference sequences with uniformly bounded increments \cite{Azuma1967}.  Recall that a sequence $(X_t)_{t=1}^{\TT}$ adapted to a filtration $(\mathcal F_t)_{t=0}^{\TT}$ is a martingale-difference sequence if
\begin{align}
\E\left[
X_t
\,\middle|\,
\mathcal F_{t-1}
\right]
&=
0
\end{align}
for every $t$.  In other words, once all previously revealed randomness is fixed, the next increment has no systematic bias.

\begin{lemma}[Azuma--Hoeffding]
\label{lem:azuma-hoeffding}
Let $(X_t)_{t=1}^{\TT}$ be a martingale-difference sequence with respect
to a filtration $(\mathcal F_t)_{t=0}^{\TT}$, so that
$\E[X_t\mid\mathcal F_{t-1}]=0$ for every $t$.  If
$\abs{X_t}\leq c$ almost surely, then, for every $x>0$,
\begin{align}
\Prb\left[
\sum_{t=1}^{\TT}X_t
>
c\sqrt{2\TT x}
\right]
&\leq
e^{-x}.
\label{eqn:azuma-form}
\end{align}
\end{lemma}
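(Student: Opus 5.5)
The plan is the standard exponential-moment (Chernoff) argument for martingales, with the increments handled one at a time by Hoeffding's lemma. Write $S_k:=\sum_{t=1}^{k}X_t$. Fix a parameter $\lambda>0$, to be chosen at the end, and apply Markov's inequality to $e^{\lambda S_{\TT}}$:
\begin{align}
\Prb[S_{\TT}>a]\leq e^{-\lambda a}\,\E\!\left[e^{\lambda S_{\TT}}\right].
\end{align}
This reduces the claim to an upper bound on the moment generating function of $S_{\TT}$.

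The key step is a conditional version of Hoeffding's lemma. Suppose $\E[X_t\mid\mathcal F_{t-1}]=0$ and $\abs{X_t}\leq c$ almost surely. Then
\begin{align}
\E\!\left[e^{\lambda X_t}\,\middle|\,\mathcal F_{t-1}\right]\leq e^{\lambda^2c^2/2}.
\end{align}
To prove this, use convexity of $x\mapsto e^{\lambda x}$ on $[-c,c]$ to write
\begin{align}
e^{\lambda x}\leq\frac{c-x}{2c}e^{-\lambda c}+\frac{c+x}{2c}e^{\lambda c}.
\end{align}
Taking conditional expectations and using the zero conditional mean gives the bound $\cosh(\lambda c)$. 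Comparing Taylor series term by term gives $\cosh(\lambda c)\leq e^{\lambda^2c^2/2}$. This is the only step with any real content; the rest is bookkeeping.

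Next I peel off the increments by the tower property. Since $S_{t-1}$ is $\mathcal F_{t-1}$-measurable,
\begin{align}
\E\!\left[e^{\lambda S_t}\right]=\E\!\left[e^{\lambda S_{t-1}}\,\E\!\left[e^{\lambda X_t}\,\middle|\,\mathcal F_{t-1}\right]\right]\leq e^{\lambda^2c^2/2}\,\E\!\left[e^{\lambda S_{t-1}}\right].
\end{align}
Inducting from $t=\TT$ down to $t=1$ gives $\E[e^{\lambda S_{\TT}}]\leq e^{\TT\lambda^2c^2/2}$.

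Now set $a=c\sqrt{2\TT x}$ and optimize over $\lambda$. The exponent $-\lambda a+\TT\lambda^2c^2/2$ is minimized at $\lambda=a/(\TT c^2)$, where it equals
\begin{align}
-\frac{a^2}{2\TT c^2}=-x.
\end{align}
This proves \Cref{eqn:azuma-form}.

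No integrability issues arise, since bounded increments make every exponential moment finite. The hypothesis $\abs{X_t}\leq c$ must hold almost surely, because the convexity bound is applied pointwise.
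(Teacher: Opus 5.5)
Your proof is correct. The paper does not prove this lemma; it cites Azuma~\cite{Azuma1967}. Your argument is exactly the standard proof behind that citation: Markov's inequality applied to $e^{\lambda S_{\TT}}$, the conditional Hoeffding bound $\E[e^{\lambda X_t}\mid\mathcal F_{t-1}]\le\cosh(\lambda c)\le e^{\lambda^2c^2/2}$, peeling off increments by the tower property, and choosing $\lambda=a/(\TT c^2)$.

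The only case you implicitly exclude, by dividing by $c$, is $c=0$. That case is trivial: then $\sum_t X_t=0$ almost surely, so the probability is zero.
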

\noindent Azuma--Hoeffding controls these fluctuations using only a uniform bound on the size of each increment.  This is sufficient for the action-sampling errors that arise in the two player analyses.  For the stochastic payoff feedback, however, the clipping threshold can be much larger than the typical size of the estimator, so a bound based only on the worst-case increment would lose useful information.  We therefore also use Freedman's inequality, which incorporates the conditional second moments of the increments and yields a sharper bound in this setting.

\begin{lemma}[Freedman]
\label{lem:freedman}
Let $(X_t)_{t=1}^{\TT}$ be a martingale-difference sequence with respect
to a filtration $(\mathcal F_t)_{t=0}^{\TT}$.  Suppose
$\abs{X_t}\leq c$ almost surely and
\begin{align}
\sum_{t=1}^{\TT}
\E\left[
X_t^2
\,\middle|\,
\mathcal F_{t-1}
\right]
&\leq
v.
\end{align}
Then, for every $x>0$,
\begin{align}
\Prb\left[
\sum_{t=1}^{\TT}X_t
>
\sqrt{2vx}
+
\frac{2cx}{3}
\right]
&\leq
e^{-x}.
\label{eqn:freedman-form}
\end{align}
\end{lemma}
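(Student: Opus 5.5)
The plan is the standard exponential-supermartingale argument, reducing the claim to the scalar Bernstein tail bound
\begin{align}
\Prb\left[\sum_{t=1}^{\TT}X_t>a\right]
&\leq
\exp\!\left(-\frac{a^2}{2(v+ca/3)}\right),
\qquad a>0,
\label{eqn:plan-bernstein-tail}
\end{align}
and then choosing $a$.

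\textbf{Step 1: conditional moment generating function.} Fix $\lambda\in(0,3/c)$. First I would prove the scalar inequality
\begin{align}
e^{y}
&\leq
1+y+\frac{y^2}{2(1-\abs{y}/3)},
\qquad \abs{y}<3.
\end{align}
It follows from the Taylor series $e^y-1-y=\sum_{k\geq2}y^k/k!$ together with $k!\geq 2\cdot3^{k-2}$. These give $e^y-1-y\leq\frac{y^2}{2}\sum_{k\geq2}(\abs{y}/3)^{k-2}$, which is the claimed bound. Apply it with $y=\lambda X_t$, so that $\abs{y}\leq\lambda c<3$. Taking conditional expectations, the linear term vanishes by the martingale-difference property. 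Using $1+z\leq e^z$ then yields
\begin{align}
\E\left[e^{\lambda X_t}\,\middle|\,\mathcal F_{t-1}\right]
&\leq
\exp\!\left(\psi(\lambda)\,\E\left[X_t^2\,\middle|\,\mathcal F_{t-1}\right]\right),
\qquad
\psi(\lambda):=\frac{\lambda^2}{2(1-\lambda c/3)}.
\end{align}
I expect this scalar inequality to be the only delicate step; the rest is routine.

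\textbf{Step 2: supermartingale and Chernoff bound.} Let $S_t:=\sum_{\tau\leq t}X_\tau$ and $V_t:=\sum_{\tau\leq t}\E[X_\tau^2\mid\mathcal F_{\tau-1}]$, where $V_t$ is predictable. By Step~1, $M_t:=\exp(\lambda S_t-\psi(\lambda)V_t)$ is a supermartingale with $M_0=1$, and therefore $\E[M_{\TT}]\leq1$. The hypothesis $V_{\TT}\leq v$ holds almost surely, so $e^{\lambda S_{\TT}}\leq M_{\TT}e^{\psi(\lambda)v}$. Markov's inequality then gives
\begin{align}
\Prb[S_{\TT}>a]
&\leq
\exp\bigl(-\lambda a+\psi(\lambda)v\bigr).
\end{align}
Take $\lambda=a/(v+ca/3)$, which lies in $(0,3/c)$. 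Then $1-\lambda c/3=v/(v+ca/3)$, so $\psi(\lambda)v=\lambda a/2$. This yields \eqref{eqn:plan-bernstein-tail}.

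\textbf{Step 3: choice of $a$.} Set $a=\sqrt{2vx}+2cx/3$. It suffices to check that $a^2\geq 2x(v+ca/3)$. Expanding both sides gives
\begin{align}
a^2
&=
2vx+\tfrac43 cx\sqrt{2vx}+\tfrac49c^2x^2,
\\
2x\left(v+\frac{ca}{3}\right)
&=
2vx+\tfrac23 cx\sqrt{2vx}+\tfrac49c^2x^2.
\end{align}
The required inequality therefore holds, and the tail probability is at most $e^{-x}$. The degenerate case $v=0$ needs separate handling: then $S_{\TT}=0$ almost surely, so the claim is trivial.
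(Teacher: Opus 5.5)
Your proof is correct. The scalar bound $e^{y}\leq 1+y+y^2/(2(1-\abs{y}/3))$ follows from $k!\geq 2\cdot 3^{k-2}$. The process $\exp(\lambda S_t-\psi(\lambda)V_t)$ is a supermartingale. The choice $\lambda=a/(v+ca/3)$ lies in $(0,3/c)$ exactly because $v>0$, and gives $\psi(\lambda)v=\lambda a/2$. The final algebra yields $a^2-2x(v+ca/3)=\tfrac23 cx\sqrt{2vx}\geq0$. Your handling of $v=0$ is also right.

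The paper does not prove this lemma itself; it cites \cite[Theorem~1.6]{Freedman1975}. Freedman's theorem uses the same exponential-supermartingale mechanism, but with the sharper Bennett-type compensator $(e^{\lambda c}-1-\lambda c)/c^2$ in place of your $\psi(\lambda)$. This gives the tail $\exp\bigl(-\tfrac{v}{c^2}h(ca/v)\bigr)$ with $h(u)=(1+u)\log(1+u)-u$. It is also stated as a maximal inequality, bounding the probability that $S_t\geq a$ and $V_t\leq v$ for some $t$. So it is uniform in time and needs the variance bound only on the event, not almost surely.

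Recovering the stated form from the citation requires the additional inequality $h(u)\geq u^2/(2(1+u/3))$, followed by the same threshold algebra as in your Step~3. Your cruder moment-generating-function bound goes straight to the Bernstein form. The result is elementary and self-contained, and proves exactly the lemma as stated (fixed horizon $\TT$, variance hypothesis almost surely). The citation buys the stronger time-uniform, event-restricted version, which the paper's applications never use.
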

\noindent We will apply these two bounds at the corresponding points in the player analyses below.  For Freedman's inequality, we use the bounded-increment form stated in \cite[Theorem~1.6]{Freedman1975}.

\subsection{Coupling the Computed Lánczos and Exact Matrix Responses}
\label{subsec:reference-actions}

Before turning to the matrix-player regret bound, we isolate the second approximation that enters its analysis.  The stochastic tools above control the error from sampling a single rank-one response, but the algorithm also computes that response only approximately.  In particular, for the sampled direction $\bu_t$, it replaces the exact matrix-exponential vector by the output of a finite-step Lánczos method.  Since the matrix-player regret guarantee of Carmon et al.~\cite[Theorem~2]{CDST19} is stated for the exact randomized response $\cP_{\bu}(\bH)$, we must show that this numerical approximation changes the realized payoffs by only a negligible amount.  This will allow the matrix-player analysis to treat the rank-one sampling error and the Lánczos approximation error separately.

The comparison requires some care because the algorithm is adaptive.  An approximation error on one round changes the stochastic feedback and can therefore influence the Hamiltonians encountered on later rounds.  Rather than introducing a counterfactual trajectory in which all matrix-exponential actions are computed exactly, we compare the exact and computed responses locally on each round, using the same realized Hamiltonian $\Ht{t}$ and the same random direction $\ut{t}$.  In this way, any effect of earlier approximation errors is already incorporated into the history being conditioned on.

Recall the exact randomized projection $\cP_{\bu}(\bH)$ and its mean $\overline{\cP}(\bH)$ from \Cref{eqn:rank-one-projection}. For the realized Hamiltonian $\Ht{t}$ and randomly sampled vector $\ut{t}$, define the exact exponential vector, its normalization, and the corresponding rank-one response by
\begin{align}
    \vt{t}
    &:=
    e^{\lrX\Ht{t}/2}\ut{t},
    \qquad
    \psit{t}
    :=
    \frac{\vt{t}}{\norm{\vt{t}}_2},\qquad
    \rt{t}
    :=
    \psit{t}\psit{t}^{\top}
    =
    \cP_{\ut{t}}(\lrX\Ht{t}), \qquad
    \Xt{t}
    :=
    \Rx\rt{t}.
    \label{eqn:exact-reference-action}
\end{align}
The algorithm instead computes the approximate vectors and response
\begin{align}
\vtilde{t}
:=
\LanczosExp(\lrX\Ht{t}/2,\ut{t},k_t), \qquad
\psitilde{t}
:=
\frac{\vtilde{t}}{\norm{\vtilde{t}}_2},
\qquad
\rtilde{t}:=
\psitilde{t}\psitilde{t}^{\top},
\qquad
\Xtilde{t}:=
\Rx\rtilde{t}.
\label{eqn:computed-reference-action}
\end{align}
Finally, the matrix-player regret theorem is expressed in terms of the mean randomized response
\begin{align}
\xbar{t}
&:=
\Rx\overline{\cP}(\lrX\Ht{t}),
\label{eqn:mean-reference-action}
\end{align}
which averages the exact rank-one response over the random direction.  These three responses isolate the two discrepancies that must be controlled in the matrix-player analysis.  The difference $\xbar{t}-\Xt{t}$ is the sampling error incurred by replacing the mean randomized response with a single exact rank-one sample.  This error is conditionally mean zero and will be bounded in \Cref{subsec:matrix-player-guarantee}.  The difference $\Xt{t}-\Xtilde{t}$ is the numerical error incurred by replacing the exact rank-one response with its Lánczos approximation.  We now establish a uniform bound on this second error along the realized trajectory.

Because the Hamiltonian $\Ht{t}$ depends on the gains sampled on earlier rounds, the Lánczos approximation must remain accurate for every Hamiltonian encountered along the realized trajectory.  Moreover, since
$\Ht{t}=\sum_{\tau<t}\Gt{\tau}$ and
$\norm{\Gt{\tau}}\leq\Ry\om$, the norm of the exponential argument can increase with $t$.  The required number of Lánczos iterations must therefore increase accordingly.  The following lemma shows that the round-dependent choice of $k_t$ in \Cref{alg:sampled-rank-one} guarantees the desired approximation accuracy simultaneously for all $t\in[\TT]$.

\begin{lemma}[Uniform Lánczos Accuracy Along Sampled Trajectory]
\label{lem:lanczos-projector}
Consider the realized trajectory of \Cref{alg:sampled-rank-one} under the parameter choices
${\lrX=(\Ry\om)^{-1}\sqrt{2\log(4\dn)/(3\TT)}}$,
$\del_{\mathrm L}=\del/3$,
${\LamL=\log(4\TT^2\sqrt{\dn}/\del_{\mathrm L})}$, and
${k_t=\min\{\dn,\lceil C_{\mathrm L}\sqrt{\LamL\max\{\lrX\Ry\om t,\LamL\}}\rceil\}}$.
Define the Lánczos success event, on which every computed projector is within trace-distance $1/\TT$ of its exact counterpart,
\begin{align}
\GoodL
&:=
\bigcap_{t=1}^{\TT}
\left\{
\norm{\rtilde{t}-\rt{t}}_1
\leq
\frac1\TT
\right\}.
\label{eqn:lanczos-success-event}
\end{align}
Then $\Prb[\GoodL]\geq1-\del_{\mathrm L}$.
\end{lemma}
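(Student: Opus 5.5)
The plan is a per-round conditional argument followed by a union bound over the $\TT$ rounds. Fix a round $t$ and condition on $\Fil{t-1}$. This fixes $\Ht{t}$, while $\ut{t}$ is still a fresh uniform direction that is independent of $\Fil{t-1}$ and of $\jt{t}$. Since $\Ht{t}=\sum_{\tau<t}\Gt{\tau}$ with $\norm{\Gt{\tau}}\leq\Ry\om$, the Lánczos argument $\bK_t:=\lrX\Ht{t}/2$ satisfies $\norm{\bK_t}\leq\lrX\Ry\om(t-1)/2\leq\lrX\Ry\om t$. This bound is deterministic, so it holds along the realized trajectory whatever the earlier approximation errors were. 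I then apply \Cref{fact:real-random-start-lanczos} conditionally, with relative accuracy $\zeta_{\mathrm L}:=1/(4\TT)$ and failure probability $\eta_{\mathrm L}:=\del_{\mathrm L}/\TT$. To check that $k_t$ is large enough, I will not use the fact as a black box; I will instead redo the proof of \Cref{prop:gibbs-random-start-lanczos} in the real case:
\begin{itemize}
\item \cite[Corollary~20]{CDST19} gives a polynomial of degree $O(\sqrt{\max\{b-a,\log(1/\delta)\}\log(1/\delta)})$ with uniform error $\delta e^{\lambda_{\max}}$.
\item \Cref{lem:exact-lanczos} turns this into an absolute vector error of $2\delta e^{\lambda_{\max}(\bK_t)}$.
\item Real spherical anticoncentration, $\Prb[\abs{\bw_{\max}^\top\bu}<\eta'/\sqrt{\dn}]\leq O(\eta')$, where the overlap squared is $\operatorname{Beta}(1/2,(\dn-1)/2)$, gives $\norm{e^{\bK_t}\ut{t}}_2\geq e^{\lambda_{\max}}\eta'/\sqrt{\dn}$.
\end{itemize}
Choosing $\delta\approx \zeta_{\mathrm L}\eta'/\sqrt{\dn}$ with $\eta'\approx\del_{\mathrm L}/\TT$ gives $\log(1/\delta)=O(\log(\TT^2\sqrt{\dn}/\del_{\mathrm L}))=O(\LamL)$. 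The spectral width is $b-a\leq 2\lrX\Ry\om t$. Hence a degree $C_{\mathrm L}\sqrt{\LamL\max\{\lrX\Ry\om t,\LamL\}}$ suffices, and this is exactly $k_t$. When the cap $k_t=\dn$ is active, Lánczos is exact and the bound is trivial.

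Next I convert relative vector error into trace-distance error between the normalized projectors. The step is deterministic and uses the usual bound
\begin{align}
\norm{\bx\bx^\top/\norm{\bx}^2-\by\by^\top/\norm{\by}^2}_1\leq 2\norm{\hat\bx-\hat\by}_2,\qquad
\norm{\hat\bx-\hat\by}_2\leq 2\norm{\bx-\by}_2/\norm{\by}_2 .
\end{align}
With $\bx=\vtilde{t}$ and $\by=\vt{t}$, the event $\norm{\vtilde{t}-\vt{t}}_2\leq\zeta_{\mathrm L}\norm{\vt{t}}_2$ implies $\norm{\rtilde{t}-\rt{t}}_1\leq 4\zeta_{\mathrm L}=1/\TT$. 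Note that $\rt{t}$ is defined with the exponent $\lrX\Ht{t}/2$, matching the Lánczos call.

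Finally, I combine the rounds. Each per-round failure event has conditional probability at most $\del_{\mathrm L}/\TT$ given $\Fil{t-1}$, so it also has this unconditional probability by the tower property. A union bound over $t\in[\TT]$ then gives $\Prb[\GoodL^c]\leq\del_{\mathrm L}$. No martingale tools are needed, because the bound on $\norm{\Ht{t}}$ is deterministic.

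The step I expect to be the main obstacle is the constant bookkeeping: checking that the $\LamL$ inside $k_t$, which contains $\TT^2$ and $\sqrt{\dn}$, absorbs both the $\log(1/\delta)$ term and the anticoncentration loss for real vectors. If the real overlap bound loses an extra factor relative to the complex $\operatorname{Beta}(1,\dn-1)$ case, I will absorb that factor into $C_{\mathrm L}$ and into the logarithm, since it only rescales $\delta$ polynomially.
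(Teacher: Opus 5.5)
Your proposal is correct and follows the paper's proof step for step. Both condition on $\Fil{t-1}$, apply the real random-start Lánczos guarantee with relative accuracy $1/(4\TT)$ and failure probability $\del_{\mathrm L}/\TT$, convert to $\norm{\rtilde{t}-\rt{t}}_1\leq 1/\TT$ through the normalization and pure-state trace-distance bounds, and finish with the tower property and a union bound. The only difference is that you re-derive the real-case guarantee rather than citing \Cref{fact:real-random-start-lanczos}. You build it from \cite[Corollary~20]{CDST19}, \Cref{lem:exact-lanczos}, and $\operatorname{Beta}(1/2,(\dn-1)/2)$ anticoncentration, which does give $\Prb[\abs{\bw_{\max}^\top\bu}<\eta'/\sqrt{\dn}]\leq\eta'$. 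This has the benefit of justifying the explicit $\LamL$-dependence of $k_t$, which the paper simply asserts.
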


\begin{proof}
Fix a round $t$ and condition on the pre-round history $\Fil{t-1}$.
Then the accumulated Hamiltonian $\Ht{t}$ is fixed, while $\ut{t}$
remains an independent uniform direction on $\mathbb S^{\dn-1}$.  Set
\begin{align}
\bK_t
&:=
\frac{\lrX\Ht{t}}{2}.
\end{align}
Since
$\Ht{t}=\sum_{\tau<t}\Gt{\tau}$ and
$\norm{\Gt{\tau}}\leq\Ry\om$,
\begin{align}
\norm{\bK_t}
&\leq
\frac{\lrX\Ry\om(t-1)}{2}
\leq
\frac{\lrX\Ry\om t}{2}.
\label{eqn:lanczos-matrix-norm}
\end{align}

Conditioned on $\Fil{t-1}$, we may apply the random-start Lánczos
guarantee from \Cref{fact:real-random-start-lanczos}.  Recall that for a
symmetric matrix $\bK$, a uniform random starting vector $\bu$, target
relative error $\zeta$, and failure probability $\eta$, a number of
iterations
\begin{align}
q
&=
O\left(
\min\left\{
\dn,
\sqrt{
\Gamma
\max\left\{
\norm{\bK},
\Gamma
\right\}
}
\right\}
\right), \quad \text{where}
\quad
\Gamma
=
\Theta\left(
\log\left(
\frac{\sqrt{\dn}}{\zeta\eta}
\right)
\right),
\end{align}
suffices to ensure
\begin{align}
\norm{
\LanczosExp(\bK,\bu,q)-e^{\bK}\bu
}_2
&\leq
\zeta\norm{e^{\bK}\bu}_2
\end{align}
with probability at least $1-\eta$.

We apply this result with $\bK=\bK_t$, $\zeta=1/(4\TT)$, and $\eta=\del_{\mathrm L}/\TT$.
For these choices,
\begin{align}
\log\left(
\frac{\sqrt{\dn}}{\zeta\eta}
\right)
&=
\Theta\left(
\log\left(
\frac{\TT^2\sqrt{\dn}}{\del_{\mathrm L}}
\right)
\right)
=
\Theta(\LamL).
\label{eqn:lanczos-log-factor-specialized}
\end{align}
Together with \Cref{eqn:lanczos-matrix-norm}, this shows that the choice
of $k_t$ in \Cref{alg:sampled-rank-one} satisfies the required iteration
bound.  Hence, conditional on $\Fil{t-1}$, with probability at least
$1-\del_{\mathrm L}/\TT$,
\begin{align}
\norm{\vtilde{t}-\vt{t}}_2
&\leq
\frac{1}{4\TT}
\norm{\vt{t}}_2.
\label{eqn:lanczos-relative-vector-error}
\end{align}

We next transfer this relative approximation guarantee to the normalized
vectors.  By the reverse triangle inequality,
\begin{align}
\abs{
\norm{\vtilde{t}}_2-\norm{\vt{t}}_2
}
&\leq
\norm{\vtilde{t}-\vt{t}}_2
\leq
\frac{1}{4\TT}
\norm{\vt{t}}_2.
\label{eqn:lanczos-norm-error}
\end{align}
Using
$\psitilde{t}=\vtilde{t}/\norm{\vtilde{t}}_2$ and
$\psit{t}=\vt{t}/\norm{\vt{t}}_2$, we obtain
\begin{align}
\norm{\psitilde{t}-\psit{t}}_2
&\leq
\norm{
\frac{\vtilde{t}}{\norm{\vtilde{t}}_2}
-
\frac{\vtilde{t}}{\norm{\vt{t}}_2}
}_2
+
\norm{
\frac{\vtilde{t}-\vt{t}}{\norm{\vt{t}}_2}
}_2=
\frac{
\abs{
\norm{\vtilde{t}}_2-\norm{\vt{t}}_2
}
}{
\norm{\vt{t}}_2
}
+
\frac{
\norm{\vtilde{t}-\vt{t}}_2
}{
\norm{\vt{t}}_2
}
\leq
\frac{1}{2\TT}.
\label{eqn:lanczos-normalized-vector-error}
\end{align}
Thus normalization increases the approximation error by at most a
constant factor.

For any real unit vectors $\bx$ and $\by$,
\begin{align}
\norm{\bx\bx^\top-\by\by^\top}_1
&=
2\sqrt{1-\abs{\bx^\top\by}^2}
\leq
2\norm{\bx-\by}_2.
\label{eqn:pure-projector-distance}
\end{align}
Applying this to $\psitilde{t}$ and $\psit{t}$ gives
\begin{align}
\norm{\rtilde{t}-\rt{t}}_1
&\leq
2\norm{\psitilde{t}-\psit{t}}_2
\leq
\frac{1}{\TT}.
\label{eqn:lanczos-one-round-projector}
\end{align}
Hence relative vector accuracy of order $1/\TT$ suffices to obtain the
trace-norm accuracy needed by the regret analysis.

Define the round-$t$ failure event
\begin{align}
\mathcal E_t
&:=
\left\{
\norm{\rtilde{t}-\rt{t}}_1
>
\frac1\TT
\right\}.
\label{eqn:lanczos-round-failure-event}
\end{align}
The preceding argument holds for every realization of the pre-round
history, and therefore
\begin{align}
\Prb\left[
\mathcal E_t
\,\middle|\,
\Fil{t-1}
\right]
&\leq
\frac{\del_{\mathrm L}}{\TT}.
\label{eqn:lanczos-one-round-failure}
\end{align}
Averaging over the random history gives
\begin{align}
\Prb[\mathcal E_t]
&=
\E\left[
\Prb\left[
\mathcal E_t
\,\middle|\,
\Fil{t-1}
\right]
\right]
\leq
\frac{\del_{\mathrm L}}{\TT}.
\label{eqn:lanczos-unconditional-round-failure}
\end{align}
Thus adaptivity of the trajectory does not increase the per-round failure
probability. Finally,
$\GoodL^c=\bigcup_{t=1}^{\TT}\mathcal E_t$, so a union bound gives
\begin{align}
\Prb[\GoodL^c]
&\leq
\sum_{t=1}^{\TT}
\Prb[\mathcal E_t]
\leq
\del_{\mathrm L}.
\end{align}
Equivalently,
$\Prb[\GoodL]\geq1-\del_{\mathrm L}$, proving the lemma.
\end{proof}

The preceding lemma controls Lánczos error at the level of the rank-one
projectors.  The matrix-player analysis only depends on the corresponding
payoffs, so the following corollary converts this trace-norm approximation
into the cumulative payoff error used below.

\begin{corollary}[Cumulative Lánczos payoff error]
\label{cor:lanczos-payoff-error}
Suppose that the Lánczos success event $\GoodL$ occurs.  Then,
\begin{align}
\sum_{t=1}^{\TT}
\abs{
\inner{\Gt{t}}{\Xt{t}-\Xtilde{t}}
}
&\leq
\Rx\Ry\om.
\label{eqn:cumulative-lanczos-payoff-error}
\end{align}
\end{corollary}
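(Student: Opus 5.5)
On $\GoodL$, I will bound each round's payoff discrepancy separately with Hölder's inequality for the trace/operator-norm pair and then sum over the $\TT$ rounds. No probabilistic argument is needed: the statement is deterministic once we restrict to $\GoodL$.

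\textbf{Step 1 (rewrite the difference).} By the definitions in \Cref{eqn:exact-reference-action} and \Cref{eqn:computed-reference-action},
\begin{align}
\Xt{t}-\Xtilde{t}=\Rx(\rt{t}-\rtilde{t}).
\end{align}
Hence
\begin{align}
\abs{\inner{\Gt{t}}{\Xt{t}-\Xtilde{t}}}=\Rx\abs{\inner{\Gt{t}}{\rt{t}-\rtilde{t}}}.
\end{align}

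\textbf{Step 2 (Hölder).} Matrix Hölder gives $\abs{\inner{\bA}{\bB}}\leq\norm{\bA}\,\norm{\bB}_1$. Apply it with $\bA=\Gt{t}$ and $\bB=\rt{t}-\rtilde{t}$. Since $\Gt{t}=\Ry\A{\jt{t}}$, we have $\norm{\Gt{t}}\leq\Ry\om$ for every realization of $\jt{t}$. Therefore each round satisfies
\begin{align}
\abs{\inner{\Gt{t}}{\Xt{t}-\Xtilde{t}}}\leq\Rx\Ry\om\norm{\rt{t}-\rtilde{t}}_1.
\end{align}

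\textbf{Step 3 (invoke $\GoodL$ and sum).} By \Cref{eqn:lanczos-success-event}, on $\GoodL$ we have $\norm{\rtilde{t}-\rt{t}}_1\leq1/\TT$ simultaneously for all $t\in[\TT]$. So each summand is at most $\Rx\Ry\om/\TT$, and summing the $\TT$ terms gives the bound $\Rx\Ry\om$ claimed in \Cref{eqn:cumulative-lanczos-payoff-error}.

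The only point needing care is that $\GoodL$ controls all rounds at once. This is what lets the bound hold along the adaptive trajectory even though $\Gt{t}$ depends on $\jt{t}$. The operator-norm bound on $\Gt{t}$ is uniform over all realizations, so this adaptivity causes no obstacle. I do not expect a genuinely hard step.
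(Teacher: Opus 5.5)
Your proof is correct and follows essentially the same argument as the paper. You factor out $\Rx$, apply operator--trace norm duality with $\norm{\Gt{t}}\leq\Ry\om$, use $\norm{\rt{t}-\rtilde{t}}_1\leq 1/\TT$ on $\GoodL$, and sum over the $\TT$ rounds.
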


\begin{proof}
On $\GoodL$,
\begin{align}
\norm{\rt{t}-\rtilde{t}}_1
&\leq
\frac1\TT
\end{align}
for every $t\in[\TT]$.  Using
$\Xt{t}=\Rx\rt{t}$,
$\Xtilde{t}=\Rx\rtilde{t}$,
$\norm{\Gt{t}}\leq\Ry\om$, and operator--trace norm duality,
\begin{align}
\abs{
\inner{\Gt{t}}{\Xt{t}-\Xtilde{t}}
}
&=
\Rx
\abs{
\inner{\Gt{t}}{\rt{t}-\rtilde{t}}
}\leq
\Rx\norm{\Gt{t}}
\norm{\rt{t}-\rtilde{t}}_1
\leq
\frac{\Rx\Ry\om}{\TT}.
\end{align}
Summing over $t\in[\TT]$ proves the claim.
\end{proof}

\subsection{The Matrix-Player Guarantee}
\label{subsec:matrix-player-guarantee}

With the Lánczos approximation controlled, we can now analyze the
matrix-player contribution $\MatSide$ from
\Cref{eqn:realized-gap-decomposition}.  The three matrix actions
introduced in \Cref{subsec:reference-actions} make the structure of the
argument explicit.  The mean response $\xbar{t}$ is controlled by the
average-projection regret theorem of Carmon et al. \cite{CDST19}, sampling a single
random direction replaces this mean by the rank-one action $\Xt{t}$, and
Lánczos replaces $\Xt{t}$ by the action $\Xtilde{t}$ actually computed
by the algorithm. Thus, the matrix-side error has three sources: 1) regret of the mean response,
2) sampling a single rank-one response around that mean, and 3) numerical
approximation by Lánczos.  The first is controlled deterministically, the
second by Azuma--Hoeffding, and the third by
\Cref{cor:lanczos-payoff-error}.

We first isolate the sampling term.
\begin{lemma}[Rank-one response sampling]
\label{lem:matrix-rank-one-sampling}
For every $\delta_X\in(0,1)$, with probability at least
$1-\delta_X$,
\begin{align}
\sum_{t=1}^{\TT}
\inner{\Gt{t}}{\xbar{t}-\Xt{t}}
&\leq
2\Rx\Ry\om
\sqrt{
2\TT\log\frac{1}{\delta_X}
}.
\label{eqn:matrix-rank-one-sampling-bound}
\end{align}
\end{lemma}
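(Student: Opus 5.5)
The plan is to apply Azuma--Hoeffding (\Cref{lem:azuma-hoeffding}) to the increments
\begin{align}
    Y_t
    &:=
    \inner{\Gt{t}}{\xbar{t}-\Xt{t}}.
\end{align}
The main work is showing that $(Y_t)$ is a martingale-difference sequence with respect to a suitable filtration.

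For measurability, note that $Y_t$ is determined by $\Fil{t-1}$, $\jt{t}$, and $\ut{t}$, since $\xbar{t}$ depends only on $\Ht{t}$. Hence $Y_t$ is $\Fil{t-1/2}$-measurable, and therefore $\Fil{t}$-measurable. So $(Y_t)$ is adapted to $(\Fil{t})_{t=0}^{\TT}$.

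For the zero conditional mean, I would compute $\E[Y_t\mid\Fil{t-1}]$ in two stages. First condition on $\sigma(\Fil{t-1},\jt{t})$. Under this conditioning, $\Gt{t}=\Ry\A{\jt{t}}$ and $\Ht{t}$ are fixed. Meanwhile $\ut{t}$ remains uniform on $\mathbb S^{\dn-1}$ and independent of $\jt{t}$ given $\Fil{t-1}$; this holds because it is drawn independently on Line~\ref{line:two-direction}, and the one-round lag means $\Ht{t}$ does not involve $\jt{t}$. By the definition \Cref{eqn:rank-one-projection},
\begin{align}
    \E[\Xt{t}\mid\Fil{t-1},\jt{t}]
    &=
    \Rx\,\E_{\bu}[\cP_{\bu}(\lrX\Ht{t})]
    =
    \xbar{t}.
\end{align}
Linearity of the inner product in its second argument then gives $\E[Y_t\mid\Fil{t-1},\jt{t}]=0$. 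The tower property yields $\E[Y_t\mid\Fil{t-1}]=0$.

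\textbf{Main obstacle.} The delicate point is that the actual algorithm uses the Lánczos vector $\psitilde{t}$, and this vector feeds into later Hamiltonians through $\isamp{t}$. However, $\Xt{t}$ is defined from the \emph{exact} exponential at the realized $\Ht{t}$ with the same $\ut{t}$, so no counterfactual trajectory is needed. The only care required is to check that $\isamp{t}$ and the Lánczos output enter only future rounds through $\Fil{t}$, and never the conditional law of $\ut{t}$ given $(\Fil{t-1},\jt{t})$.

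For the increment bound, both $\xbar{t}$ and $\Xt{t}$ lie in $\Rx\cS_{\dn}$, so $\norm{\xbar{t}-\Xt{t}}_1\leq 2\Rx$. Operator--trace duality with $\norm{\Gt{t}}\leq\Ry\om$ then gives $\abs{Y_t}\leq 2\Rx\Ry\om$ almost surely.

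Finally, apply \Cref{lem:azuma-hoeffding} with $c=2\Rx\Ry\om$ and $x=\log(1/\delta_X)$. This gives $\sum_t Y_t\leq 2\Rx\Ry\om\sqrt{2\TT\log(1/\delta_X)}$ with probability at least $1-\delta_X$, which is exactly \Cref{eqn:matrix-rank-one-sampling-bound}.
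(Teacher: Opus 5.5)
Your proposal is correct and matches the paper's proof step by step. Conditioning on $\sigma(\Fil{t-1},\jt{t})$ fixes $\Gt{t}$ and $\Ht{t}$ while $\ut{t}$ stays fresh, so $\xbar{t}$ is the conditional mean of $\Xt{t}$ and the tower property centers the increments; trace--operator duality then bounds them by $2\Rx\Ry\om$, and Azuma--Hoeffding with $x=\log(1/\delta_X)$ finishes. Your explicit adaptedness check and your remark that no counterfactual trajectory is needed are minor additions that the paper leaves implicit.
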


\begin{proof}
Define the payoff fluctuation
\begin{align}
\xi_t^X
&:=
\inner{\Gt{t}}{\xbar{t}-\Xt{t}}.
\end{align}
Conditional on the pre-round history $\Fil{t-1}$ and the sampled
constraint $\jt{t}$, the gain
$\Gt{t}=\Ry\A{\jt{t}}$ and Hamiltonian $\Ht{t}$ are fixed, while
$\ut{t}$ remains a fresh randomly sampled vector.  By definition,
$\xbar{t}$ is the mean of the exact rank-one response over this remaining
randomness, i.e.
\begin{align}
\xbar{t}
&=
{\E\left[
\Xt{t}
\,\middle|\,
\Fil{t-1},\jt{t}
\right]}.
\end{align}
Conditional on $\Fil{t-1}$ and $\jt{t}$, the sampled constraint $\jt{t}$ is known and hence
$\Gt{t}=\Ry\A{\jt{t}}$ is deterministic.  Therefore, it can be taken outside the conditional expectation, which gives
\begin{align}
\E\left[
\xi_t^X
\mid
\Fil{t-1},\jt{t}
\right]
&=
\inner{\Gt{t}}{
\E\left[
\xbar{t}-\Xt{t}
\mid
\Fil{t-1},\jt{t}
\right]
}=
\inner{\Gt{t}}{
\xbar{t}-
\E\left[
\Xt{t}
\mid
\Fil{t-1},\jt{t}
\right]
}
=
0.
\end{align}
We now remove the conditioning on $\jt{t}$.  Because
$\Fil{t-1}\subseteq \sigma(\Fil{t-1},\jt{t})$, the tower property gives
\begin{align}
\E\left[
\xi_t^X
\,\middle|\,
\Fil{t-1}
\right]
&=
\E\left[
\E\left[
\xi_t^X
\,\middle|\,
\Fil{t-1},\jt{t}
\right]
\,\middle|\,
\Fil{t-1}
\right]=
0.
\label{eqn:matrix-sampling-centered}
\end{align}
Thus an individual rank-one response need not be close to its mean, but
its payoff error is unbiased on every round.

Both $\xbar{t}$ and $\Xt{t}$ belong to $\Rx\cS_{\dn}$, and hence
\begin{align}
\norm{\xbar{t}-\Xt{t}}_1
&\leq
2\Rx.
\end{align}
Using $\norm{\Gt{t}}\leq\Ry\om$ and operator--trace norm duality,
\begin{align}
\abs{\xi_t^X}
&\leq
\norm{\Gt{t}}
\norm{\xbar{t}-\Xt{t}}_1
\leq
2\Rx\Ry\om.
\label{eqn:matrix-sampling-range}
\end{align}

By \Cref{eqn:matrix-sampling-centered}, the sequence
$(\xi_t^X)_{t=1}^{\TT}$ is a martingale-difference sequence with respect
to $(\Fil{t})_{t=0}^{\TT}$, and
\Cref{eqn:matrix-sampling-range} gives a uniform increment bound.
We may therefore apply Azuma--Hoeffding,
\Cref{lem:azuma-hoeffding}, with
$c=2\Rx\Ry\om$ and
$x=\log(1/\delta_X)$.  Since
\begin{align}
\sum_{t=1}^{\TT}\xi_t^X
&=
\sum_{t=1}^{\TT}
\inner{\Gt{t}}{\xbar{t}-\Xt{t}},
\end{align}
we obtain
\begin{align}
\Prb\left[
\sum_{t=1}^{\TT}
\inner{\Gt{t}}{\xbar{t}-\Xt{t}}
>2\Rx\Ry\om
\sqrt{
2\TT\log\frac{1}{\delta_X}
}
\right]
&\leq
\delta_X.
\end{align}
Equivalently, with probability at least $1-\delta_X$,
\begin{align}
\sum_{t=1}^{\TT}
\inner{\Gt{t}}{\xbar{t}-\Xt{t}}
&\leq
2\Rx\Ry\om
\sqrt{
2\TT\log\frac{1}{\delta_X}
},
\end{align}
which proves the claim.
\end{proof}

We now combine the preceding ingredients to obtain the full matrix-side bound. The deterministic matrix-learning guarantee controls the idealized mean responses, while the previous lemma bounds the cumulative fluctuation introduced by sampling a single rank-one response on each round. It remains only to account for the Lánczos approximation used to compute these responses. Together, these bounds yield the following guarantee.

\begin{theorem}[Matrix-side guarantee]
\label{thm:matrix-side-guarantee}
For every $\delta_X\in(0,1)$, with probability at least
$1-\delta_X-\del_{\mathrm L}$,
\begin{align}
\MatSide
&\leq
\frac{\Rx\log(4\dn)}{\lrX}
+
\frac{3\Rx\lrX\Ry^2\om^2\TT}{2}
+
2\Rx\Ry\om
\sqrt{
2\TT\log\frac{1}{\delta_X}
}
+
\Rx\Ry\om.
\label{eqn:matrix-side-bound}
\end{align}
\end{theorem}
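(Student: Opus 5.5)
We split $\MatSide$ along the three matrix actions of \Cref{subsec:reference-actions}. Adding and subtracting $\sum_t\inner{\Gt{t}}{\xbar{t}}$ and $\sum_t\inner{\Gt{t}}{\Xt{t}}$ gives
\begin{align}
\MatSide
&=
\left[
\Rx\lambda_{\max}\!\left(\sum_{t=1}^{\TT}\Gt{t}\right)
-\sum_{t=1}^{\TT}\inner{\Gt{t}}{\xbar{t}}
\right]
+\sum_{t=1}^{\TT}\inner{\Gt{t}}{\xbar{t}-\Xt{t}}
+\sum_{t=1}^{\TT}\inner{\Gt{t}}{\Xt{t}-\Xtilde{t}}.
\end{align}
The second sum is handled by \Cref{lem:matrix-rank-one-sampling}, which gives the term $2\Rx\Ry\om\sqrt{2\TT\log(1/\delta_X)}$ with probability at least $1-\delta_X$. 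The third sum is at most $\Rx\Ry\om$ on $\GoodL$ by \Cref{cor:lanczos-payoff-error}, and $\Prb[\GoodL]\geq1-\del_{\mathrm L}$ by \Cref{lem:lanczos-projector}. A union bound over these two events yields the stated probability $1-\delta_X-\del_{\mathrm L}$.

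The first bracket is the deterministic regret of the average-projection learner. We invoke the average-projection regret theorem of Carmon et al.~\cite[Theorem~2]{CDST19}, after rescaling to unit trace. Write $\xbar{t}=\Rx\overline{\cP}(\lrX\sum_{\tau<t}\Gt{\tau})$. For any gain sequence with $\lrX\norm{\Gt{t}}\leq1$, their theorem states
\begin{align}
\lambda_{\max}\!\left(\sum_t\Gt{t}\right)-\sum_t\inner{\Gt{t}}{\overline{\cP}(\lrX\bH_t)}
\leq\frac{\log(4\dn)}{\lrX}+\frac{3\lrX}{2}\sum_t\norm{\Gt{t}}^2,
\end{align}
and this bound should hold in the form with constants $\log(4\dn)$ and $3/2$ that the statement's first two terms reflect. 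The theorem applies to arbitrary (even adaptively chosen) gain sequences, since it is a deterministic statement about $\overline{\cP}$ evaluated along the realized $\Ht{t}$. Using $\norm{\Gt{t}}\leq\Ry\om$ and multiplying by $\Rx$ gives $\Rx\log(4\dn)/\lrX+\tfrac32\Rx\lrX\Ry^2\om^2\TT$.

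The main obstacle is checking that the hypotheses of that theorem hold in the exact form cited. We must verify the step-size condition $\lrX\Ry\om\leq1$, which follows from the chosen $\lrX=(\Ry\om)^{-1}\sqrt{2\log(4\dn)/(3\TT)}$ once $\TT\gtrsim\log\dn$. We must also match the real-symmetric and uniform-sphere convention, and confirm that the second-order term is stated with $\norm{\Gt{t}}^2$ (or a quantity bounded by it). If their bound is phrased with a local-norm quantity such as $\inner{\Gt{t}^2}{\overline{\cP}}$, we will bound that quantity by $\norm{\Gt{t}}^2$. Summing the three contributions then proves \Cref{eqn:matrix-side-bound}.
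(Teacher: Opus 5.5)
Your proposal is correct and follows the paper's proof essentially verbatim: the same three-term split into mean-response regret (via \cite[Theorem~2]{CDST19}, applied deterministically along the realized $\Ht{t}$), rank-one sampling error (\Cref{lem:matrix-rank-one-sampling}), and Lánczos error on $\GoodL$ (\Cref{cor:lanczos-payoff-error}), joined by a union bound. Your extra check of $\lrX\Ry\om\leq1$ is something the paper leaves implicit, and the regime where it fails is harmless: if $\TT<2\log(4\dn)/3$, each of the first two terms on the right already exceeds $\tfrac32\Rx\Ry\om\TT$, while trivially $\MatSide\leq2\Rx\Ry\om\TT$, so the bound holds anyway.
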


\begin{proof}
Recall from \Cref{eqn:realized-gap-decomposition} that
\begin{align}
\MatSide
&:=
\Rx\lambda_{\max}\left(
\sum_{t=1}^{\TT}\Gt{t}
\right)
-
\sum_{t=1}^{\TT}
\inner{\Gt{t}}{\Xtilde{t}}.
\end{align}
This is the difference between the payoff of the best fixed matrix action
in hindsight and the cumulative payoff of the matrix actions actually
generated by the algorithm. Inserting the mean response $\xbar{t}$ and exact sampled response
$\Xt{t}$ gives
\begin{align}
\MatSide
&=
\underbrace{
\Rx\lambda_{\max}\left(
\sum_{t=1}^{\TT}\Gt{t}
\right)
-
\sum_{t=1}^{\TT}
\inner{\Gt{t}}{\xbar{t}}
}_{\text{Mean-Response Regret}}+
\underbrace{
\sum_{t=1}^{\TT}
\inner{\Gt{t}}{\xbar{t}-\Xt{t}}
}_{\text{Rank-One Sampling Error}}
+
\underbrace{
\sum_{t=1}^{\TT}
\inner{\Gt{t}}{\Xt{t}-\Xtilde{t}}
}_{\text{Lánczos Approximation Error}}.
\label{eqn:matrix-side-ledger}
\end{align}

The first term is controlled by the average-projection regret theorem of
Carmon et al.~\cite[Theorem~2]{CDST19}.  This theorem plays the role of
the usual exponential-weights guarantee for the matrix player.  Since
$\xbar{t}=\Rx\overline{\cP}(\lrX\Ht{t})$ and
$\Ht{t}=\sum_{\tau<t}\Gt{\tau}$, it gives
\begin{align}
\Rx\lambda_{\max}\left(
\sum_{t=1}^{\TT}\Gt{t}
\right)
-
\sum_{t=1}^{\TT}
\inner{\Gt{t}}{\xbar{t}}\leq
\frac{\Rx\log(4\dn)}{\lrX}
+
\frac{3\Rx\lrX}{2}
\sum_{t=1}^{\TT}\norm{\Gt{t}}^2\leq
\frac{\Rx\log(4\dn)}{\lrX}
+
\frac{3\Rx\lrX\Ry^2\om^2\TT}{2},
\label{eqn:matrix-mean-regret-specialized}
\end{align}
where the final inequality uses
$\norm{\Gt{t}}\leq\Ry\om$.  This bound applies deterministically to
every realized gain sequence, so the adaptive manner in which the gains
were generated causes no additional error. The second term is exactly the rank-one sampling error controlled by
\Cref{lem:matrix-rank-one-sampling}.  Hence, except with probability
$\delta_X$,
\begin{align}
\sum_{t=1}^{\TT}
\inner{\Gt{t}}{\xbar{t}-\Xt{t}}
&\leq
2\Rx\Ry\om
\sqrt{
2\TT\log\frac{1}{\delta_X}
}.
\end{align}
Finally, on the Lánczos success event $\GoodL$,
\Cref{cor:lanczos-payoff-error} gives
\begin{align}
\sum_{t=1}^{\TT}
\abs{
\inner{\Gt{t}}{\Xt{t}-\Xtilde{t}}
}
&\leq
\Rx\Ry\om.
\label{eqn:matrix-lanczos-regret-bound}
\end{align}

Thus the three terms in \Cref{eqn:matrix-side-ledger} are controlled,
respectively, by the average-response regret bound, the rank-one sampling
lemma, and the Lánczos approximation guarantee.  Let $\mathcal{G}_X$ denote the event on which the sampling-fluctuation bound holds, so that
$\Prb[\mathcal{G}_X]\geq 1-\delta_X$, and recall (\Cref{lem:lanczos-projector}) that the Lánczos success event
$\GoodL$ satisfies $\Prb[\GoodL]\geq 1-\del_{\mathrm L}$.  By a union bound,
\begin{align}
\Prb[\mathcal{G}_X\cap\GoodL]
&\geq
1-\delta_X-\del_{\mathrm L}.
\end{align}
On $\mathcal{G}_X\cap\GoodL$, the deterministic matrix-learning bound, the
sampling-fluctuation bound, and the Lánczos approximation bound all hold
simultaneously.  Summing these three contributions proves
\Cref{eqn:matrix-side-bound}.
\end{proof}

\subsection{Exponential Weights with Stochastic Feedback}
\label{subsec:stochastic-hedge}

The constraint-player analysis requires one additional ingredient. The constraint-side update maintains a probability distribution over the
$\mc$ constraints.  Intuitively, this distribution records which
constraints currently appear most favorable according to the feedback
observed so far.  Starting from equal weights, the algorithm updates the
weight of constraint $j$ after round $t$ by multiplying it by
$\exp(-\lrY[\chat{t}]_j)$, where $\chat{t}$ is the feedback vector and
$\lrY>0$ is a learning-rate parameter.  Thus constraints receiving
smaller cumulative feedback retain relatively larger weight.  Normalizing
these weights produces the distribution $\phat{t}$ used on the next
round.

The basic guarantee of this exponential-weights update is a
best-coordinate comparison.  If $\bc_t\in\mathbb R^{\mc}$ denotes the
true payoff vector on round $t$, then the distribution $\phat{t}$ incurs
the weighted payoff
$\inner{\bc_t}{\phat{t}}=\sum_{j=1}^{\mc}[\phat{t}]_j[\bc_t]_j$.
Over many rounds, exponential weights ensures that the sum of these
weighted payoffs is not much larger than the cumulative payoff of the
best single constraint $j\in[\mc]$ chosen after seeing the entire
sequence.  The difference
\begin{align}
\sum_{t=1}^{\TT}
\inner{\bc_t}{\phat{t}}
-\min_{j\in[\mc]}
\sum_{t=1}^{\TT}
[\bc_t]_j
\end{align}
is the corresponding regret.

In our setting, however, the constraint update does not have access to
the exact vector $\bc_t$.  Instead, it receives a random approximation
$\chat{t}$ produced by the shared-coordinate estimator and clipped to
control rare large fluctuations.  The update is therefore performed
using $\chat{t}$, even though the quantity we ultimately wish to control
is the regret with respect to the true payoffs $\bc_t$.  The following
lemma shows that the usual exponential-weights guarantee is stable under
this replacement.  The additional error depends only on three basic
properties of the stochastic feedback: 1) its range, 2) its second moment, and 3) its bias relative to the true payoff vector.

\begin{lemma}[Exponential Weights with Clipped Stochastic Feedback]
\label{lem:stochastic-hedge}
Let $(\mathcal G_t)_{t=0}^{\TT}$ be a filtration.  Let
$\bc_t\in\mathbb R^{\mc}$ be the true payoff vector on round $t$, and let
$\phat{t}\in\Delta_{\mc}$ be the distribution produced by exponential
weights {initialized uniformly} from the preceding feedback vectors
$\chat{1},\ldots,\chat{t-1}$ with learning rate $\lrY>0$. {Assume that $\chat{t}$ is $\mathcal G_t$-measurable.}
Suppose that $\bc_t$ and $\phat{t}$ are
$\mathcal G_{t-1}$-measurable and that, for every $j\in[\mc]$, the feedback has bounded
\begin{enumerate}
    \item range: $\abs{[\chat{t}]_j}\leq B$
    \item second moment: $\E\left[ [\chat{t}]_j^2\,\middle|\,\mathcal G_{t-1}\right]\leq\sigma^2$
    \item bias: $\abs{\E\left[ [\chat{t}]_j\,\middle|\,\mathcal G_{t-1}\right]-[\bc_t]_j}\leq b$.
\end{enumerate}
Let $\Lambda_{\mathrm F}:=\log\left(4(\mc+1)/\delta_{\mathrm F}\right)$. If $\lrY B\leq1$, then, for every
$\delta_{\mathrm F}\in(0,1)$, with probability at least
$1-\delta_{\mathrm F}$,
\begin{align}
\sum_{t=1}^{\TT}
\inner{\bc_t}{\phat{t}}
-
\min_{j\in[\mc]}
\sum_{t=1}^{\TT}[\bc_t]_j
\leq
O\Bigg(
\frac{\log(2\mc)}{\lrY}
+
\lrY\sigma^2\TT
+
\sigma\sqrt{\TT\Lambda_{\mathrm F}}
+
B\Lambda_{\mathrm F}
+
b\TT
\Bigg).
\label{eqn:stochastic-hedge-bound}
\end{align}
\end{lemma}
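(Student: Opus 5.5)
We would prove the bound by splitting the regret into three pieces: a deterministic second-order exponential-weights regret on the realized feedback, a martingale term converting the played payoff from feedback to conditional means, and a second martingale-plus-bias term for the best coordinate. Concretely, write $\bar{\bc}_t:=\E[\chat{t}\mid\mathcal G_{t-1}]$ and decompose, for any fixed $j$,
\begin{align}
\sum_{t}\inner{\bc_t}{\phat{t}}-\sum_t[\bc_t]_j
&=\sum_t\inner{\bc_t-\bar{\bc}_t}{\phat{t}}
+\sum_t\inner{\bar{\bc}_t-\chat{t}}{\phat{t}}
+\Bigl(\sum_t\inner{\chat{t}}{\phat{t}}-\sum_t[\chat{t}]_j\Bigr)
+\sum_t\bigl([\chat{t}]_j-[\bar{\bc}_t]_j\bigr)
+\sum_t\bigl([\bar{\bc}_t]_j-[\bc_t]_j\bigr).
\end{align}
The first and last sums are at most $b\TT$ each by the bias hypothesis (using $\phat{t}\in\Delta_{\mc}$ for the first).

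For the third (deterministic) term we use the standard Hedge potential argument with losses $\chat{t}$: since $\lrY|[\chat{t}]_i|\leq\lrY B\leq1$, the inequality $e^{-x}\leq1-x+x^2$ for $x\geq-1$ gives
\begin{align}
\sum_t\inner{\chat{t}}{\phat{t}}-\min_j\sum_t[\chat{t}]_j\leq\frac{\log\mc}{\lrY}+\lrY\sum_t\sum_i[\phat{t}]_i[\chat{t}]_i^2 .
\end{align}
The quadratic term is then handled by concentration: $Y_t:=\sum_i[\phat{t}]_i[\chat{t}]_i^2$ lies in $[0,B^2]$ with conditional mean at most $\sigma^2$ and conditional second moment at most $B^2\sigma^2$, so Freedman (\Cref{lem:freedman}) gives $\sum_tY_t\leq\sigma^2\TT+O(B\sigma\sqrt{\TT\Lambda_{\mathrm F}}+B^2\Lambda_{\mathrm F})$. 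Multiplying by $\lrY$ and using $\lrY B\leq1$ yields $\lrY\sigma^2\TT+O(\sigma\sqrt{\TT\Lambda_{\mathrm F}}+B\Lambda_{\mathrm F})$, which is the form claimed.

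The second term is a martingale difference sequence (since $\phat{t}$ is $\mathcal G_{t-1}$-measurable) with increments bounded by $2B$ and conditional variance at most $\E[\inner{\chat{t}}{\phat{t}}^2\mid\mathcal G_{t-1}]\leq\sigma^2$ by Jensen; Freedman bounds it by $O(\sigma\sqrt{\TT\Lambda_{\mathrm F}}+B\Lambda_{\mathrm F})$. The fourth term must hold uniformly in $j$, so we apply Freedman to each of the $\mc$ coordinate martingales $[\chat{t}]_j-[\bar{\bc}_t]_j$ (increments $\leq2B$, variance $\leq\sigma^2\TT$) and union bound. Allocating failure probability $\delta_{\mathrm F}/(\mc+1)$ to each of these, plus the two earlier applications, accounts for the $\log(4(\mc+1)/\delta_{\mathrm F})$ in $\Lambda_{\mathrm F}$. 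Because the first four pieces hold for every $j$ simultaneously, taking $j$ to be the minimizer of $\sum_t[\bc_t]_j$ finishes the proof.

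The main obstacle is the variance-sensitive control of the Hedge quadratic term: a naive bound $Y_t\leq B^2$ would give $\lrY B^2\TT$ and destroy the rate. That is the reason for the Freedman step on $Y_t$ and for the need for $\lrY B\leq1$ both in the potential inequality and to absorb the $\lrY B\sigma\sqrt{\TT\Lambda_{\mathrm F}}$ and $\lrY B^2\Lambda_{\mathrm F}$ remainders. A secondary point to check is measurability: $\chat{t}$ must be $\mathcal G_t$-measurable and $\phat{t},\bc_t$ $\mathcal G_{t-1}$-measurable so that all three Freedman applications are valid.
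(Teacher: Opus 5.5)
Your proposal is correct and is essentially the paper's proof. Your five-term per-$j$ decomposition is the paper's split into observed-feedback regret, centered noise $\bxi_t=\chat{t}-\bmu_t$, and bias $\bbias_t=\bmu_t-\bc_t$, and you bound each piece the same way: second-order Hedge potential plus Freedman on $Y_t$, Freedman on $\inner{\bxi_t}{\phat{t}}$ and on each coordinate with a union bound, and a deterministic $2b\TT$ bias term.

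The only nit is bookkeeping. As you wrote it, $\mc+2$ Freedman events at level $\delta_{\mathrm F}/(\mc+1)$ sum to slightly more than $\delta_{\mathrm F}$. Instead, give each event level $\delta_{\mathrm F}/(4(\mc+1))$, so that each uses $x=\Lambda_{\mathrm F}$. Alternatively, as the paper does, give $\delta_{\mathrm F}/2$ to the $Y_t$ bound and $\delta_{\mathrm F}/(2(\mc+1))$ to each of the other $\mc+1$ events.
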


\begin{proof}
The proof separates the three effects introduced by stochastic clipped
feedback.  First, exponential weights controls regret relative to the
feedback $\chat{t}$ that it actually observes.  Second, concentration
controls the centered random difference between this feedback and its
conditional mean.  Third, clipping shifts that conditional mean slightly
away from the true payoff, producing a deterministic bias.

Define the conditional mean feedback
\begin{align}
\bmu_t
&:=
\E\left[
\chat{t}
\,\middle|\,
\mathcal G_{t-1}
\right],
\end{align}
the centered stochastic error
\begin{align}
\bxi_t
&:=
\chat{t}-\bmu_t,
\end{align}
and the bias
\begin{align}
\bbias_t
&:=
\bmu_t-\bc_t.
\end{align}
By construction, $ \norm{\bbias_t}_\infty \leq b$ and $\E\left[\bxi_t\,\middle|\,\mathcal G_{t-1}\right]=\bzero$. Hence,
\begin{align}
\bc_t
&=
\chat{t}-\bxi_t-\bbias_t.
\label{eqn:generic-feedback-decomposition}
\end{align}
Substituting this identity into the learner payoff and the best
fixed-coordinate comparator gives
\begin{align}
\sum_{t=1}^{\TT}
\inner{\bc_t}{\phat{t}}
-
\min_{j\in[\mc]}
\sum_{t=1}^{\TT}[\bc_t]_j
\leq&
\underbrace{
\left[
\sum_{t=1}^{\TT}
\inner{\chat{t}}{\phat{t}}
-
\min_{j\in[\mc]}
\sum_{t=1}^{\TT}[\chat{t}]_j
\right]
}_{\text{Regret for the Observed Feedback}}
+
\underbrace{
\left[
-
\sum_{t=1}^{\TT}
\inner{\bxi_t}{\phat{t}}
+
\max_{j\in[\mc]}
\sum_{t=1}^{\TT}[\bxi_t]_j
\right]
}_{\text{Centered Feedback Noise}} \nonumber \\
&+
\underbrace{
\left[
-
\sum_{t=1}^{\TT}
\inner{\bbias_t}{\phat{t}}
+
\max_{j\in[\mc]}
\sum_{t=1}^{\TT}[\bbias_t]_j
\right]
}_{\text{Clipping Bias}}.
\label{eqn:generic-regret-bridge}
\end{align}
We bound these three contributions separately.

\paragraph{Regret for the Observed Feedback.}
Since $\lrY B\leq1$ and $\abs{[\chat{t}]_j}\leq B$, we have
$\abs{\lrY[\chat{t}]_j}\leq1$ for every $t,j$. To derive the second-order Hedge bound, define the unnormalized weights
\begin{align}
w_{t,j}
&:=
\exp\left(
-\lrY
\sum_{\tau=1}^{t-1}
[\chat{\tau}]_j
\right)
\end{align}
and normalization term $W_t:=\sum_{j=1}^{\mc} w_{t,j}$, such that
\begin{align}
[\phat{t}]_j
&=
\frac{w_{t,j}}{W_t}.
\end{align}
After observing the feedback vector $\chat{t}$, each coordinate weight is
updated multiplicatively according to
\begin{align}
w_{t+1,j}
&=
w_{t,j}
\exp\left(
-\lrY{[\chat{t}]_j}
\right).
\end{align}
Summing these updated weights over all coordinates and normalizing by
$W_t$ gives
\begin{align}
\frac{{W_{t+1}}}{W_t}
&=
\sum_{j=1}^{\mc}
\frac{w_{t,j}}{W_t}
\exp\left(
-\lrY{[\chat{t}]_j}
\right)=
\sum_{j=1}^{\mc}
[\phat{t}]_j
\exp\left(
-\lrY[\chat{t}]_j
\right).
\end{align}
Since $\lrY B\leq 1$, we have
$\abs{\lrY[\chat{t}]_j}\leq 1$ for every $j$.  Applying
$e^{-x}\leq 1-x+x^2$ therefore gives
\begin{align}
\frac{W_{t+1}}{W_t}
&\leq
1
-
\lrY
\inner{\chat{t}}{\phat{t}}
+
\lrY^2
\sum_{j=1}^{\mc}
[\phat{t}]_j
[\chat{t}]_j^2=
1
-
\lrY
\inner{\chat{t}}{\phat{t}}
+
\lrY^2 Y_t,
\end{align}
where we define $Y_t:=\sum_{j=1}^{\mc}[\phat{t}]_j[\chat{t}]_j^2$.
Using $1+x\leq e^x$ gives
\begin{align}
\frac{W_{t+1}}{W_t}
&\leq
\exp\left(
-\lrY
\inner{\chat{t}}{\phat{t}}
+
\lrY^2 Y_t
\right).
\end{align}
Taking logarithms and summing over $t\in[\TT]$ yields
\begin{align}
\log\frac{W_{\TT+1}}{W_1}
&\leq
-\lrY
\sum_{t=1}^{\TT}
\inner{\chat{t}}{\phat{t}}
+
\lrY^2
\sum_{t=1}^{\TT}
Y_t.
\label{eqn:hedge-potential-upper}
\end{align}
On the other hand, for every $j\in[\mc]$,
\begin{align}
W_{\TT+1}
&\geq
w_{\TT+1,j}=
\exp\left(
-\lrY
\sum_{t=1}^{\TT}
[\chat{t}]_j
\right).
\end{align}
Since $W_1=\mc$, choosing the coordinate with minimum cumulative loss gives
\begin{align}
\log\frac{W_{\TT+1}}{W_1}
&\geq
-\lrY
\min_{j\in[\mc]}
\sum_{t=1}^{\TT}
[\chat{t}]_j
-
\log\mc.
\label{eqn:hedge-potential-lower}
\end{align}
Combining
\Cref{eqn:hedge-potential-upper,eqn:hedge-potential-lower}
and rearranging gives
\begin{align}
\sum_{t=1}^{\TT}
\inner{\chat{t}}{\phat{t}}
-
\min_{j\in[\mc]}
\sum_{t=1}^{\TT}
[\chat{t}]_j
\leq
\frac{\log(2\mc)}{\lrY}
+
\lrY
\sum_{t=1}^{\TT}
Y_t.
\label{eqn:generic-second-order-hedge}
\end{align}

By the bounded range and second moment assumptions, $0 \leq Y_t \leq B^2$ and $\E\left[ Y_t \,\middle|\, \mathcal G_{t-1} \right] \leq \sigma^2$, respectively.
Since $Y_t^2\leq B^2Y_t$, the centered increment
$Y_t-\E[Y_t\mid\mathcal G_{t-1}]$ has conditional variance at most
$B^2\sigma^2$.  Applying Freedman's inequality,
\Cref{lem:freedman}, gives, except with probability at most
$\delta_{\mathrm F}/2$,
\begin{align}
\sum_{t=1}^{\TT}Y_t
&\leq
\sigma^2\TT
+
C\left[
B\sigma\sqrt{\TT\Lambda_{\mathrm F}}
+
B^2\Lambda_{\mathrm F}
\right].
\label{eqn:generic-quadratic-process}
\end{align}
Because $\lrY B\leq1$, substituting this estimate into
\Cref{eqn:generic-second-order-hedge} yields
\begin{align}
\sum_{t=1}^{\TT}
\inner{\chat{t}}{\phat{t}}
-
\min_{j\in[\mc]}
\sum_{t=1}^{\TT}[\chat{t}]_j
&\leq
C\left[
\frac{\log(2\mc)}{\lrY}
+
\lrY\sigma^2\TT
+
\sigma\sqrt{\TT\Lambda_{\mathrm F}}
+
B\Lambda_{\mathrm F}
\right].
\label{eqn:generic-observed-regret}
\end{align}

\paragraph{Centered feedback noise.}
We next control the difference between the stochastic feedback and its
conditional mean.  Recall that
$\bxi_t=\chat{t}-\bmu_t$, where
$\bmu_t=\E[\chat{t}\mid\mathcal G_{t-1}]$.  By construction,
\begin{align}
\E\left[
\bxi_t
\mid
\mathcal G_{t-1}
\right]
&=
\bzero.
\label{eqn:generic-centered-feedback}
\end{align}
Moreover, clipping implies
$\abs{[\bxi_t]_j}\leq 2B$ for every $j\in[\mc]$.  Since subtracting the
conditional mean can only decrease the conditional second moment,
\begin{align}
\E\left[
[\bxi_t]_j^2
\mid
\mathcal G_{t-1}
\right]
&\leq
\sigma^2.
\label{eqn:generic-centered-second-moment}
\end{align}

We first control the fluctuation in the feedback averaged according to
$\phat{t}$. Since $\phat{t}$ is determined by the history before round
$t$, it is $\mathcal G_{t-1}$-measurable. Hence,
\begin{align}
\E\left[
-\inner{\bxi_t}{\phat{t}}
\mid
\mathcal G_{t-1}
\right]
&=
-\sum_{j=1}^{\mc}
[\phat{t}]_j
\E\left[
[\bxi_t]_j
\mid
\mathcal G_{t-1}
\right]
=0.
\end{align}
Thus
$(-\inner{\bxi_t}{\phat{t}})_{t=1}^{\TT}$ is a
martingale-difference sequence with respect to $(\mathcal G_t)$.
Because $\phat{t}$ is a probability distribution,
\begin{align}
\abs{\inner{\bxi_t}{\phat{t}}}
&\leq
\sum_{j=1}^{\mc}
[\phat{t}]_j
\abs{[\bxi_t]_j}
\leq
2B.
\end{align}
Similarly, convexity of $x\mapsto x^2$ gives
\begin{align}
\inner{\bxi_t}{\phat{t}}^2
&\leq
\sum_{j=1}^{\mc}
[\phat{t}]_j
[\bxi_t]_j^2.
\end{align}
Taking conditional expectations and using
\Cref{eqn:generic-centered-second-moment},
\begin{align}
\E\left[
\inner{\bxi_t}{\phat{t}}^2
\mid
\mathcal G_{t-1}
\right]
&\leq
\sum_{j=1}^{\mc}
[\phat{t}]_j
\E\left[
[\bxi_t]_j^2
\mid
\mathcal G_{t-1}
\right]\leq
\sigma^2.
\end{align}
Consequently, the sum of the conditional second moments satisfies
\begin{align}
\sum_{t=1}^{\TT}
\E\left[
\inner{\bxi_t}{\phat{t}}^2
\mid
\mathcal G_{t-1}
\right]
&\leq
\TT\sigma^2.
\end{align}

We also require concentration for the noise in each individual coordinate,
because the regret bound compares the weighted feedback under $\phat{t}$
with the best fixed coordinate $j\in[\mc]$ over all rounds.  Fix
$j\in[\mc]$.  Since
\begin{align}
[\bxi_t]_j
&=
[\chat{t}]_j
-\E\left[
[\chat{t}]_j
\mid
\mathcal G_{t-1}
\right],
\end{align}
we have
\begin{align}
\E\left[
[\bxi_t]_j
\mid
\mathcal G_{t-1}
\right]
&=
0.
\end{align}
Moreover, $[\bxi_t]_j$ is $\mathcal G_t$-measurable, since
$\chat{t}$ is $\mathcal G_t$-measurable.  Hence
$([\bxi_t]_j)_{t=1}^{\TT}$ is a martingale-difference sequence with
respect to $(\mathcal G_t)$. Clipping gives $\abs{[\chat{t}]_j}\leq B$, and therefore also $\abs{
\E\left[
[\chat{t}]_j
\mid
\mathcal G_{t-1}
\right]
}
\leq
B$.
Consequently,
\begin{align}
\abs{[\bxi_t]_j}
&\leq
2B.
\end{align}
Finally, by \Cref{eqn:generic-centered-second-moment},
\begin{align}
\sum_{t=1}^{\TT}
\E\left[
[\bxi_t]_j^2
\mid
\mathcal G_{t-1}
\right]
&\leq
\TT\sigma^2.
\end{align}

With these established bounds, we now set
\begin{align}
\eta
&:=
\frac{\delta_{\mathrm F}}{2(\mc+1)}.
\end{align}
Applying Freedman's inequality, \Cref{lem:freedman}, to
$(-\inner{\bxi_t}{\phat{t}})_{t=1}^{\TT}$ with increment bound $2B$,
sum of conditional second moments at most $\TT\sigma^2$, and failure
probability $\eta$, gives
\begin{align}
\Prb\left[
-\sum_{t=1}^{\TT}
\inner{\bxi_t}{\phat{t}}
>
C_0\left(
\sigma\sqrt{\TT\log\frac{1}{\eta}}
+
B\log\frac{1}{\eta}
\right)
\right]
&\leq
\eta,
\end{align}
for a universal constant $C_0$.  Likewise, for every fixed
$j\in[\mc]$, applying \Cref{lem:freedman} to
$([\bxi_t]_j)_{t=1}^{\TT}$ gives
\begin{align}
\Prb\left[
\sum_{t=1}^{\TT}
[\bxi_t]_j
>
C_0\left(
\sigma\sqrt{\TT\log\frac{1}{\eta}}
+
B\log\frac{1}{\eta}
\right)
\right]
&\leq
\eta.
\end{align}
There are $\mc+1$ such concentration events, one for the weighted
feedback and one for each coordinate.  A union bound therefore implies
that all of these inequalities hold simultaneously except with
probability at most
\begin{align}
(\mc+1)\eta
&=
\frac{\delta_{\mathrm F}}{2}.
\end{align}
On this event, taking the maximum over $j\in[\mc]$ in the coordinate
bounds and adding the weighted-feedback bound yields
\begin{align}
-\sum_{t=1}^{\TT}
\inner{\bxi_t}{\phat{t}}
+
\max_{j\in[\mc]}
\sum_{t=1}^{\TT}
[\bxi_t]_j
&\leq
C\left[
\sigma\sqrt{\TT\Lambda_{\mathrm F}}
+
B\Lambda_{\mathrm F}
\right],
\label{eqn:generic-feedback-noise}
\end{align}
where $C$ is a universal constant and
$\Lambda_{\mathrm F}$ upper-bounds
\begin{align}
\log\frac{1}{\eta}
&=
\log\frac{2(\mc+1)}{\delta_{\mathrm F}}.
\end{align}

This argument does not require the coordinates of $\bxi_t$ to be
independent.  Freedman's inequality is applied separately to each scalar
martingale-difference sequence, and the union bound is used only to make
the resulting bounds hold simultaneously.

\paragraph{Clipping bias.}
Finally, the bias caused by clipping requires no concentration.  Since
$\norm{\bbias_t}_\infty\leq b$ and $\phat{t}$ is a probability
distribution, we have that $[\bbias_t]_j \leq b$ and $-\inner{\bbias_t}{\phat{t}} \leq b$, for every $t$ and $j$.  Hence,
\begin{align}
-
\sum_{t=1}^{\TT}
\inner{\bbias_t}{\phat{t}}
+
\max_{j\in[\mc]}
\sum_{t=1}^{\TT}[\bbias_t]_j
&\leq
2b\TT.
\label{eqn:generic-feedback-bias-bound}
\end{align}
Combining
\Cref{eqn:generic-regret-bridge,eqn:generic-observed-regret,eqn:generic-feedback-noise,eqn:generic-feedback-bias-bound}
gives \Cref{eqn:stochastic-hedge-bound}.  The two probabilistic bounds
each fail with probability at most $\delta_{\mathrm F}/2$, so their
intersection has probability at least $1-\delta_{\mathrm F}$ by a union
bound.

\end{proof}

\subsection{The Constraint-Player Guarantee}
\label{subsec:constraint-player-guarantee}

We now bound the constraint-player contribution $\VecSide$ from
\Cref{eqn:realized-gap-decomposition}.  The main complication is that the
algorithm never works with the exact payoff vector $\ctilde{t}$ directly.
Instead, it maintains a distribution $\phat{t}$ over the constraints,
samples a single constraint $\jt{t}\sim\phat{t}$ to play, and updates
$\phat{t}$ using the clipped stochastic feedback $\chat{t}$ produced by
the shared-coordinate estimator.

Accordingly, the analysis separates into two key sources of error.  The first
comes from replacing the mixed action $\phat{t}$ by the sampled
constraint $\jt{t}$ on each round.  The second comes from replacing the
true payoff vector $\ctilde{t}$ by the stochastic feedback $\chat{t}$ in
the exponential-weights update.  We control these two effects separately
and then combine them to obtain the final constraint-side bound.

We first isolate the error introduced by sampling a single constraint.

\begin{lemma}[Pure-Constraint Sampling]
\label{lem:pure-constraint-sampling}
For every $\delta_{\mathrm S}\in(0,1)$, with probability at least
$1-\delta_{\mathrm S}$,
\begin{align}
\Ry\sum_{t=1}^{\TT}
\left(
[\ctilde{t}]_{\jt{t}}
-
\inner{\ctilde{t}}{\phat{t}}
\right)
&\leq
2\Rx\Ry\om
\sqrt{
2\TT\log\frac{1}{\delta_{\mathrm S}}
}.
\label{eqn:pure-constraint-sampling-bound}
\end{align}
\end{lemma}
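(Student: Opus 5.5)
This is an Azuma--Hoeffding argument, mirroring \Cref{lem:matrix-rank-one-sampling} with the roles of the two samples exchanged. Define the scalar increments
\begin{align}
\xi_t^{\mathrm S}
&:=
\Ry\left(
[\ctilde{t}]_{\jt{t}}
-
\inner{\ctilde{t}}{\phat{t}}
\right),
\end{align}
so the left-hand side of \Cref{eqn:pure-constraint-sampling-bound} is $\sum_t\xi_t^{\mathrm S}$. The increment $\xi_t^{\mathrm S}$ is $\Fil{t-1/2}$-measurable, hence $\Fil{t}$-measurable, so the sequence is adapted to $(\Fil{t})_{t=0}^{\TT}$.

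The first step is the conditional centering. Conditional on $\Fil{t-1}$, the distribution $\phat{t}$ is fixed. The constraint $\jt{t}\sim\phat{t}$ and the direction $\ut{t}$ are drawn independently, and $\ctilde{t}$ depends only on $\Fil{t-1}$ and $\ut{t}$ (through $\Ht{t}$, $k_t$, and the Lánczos output), not on $\jt{t}$. Condition on $\sigma(\Fil{t-1},\ut{t})$. Then $\ctilde{t}$ is fixed, while $\jt{t}$ is still distributed as $\phat{t}$ by this independence. Hence
\begin{align}
\E\bigl[[\ctilde{t}]_{\jt{t}}\bigm|\Fil{t-1},\ut{t}\bigr]
&=
\inner{\ctilde{t}}{\phat{t}},
\end{align}
so $\E[\xi_t^{\mathrm S}\mid\Fil{t-1},\ut{t}]=0$. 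The tower property then gives $\E[\xi_t^{\mathrm S}\mid\Fil{t-1}]=0$.

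The second step is the range bound. We have $\norm{\ctilde{t}}_\infty\leq\Rx\om$, since $\Xtilde{t}\in\Rx\cS_{\dn}$ and $\norm{\A{j}}\leq\om$. Both $[\ctilde{t}]_{\jt{t}}$ and the average $\inner{\ctilde{t}}{\phat{t}}$ therefore lie in $[-\Rx\om,\Rx\om]$, and so $\abs{\xi_t^{\mathrm S}}\leq2\Rx\Ry\om$.

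Applying \Cref{lem:azuma-hoeffding} with $c=2\Rx\Ry\om$ and $x=\log(1/\delta_{\mathrm S})$ yields the claim.

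The only delicate point is the independence structure. One must check that the conditioning order is valid: $\ctilde{t}$ is fixed before $\jt{t}$ is revealed, because the current gain $\Gt{t}$ enters only $\Ht{t+1}$ and never $\Ht{t}$. This is exactly the one-round lag built into \Cref{alg:sampled-rank-one}, and it is what makes the sampling error unbiased.
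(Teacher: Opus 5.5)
Your proof is correct and follows the paper's argument essentially step for step. The paper also conditions on $\sigma(\Fil{t-1},\ut{t})$, so that $\ctilde{t}$ is fixed while $\jt{t}$ remains a fresh draw from $\phat{t}$. It then uses the tower property, bounds the increments by $2\Rx\Ry\om$, and applies Azuma--Hoeffding with $x=\log(1/\delta_{\mathrm S})$.
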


\begin{proof}
Define
\begin{align}
\xi_t^Y
&:=
\Ry\left(
[\ctilde{t}]_{\jt{t}}
-
\inner{\ctilde{t}}{\phat{t}}
\right).
\end{align}
Conditional on the pre-round history $\Fil{t-1}$ and the random direction
$\ut{t}$, the computed action $\Xtilde{t}$ and hence its payoff vector
$\ctilde{t}$ are fixed.  Because $\jt{t}$ and $\ut{t}$ are independent
conditional on $\Fil{t-1}$, the constraint $\jt{t}$ remains a fresh draw
from $\phat{t}$.  Therefore
\begin{align}
\E\left[
[\ctilde{t}]_{\jt{t}}
\,\middle|\,
\Fil{t-1},\ut{t}
\right]
&=
\inner{\ctilde{t}}{\phat{t}},
\end{align}
and hence
\begin{align}
\E\left[
\xi_t^Y
\,\middle|\,
\Fil{t-1},\ut{t}
\right]
&=
0.
\end{align}
The tower property gives
$\E[\xi_t^Y\mid\Fil{t-1}]=0$.  Thus sampling a single constraint changes
the realized payoff on an individual round but introduces no systematic
bias.

For every $j\in[\mc]$, since
$\norm{\A{j}}\leq\om$ and
$\Xtilde{t}\in\Rx\cS_{\dn}$,
\begin{align}
\abs{[\ctilde{t}]_j}
&=
\abs{\inner{\A{j}}{\Xtilde{t}}}
\leq
\Rx\om.
\end{align}
Because $\phat{t}$ is a probability
distribution, the same bound also holds for
$\abs{\inner{\ctilde{t}}{\phat{t}}}$.  Consequently,
\begin{align}
\abs{\xi_t^Y}
&\leq
2\Rx\Ry\om.
\end{align}
Together with the conditional-centering relation established above, this
shows that $(\xi_t^Y)_{t=1}^{\TT}$ is a martingale-difference sequence
with increments bounded in magnitude by $2\Rx\Ry\om$.  Applying
Azuma--Hoeffding, \Cref{lem:azuma-hoeffding}, therefore gives, for every
$\delta_S\in(0,1)$,
\begin{align}
\Prb\left[
\sum_{t=1}^{\TT}
\xi_t^Y
>
2\Rx\Ry\om
\sqrt{
2\TT\log\frac{1}{\delta_S}
}
\right]
&\leq
\delta_S.
\end{align}
Equivalently, with probability at least $1-\delta_S$,
\begin{align}
\sum_{t=1}^{\TT}
\xi_t^Y
&\leq
2\Rx\Ry\om
\sqrt{
2\TT\log\frac{1}{\delta_S}
},
\end{align}
which proves the claim.
\end{proof}

The preceding results now control both discrepancies introduced by the
constraint-side implementation.  The stochastic exponential-weights
bound controls the use of the clipped estimator $\chat{t}$ in place of
the true payoff vector $\ctilde{t}$, while the pure-constraint
concentration bound controls the additional fluctuation from sampling
$\jt{t}\sim\phat{t}$ rather than using the full distribution
$\phat{t}$.  It remains to substitute the estimator bounds and choose the
clipping threshold and learning rate so that the resulting error terms
are balanced.  This yields the following constraint-side guarantee.

\begin{theorem}[Constraint-Side Guarantee]
\label{thm:constraint-side-guarantee}
As in \Cref{alg:sampled-rank-one}, let
$\Lam:=\log\left(16\mc\dn/\del\right)$ and
$\Bloc:=\Theta\left(\Rx\om\sqrt{\TT/\Lam}\right)$, and choose
$\lrY:=\Theta\left(\min\left\{1/\Bloc,\sqrt{\log(2\mc)/(\Rx^2\om^2\TT)}\right\}\right)$,
with the universal constant sufficiently small such that
$\lrY\Bloc\leq1$. Then, with probability at least $1-\del/3$,
\begin{align}
\VecSide
&\leq
O\left(
\Rx\Ry\om\sqrt{\TT\Lam}
\right).
\label{eqn:constraint-side-bound}
\end{align}
\end{theorem}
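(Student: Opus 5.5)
We split $\VecSide$ into a pure-constraint sampling term and a mixed-action regret term. Adding and subtracting $\Ry\sum_t\inner{\ctilde{t}}{\phat{t}}$ and using $\zt{t}=\Ry\be_{\jt{t}}$ gives
\begin{align}
\VecSide
&=
\Ry\sum_{t=1}^{\TT}\left([\ctilde{t}]_{\jt{t}}-\inner{\ctilde{t}}{\phat{t}}\right)
+
\Ry\left[\sum_{t=1}^{\TT}\inner{\ctilde{t}}{\phat{t}}-\min_{j\in[\mc]}\sum_{t=1}^{\TT}[\ctilde{t}]_j\right].
\end{align}
The first term is handled by \Cref{lem:pure-constraint-sampling} with $\delta_{\mathrm S}=\del/6$. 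This contributes $O(\Rx\Ry\om\sqrt{\TT\log(1/\del)})\leq O(\Rx\Ry\om\sqrt{\TT\Lam})$.

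For the second term we apply \Cref{lem:stochastic-hedge} with $\bc_t=\ctilde{t}$ and $\delta_{\mathrm F}=\del/6$. The filtration needs care. $\ctilde{t}$ is only fixed after $\ut{t}$ is drawn, while $\phat{t}$ is fixed at $\Fil{t-1}$. We therefore use the shifted filtration $\mathcal G_{t-1}:=\Fil{t-1/2}=\sigma(\Fil{t-1},\jt{t},\ut{t})$ and $\mathcal G_t:=\Fil{t+1/2}$. Under this choice $\ctilde{t}$ and $\phat{t}$ are $\mathcal G_{t-1}$-measurable. Also $\chat{t}$ is $\Fil{t}\subseteq\mathcal G_t$-measurable, and the only randomness remaining in $\chat{t}$ given $\mathcal G_{t-1}$ is $\isamp{t}$. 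The three hypotheses are checked as follows.
\begin{itemize}
\item Range $\Bloc$ holds by clipping.
\item The second moment is at most $\sigma^2=\Rx^2\om^2$ by \Cref{lem:clipping-properties} applied with $\bpsi=\psitilde{t}$.
\item The bias relative to $\inner{\A{j}}{\Xtilde{t}}=[\ctilde{t}]_j$ is at most $b=\Rx^2\om^2/\Bloc$, again by \Cref{lem:clipping-properties}.
\end{itemize}
Importantly, the estimator is unbiased for the payoff of the \emph{computed} action $\Xtilde{t}$, so no Lánczos error enters here. The condition $\lrY\Bloc\leq1$ is ensured by the constant choice.

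Substituting into \Cref{eqn:stochastic-hedge-bound}, with $\Lambda_{\mathrm F}=\log(24(\mc+1)/\del)=O(\Lam)$, gives regret
\begin{align}
O\left(\frac{\log(2\mc)}{\lrY}+\lrY\Rx^2\om^2\TT+\Rx\om\sqrt{\TT\Lam}+\Bloc\Lam+\frac{\Rx^2\om^2\TT}{\Bloc}\right).
\end{align}
We bound each term with $\Bloc=\Theta(\Rx\om\sqrt{\TT/\Lam})$.
\begin{itemize}
\item $\Bloc\Lam$ and $\Rx^2\om^2\TT/\Bloc$ are both $O(\Rx\om\sqrt{\TT\Lam})$.
\item $\lrY\Rx^2\om^2\TT\leq O(\Rx\om\sqrt{\TT\log(2\mc)})$ by the second branch of the min defining $\lrY$.
\item For $\log(2\mc)/\lrY$, the min gives $1/\lrY=O(\max\{\Bloc,\Rx\om\sqrt{\TT/\log(2\mc)}\})$. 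Multiplying by $\log(2\mc)\leq\Lam$ yields at most $O(\Rx\om\sqrt{\TT}\,\Lam/\sqrt{\Lam})+O(\Rx\om\sqrt{\TT\log(2\mc)})=O(\Rx\om\sqrt{\TT\Lam})$.
\end{itemize}
Multiplying by $\Ry$ gives $O(\Rx\Ry\om\sqrt{\TT\Lam})$.

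A union bound over the two failure events gives total failure probability $\del/6+\del/6=\del/3$, which completes the argument.

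The main obstacle is the filtration bookkeeping for applying \Cref{lem:stochastic-hedge}. One must verify that the half-step shift is legitimate: the shifted sequence must still be a filtration, $\phat{t}$ must be measurable at the shifted time, and the conditional moments must be taken only over $\isamp{t}$. If this caused trouble, the fallback is to restate the lemma with $\bc_t$ measurable one half-step later, since its proof only conditions on $\mathcal G_{t-1}$ through these properties.
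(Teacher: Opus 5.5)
Your proposal is correct and matches the paper's proof essentially step for step. Both use the same split into the pure-action sampling term, bounded by \Cref{lem:pure-constraint-sampling}, and the mixed-action regret, bounded by \Cref{lem:stochastic-hedge}, each with failure probability $\del/6$; both use the same shifted filtration $\mathcal G_{t-1}=\Fil{t-1/2}$, the same clipping-lemma hypotheses, and the same parameter balancing. The only bookkeeping detail is that the last element should be $\mathcal G_{\TT}:=\Fil{\TT}$, as the paper sets it, because $\Fil{\TT+1/2}$ would refer to draws the algorithm never makes.
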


\begin{proof}
Recall that
\begin{align}
\VecSide
&:=
\Ry\left[
\sum_{t=1}^{\TT}[\ctilde{t}]_{\jt{t}}
-
\min_{j\in[\mc]}
\sum_{t=1}^{\TT}[\ctilde{t}]_j
\right].
\end{align}
The first term is the payoff of the pure constraints actually sampled by
the algorithm, while exponential weights controls the payoff of its mixed
actions $\phat{t}$.  Adding and subtracting the latter gives
\begin{align}
\VecSide
&=
\underbrace{
\Ry\sum_{t=1}^{\TT}
\left(
[\ctilde{t}]_{\jt{t}}
-
\inner{\ctilde{t}}{\phat{t}}
\right)
}_{\text{Pure-Action Sampling Error}}
+
\underbrace{
\Ry\left[
\sum_{t=1}^{\TT}
\inner{\ctilde{t}}{\phat{t}}
-
\min_{j\in[\mc]}
\sum_{t=1}^{\TT}[\ctilde{t}]_j
\right]
}_{\text{Mixed-Action Regret}}.
\label{eqn:constraint-side-ledger}
\end{align}
The first contribution is controlled by
\Cref{lem:pure-constraint-sampling}.  It remains to control the
mixed-action regret when the learner sees the stochastic feedback
$\chat{t}$ rather than the exact payoff vector $\ctilde{t}$.

Conditional on the half-step history $\Fil{t-1/2}$, the sampled
constraint $\jt{t}$ and random vector $\ut{t}$ have already been
revealed.  Hence the computed state $\psitilde{t}$ and the true payoff
vector $\ctilde{t}$ are fixed, and the only remaining randomness in
$\chat{t}$ comes from the coordinate sample $\isamp{t}$.  We may
therefore apply
\Cref{lem:energy-estimator-moments,lem:clipping-properties}
conditionally on $\Fil{t-1/2}$. Specifically, taking
$\bA=\A{j}$,
$\bpsi=\psitilde{t}$,
$\bX=\Xtilde{t}$, and clipping threshold $B=\Bloc$, these results give,
for every $j\in[\mc]$,
\begin{align}
\abs{[\chat{t}]_j}
&\leq
\Bloc,
\label{eqn:constraint-feedback-range}\\
\E\left[
[\chat{t}]_j^2
\,\middle|\,
\Fil{t-1/2}
\right]
&\leq
\Rx^2\om^2,
\label{eqn:constraint-feedback-second-moment}\\
\abs{\E\left[
[\chat{t}]_j
\,\middle|\,
\Fil{t-1/2}
\right]
-
[\ctilde{t}]_j}
&\leq
\frac{\Rx^2\om^2}{\Bloc}.
\label{eqn:constraint-feedback-bias}
\end{align}
The first bound is the worst-case range imposed by clipping.  The second
is the sharper second-moment control inherited from the shared-coordinate
estimator, while the third quantifies the bias introduced by clipping.

Set $\sigma:= \Rx\om$ and $b:=\sigma^2/\Bloc$.
To match the notation of \Cref{lem:stochastic-hedge}, use the shifted
filtration $\mathcal G_t:=\Fil{t+1/2}$ for $0\leq t<\TT$, and $\mathcal G_{\TT}:=\Fil{\TT}$. The feedback $\chat{t}$ is $\mathcal G_t$-measurable, and $\ctilde{t}$ and $\phat{t}$ are $\mathcal G_{t-1}$-measurable.
Then
$\mathcal G_{t-1}=\Fil{t-1/2}$, so
\Cref{eqn:constraint-feedback-range},
\Cref{eqn:constraint-feedback-second-moment}, and 
\Cref{eqn:constraint-feedback-bias}
are exactly the hypotheses of \Cref{lem:stochastic-hedge} with $\bc_t=\ctilde{t}$ and $B=\Bloc$. Applying \Cref{lem:stochastic-hedge} with
$\delta_{\mathrm F}=\del/6$ therefore gives, except with probability at
most $\del/6$,
\begin{align}
\sum_{t=1}^{\TT}
\inner{\ctilde{t}}{\phat{t}}
-
\min_{j\in[\mc]}
\sum_{t=1}^{\TT}[\ctilde{t}]_j
&\leq
C\left[
\frac{\log(2\mc)}{\lrY}
+
\lrY\sigma^2\TT
+
\sigma\sqrt{\TT\Lam}
+
\Bloc\Lam
+
\frac{\sigma^2\TT}{\Bloc}
\right],
\label{eqn:mixed-constraint-regret-bound}
\end{align}
where $\Lambda_{\mathrm F}=\log(24(\mc+1)/\del)=O(\Lam)$.

It remains to substitute the parameter choices.  Since
\begin{align}
\Bloc
&=
\Theta\left(
\sigma\sqrt{\frac{\TT}{\Lam}}
\right),
\end{align}
the range and clipping-bias contributions balance, as
\begin{align}
\Bloc\Lam
+
\frac{\sigma^2\TT}{\Bloc}
&=
O\left(
\sigma\sqrt{\TT\Lam}
\right).
\label{eqn:constraint-clipping-balance}
\end{align}
For the learning-rate terms, the choice of $\lrY$ implies
\begin{align}
\frac{\log(2\mc)}{\lrY}
&=
O\left(
\Bloc\log(2\mc)
+
\sigma\sqrt{\TT\log(2\mc)}
\right)=
O\left(
\sigma\sqrt{\TT\Lam}
\right),
\end{align}
where we used $\log(2\mc)=O(\Lam)$ and the definition of $\Bloc$.
Likewise,
\begin{align}
\lrY\sigma^2\TT
&\leq
O\left(
\min\left\{
\frac{\sigma^2\TT}{\Bloc},
\sigma\sqrt{\TT\log(2\mc)}
\right\}
\right)=
O\left(
\sigma\sqrt{\TT\Lam}
\right).
\end{align}
Consequently,
\begin{align}
\sum_{t=1}^{\TT}
\inner{\ctilde{t}}{\phat{t}}
-
\min_{j\in[\mc]}
\sum_{t=1}^{\TT}[\ctilde{t}]_j\leq
O\left(
\Rx\om\sqrt{\TT\Lam}
\right).
\label{eqn:mixed-constraint-regret-specialized}
\end{align}

Finally, applying \Cref{lem:pure-constraint-sampling} with
$\delta_{\mathrm S}=\del/6$ gives, except with probability at most
$\del/6$,
\begin{align}
\Ry\sum_{t=1}^{\TT}
\left(
[\ctilde{t}]_{\jt{t}}
-
\inner{\ctilde{t}}{\phat{t}}
\right)
&\leq
O\left(
\Rx\Ry\om\sqrt{\TT\Lam}
\right).
\end{align}
The mixed-action regret has the same order after multiplication by
$\Ry$.  A union bound over the two events therefore shows that both terms
in \Cref{eqn:constraint-side-ledger} satisfy the desired bounds
simultaneously with probability at least $1-\del/3$.  Hence, as claimed,
\begin{align}
\VecSide
&\leq
O\left(
\Rx\Ry\om\sqrt{\TT\Lam}
\right).
\end{align}
\end{proof}

\subsection{Proof of the Main Theorem}
\label{subsec:main-theorem-proof}

We now combine the matrix- and constraint-side guarantees to prove the
main convergence and runtime bounds. The proof has three steps. First,
we substitute the two regret bounds into the realized gap decomposition
to obtain the convergence rate after $\TT$ rounds. Second, we choose
$\TT$ so that this gap is at most $\eps$. Finally, having fixed the
required number of rounds, we account for the arithmetic cost of
generating the corresponding trajectory, including the stochastic
feedback computation and the Lánczos approximation used on the matrix
side.

\begin{proof}[Proof of \Cref{thm:main}]
We first combine the matrix- and constraint-side guarantees in the
gap decomposition.  Apply
\Cref{thm:matrix-side-guarantee} with
$\del_X=\del/3$ and $\del_{\mathrm L}=\del/3$.  This theorem already
incorporates both the rank-one sampling fluctuation and the Lánczos
approximation error, and therefore gives the required matrix-side bound
except with probability at most $2\del/3$.  Separately,
\Cref{thm:constraint-side-guarantee} gives the constraint-side bound
except with probability at most $\del/3$.  A union bound therefore shows
that both guarantees hold simultaneously with probability at least
$1-\del$.  We condition on this event for the remainder of the accuracy
analysis.

For the matrix side, \Cref{thm:matrix-side-guarantee} gives
\begin{align}
\MatSide
&\leq
\frac{\Rx\log(4\dn)}{\lrX}
+
\frac{3\Rx\lrX\Ry^2\om^2\TT}{2}
+
2\Rx\Ry\om
\sqrt{
2\TT\log\frac{3}{\del}
}
+
\Rx\Ry\om.
\label{eqn:main-proof-matrix-bound}
\end{align}
With the learning rate
\begin{align}
\lrX
&=
\frac{1}{\Ry\om}\sqrt{
\frac{2\log(4\dn)}
{3\TT}
},
\end{align}
the first two terms balance, and hence
\begin{align}
\MatSide
&\leq
O\left(
\Rx\Ry\om
\sqrt{
\TT
\log\frac{4\dn}{\del}
}
+
\Rx\Ry\om
\right).
\label{eqn:main-proof-matrix-simplified}
\end{align}
The first term contains the matrix-learning and rank-one sampling
contributions, while the final additive term accounts for the cumulative
Lánczos approximation error.
On the constraint side,
\Cref{thm:constraint-side-guarantee} gives
\begin{align}
\VecSide
&\leq
O\left(
\Rx\Ry\om
\sqrt{
\TT\Lam
}
\right),
\label{eqn:main-proof-vector-bound}
\end{align}
where
$\Lam=\log(16\mc\dn/\del)$.  Since $\Lam$ also dominates the logarithmic
factor in \Cref{eqn:main-proof-matrix-simplified}, the two bounds combine
to give
\begin{align}
\MatSide+\VecSide
&=
O\left(
\Rx\Ry\om
\sqrt{
\TT\Lam
}
+
\Rx\Ry\om
\right).
\end{align}

By the realized-gap decomposition,
\Cref{eqn:realized-gap-decomposition}, the output gap is obtained by
dividing this cumulative error by the number of rounds.  Furthermore, since $T \geq 1$ and $\Lambda = \log(16mn/\delta) > 1$, we have that $1/T \leq \sqrt{\Lambda/T}$. Therefore,
\begin{align}
\Gap(\Xout,\zout)
&\leq
C
\Rx\Ry\om
\left(
\sqrt{
\frac{\Lam}{\TT}
}
+
\frac{1}{\TT}
\right) \leq
C'
\Rx\Ry\om
\sqrt{
\frac{\Lam}{\TT}
},
\label{eqn:main-gap-expanded}
\end{align}
for universal constants $C,C'$.  This proves \Cref{eqn:main-gap}. It remains to choose the number of rounds.  To ensure
$\Gap(\Xout,\zout)\leq\eps$, it is sufficient that
\begin{align}
\TT
&\geq
(C')^2
\left(
\frac{\Rx\Ry\om}{\eps}
\right)^2
\Lam=
(C')^2
\gam^2
\Lam.
\label{eqn:round-bound-derivation}
\end{align}
Thus, $\TT=\widetilde O(\gam^2)$ rounds suffice.

\paragraph{Runtime.}
We now bound the arithmetic cost of generating the $\TT$ rounds under
the sparse-access model of \Cref{def:sparse-access}.  On each round, the
algorithm must compute the stochastic feedback used by the constraint
update and apply Lánczos to construct the matrix-side rank-one response.
We will now bound each of these costs.

\vspace{0.1in}
\noindent \emph{Feedback and state-vector operations.}
On round $t$, Lánczos produces the practical state vector
$\psitilde{t}\in\R^{\dn}$ explicitly.  The corresponding rank-one matrix
$\rtilde{t}=\psitilde{t}\psitilde{t}^{\top}$ need not be materialized
and is represented implicitly by $\psitilde{t}$. 

As described in \Cref{sec:squared-magnitude-sampling}, once
$\psitilde{t}$ has been constructed explicitly, we can sample a
coordinate $\isamp{t}$ in $O(\dn)$ time according to $\Prb[
\isamp{t}=i
\mid
\psitilde{t}
]=
|[\psitilde{t}]_i|^2.$
Since $\psitilde{t}$ is stored explicitly, we also retain $O(1)$ query
access to each of its coordinates. For a fixed constraint $j$, the
sampled entry required by the shared-coordinate estimator is
\begin{align}
(\A{j}\psitilde{t})_{\isamp{t}}
&=
\sum_{\ell:\,[\A{j}]_{\isamp{t},\ell}\neq0}
[\A{j}]_{\isamp{t},\ell}
[\psitilde{t}]_{\ell}.
\end{align}
Since each row of $\A{j}$ has at most $\sr$ nonzero entries, this
quantity can be evaluated using $O(\sr)$ calls to the sparse-access
oracles.  Reusing the same sampled coordinate $\isamp{t}$ for all
$\mc$ constraints therefore computes the feedback vector $\chat{t}$ in
$O(\mc\sr)$ time.

Updating and normalizing the $\mc$ exponential weights and sampling
$\jt{t}\sim\phat{t}$ requires $O(\mc)$ additional work, while generating
and normalizing the fresh unit vectors costs $O(\dn)$.  As described in
\Cref{sec:squared-magnitude-sampling}, sampling $\isamp{t}$ from the
squared-magnitude distribution of $\psitilde{t}$ also costs $O(\dn)$
time.  Since $\sr\geq1$, the total work outside Lánczos is therefore
$O(\dn+\mc\sr)$ per round.

\vspace{0.1in}
\noindent \emph{Hamiltonian-vector products.}
At the beginning of round $t$, the accumulated Hamiltonian is
\begin{align}
\Ht{t}
&=
\Ry
\sum_{\tau=1}^{t-1}
\A{\jt{\tau}}.
\label{eqn:hamiltonian-sampled-sum}
\end{align}
Grouping repeated samples shows that at most
${\min\{t-1,\mc\}}$ distinct sparse constraint matrices appear in this
sum.  Computing a single Hamiltonian-vector product $\Ht{t}\bv$ by
applying these matrices separately therefore takes $O\left(\dn\sr\min\{t,\mc\}\right)$ time.
Alternatively, the accumulated Hamiltonian may be maintained explicitly
as a dense matrix and applied in $O(\dn^2)$ time.  Taking the cheaper of
these two implementations, one multiplication by $\Ht{t}$ results in an overall runtime of
\begin{align}
O\left(
\dn\sr
\min\left\{
t,
\mc,
\frac{\dn}{\sr}
\right\}
\right).
\label{eqn:hamiltonian-action-cost}
\end{align}
The three entries in the minimum correspond respectively to the number
of preceding rounds, the total number of distinct constraints, and the
point at which dense matrix-vector multiplication becomes cheaper than
summing sparse products.
The dense alternative need only be initialized at the first round
$t$ for which $\min\{t-1,\mc\}\geq\dn/\sr$.
At that point, forming the sum from its distinct sampled matrices costs
$O(\dn^2)$, including initialization. Later updates cost $O(\dn\sr)$
each.  Before this crossover we use the sparse representation.
The one-time $O(\dn^2)$ cost and the subsequent update costs are absorbed
by \Cref{eqn:sampled-runtime}, since the dense branch can be selected
only when $\TT,\mc\geq\dn/\sr$, and its matrix--vector products already
cost $O(\dn^2)$ each.

\vspace{0.1in}
\noindent \emph{Lánczos cost.}
By \Cref{lem:lanczos-projector}, the Lánczos degree on round $t$ is
determined by the size of the accumulated exponential argument and the
required approximation accuracy.  Under the chosen matrix learning rate,
$\lrX\Ry\om\TT=\widetilde O(\sqrt{\TT})$, and the degree bound from that
lemma therefore gives
\begin{align}
k_t
&=
\widetilde O\left(
\min\left\{
\dn,
\TT^{1/4}
\right\}
\right)
\end{align}
uniformly over $t\in[\TT]$.  Summing over all rounds,
\begin{align}
\sum_{t=1}^{\TT} k_t
&=
\widetilde O\left(
\TT^{5/4}
\right).
\label{eqn:lanczos-degree}
\end{align}
The exponent $1/4$ arises because the norm of the accumulated
exponential argument grows as $\widetilde O(\sqrt{\TT})$, while the
Lánczos degree scales as the square root of this norm.

On round $t$, a $k_t$-step exact-arithmetic Lánczos computation requires
$k_t$ Hamiltonian-vector products, together with
$O(k_t\dn+k_t^2)$ additional arithmetic operations for the Krylov basis
and the tridiagonal compression.  By
\Cref{eqn:hamiltonian-action-cost}, its cost is therefore
\begin{align}
O\left(
k_t
\dn\sr
\min\left\{
t,
\mc,
\frac{\dn}{\sr}
\right\}
+
k_t\dn
+
k_t^2
\right).
\label{eqn:lanczos-one-round-cost}
\end{align}
Moreover, \Cref{lem:lanczos-projector} caps the Lánczos degree by the
ambient dimension, so $k_t\leq\dn$ for every $t$.  Hence,
\begin{align}
\sum_{t=1}^{\TT}
k_t^2
&\leq
\dn
\sum_{t=1}^{\TT}
k_t.
\end{align}
Summing over all rounds and leveraging
\Cref{eqn:lanczos-degree}, gives an overall runtime of
\begin{align}
\widetilde O\left(
\dn\sr
\min\left\{
\TT,
\mc,
\frac{\dn}{\sr}
\right\}
\TT^{5/4}
+
(\dn+\mc\sr)\TT
\right).
\label{eqn:sampled-runtime}
\end{align}
Here the first term is the total Lánczos cost, dominated by the
Hamiltonian-vector products, while the second is the accumulated
feedback and state-vector cost. Finally, substituting the round complexity
$\TT=\widetilde O(\gam^2)$ derived above yields
\begin{align}
\widetilde O\left(
\dn\sr
\min\left\{
\gam^2,
\mc,
\frac{\dn}{\sr}
\right\}
\gam^{5/2}
+
(\dn+\mc\sr)\gam^2
\right),
\label{eqn:main-runtime}
\end{align}
which is the runtime claimed in
\Cref{eqn:main-theorem-runtime}.
Note that the primal output $\Xout$ is represented by the $\TT$
rank-one factors generated over the trajectory, while the dual output
$\zout$ is represented by the empirical distribution of the sampled
constraints.  These are precisely the output representations claimed in
the theorem statement, thereby completing the proof.
\end{proof}

The preceding theorem gives the algorithmic core of our result, by solving
the spectrahedron--simplex game associated with a fixed target value in
sublinear sparse-oracle time.  To recover a solution of the original SDP,
it remains only to combine this game solver with the target-value reduction
from \Cref{subsec:game-to-sdp-solution}.  That reduction performs a
logarithmic search over objective thresholds and, for each returned game
solution, evaluates the one-sided upper certificate from
\Cref{lem:approximate-upper-certificate}.  The next corollary shows that
these additional steps preserve the running time of
\Cref{thm:main} up to polylogarithmic factors, yielding the corresponding
end-to-end SDP solver.

\begin{corollary}[End-to-End Two-Sided Stochastic SDP Solver]
\label{cor:stochastic-sdp-solver}
Consider an SDP of the form
\Cref{eqn:intro-standard-sdp-primal,eqn:intro-standard-sdp-dual} under the sparse-access model of
\Cref{def:sparse-access}.  Assume strong duality, supplied primal and
dual radii $\Rx,\Ry\geq1$, and that the input matrices are
$\sr$-row-sparse with operator norm at most one. Then, for every $\eps,\del\in(0,1)$, there is a randomized
exact-arithmetic sparse-oracle algorithm that, with probability at least
$1-\del$, returns a matrix
$\widehat{\bX}\succeq\bzero$ with
$\Tr(\widehat{\bX})\leq\Rx$ satisfying
\begin{align}
    \inner{\bC}{\widehat{\bX}}
    &\geq
    \operatorname{OPT}-\eps,
    \\
    \inner{\A{j}}{\widehat{\bX}}
    &\leq
    b_j+\frac{\eps}{\Ry}
    \qquad
    \forall~j\in[\mc].
\end{align}
The primal output is represented implicitly by
$\widetilde O(\gam^2)$ rank-one factors, where
$\gam:=\Rx\Ry/\eps$, and the running time is
\begin{align}
    \widetilde O\left(
        \dn\sr
        \min\left\{
            \gam^2,
            \mc,
            \frac{\dn}{\sr}
        \right\}
        \gam^{5/2}
        +
        (\dn+\mc\sr)\gam^2
    \right).
    \label{eqn:end-to-end-stochastic-runtime}
\end{align}
\end{corollary}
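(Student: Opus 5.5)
The plan is to compose three earlier results: the target-value reduction of \Cref{subsec:game-to-sdp-solution}, the game solver \Cref{thm:main} applied to each target-value game, and the computable upper certificate of \Cref{lem:approximate-upper-certificate}. First, if the input is Hermitian, apply the realification of \Cref{sec:realification}. This changes $\dn$ and $\sr$ by constant factors and preserves traces, norms and gaps, so we may assume real symmetric inputs. For each queried $g\in[-\Rx,\Rx]$, the game \Cref{eqn:standard-sdp-target-game} lives on $\Rx\cS_{\dn+1}\times\Delta_{\mc+1}$. Its payoff matrices $\bD_j(g)$ are $(\sr+1)$-row-sparse with $\norm{\bD_j(g)}\leq\om_D=2$, and each row is generated in $O(\sr)$ time from the original rows. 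We run \Cref{alg:sampled-rank-one} on this game with primal radius $\Rx$, dual radius $1$, width $\om=2$, and target gap $\xi:=\eps/(8(1+\Ry))$. The effective inverse accuracy is then $\gam'=2\Rx/\xi=O(\Rx\Ry/\eps)=O(\gam)$, using $\Ry\geq1$. By \Cref{thm:main}, with $\TT=O(\gam^2\Lam)$ rounds each game solve returns a pair with gap at most $\xi$. The runtime is the bound \Cref{eqn:main-theorem-runtime}, with $\dn+1$, $\mc+1$, $\sr+1$ in place of $\dn,\mc,\sr$, which is of the same order.

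Next, we handle the certificate. For each returned $\zout$ (a distribution $\bp^{\mathrm{out}}$ with support at most $\min\{\TT,\mc+1\}$), we form $\bM_g=\sum_j p^{\mathrm{out}}_j\bD_j(g)$ implicitly. \Cref{lem:approximate-upper-certificate} with $\zeta=\xi$ then gives $\widetilde u_g$ using $\widetilde O(\min\{\dn,\sqrt{\Rx/\zeta}\})=\widetilde O(\sqrt{\gam})$ matrix--vector products. One product with $\bM_g$ costs $O(\dn\sr\min\{\TT,\mc,\dn/\sr\})$, by grouping the distinct sampled constraints exactly as in the Hamiltonian-vector cost analysis in the proof of \Cref{thm:main}. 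The total is therefore $\widetilde O(\dn\sr\min\{\gam^2,\mc,\dn/\sr\}\gam^{1/2})$, which is dominated by the Lánczos term $\dn\sr\min\{\cdot\}\gam^{5/2}$ of the game solve. We then invoke \Cref{cor:target-value-binary-search} with $\eta=\eps/2$ and $\xi=\zeta=\eps/(8(1+\Ry))$. As computed right after that corollary, these choices give objective error at most $\eta+\xi+\zeta\leq\eps$ and constraint violation at most $\xi+\zeta\leq\eps/\Ry$. The output $\widehat{\bX}$ is the leading $\dn\times\dn$ block of the retained $\Xout$. It is an average of $\TT=\widetilde O(\gam^2)$ rank-one factors, each truncated to its first $\dn$ coordinates, so it is PSD with trace at most $\Rx$.

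For the probability and runtime bookkeeping, there are $O(\log(2+\Rx/\eps))$ game solves and the same number of certificate computations. Giving each call failure probability $\del/O(\log(2+\Rx/\eps))$ and taking a union bound yields overall success probability $1-\del$, at a cost of only logarithmic factors in $\Lam$ and $\LamL$. The total runtime is thus the bound \Cref{eqn:main-theorem-runtime} with $\gam$ replaced by $O(\gam)$, times a polylogarithmic factor, which is \Cref{eqn:end-to-end-stochastic-runtime}.

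The main obstacle I expect is making sure the parameter translation is correct. First, \Cref{thm:main} must be applied with simplex radius $1$ rather than $\Ry$, so that $\Ry$ enters only through the required accuracy $\xi=\Theta(\eps/\Ry)$ and $\gam'=O(\gam)$ holds. Second, the certificate's cost must stay dominated by the game-solve cost. For the latter, the key point is that $\bM_g$ inherits the sparse mixture structure from the empirical support of $\zout$. The one remaining subtlety is the $\bI$ terms in $\bD_j(g)$, but they only add a diagonal with $O(\dn)$ cost per product, so they do not change the bound.
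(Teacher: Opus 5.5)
Your proposal is correct and follows essentially the same route as the paper. You invoke \Cref{cor:target-value-binary-search} with $\eta=\eps/2$ and $\xi=\zeta=\eps/(8(1+\Ry))$, and solve each target-value game with \Cref{thm:main} at simplex radius one and $\om_D=2$, so the effective inverse accuracy is $\Theta(\gam)$. You then show that the Lánczos upper-certificate cost for the sparse (or already-dense) mixture $\bM_g$ is lower order, and union-bound over the $O(\log(2+\Rx/\eps))$ randomized calls. Your explicit realification step is a harmless addition; the paper handles it once as preprocessing in \Cref{subsec:trace-bounded-embedding}.
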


\begin{proof}
Apply the optimization-to-feasibility reduction of
\Cref{cor:target-value-binary-search}, with
$\eta=\eps/2$ and
$\xi=\zeta=\eps/(8(1+\Ry))$.  For each
queried target value $g$, the resulting feasibility game has matrix radius
$\Rx$, simplex radius one, and supplied payoff-width bound $\om_D=2$. Since it must be
solved to saddle-point gap $\xi$, its effective inverse-accuracy parameter is
\begin{align}
    \gam_D
    &:=
    \frac{\Rx\om_D}{\xi}
    =
    \Theta\left(
        \frac{\Rx\Ry}{\eps}
    \right)
    =
    \Theta(\gam),
\end{align}
where we used $\Ry\geq1$.  Therefore, by \Cref{thm:main}, each
target-value game can be solved in time
\begin{align}
    \widetilde O\left(
        \dn\sr
        \min\left\{
            \gam^2,
            \mc,
            \frac{\dn}{\sr}
        \right\}
        \gam^{5/2}
        +
        (\dn+\mc\sr)\gam^2
    \right).
    \label{eqn:target-game-solver-runtime}
\end{align}

It remains to account for the computable upper certificate required by
\Cref{cor:target-value-binary-search}.  The constraint strategy returned
by the stochastic game solver is the empirical distribution
\begin{align}
    \bp^{\mathrm{out}}
    &=
    \frac{1}{\TT}
    \sum_{t=1}^{\TT}
    \be_{\jt{t}},
\end{align}
so
\begin{align}
    \bM_g
    &:=
    \sum_{j=0}^{\mc}
    p_j^{\mathrm{out}}\bD_j(g)
\end{align}
contains at most $\min\{\TT,\mc+1\}$ distinct payoff matrices.
The empirical distribution $\bp^{\mathrm{out}}$ is supported only on
constraints that were actually sampled, so after combining repeated
indices the mixture $\bM_g$ contains at most
$\min\{\TT,\mc+1\}$ distinct payoff matrices.  Applying these sparse
matrices separately to a vector therefore costs
$O(\dn\sr\min\{\TT,\mc\})$ time. 
Alternatively, if the game solver has already switched to a dense
representation of the accumulated Hamiltonian, then no additional matrix
needs to be formed.  In this case,
$\bM_g=\Ht{\TT+1}/\TT$ is already available explicitly and can be applied
to a vector in $O(\dn^2)$ time.  The one-time $O(\dn^2)$ cost of forming
this dense representation is already included in the runtime of the game
solve.  If the game solver never forms a dense representation, we instead
use the sparse implementation above.

Taking the cheaper of these two implementations gives
\begin{align}
    \operatorname{Time}(\bM_g\bv)
    &=
    O\left(
        \dn\sr
        \min\left\{
            \TT,
            \mc,
            \frac{\dn}{\sr}
        \right\}
    \right).
\end{align}
The additional coordinate introduced by the fixed-trace embedding changes
the dimension only by one, while the identity components of the matrices
$\bD_j(g)$ combine into a single scalar multiple of the identity and can
be applied in $O(\dn)$ time.  By
\Cref{lem:approximate-upper-certificate}, computing
$\widetilde u_g$ therefore costs
\begin{align}
    \widetilde O\left(
        \dn\sr
        \min\left\{
            \TT,
            \mc,
            \frac{\dn}{\sr}
        \right\}
        \sqrt{
            \frac{\Rx\om_D}{\zeta}
        }
    \right).
\end{align}
For the parameter choices above, this is lower order than the cost of the
corresponding game solve.

Let
$K:=1+\max\{0,\lceil\log_2(2\Rx/\eta)\rceil\}=O(\log(2+\Rx/\eps))$
bound the total number of queried targets in
\Cref{cor:target-value-binary-search}, including the initial solve at
$g=-\Rx$, where $\eta=\Theta(\eps)$.  Each query invokes the stochastic
game solver once and computes one upper certificate.  Assign failure probability
$\del/(2K)$ to each of these two randomized computations at every query.
Each invocation uses fresh randomness, so its failure guarantee holds
conditional on the preceding search history.  A union bound over at
most $2K$ failure events therefore shows that, with probability
at least
\begin{align}
    1
    -
    2K\frac{\del}{2K}
    &=
    1-\del,
\end{align}
all game solves and upper-certificate computations succeed
simultaneously.  On this event,
\Cref{cor:target-value-binary-search} gives objective
error at most $\eta+\xi+\zeta\leq\eps$ and constraint violations at most
$\xi+\zeta\leq\eps/\Ry$. Replacing the failure probability in each invocation by
$\del/(2K)$ changes its running time only by logarithmic factors and multiplying the per-step cost by
$K=O(\log(2+\Rx/\eps))$ introduces only additional
polylogarithmic factors.  This therefore proves the claimed end-to-end
running time and success probability.
\end{proof}

The runtimes in \Cref{thm:main} and \Cref{cor:stochastic-sdp-solver} use the implicit
output representation produced by the stochastic solver.  For each
target-value game, the matrix output is represented by its $\TT$ rank-one
factors, while the constraint-player output is represented by the
histogram of the sampled constraints.  Consequently, the final primal
matrix returned by \Cref{cor:stochastic-sdp-solver} can likewise be
represented by the rank-one factors from the retained target-value solve,
rather than explicitly materialized as a dense matrix.

This implicit representation is necessary for a sublinear input/output
guarantee. In particular, explicitly writing a generic $\dn\times\dn$ primal matrix
already requires $\Omega(\dn^2)$ time.  Explicitly forming the primal
matrix from its $\TT$ rank-one factors would add
$O(\TT\dn^2)$ arithmetic operations. Similarly, if the constraint matrices are supplied explicitly as sparse
adjacency lists and the cost of reading the full input is charged upfront,
then the input size and preprocessing cost are
$\Theta(\mc\dn\sr)$.  This term is absent from our sparse-oracle runtime,
which charges only for the portions of the input accessed along the
realized trajectory.

\begingroup
\small
\bibliographystyle{alpha}
\bibliography{references} 
\endgroup

\newpage 
\appendix

\section{Hermitian Inputs via ``Realification''}
\label{sec:realification}

The two-sided stochastic analysis of \Cref{sec:algorithm} is stated for
real symmetric matrices and real random unit vectors, matching the setting
of the matrix-regret theorem of Carmon--Duchi--Sidford--Tian
\cite{CDST19}.  This real-symmetry assumption is not a restriction on the
SDPs to which the algorithm applies.  In this section, we show explicitly
how a Hermitian SDP can be converted into an equivalent real symmetric
SDP in twice the matrix dimension.  The transformation preserves
feasibility, objective values, operator norms, and the primal--dual gap.
We may therefore run the stochastic solver on the realified instance and
map its primal output back to a complex Hermitian matrix.

The construction is the standard identification of a complex vector
$\bz=\bx+\mathrm{i}\by\in\C^{\dn}$ with the real vector
$(\bx,\by)\in\R^{2\dn}$.  For an arbitrary complex matrix
$\bA\in\C^{\dn\times\dn}$, define
\begin{align}
\Rfy(\bA)
&:=
\begin{pmatrix}
\operatorname{Re}\bA & -\operatorname{Im}\bA \\
\operatorname{Im}\bA & \operatorname{Re}\bA
\end{pmatrix}
\in
\R^{2\dn\times2\dn}.
\label{eqn:realification-map}
\end{align}
Under the identification above,
\begin{align}
\Rfy(\bA)
\begin{pmatrix}
\bx\\
\by
\end{pmatrix}
&=
\begin{pmatrix}
\operatorname{Re}(\bA\bz)\\
\operatorname{Im}(\bA\bz)
\end{pmatrix}.
\label{eqn:realification-action}
\end{align}
In particular, if $\bA$ is Hermitian, then $\Rfy(\bA)$ is real
symmetric.

We will also need to map a real primal solution back to the original
complex space.  For a real symmetric block matrix
\begin{align}
\bY
&=
\begin{pmatrix}
\bY_{11} & \bY_{12} \\
\bY_{21} & \bY_{22}
\end{pmatrix}
\in
\R^{2\dn\times2\dn},
\end{align}
define
\begin{align}
\Cfy(\bY)
&:=
\bY_{11}+\bY_{22}
+
\mathrm{i}(\bY_{21}-\bY_{12}).
\label{eqn:complexification-map}
\end{align}

\paragraph{Realification of the SDP.}
To make the reduction explicit, consider the standard Hermitian SDP
\begin{align}
\operatorname{OPT}_{\C}
&=
\max_{\bX\succeq0}
\left\{
\inner{\bC}{\bX}:
\inner{\A{j}}{\bX}\leq b_j,
~
\forall~j\in[\mc]
\right\},
\label{eqn:hermitian-sdp-primal}
\end{align}
with dual
\begin{align}
\operatorname{OPT}_{\C}
&=
\min_{\by\in\R_{\geq0}^{\mc}}
\left\{
\mathbf b^\top\by:
\sum_{j=1}^{\mc}y_j\A{j}
\succeq
\bC
\right\}.
\label{eqn:hermitian-sdp-dual}
\end{align}
Here $\bC,\A{1},\ldots,\A{\mc}$ are Hermitian.  Its realification is
obtained simply by replacing every Hermitian data matrix by its block
realification:
\begin{align}
\operatorname{OPT}_{\R}
&=
\max_{\bY\succeq0}
\left\{
\inner{\Rfy(\bC)}{\bY}:
\inner{\Rfy(\A{j})}{\bY}\leq b_j,~\forall~j\in[\mc]
\right\},
\label{eqn:realified-sdp-primal}
\end{align}
with dual
\begin{align}
\operatorname{OPT}_{\R}
&=
\min_{\by\in\R_{\geq0}^{\mc}}
\left\{
\mathbf b^\top\by:
\sum_{j=1}^{\mc}
y_j\Rfy(\A{j})
\succeq
\Rfy(\bC)
\right\}.
\label{eqn:realified-sdp-dual}
\end{align}
Thus the thresholds $b_j$ and the dual variable $\by$ are unchanged.
Only the matrix data and matrix dimension are modified.

The primal correspondence includes a factor of $1/2$.  Given a
Hermitian primal matrix $\bX$, its real representative is
\begin{align}
\bY
&:=
\frac12\Rfy(\bX).
\label{eqn:realification-primal-embedding}
\end{align}
As shown below, this normalization ensures $\Tr(\bY)=\Tr(\bX)$ and $\inner{\Rfy(\bA)}{\bY}
=\inner{\bA}{\bX}$,
for every Hermitian $\bA$.  Hence the trace radius, all constraint
values, and the objective value are preserved.  In the reverse
direction, a real feasible matrix $\bY$ maps to
$\bX=\Cfy(\bY)$ with the same trace and the same Hermitian payoffs.

The dual correspondence requires no rescaling.  By linearity of
realification,
\begin{align}
\sum_{j=1}^{\mc}
y_j\Rfy(\A{j})
-\Rfy(\bC)
&=
\Rfy\left(
\sum_{j=1}^{\mc}
y_j\A{j}
-\bC
\right).
\label{eqn:realification-dual-slack}
\end{align}
Since realification preserves positive semidefiniteness, $\by$ is
feasible for \Cref{eqn:hermitian-sdp-dual} if and only if it is feasible
for \Cref{eqn:realified-sdp-dual}, and its objective
$\mathbf b^\top\by$ is unchanged.  Consequently, the original and
realified SDPs have the same optimum.

This reduction is compatible with the standard SDP-to-game reduction of
\Cref{sec:standard-sdp-to-game}.  One may first replace
$\bC,\A{1},\ldots,\A{\mc}$ by their realifications and then apply the
same lifting and normalization procedure to the resulting real SDP.
Equivalently, at the level of the normalized game, each Hermitian game
matrix is replaced by its realification, while the constraint-player
variable and its radius are unchanged.  We next record the resulting
pointwise correspondence between the two games.

The block map $\Rfy$ is the standard real representation used for complex
semidefinite programs---see, for example,
\cite[Section~2.4, Eq.~(4)]{WaldspurgerEtAl2015}.  Let
$\Gap_{\C}$ denote the saddle-point gap for the game with Hermitian
constraint matrices $\A{1},\ldots,\A{\mc}$, and let $\Gap_{\R}$ denote
the corresponding gap after replacing these matrices by
$\Rfy(\A{1}),\ldots,\Rfy(\A{\mc})$.

\begin{lemma}[SDP Gap Preservation Under ``Realification'']
\label{lem:realification-gap}
Let $\A{1},\ldots,\A{\mc}\in\C^{\dn\times\dn}$ be Hermitian.  Then the
following properties hold:

\begin{enumerate}
\item \textbf{Feasible-set and payoff correspondence.}
For every real symmetric $\bY\in\Rx\cS_{2\dn}$, the matrix
$\bX:=\Cfy(\bY)$ belongs to $\Rx\cS_{\dn}$ and satisfies, for every $j\in[\mc]$,
\begin{align}
\inner{\A{j}}{\bX}
&=
\inner{\Rfy(\A{j})}{\bY}.
\label{eqn:realification-payoff}
\end{align}
Conversely, every
$\bX\in\Rx\cS_{\dn}$ has the real embedding
$\bY=\frac12\Rfy(\bX)\in\Rx\cS_{2\dn}$, with
$\Cfy(\bY)=\bX$.

\item \textbf{Spectral correspondence.}
For every real vector $\by\in\R^{\mc}$,
\begin{align}
\lambda_{\max}\left(
\sum_{j=1}^{\mc}
y_j\Rfy(\A{j})
\right)
&=
\lambda_{\max}\left(
\sum_{j=1}^{\mc}
y_j\A{j}
\right).
\label{eqn:realification-eigenvalue}
\end{align}

\item \textbf{Gap preservation.}
The real and Hermitian games have the same saddle-point value and,
for every real symmetric $\bY\in\Rx\cS_{2\dn}$ and
$\by\in\Ry\Delta_{\mc}$,
\begin{align}
\Gap_{\R}(\bY,\by)
&=
\Gap_{\C}(\Cfy(\bY),\by).
\label{eqn:realification-gap}
\end{align}
\end{enumerate}
\end{lemma}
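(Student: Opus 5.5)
The proof is elementary linear algebra around the identification $\bz=\bx+\mathrm{i}\by\leftrightarrow(\bx,\by)$. I would first record three algebraic facts about $\Rfy$. It is a real-linear, multiplicative, $\dagger$-to-$\top$ homomorphism: $\Rfy(\bA\bB)=\Rfy(\bA)\Rfy(\bB)$ and $\Rfy(\bA^\dagger)=\Rfy(\bA)^\top$. Moreover $\Tr\Rfy(\bA)=2\operatorname{Re}\Tr\bA$. Finally, for Hermitian $\bA$ the spectrum of $\Rfy(\bA)$ is that of $\bA$ with every multiplicity doubled. For the last fact, note that if $\bA\bz=\lambda\bz$ with $\lambda$ real, then both $(\operatorname{Re}\bz,\operatorname{Im}\bz)$ and the vector coming from $\mathrm{i}\bz$ are real eigenvectors of $\Rfy(\bA)$ with eigenvalue $\lambda$. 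These two vectors are orthogonal, so an orthonormal eigenbasis of $\bA$ yields $2\dn$ orthonormal real eigenvectors. In particular $\bA\succeq\bzero$ implies $\Rfy(\bA)\succeq\bzero$, and item~2 follows immediately by applying this to $\sum_j y_j\A{j}$ together with linearity of $\Rfy$.

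For item~1, I would check the payoff identity by a direct blockwise trace computation. Writing $\bA=\bA_R+\mathrm{i}\bA_I$ with $\bA_R$ symmetric and $\bA_I$ antisymmetric, and $\bY$ real symmetric in blocks, one gets
\[
\inner{\Rfy(\bA)}{\bY}=\Tr(\bA_R(\bY_{11}+\bY_{22}))+\Tr(\bA_I(\bY_{12}-\bY_{21})).
\]
This equals $\operatorname{Re}\Tr(\bA^\dagger\Cfy(\bY))=\inner{\bA}{\Cfy(\bY)}$, which is real because both matrices are Hermitian. Here I use that $\Tr(\bA_I\bS)=0$ for symmetric $\bS$, which lets the cross terms be matched. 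Taking $\bA=\bI$ gives trace preservation. Hermiticity of $\Cfy(\bY)$ holds because $\bY_{21}-\bY_{12}=\bY_{12}^\top-\bY_{12}$ is antisymmetric.

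The main obstacle is showing $\Cfy(\bY)\succeq\bzero$. I would do this by showing that $\bz^\dagger\Cfy(\bY)\bz$ equals $\bw_1^\top\bY\bw_1+\bw_2^\top\bY\bw_2$, where $\bw_1=(\bx,\by)$ and $\bw_2=(-\by,\bx)$ are the real vectors representing $\bz$ and $\mathrm{i}\bz$. Equivalently, $\Cfy(\bY)$ is the compression of $\bY$ onto the complex structure $\mathbf J$ averaged with its conjugate. This is a short expansion, and it gives positivity. For the converse direction, $\Cfy(\tfrac12\Rfy(\bX))=\tfrac12(2\operatorname{Re}\bX+2\mathrm{i}\operatorname{Im}\bX)=\bX$. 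The embedding satisfies $\tfrac12\Rfy(\bX)\succeq\bzero$ by the spectral fact above, and its trace is $\tfrac12\cdot2\Tr\bX=\Rx$.

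For item~3, I would use the explicit gap formula $\Gap(\bX,\by)=\Rx\lambda_{\max}(\cA^\ast\by)-\Ry\min_j\inner{\A{j}}{\bX}$. The first term is unchanged by item~2. The second term is unchanged with $\bX=\Cfy(\bY)$ by the payoff identity, which gives the pointwise gap equality. For equality of the game values, I would note that $\max_{\bY}\min_{\by}$ over the real game is at most the complex value, because $\Cfy$ maps feasible points to feasible points with equal payoffs. It is also at least the complex value, because the embedding $\tfrac12\Rfy$ maps feasible points back with equal payoffs. Alternatively, the min-max form together with item~2 gives the same conclusion directly.
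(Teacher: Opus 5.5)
Your proof is correct. Items~2 and~3 follow the paper's argument: the same eigenvector pair $(\bx,\by)$, $(-\by,\bx)$, and the same substitution of the two correspondences into the explicit gap formula. Your orthogonality remark in fact supplies a justification the paper leaves out when it asserts that the doubled eigenvalues exhaust the spectrum of $\Rfy(\bA)$. Because the real inner product of the representatives of $\bz$ and $\bz'$ equals $\operatorname{Re}(\bz^\dagger\bz')$, an orthonormal complex eigenbasis yields $2\dn$ orthonormal real eigenvectors.

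The genuine difference is in item~1. The paper works on the primal side. It decomposes $\bY=\sum_\ell\bw_\ell\bw_\ell^\top$ and uses the two identities $\Cfy(\bw\bw^\top)=\bz\bz^\dagger$ and $\bw^\top\Rfy(\bA)\bw=\bz^\dagger\bA\bz$, from which positivity, trace preservation, and the payoff identity all follow by linearity at once. You work on the adjoint side. You verify $\inner{\Rfy(\bA)}{\bY}=\inner{\bA}{\Cfy(\bY)}$ by a blockwise trace computation, without decomposing $\bY$, and then test $\Cfy(\bY)$ against vectors to get positivity.

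The paper's route is more economical and has a practical bonus: the factor map $\bw\mapsto\bz$ is exactly how the solver's stored real rank-one factors become the complex implicit output. On your route, the step you flag as the main obstacle is actually free. Since $\Rfy(\bz\bz^\dagger)=\bw_1\bw_1^\top+\bw_2\bw_2^\top$, your identity $\bz^\dagger\Cfy(\bY)\bz=\bw_1^\top\bY\bw_1+\bw_2^\top\bY\bw_2$ is just your payoff identity evaluated at $\bA=\bz\bz^\dagger$.
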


\begin{proof}
The proof mirrors the three assertions of the lemma.  We first establish
the two-way correspondence between the real and Hermitian feasible sets
and show that it preserves every constraint payoff.  We then prove that
realification preserves the spectrum of a Hermitian matrix, which gives
the equality of the matrix-player best-response values.  Combining these
two correspondences yields the equality of the saddle-point values and,
more strongly, the pointwise equality of the primal--dual gaps.

\paragraph{Feasible-set and payoff correspondence.}
Fix a real symmetric $\bY\in\Rx\cS_{2\dn}$.  Since $\bY\succeq0$, write a rank-one
decomposition
\begin{align}
\bY
&=
\sum_{\ell}
\bw_{\ell}\bw_{\ell}^{\top},
\qquad \text{where} \quad 
\bw_{\ell}=
\begin{pmatrix}
\bx_{\ell}\\
\by_{\ell}
\end{pmatrix}
\in\R^{2\dn}.
\end{align}
For each $\ell$, define
$\bz_{\ell}:=\bx_{\ell}+\mathrm{i}\by_{\ell}\in\C^{\dn}$.
A direct calculation from the definition of $\Cfy$ gives
\begin{align}
\Cfy\left(
\bw_{\ell}\bw_{\ell}^{\top}
\right)
&=
\bz_{\ell}\bz_{\ell}^{\dagger}.
\label{eqn:rank-one-complexification}
\end{align}
Therefore, by linearity,
\begin{align}
\Cfy(\bY)
&=
\sum_{\ell}
\bz_{\ell}\bz_{\ell}^{\dagger}
\succeq0.
\label{eqn:complexification-positive}
\end{align}
Moreover,
\begin{align}
\Tr(\Cfy(\bY))
&=
\sum_{\ell}
\norm{\bz_{\ell}}_2^2
=\sum_{\ell}
\norm{\bw_{\ell}}_2^2
=
\Tr(\bY).
\label{eqn:complexification-trace}
\end{align}
Since $\bY\in\Rx\cS_{2\dn}$, it follows that
$\bX:=\Cfy(\bY)\in\Rx\cS_{\dn}$.

It remains to show that this map preserves the constraint payoffs.  Let
$\bA$ be any Hermitian matrix.  By
\Cref{eqn:realification-action}, each rank-one factor satisfies
\begin{align}
\bw_{\ell}^{\top}
\Rfy(\bA)
\bw_{\ell}
&=
\bz_{\ell}^{\dagger}
\bA
\bz_{\ell}.
\label{eqn:rank-one-realification-payoff}
\end{align}
Using the rank-one decomposition of $\bY$,
\begin{align}
\inner{\Rfy(\bA)}{\bY}
&=
\sum_{\ell}
\inner{\Rfy(\bA)}
{\bw_{\ell}\bw_{\ell}^{\top}}
=
\sum_{\ell}
\bz_{\ell}^{\dagger}
\bA
\bz_{\ell}
=
\inner{\bA}{
\sum_{\ell}
\bz_{\ell}\bz_{\ell}^{\dagger}
}
=
\inner{\bA}{\Cfy(\bY)}.
\label{eqn:complexification-payoff}
\end{align}
Taking $\bA=\A{j}$ proves
\Cref{eqn:realification-payoff}.

For the converse direction, fix $\bX\in\Rx\cS_{\dn}$.  Since
$\bX\succeq0$, also $\Rfy(\bX)\succeq0$, and the block definitions give $\Tr(\Rfy(\bX))
=2\Tr(\bX)$ and $\Cfy(\Rfy(\bX))=2\bX$.
Hence,
$\bY:=\frac12\Rfy(\bX)$ lies in $\Rx\cS_{2\dn}$ and satisfies
$\Cfy(\bY)=\bX$.  This completes the proof of the feasible-set and payoff
correspondence.

\paragraph{Spectral correspondence.}
We next prove \Cref{eqn:realification-eigenvalue}.  Let
$\bA$ be Hermitian and let
$\bz=\bx+\mathrm{i}\by$ be an eigenvector with eigenvalue $\lambda$.
Since $\lambda\in\R$, \Cref{eqn:realification-action} gives
\begin{align}
\Rfy(\bA)
\begin{pmatrix}
\bx\\
\by
\end{pmatrix}
&=
\lambda
\begin{pmatrix}
\bx\\
\by
\end{pmatrix}.
\end{align}
Applying the same identity to
$\mathrm{i}\bz=-\by+\mathrm{i}\bx$ gives
\begin{align}
\Rfy(\bA)
\begin{pmatrix}
-\by\\
\bx
\end{pmatrix}
&=
\lambda
\begin{pmatrix}
-\by\\
\bx
\end{pmatrix}.
\end{align}
Thus every eigenvalue of $\bA$ appears twice in $\Rfy(\bA)$, and these
$2\dn$ eigenvalues exhaust its spectrum.  In particular, $\norm{\Rfy(\bA)}=\norm{\bA}$ and 
\begin{align}
\lambda_{\max}(\Rfy(\bA))
&=
\lambda_{\max}(\bA).
\label{eqn:realification-spectral-identities}
\end{align}
Now fix $\by\in\R^{\mc}$.  By linearity of $\Rfy$,
\begin{align}
\sum_{j=1}^{\mc}
y_j\Rfy(\A{j})
&=
\Rfy\left(
\sum_{j=1}^{\mc}
y_j\A{j}
\right).
\end{align}
Since $\sum_j y_j\A{j}$ is Hermitian,
\Cref{eqn:realification-spectral-identities} immediately gives
\Cref{eqn:realification-eigenvalue}.

\paragraph{Gap preservation.}
We finally combine the two correspondences above.  Fix
a real symmetric $\bY\in\Rx\cS_{2\dn}$ and $\by\in\Ry\Delta_{\mc}$, and set
$\bX:=\Cfy(\bY)$.  By \Cref{eqn:realification-eigenvalue},
\begin{align}
\lambda_{\max}\left(
\sum_{j=1}^{\mc}
y_j\Rfy(\A{j})
\right)
&=
\lambda_{\max}\left(
\sum_{j=1}^{\mc}
y_j\A{j}
\right),
\end{align}
while \Cref{eqn:realification-payoff} gives
\begin{align}
\min_{j\in[\mc]}
\inner{\Rfy(\A{j})}{\bY}
&=
\min_{j\in[\mc]}
\inner{\A{j}}{\bX}.
\end{align}
Substituting these two identities into the definition of the gap yields
\begin{align}
\Gap_{\R}(\bY,\by)
&=
\Gap_{\C}(\bX,\by)
=\Gap_{\C}(\Cfy(\bY),\by),
\end{align}
which proves \Cref{eqn:realification-gap}.

Finally, the feasible-set and payoff correspondence shows that, for
every fixed constraint-player strategy, optimizing over the real matrix
actions gives exactly the same value as optimizing over the Hermitian
matrix actions.  The two games therefore have the same saddle-point
value.  This completes the proof.
\end{proof}

Realification also respects products and real-coefficient analytic matrix
functions.  In particular, matrix multiplication obeys
\begin{align}
\Rfy(\bA\bBmat)
&=
\Rfy(\bA)\Rfy(\bBmat)
\end{align}
and matrix exponentiation obeys 
\begin{align}
    \Rfy(e^{\bH})
&=
e^{\Rfy(\bH)}.
\end{align}
Thus all matrix operations appearing in the stochastic solver may be
carried out directly on the ``realified'' matrices.  Importantly, the
realification argument does not require us to identify the real unit vectors used by the Carmon--Duchi--Sidford--Tian \cite{CDST19} learner in dimension
$2\dn$ with the complex Haar direction used elsewhere in the paper. Rather, the
solver is simply analyzed and run on the real symmetric instance.

The overhead of the reduction is small.  The realified solver operates in
dimension $2\dn$.  Moreover, $\norm{\Rfy(\A{j})}=\norm{\A{j}}$
and if each Hermitian $\A{j}$ has at most $\sr$ nonzero complex entries
per row, then each row of $\Rfy(\A{j})$ has at most $2\sr$ nonzero real
entries.  Hence, realification changes the iteration and arithmetic bounds
only by constant factors.
A realified row is produced by scanning the corresponding original
complex row and emitting its real and imaginary components, omitting
zeros.  This takes $O(\sr)$ time, so the row-scan and matrix--vector
access costs used above are preserved.  Also, each stored real factor
$\bw=(\bx,\by)$ maps to the single complex factor
$\bz=\bx+\mathrm{i}\by$ in $O(\dn)$ time by
\Cref{eqn:rank-one-complexification}, preserving the implicit output
size.

Concretely, to apply the stochastic SDP solver to a Hermitian instance,
we replace every Hermitian matrix in the instance by its realification,
run the real symmetric algorithm in the doubled dimension, and map its
primal output $\bY^{\mathrm{out}}$ back to
$\Cfy(\bY^{\mathrm{out}})$.  The constraint-player output is unchanged.
By \Cref{lem:realification-gap}, this mapping preserves the
primal--dual gap exactly.
\end{document}